\documentclass[12pt,reqno,a4paper]{article}
\usepackage[utf8]{inputenc}
\usepackage[T1]{fontenc}
\usepackage{enumerate}
\usepackage{amsmath,amsthm,amsfonts,amscd,amsbsy,amsthm,bm,cancel,mathtools,physics,comment,eucal,latexsym,amssymb,mathrsfs,siunitx}
\usepackage{enumerate,cases,slashed}
\usepackage{accents}
\usepackage{color}
\usepackage{float}
\usepackage{relsize}
\usepackage[normalem]{ulem}
\usepackage{pst-node}
\usepackage{epsfig,caption,float,tikz,subcaption,graphicx}  
\usepackage{bm,cite}
\usepackage{simplewick}
\usepackage{simpler-wick}
\usepackage{indentfirst}
\usepackage{tensor}

\usepackage{enumitem}
\setlist[itemize,1]{label={\tiny\textbullet}}

\definecolor{darkgreen}{rgb}{0,0.5,0}
\definecolor{darkred}{rgb}{0.7,0,0}
\definecolor{darkblue}{rgb}{0,.2,.7}
\usepackage[english]{babel}
\usepackage[margin=2cm]{geometry}

\usepackage{url}

\usepackage{cite}

\usepackage{titlesec}
\titleformat{\section}{\large\bfseries\filcenter}{\thesection}{1em}{}
\titleformat{\subsection}{\bfseries}{\thesubsection}{1em}{}

\newtheorem{theorem}{Theorem}[section]
\newtheorem{corollary}[theorem]{Corollary}
\newtheorem{lemma}[theorem]{Lemma}
\newtheorem{proposition}[theorem]{Proposition}

\theoremstyle{remark}

\theoremstyle{definition}
\newtheorem{remark}[theorem]{Remark}

\numberwithin{equation}{section}
\allowdisplaybreaks[1]

\catcode`@=12

\usepackage[colorlinks, 
citecolor=black, linkcolor=black, urlcolor=darkblue, pdfpagelabels=true, unicode=true,
]{hyperref}

\hypersetup{
 pdfauthor={Paolo Meda, Simone Murro and Nicola Pinamonti},
 pdftitle={Semiclassical Einstein equations},
 pdflang={en}
}

\newcommand{\doi}[1]{\href{https://www.doi.org/#1}{[#1]}}

\def\ab{{{ab}}}
\newcommand{\omegazd}{{\omega_0}_2}
\newcommand{\omegaz}{\omega_0}
\newcommand{\omegaud}{{\omega_1}_2}
\newcommand{\omegau}{\omega_1}

\def\cc{{\mathrm{cc}}}

\def\pc{\mathrm{pc}}

\def\g{{\mathsf g}}

\def\cA{{\mathcal A}}

\def\cD{{\mathcal D}}

\def\cF{{\mathcal F}}

\def\cM{{\mathcal M}}

\def\cS{{\mathcal S}}
\def\cT{{\mathcal T}}

\def\cK{{\mathcal K}}

\def\mS{\mathscr{S}}

\def\mS{\mathscr{S}}

\def\RR{{\mathbb R}} 
\def\CC{{\mathbb C}}

\def\beq{\begin{eqnarray}}
\def\eeq{\end{eqnarray}}
\def\pa{\partial}
               		   
\def\at{\left(}               		   
\def\aq{\left[}               		   
\def\ag{\left\{}

\def\ct{\right)}              		   
\def\cq{\right]}              		
\def\cg{\right\}}

\renewcommand{\vec}[1]{\mathbf{#1}}
\newcommand{\Wick}[1]{{:}#1{:}}

\newcommand{\Eq}{Eq.\@ }
\newcommand{\Eqs}{Eqs.\@ }

\begin{document}

\begin{flushright}

\baselineskip=4pt

\end{flushright}
\begin{center}
\vspace{5mm}

{\Large\bf The Semiclassical Einstein-Klein-Gordon System: \\[3mm] Asymptotic Analysis of Minkowski Spacetime\\[3mm] } 

\vspace{5mm}

{\bf by}

\vspace{5mm}
{  \bf Stefano Galanda$^1$, Paolo Meda$^2$, Simone Murro$^3$, \\[3mm]
Nicola Pinamonti$^3$, Gabriel Schmid$^4$}\\[5mm]
\noindent  {\it $^1$Department of Mathematics, University of York, 
York YO10 5GH, United Kingdom}\\[1mm]
\noindent  {\it $^2$Dipartimento di Matematica, Università di Trento \& GNFM-INdAM \& TIFPA-INFN, Italy}\\[1mm]
\noindent  {\it $^3$Dipartimento di Matematica, Università di Genova \& GNFM-INdAM \& INFN, Italy}\\[1mm]
\noindent  {\it $^4$Fakult\"at f\"ur Mathematik, Universit\"at Regensburg, 93053 Regensburg, Germany}\\[5mm]
email: \ {\tt  stefano.galanda@york.ac.uk, paolo.meda@unitn.it, simone.murro@unige.it, nicola.pinamonti@unige.it, gabriel.schmid@ur.de}\\[10mm]
\end{center}

\bigskip

\noindent 
\small 
\begin{abstract}
We establish the linear instability of the semiclassical Einstein–Klein–Gordon system linearised about the Minkowski vacuum spacetime. The proof relies on formulating a forcing problem for both metric and state perturbations within the space of past-compact sections. This geometric framework admits a unique tensor decomposition which, in conjunction with the quantum M\o ller operator, enables the decoupling of the linearised system into two distinct Cauchy problems. Consequently, the metric perturbations are shown to be governed by a higher-order, nonlocal hyperbolic partial differential equation. By relegating the nonlocal contributions to subleading order, we establish the well-posedness of this forcing problem. Furthermore, we provide a rigorous asymptotic analysis for physically admissible choices of the renormalisation constants. We prove that the system exhibits a late-time linear instability: the metric perturbations grow exponentially, bounded strictly by a universal scale $H$, thereby indicating a quantum backreaction-driven transition toward a de Sitter cosmological spacetime. Provided the parameters governing the system are restricted to a physically relevant regime,
this universal scale is compatible with the measured expansion of our universe.
\end{abstract}
\paragraph*{Keywords:} Cauchy problem for the semiclassical Einstein-Klein-Gordon system, perturbative algebraic quantum field theory, renormalisation of the stress-energy tensor, linear instability of Minkowski spacetime, nonlocal hyperbolic PDEs, decomposition of past-compact symmetric (0,2)-tensors.
\paragraph*{MSC 2020:}Primary: 83C05, 83C47;\quad Secondary: 35L25, 81T20.

\vspace{2em}

\begin{flushright}
\small
\textit{Dedicated to the memory of Manuele Filaci
}
\end{flushright}

\vspace{2em}

\bigskip

\tableofcontents

\section{Introduction}

While a rigorous mathematical quantisation of gravitational interaction remains an open problem, the study of quantum field theory on fixed Lorentzian manifolds has yielded profound structural insights, providing the mathematical foundation for phenomena such as the Hawking effect for spacetime containing black holes~\cite{Hawking} and 
 particle creation in the context of cosmological backgrounds
~\cite{Parker}. A fundamental extension of this paradigm consists in coupling the quantum matter to the background spacetime geometry via the {\bf semiclassical Einstein equations}, proposed independently by M\o ller and Rosenfeld in \cite{Moller,Rosenfeld} in the early 1960s. This framework yields a highly non-linear system of partial differential equations wherein the geometry, represented by the Einstein tensor $G_{ab}$ of a Lorentzian metric $g_{ab}$ on a smooth manifold $\mathcal{M}$, is driven by the renormalised expectation value of the stress-energy tensor associated to a quantum field propagating on $(\mathcal{M}, g_\ab)$. Formally, the system is given by
$$\begin{cases}
G_{ab}[g_\ab] = \kappa \langle \colon T_{ab} \colon \rangle_{\omega} \\
\mathcal{P}_{x} \, \omega(x,y)= 0 \\
\mathcal{P}_{y} \, \omega(x,y)= 0,
\end{cases} $$where $\kappa = 8\pi G$ denotes the gravitational coupling constant, $\mathcal{P}$ is a normally hyperbolic differential operator acting on $(\mathcal{M}, g_\ab)$ that encodes the dynamics of the underlying quantum field, and $\omega \in \mathcal{D}'(\mathcal{M} \times \mathcal{M})$ represents the two-point correlation distribution of a quasifree algebraic state. The quantity $\langle \colon T_{ab} \colon \rangle_{\omega}$ denotes the properly renormalised expected stress-energy tensor evaluated in the state $\omega$. For this coupled system to admit a well-posed initial value formulation and to carry physical significance, the pair $\Big((\mathcal{M},g_\ab), \omega\Big)$ is required to satisfy two stringent admissibility criteria:
\begin{enumerate}
\item[(I)] The spacetime $(\mathcal{M},g_\ab)$ must be {\bf globally hyperbolic}, i.e.~it does not possess closed causal curves and, 
for all points $p,q\in \cM$, the intersection between the future lightcone of $p$ with the past lightcone of $q$ is compact.
\item[(II)] The state $\omega$, realised as a normalised, positive linear functional on the canonical algebra of observables, must satisfy the {\bf Hadamard condition},  i.e.~the primed wavefront set of the Schwartz integral kernel of the two-point correlation distribution should be contained in the Cartesian product of the upper null cones.
\end{enumerate}

The rigorous analysis of the global and asymptotic behaviour of solutions to the semiclassical Einstein equations is conjectured to provide the mathematical scaffolding governing early-universe cosmology and the dynamical evaporation of trapped surfaces. In this paper, we aim to study the Cauchy problem for the semiclassical Einstein--Klein--Gordon system, linearised around the Minkowski background. In particular, we shall prove that the backreaction of the massive scalar field dynamically generates an effective de Sitter cosmological expansion. Contrary to previous results, the Hubble parameter, which measures the effective expansion, is compatible with physical observations.

\subsection{The quantisation of a Klein-Gordon field}

To rigorously define the expectation value of the renormalised stress-energy tensor for the Klein-Gordon field, we must briefly review the quantisation of scalar fields on globally hyperbolic spacetimes.
We adopt the {\bf algebraic approach to quantum field theory}, a mathematical framework for quantisation on Lorentzian manifolds. We refer the reader to textbooks
on the subject, e.g.~\cite{gerardbook,Rejzner,AdvancesAQFT,Hack,Waldbook}. This paradigm decomposes quantisation into a two-step algebraic procedure:
\begin{enumerate}
\item[(1)] The assignment of a unital $*$-algebra of observables $\mathcal{A}$ to the physical system, encoding the causal structure, dynamics, and canonical commutation relations.
\item[(2)] The specification of a physical state $\omega \colon \mathcal{A} \to \mathbb{C}$, defined as a normalised, positive linear functional satisfying the Hadamard condition.\end{enumerate}

Consider a globally hyperbolic manifold $(\cM,g_\ab)$. As shown by Bernal and Sanchez in~\cite{BernSanch1,BernSanch2}, global hyperbolicity is tantamount to the existence of an isometry $$\Xi: (\cM,g_{ab}) \to (\RR \times \Sigma,- \beta^2 d \tau\otimes d\tau +  \gamma_{ij}(\tau) ),
$$
 	with the time-orientation on the right-hand side induced from $(\cM,g_{ab})$ through $\Xi$.
 Here, $\tau$ is the canonical projection  $\RR\times \Sigma \to  \RR$ and the following holds true:
 \begin{itemize}
\item[(a)]  $\nabla \tau := \sharp d\tau$ is  past-directed timelike, 
\item[(b)]	 $\beta \in C^\infty(\RR \times \Sigma, (0,+\infty))$;
\item[(c)] $ \gamma_{ij}(\tau)$ is a smooth Riemannian metric on each leaf $\{\tau\} \times \Sigma$, $\tau\in \RR$, depending smoothly on $\tau$;
\item[(d)]  every embedded co-dimension-$1$ submanifold  $\{t_0\} \times \Sigma = \tau^{-1}(t_0)$  is a  spacelike (smooth) Cauchy hypersurface.
\end{itemize}
For a real, massive Klein-Gordon field propagating on a globally hyperbolic manifold, the construction of a quasifree Hadamard state reduces to finding a weak solution to the characteristic initial value problem for the two-point correlation distribution $\omega_2 \in \mathcal{D}'(\mathcal{M} \times \mathcal{M})$:
$$\begin{cases}
\mathcal{P}_x \, \omega_{2}(x,y)= 0\\
\mathcal{P}_y \, \omega_{2}(x,y)= 0 \\
\omega_{2}|_{\Sigma_0 \times \Sigma_0} \in \mathcal{D}'(\Sigma_0\times \Sigma_0) \\
\nabla_{n_x}\omega_{2}|_{\Sigma_0 \times \Sigma_0} \in \mathcal{D}'(\Sigma_0\times \Sigma_0)  \\
\nabla_{n_y} \omega_{2}|_{\Sigma_0 \times \Sigma_0} \in \mathcal{D}'(\Sigma_0\times \Sigma_0) \\
\nabla_{n_x}\nabla_{n_y} \omega_{2}|_{\Sigma_0 \times \Sigma_0} \in \mathcal{D}'(\Sigma_0\times \Sigma_0),
\end{cases}$$
where $\mathcal{P}:= \square_g - m^2 - \xi R$  denotes the Klein-Gordon operator, $R$ is the scalar curvature, $\square_g : = g^{ab}\nabla_a\nabla_b$ is the connection d'Alembertian operator defined on $(\cM,g_\ab)$,  $m$ is the mass of the field, $\xi$ the coupling to the scalar curvature and $n$ is the future-directed unit normal vector field to the Cauchy hypersurface $\Sigma_0$. The associated bidistribution $\omega_2$ must further satisfy the following structural constraints:
\begin{align*}
\mathrm{(i)} &\quad \omega_2(f,f') - \omega_2(f',f) = i \mathcal{G}_{\mathcal{P}}(f,f') \\
\mathrm{(ii)} &\quad |\mathcal{G}_{\mathcal{P}}(f,f')|^2 \leq 4 \omega_2(f,f) \omega_2(f',f') \\
\mathrm{(iii)} &\quad \omega_2(f,f) > 0 \quad \text{for all } f \notin \operatorname{ran}(\mathcal{P}) \\
\mathrm{(iv)} &\quad \mathrm{WF}(\omega_2) = \{ (x, k_x; y, -k_y) \in (T^*\mathcal{M}\times T^*\mathcal{M}) \setminus \{0\} \mid (x,k_x) \sim_{\parallel} (y,k_y),  k_x \triangleright 0 \}
\end{align*}
where, $\mathcal{G}_{\mathcal{P}}$ is the \emph{causal propagator}, i.e.~the difference between the retarded and the advanced Green operator, and where $\mathrm{WF}(\omega_{2})$ denotes the \emph{wavefront set} of the bidistribution $\omega_{2}$. Condition (i) implements the canonical commutation relations, while (ii) and (iii) enforce the strict positivity of the state. Condition (iv), known as the {\bf Hadamard condition} (or {\bf microlocal spectrum condition}), geometrically constrains the wavefront set $\mathrm{WF}(\omega_2)$: the relation $(x,k_x) \sim_{\parallel} (y,k_y)$ signifies that $x$ and $y$ are connected by a null geodesic along which the cotangent vector $k_x$ is parallel-transported to $k_y$, and $k_x \triangleright 0$ restricts $k_x$ to the open future lightcone. As proved by G\'erard and Wrochna in\cite{GW1,GW2}, when $(\cM,g_\ab)$ is a Lorentzian manifold of bounded geometry in a neighborhood of a Cauchy hypersurface, constructing a state satisfying the constraints (i)--(iv) boils down to construct a pseudodifferential operator $b \in \Psi^1(\Sigma_0)$ that factorises the Klein--Gordon operator modulo smoothing operators, i.1. 
$$\mathcal{P} - (\partial_t + i b)(\partial_t - i b) = r^{\pm}_{-\infty} \in C_b^{-\infty}(\mathbb{R}, \Psi^{-\infty}(\Sigma_0)).$$
In the specialised setting of an \emph{ultrastatic} globally hyperbolic spacetime, i.e.~a Lorentzian manifold that admits a temporal function with $\beta=1$ and for which the spatial metric $\gamma_{ij}$ is a complete Riemannian metric independent of $\tau$, the initial data characterising the vacuum state take the explicit form
$$\begin{cases}
\mathcal{P}_x \, \omega_{2}(x,y)= 0\\
\mathcal{P}_y \, \omega_{2}(x,y)= 0 \\
\omega_{2}|_{\Sigma_0 \times \Sigma_0} = \frac{1}{2} (-\Delta_{\gamma_0} + m^2)^{-\frac{1}{2}}(x,y) \\
\nabla_{n_x}\omega_{2}|_{\Sigma_0 \times \Sigma_0} = \frac{i}{2} \delta_{\gamma_0}(x,y) \\
\nabla_{n_y} \omega_{2}|_{\Sigma_0 \times \Sigma_0} = -\frac{i}{2} \delta_{\gamma_0}(x,y) \\
\nabla_{n_x}\nabla_{n_y} \omega_{2}|_{\Sigma_0 \times \Sigma_0} = \frac{1}{2} (-\Delta_{\gamma_0} + m^2)^{\frac{1}{2}}(x,y),
\end{cases}$$
where $\Delta_{\gamma_0}$ is the Laplace-Beltrami operator on $(\Sigma_0,\gamma_0)$ and $\delta_{\gamma_0}$ is the canonical Dirac distribution with respect to the induced Riemannian volume measure. Following the Gelfand-Naimark-Segal (GNS) construction, the elements of the abstract algebra $\mathcal{A}$ are represented as operator-valued distributions on a dense domain of a Hilbert space. However, the classical stress-energy tensor for a real scalar field $\phi \colon \mathcal{M} \to \mathbb{R}$, given by
\begin{equation}\label{eq:T_ab-omega}
    T_{ab} = \frac{1}{2} \nabla_a \nabla_b \phi^2 - \phi \nabla_a \nabla_b \phi - \frac{1}{2} g_{ab} \phi \square_g \phi - \frac{1}{2} g_{ab} \left(\frac{1}{2} \square_g \phi^2 + m^2 \phi^2 \right) + \xi\left( G_{ab} - \nabla_a \nabla_b + g_{ab} \square_g \right)\phi^2
\end{equation}
involves pointwise products of field distributions, which are ill-defined as operators. The Hadamard condition ensures that the bidistribution $\omega_2$ shares its singular structure with a geometric parametrix $\mathcal{H} \in \mathcal{D}'(\mathcal{M} \times \mathcal{M})$ known as {\bf Hadamard parametrix}. As explained by Brunetti, Fredenhagen and Kohler in \cite{BFK}, the difference $\omega_2(x,y) - \mathcal{H}(x,y)$ is smooth, therefore admitting a well-defined restriction to the diagonal of $\cM\times\cM$. Imposing the standard axiomatic requirements on the regularisation scheme --namely locality, covariance, conservation, and appropriate analytic scaling behaviour-- Hollands and Wald proved in a series of paper~\cite{HW01,HW02,HW05}  see also \cite{Moretti}, that the renormalised expectation value of the stress energy tensor in a quasifree Hadamard state is uniquely determined up to local curvature terms
\begin{equation}\label{eq:stress-tensor}
    \langle \colon T_{ab} \colon \rangle_\omega = \lim_{y \to x} \mathcal{D}_{ab} \big( \omega_2(x,y) - \mathcal{H}(x,y) \big) + \frac{A}{3} g_{ab} + \alpha_1 m^4 g_{ab} + \alpha_2 m^2 G_{ab} + \alpha_3 J_{ab} + \alpha_4 I_{ab}\,.
\end{equation}
Here, the terms are strictly defined as follows:
\begin{itemize}
\item $\omega_2(x,y)$ represents the Schwartz integral kernel of the state.
\item $\mathcal{D}_{ab}$ is a bidifferential operator derived from the classical expression for $T_{ab}$, symmetrised in $x$ and $y$, acting on the variables prior to the coinciding point limit.
\item The Hadamard parametrix $\mathcal{H} \in \mathcal{D}'(\mathcal{M} \times \mathcal{M})$ is locally expressible as
\begin{equation}\label{eq:Hadamard-parametrix}
    \mathcal{H}(x,y) = \lim_{\epsilon \to 0^+} \left( \frac{\mathsf{u}(x,y)}{\sigma_\epsilon(x,y)} + \mathsf{v}(x,y) \log(\mu^2 \sigma_\epsilon(x,y)) \right),
\end{equation}
where $\sigma_\epsilon(x,y) := \sigma(x,y) + i \epsilon (t(x) - t(y))$ is the $\epsilon$-regularised Synge world function (half the squared geodesic distance,  taken with a minus sign for causally separated points), $t$ is a global time function, $\mathsf{u}, \mathsf{v} \in C^\infty(\mathcal{M} \times \mathcal{M})$ are smooth geometric biscalars, and $\mu > 0$ is an arbitrary reference mass scale.
\item $A$ is the coefficient that dictates the breakdown of conformal invariance, explicitly given by
\begin{equation}\label{eq:Anomaly}
    \frac{A}{3} = \frac{1}{4\pi^2} \left( \frac{(6\xi - 1)m^2 R}{24} + \frac{(6\xi-1)^2 R^2}{288} + \frac{R_{abcd}R^{abcd} - R_{ab}R^{ab}}{720} + \frac{(5\xi -1)\square_g R}{120} \right),
\end{equation}
where $R_{abcd}$ and $R_{ab}$ denote the Riemann and Ricci tensors, respectively.
\item $\alpha_i \in \mathbb{R}$ are {\bf renormalisation constants}, fixing the residual finite ambiguities of the theory.
\item $I_{ab}$ and $J_{ab}$ are conserved local geometric tensors 
given by
        \begin{align*}
            J_{ab} &= 2\nabla_a \nabla_b R -2 g_{ab} \square_g R + \frac{1}{2} g_{ab} R^2 -2 R R_{ab} \\
        I_{ab} &= -2\square_g R_{ab} + \frac{2}{3} \nabla_a \nabla_b R + \frac{1}{3} \square_g R g_{ab} - \frac{1}{3} R^2 g_{ab} + \frac{4}{3} R R_{ab} + g_{ab}R_{cd}R^{cd} -4 R_{abcd}R^{cd} \,.
        \end{align*}    
\end{itemize}
We emphasise that the trace coefficient $A$ constitutes an inevitable {\bf anomalous contribution} to the trace of the stress-energy tensor \cite{Waldtrace, Moretti}. It corresponds geometrically to the coinciding point limit of the Hadamard coefficient $\mathsf{v}_1$, and it cannot be entirely absorbed by redefining the finite renormalisation parameters $\alpha_i$. In flat spacetimes, such as the Minkowski background, $A$ vanishes identically.

\subsection{The Cauchy problem for the semiclassical Einstein-Klein-Gordon system}

\noindent The semiclassical Einstein-Klein-Gordon equations are a  system of coupled PDEs for a Lorentzian metric $g_{ab}\in\Gamma(T^*\cM\otimes_s T^*\cM)$ and the integral kernel of a bidistribution $\omega \in \cD'(\cM\times\cM)$,
\begin{equation}
\label{eq:system-of-equation}
    \begin{cases}
         G_\ab  = \kappa \langle \Wick{ T_\ab }\rangle_{\omega} \\
          \mathcal{P}_{x} \;\omega(x,y)= 0\\
          \mathcal{P}_{y}\; \omega(x,y)= 0 
    \end{cases}
\end{equation}
subject to the following constraints: 
\begin{enumerate}
    \item[(I)] the resulting quasifree state $\omega$ is Hadamard;
\item[(II)] $(\cM, g_{ab})$ is globally hyperbolic.
\end{enumerate}

Due to the presence of constraints, formulating a Cauchy problem is highly non-trivial, as constructing a Hadamard state would {\em a priori} require specifying an infinite number of initial conditions on $\Sigma_0$, see e.g. \cite{GS}. To avoid prescribing an infinite set of Cauchy data, we have to change our perspective. Instead of setting a forward initial value problem (IVP), we shall set a forcing problem (FP) in a spacetime so that the solutions of the (FP) will solve the constraint (I) and it will also be a solution of~\eqref {eq:system-of-equation} in the future of $\Sigma_0$. Then, the (IVP) can be recovered {\em a posteriori}, by restricting the obtained solutions to $\Sigma_0$. 
More precisely, since in this paper we are interested in studying the solutions of the semiclassical Einstein-Klein-Gordon systems perturbed around the {\bf vacuum solution} $(\eta_\ab,\omega_0)$, where $\eta_\ab$ is the Minkowski metric and $\omega_0$ is the Poincar\'e  vacuum for Klein-Gordon,  we shall set a (FP) for the perturbations employing a temporal cutoff function $\chi\in C^\infty_\pc(\RR)$. The goal is to construct a metric $g_\ab$ that agrees with $\eta_\ab$ in the past of supp$(\chi)$ and a bidistribution $\omega\in \cD'(\cM\times\cM)$ that solves
$$    \begin{cases}
          \mathcal{P}_{x} \;\omega(x,y)= \Xi(x,y)\\
          \mathcal{P}_{y}\; \omega(x,y)= \Xi(x,y)\\
 \omega(x,y)|_{\Sigma_{-\epsilon}\times \Sigma_{-\epsilon}}  = \frac{1}{2}(-\Delta+m^2)^{-\frac{1}{2}}(x,y)  \\
        \partial_t\omega(x,y)|_{\Sigma_{-\epsilon}\times \Sigma_{-\epsilon}} = \frac{i}{2}  \delta(x,y) \\
           \partial_\tau \omega(x,y)|_{\Sigma_{-\epsilon}\times \Sigma_{-\epsilon}} = -\frac{i}{2}  \delta(x,y) \\
       \partial_t\partial_\tau \omega(x,y)|_{\Sigma_{-\epsilon}\times \Sigma_{-\epsilon}}   = \frac{1}{2}(-\Delta+m^2)^{\frac{1}{2}}(x,y)
    \end{cases}
$$
for a suitable choice of $\Xi\in C_c^\infty(\mathcal{M}\times\mathcal{M})$,  with $\Sigma_{-\epsilon}$ contained in the past of supp$(\chi)$ and with $\Delta$ being the Euclidean Laplacian. In this way, $g_\ab$ will be globally hyperbolic --at least for short time-- and $\omega$ will agree with the Poincaré vacuum in a neighborhood of $\Sigma_{-\epsilon}\times \Sigma_{-\epsilon}$, hence, it will satisfy constraint (I) on account of H\"ormander's propagation of singularities theorem. 
\medskip

Before setting a (FP) for the perturbation of the vacuum solution of \Eqs~\eqref{eq:system-of-equation}, let us clarify how to set a (FP) in a simple toy model. Consider the forward (IVP) 
\begin{equation}\label{eq:incauchyS}
\begin{cases}
\mS \, u  &= 0 \\
 u|_{{\Sigma_0}} &=  u_0 
\end{cases}
\end{equation}
for the first-order hyperbolic operator
$$\mS: C^\infty(\RR^2)\to C^\infty(\RR^2)\,, \qquad \mS:= \partial_t - \partial_x \,.$$
By multiplying on the left with the temporal cutoff function $\chi\in C^\infty_\pc(\RR)$ defined by
\begin{equation}\label{eq:cutoff}
\chi(t):=\begin{cases}
 1 & t\geq 0\\
 0 & t \leq -\epsilon \
\end{cases} 
\end{equation}
with $\epsilon >0$, we obtain 
$$
\begin{cases}
(\mS \, u)\chi  &=  \mS \, (\chi u) - (\partial_t \chi) u  \\
 (\chi u)|_{{\Sigma_0}} &=  \chi u_0  \,.
\end{cases}
$$
We notice that for any solution $u$ of the forward (IVP)~\eqref{eq:incauchyS}, there exists a unique solution $\tilde{v}$ of the (FP)
\begin{equation*}
\begin{cases}
\mS \,\tilde v &= f:=  (\partial_t \chi) u \\
\tilde v|_{\Sigma_{-\epsilon}} &= 0 
\end{cases}
\end{equation*}
given by $\tilde v=\chi u$.
Conversely, given a smooth source $f$ that is compactly supported in the interior of $ J^+(\Sigma_{-\epsilon}) \cap J^-(\Sigma_0)$, for any solution of the (FP)
\begin{equation*}
\begin{cases}
\mS \,\tilde v &= f\in C_\cc^\infty( J^+(\Sigma_{-\epsilon}) \cap J^-(\Sigma_0)) \\
\tilde v|_{\Sigma_{-\epsilon}} &= 0 
\end{cases}
\end{equation*}
there exists a unique solution of the forward (IVP) 
\begin{equation*} 
\begin{cases}
\mS \, u  &= 0 \\
 u|_{\Sigma_0} &=  \Tilde{v}|_{\Sigma_0} \,.
\end{cases} 
\end{equation*} 

In our context --where an (IVP) cannot be defined \emph{a priori}-- we exploit this equivalence in reverse. 
We shall solve a (FP)  with a smooth source $f$ with the support that is compactly supported in the interior of $ J^+(\Sigma_{-\epsilon}) \cap J^-(\Sigma_0)$
\begin{equation*}
\begin{cases}
\mS \,\tilde v &= f\in C_\cc^\infty( J^+(\Sigma_{-\epsilon}) \cap J^-(\Sigma_0)) \\
\tilde v|_{\Sigma_{-\epsilon}} &= 0 
\end{cases}
\end{equation*}
and then we can recover a forward (IVP) 
\begin{equation*} 
\begin{cases}
\mS \, u  &= 0 \\
 u|_{\Sigma_0} &=  \Tilde{v}|_{\Sigma_0} 
\end{cases} 
\end{equation*}
at {\em posteriori.} This strategy shares a theoretical foundation with the method for constructing retarded Green operators, as readers familiar with the subject will recognise. Within this framework, the specification of initial data naturally translates into the selection of an appropriate source term. 
\medskip

Having illustrated this mechanism with the toy model, we now proceed to the actual semiclassical framework. As we anticipated in the beginning of this section, we are interested in the semiclassical Einstein-Klein-Gordon equations linearised around the vacuum solution
$(\eta_\ab ,\omega_0)$. Although the Einstein tensor is identically zero for the Minkowski metric, the expectation value of the stress-energy tensor for the Poincaré vacuum does not vanish identically. To this end, it will be necessary to find a set of renormalisation constants $\alpha_i$, so that the vacuum solution will solve the semiclassical Einstein-Klein-Gordon equations.
The renormalisation constants $\alpha_3$ and $\alpha_4$ do not appear, because they multiply contributions containing curvature tensors, which vanish on Minkowski spacetime. Therefore, they do not play any role on that background. The constant $\alpha_2$ is a renormalisation of the Newton constant which is fixed by experiments and furthermore, it multiplies the Einstein tensor, which vanish on a flat background.
Hence, it also has no new role on the Minkowski background. The only relevant constant is $\alpha_1$, which plays the role of a cosmological constant. As we shall see in Proposition~\ref{prop:background}, the choice of $\alpha_1$ for which the vacuum solution $(\eta_{ab},\omega_0)$ solves the semiclassical Einstein-Klein-Gordon equations
\begin{equation*}
    \begin{cases}
 G_\ab^{(0)}:= G_\ab[\eta_\ab]  = \kappa \langle \Wick{T_{\ab}[\eta_\ab]}\rangle_{\omega_0}  \\
{\mathcal{P}^{(0)}_x}\omega_0(x,y):={\mathcal{P}_{[\eta_\ab]}}_x \;\omega_0(x,y) = 0\\
{\mathcal{P}^{(0)}_y}\omega_0(x,y):={\mathcal{P}_{[\eta_\ab]}}_y \; \omega_0(x,y) = 0 \,
\end{cases}
\end{equation*}
is given by 
	\begin{equation*}
	\alpha_1  =\frac{1}{64 \pi^2} \left( - \frac{3}{2} + 2\gamma + \log \left(\frac{m^2}{2\mu^2}\right)
	\right)\,,
	\end{equation*}
where $m$ is the mass of the Klein-Gordon field, $\mu$ is the parameter appearing in the Hadamard parametrix and $\gamma$ is the Euler-Mascheroni constant.  

At the first-order perturbation, we are then left to solve
$$
\begin{cases}
 G_\ab^{(1)}:=\left.\frac{d}{d\lambda}\Big( G_\ab[\eta_\ab+\lambda h_\ab]\Big)\right|_{\lambda=0} = \left.\kappa \frac{d}{d\lambda} \Big(\langle \Wick{T_{\ab}[\eta_\ab+\lambda h_\ab]}\rangle _{[\omega_0+\lambda\omega_1]}\Big)\right|_{\lambda=0}\\
\left.\frac{d}{d\lambda} \Big({\mathcal{P}_x}_{[\eta_\ab+\lambda  h_\ab]}(\omega_0(x,y)+\lambda\omega_1(x,y))\Big)\right|_{\lambda=0}=0\\
\left.\frac{d}{d\lambda} \Big({\mathcal{P}_y}_{[\eta_\ab+\lambda  h_\ab]}(\omega_0(x,y)+\lambda\omega_1(x,y))\Big)\right|_{\lambda=0}=0
\end{cases}
$$ 
To set a forcing problem in the space of past-compact sections, we follow the toy model and we multiply the linearised equations by $\chi$ defined by \Eq~\eqref{eq:cutoff}. Setting $\tilde{h}_{ab} :=  h_\ab\chi$ and ${\tilde\omega_2^{(1)}} := \omega_2^{(1)}(1\otimes \chi + \chi \otimes 1)$, we finally obtain a forcing problem for the perturbations $(\tilde{h}_\ab,\tilde\omega_1)\in \Gamma_\pc(\otimes_s^2 T^*\cM)\oplus \cD_\pc'(\cM\times \cM)$ 
\begin{equation}
    \label{eq:FPpert}
\begin{cases}
 G_\ab^{(1)}[\tilde{h}_\ab] = 
\left.\kappa \frac{d}{d\lambda} \Big(\langle \Wick{T_{\ab}[\eta_\ab+\lambda\tilde h_\ab]}\rangle _{[\omega_0+\lambda\tilde\omega_1]}\Big)\right|_{\lambda=0}+S_\ab\\
\left.\frac{d}{d\lambda} \Big({\mathcal{P}_x}_{[\eta_\ab+\lambda \tilde h_\ab]}(\omega_0(x,y)+\lambda\tilde\omega_1(x,y))\Big)\right|_{\lambda=0}=\Xi(x,y)\\
\left.\frac{d}{d\lambda} \Big({\mathcal{P}_y}_{[\eta_\ab+\lambda \tilde h_\ab]}(\omega_0(x,y)+\lambda\tilde\omega_1(x,y))\Big)\right|_{\lambda=0}=\Xi(x,y)\\
 \tilde\omega_1|_{\Sigma_{-\epsilon}\times \Sigma_{-\epsilon}}  =\partial_t\tilde\omega_1|_{\Sigma_{-\epsilon}\times \Sigma_{-\epsilon}}  =\partial_\tau \tilde\omega_1|_{\Sigma_{-\epsilon}\times \Sigma_{-\epsilon}}= \partial_t\partial_\tau \tilde\omega_1|_{\Sigma_{-\epsilon}\times \Sigma_{-\epsilon}}   =0  \\
\tilde{h}_\ab|_{\Sigma_{-\epsilon}}=\partial_t\tilde{h}_\ab|_{\Sigma_{-\epsilon}}=0  
\end{cases}
\end{equation}
where $S_\ab$ and $\Xi$ represents the symmetric, smooth past-compact (supported in the future of $\Sigma_{-\epsilon}\times \Sigma_{-\epsilon} $) terms generated by the commutation with the temporal cut-off function. 

\medskip

\subsection{The linearised semiclassical Einstein equations}\label{subsec:Nonlocsemi}

The basic idea of decoupling the semiclassical Einstein-Klein-Gordon equations is quite natural and simple. While the system
$$ \begin{cases}
         G_\ab  = \kappa \langle \Wick{ T_\ab }\rangle_{\omega} \\
          {\mathcal{P}}_{x} \;\omega(x,y)= 0\\
          {\mathcal{P}}_{y}\; \omega(x,y)= 0 
    \end{cases}$$
 is local when we consider the unknown $(g_\ab,\omega)$, it becomes nonlocal with respect to the single unknown $g_\ab$, by solving the equations for $\omega$. 
This will be achieved via the so-called {\bf classical M\o ller operator} as follows.
Consider two globally hyperbolic metrics $g_\ab$ and $\tilde{g}_\ab$ that are {\bf paracausally related} in the sense of~\cite{MMV}, namely there exists a finite sequence of globally hyperbolic metrics starting at $g_\ab$ and ending at $\tilde g_\ab$ such that the open future light cones of the consecutive metrics overlap. Then the algebras of observables $\cA_{ g_\ab}$ and $\cA_{\tilde g_\ab}$ for the Klein-Gordon fields propagating, respectively, on $(\cM, g_\ab)$ and $(\cM,\tilde g_\ab)$  are $*$-isomorphic. By denoting this $*$-isomorphism with $\mathcal{R}:\cA_{ g_\ab}\to\cA_{\tilde g_\ab}$, we can notice that any Hadamard state~$\omega_{ g_\ab}:\cA_{ g_\ab}\to \CC$ can be obtained by pullback along $\mathcal{R}$ a Hadamard state~$\tilde{\omega}_{\tilde{\g}_\ab}:\cA_{\tilde g_\ab}\to \CC$. Furthermore, it holds that
$$ {\mathcal{P}}_{[ g_\ab]} \;\omega_{\g} = {\mathcal{P}}_{[ g_\ab]} \mathcal{R}^*\;\tilde\omega_{\tilde g_\ab} = {\mathcal{P}}_{[\tilde g_\ab]} \;\tilde\omega_{\tilde g_\ab} =  0 \,. $$
Using this $*$-isomorphism, we can decouple the semiclassical Einstein-Klein-Gordon equations as 
$$ \begin{cases}
 G_\ab[ g_\ab]  = \kappa \langle (\Wick{\mathcal{R}( T_\ab)[\tilde g_\ab]}\rangle_{\tilde\omega_{\tilde{g}_\ab}} \\
{\mathcal{P}_x}_{[\tilde g_\ab]} \;\tilde\omega_{\tilde g_\ab}(x,y) =  0\\{\mathcal{P}_y}_{[\tilde g_\ab]} \;\tilde\omega_{\tilde g_\ab}(x,y) =  0
        \end{cases}$$
but the price to pay is that the first equation becomes nonlocal in time and space. \medskip

We shall now combine this idea with the Cauchy problem we have obtained in the previous section. Since we are interested in studying the perturbations of the vacuum solution, we have a natural candidate for  $\tilde{g}_\ab$, namely 
$$\tilde{g}_\ab=\eta_\ab\,.$$ 
For sufficiently small time, $ g_\ab$ is globally hyperbolic and the intersections of the open lightcones of $ g_\ab$ and $\eta_\ab$ is always non-empty. Therefore, we can conclude that they are paracausally related ---see e.g.~\cite[Prop. 22]{MMV}.
It remains to write a state $\omega_{ g_\ab}$ as the pullback of a state $\omega_{\eta_\ab}$ for a Klein-Gordon field propagating on Minkowski spacetime. To this end, we notice that  
$$\tilde\omega_{\eta_\ab}=\omega_0 + \tilde{\omega}\,,$$
where $\tilde{\omega}$ is a smooth bisolution of an inhomogeneous Klein-Gordon equation that is past-compact  and that contributes to $\tilde{\omega}_1$ in \eqref{eq:FPpert}. Putting all together, the linearisation of stress energy tensor in the Cauchy problem~\eqref{eq:FPpert} will be given by
\begin{equation*}
\left.\kappa \frac{d}{d\lambda} \Big(\langle \Wick{T_{\ab}[\eta_\ab+\lambda\tilde h_\ab]}\rangle _{[\omega_0+\lambda\tilde\omega_1]}\Big)\right|_{\lambda=0}
=
- \kappa\langle N_\ab[\tilde h_\ab]\rangle_{\omega_0} +
\kappa\langle \Wick{ T_\ab[\tilde h_\ab]}\rangle_{\omega_0}
+
\kappa\langle\Wick{{T}_{ab}[\eta_\ab]}\rangle_{\tilde\omega}\,,
\end{equation*}
where
$N_\ab [ h_\ab]$ is a suitable nonlocal contribution depending linearly on $ h_\ab$,  which is due to the linear part in $\tilde{h}_{ab}$ of the  
pullback of $\omega_0$ to $\mathcal{A}_{\tilde{g}_{ab}}$, namely to the remaining contribution in $\tilde{\omega}_1$ in \eqref{eq:FPpert}.  From a technical point of view, it is easier to compute the operator $N_\ab [ h_\ab]$ using the so-called {\bf quantum M\o ller map} -- we refer to  subsection~\ref{sec:nonlocalcontr} for more details.  The price to pay is that we have to match the renormalisation constant $\alpha_1$ with the one ${\tilde{\alpha}}_1$ obtained using the quantum M\o ller map.
Up to this technical detail, the Cauchy problem for the semiclassical Einstein-Klein-Gordon equations can be rewritten as two distinct Cauchy problems: 
\begin{itemize}
    \item the Cauchy problem for the past-compact perturbation of a Hadamard state 
    \begin{equation*}
\begin{cases}
{\mathcal{P}_x}_{[\eta_\ab]}\tilde\omega(x,y)=\Xi'(x,y)\\
{\mathcal{P}_y}_{[\eta_\ab]}\tilde\omega(x,y)=\Xi'(x,y)\\
 \tilde\omega|_{\Sigma_{-\epsilon}\times \Sigma_{-\epsilon}}  =\partial_t\tilde\omega|_{\Sigma_{-\epsilon}\times \Sigma_{-\epsilon}}  =\partial_\tau \tilde\omega|_{\Sigma_{-\epsilon}\times \Sigma_{-\epsilon}}= \partial_t\partial_\tau \tilde\omega|_{\Sigma_{-\epsilon}\times \Sigma_{-\epsilon}}   =0 
\end{cases}
\end{equation*}
\item  the Cauchy problem for the {\bf linearised semiclassical  Einstein equations} \begin{equation}
    \label{eq:FPmetric}
\begin{cases}
G_{ab}^{(1)}[\tilde{h}_\ab]  + \kappa \langle N_\ab[\tilde h_\ab]\rangle_{\omega_0} -
\kappa \langle \Wick{T_{ab}[\tilde h_\ab]}\rangle_{\omega_0}
=
S'_\ab\\
\tilde{h}_\ab|_{\Sigma_{\epsilon}}=\partial_t\tilde{h}_\ab|_{\Sigma_{-\epsilon}}=0  
\end{cases}
\end{equation}
where $S'_\ab=S_\ab - \kappa \langle\Wick{{T}_{ab}[\eta_\ab]}\rangle_{\tilde\omega}$.
\end{itemize}
Like for the linearised Einstein equation, we cannot expect the Cauchy problem~\eqref{eq:FPmetric} to be well-posed. Indeed, the semiclassical equations inherit a {\bf linear gauge symmetry} parameterised by past-compact smooth (0,1)-tensor $X_{a}$  
$$\tilde h_\ab\mapsto \tilde h_\ab+ \nabla_{(a}X_{b)}\,,$$ where $ \nabla_{(a}X_{b)} =\frac{1}{2}(\partial_a X_b + \partial_b X_a)$ denotes the symmetrised gradient.
Being invariant under linear gauge transformations implies that the linearised semiclassical Einstein equations are not hyperbolic. Therefore, we should study the Cauchy problem up to a gauge transformation, namely 
$$\frac{\mathfrak{G}^{-1}(\{S_{ab}\})} { \text{ran}  \nabla^S|_{\Gamma_{\pc}(T^*\mathcal{M})}}\,,$$
for a source $S_{ab}\in\Gamma_{\mathrm{pc}}(T^*\cM \otimes_s T^*\cM)$,
where $\mathfrak{G}$ and $\nabla_S$ are the operators defined by
\begin{align*}
 &\mathfrak{G}: \Gamma_\pc(T^*\cM \otimes_s T^*\cM)\to \Gamma_\pc(T^*\cM \otimes_s T^*\cM) \qquad 
 \mathfrak{G}(\tilde{h}_\ab) := G_\ab^{(1)}[\tilde{h}_\ab]  + \kappa\langle N_\ab[\tilde h_\ab]\rangle_{\omega_0} -
\kappa\langle \Wick{T_{ab}[\tilde h_\ab]}\rangle_{\omega_0} \\
& \nabla^S: \Gamma_\pc(T^*\cM)\to \Gamma_\pc(T^*\cM \otimes_s T^*\cM)  \hspace{2cm}\nabla^S_{a}X_b:=\nabla_{(a}X_{b)}\,. 
\end{align*}

\subsection{Summary of the main results and structure of the paper}

To discuss the well-posedness of the Cauchy problem in de Donder gauge, we should further decouple the linearised semiclassical Einstein equations. This will be achieved by means of a suitable decomposition of past-compact symmetric (0,2)-tensors performed in Section~\ref{se:metric-perturbation-decomposition}. In this section, we will prove that this decomposition is compatible with the de Donder gauge, it realises a {\bf complete gauge fixing}, and, more importantly, it allows to decouple the linearised semiclassical Einstein equations in a Scalar and in a Transverse-Traceless Tensor part. Building on top of this decomposition, we shall discuss the explicit form of the nonlocal operator appearing in the linearised semiclassical Einstein equations using the quantum M\o ller operator. To this end, we should match the renormalisation constants computed in this setting with the ones obtained imposing that the vacuum solution solves the semiclassical Einstein-Klein-Gordon system. This analysis will be performed in Section~\ref{sec3}, where the principle of general
local covariance will be employed together with the principle of perturbative agreement to fix some of the renormalisation
constants.
A summary of the main results of these two sections can be found in the following theorem.

\begin{theorem}
Let $\bar{h}_{ab}\in\Gamma_{\mathrm{pc}}(T^{\ast}\mathbb{R}^{n}\otimes_{s}T^{\ast}\mathbb{R}^{n})$ be such that $\partial^{a}\bar{h}_{ab}=0$. Then, there are unique smooth and past-compact fields $\bar w\in C^{\infty}_{\mathrm{pc}}(\mathbb{R}^{n})$ and $ \bar  h_\ab^{\mathrm{T}\mathrm{T}}\in\Gamma_{\mathrm{pc}}(T^{\ast}\mathbb{R}^{n}\otimes_{s}T^{\ast}\mathbb{R}^{n})$ 
with $\partial^{a}\bar h_{ab}^{\mathrm{T}\mathrm{T}}=0$ and $\eta^{ab}\bar h_{ab}^{\mathrm{T}\mathrm{T}}=0$, such that $\bar h_{ab}$ decomposes as 
\begin{align*}
\bar h_{ab}= \bar h_{ab}^{\mathrm{S}} +\bar h_{ab}^{\mathrm{T}\mathrm{T}}.
\end{align*}
All the solutions of the constrained Cauchy problems
\begin{align}
   & \begin{cases}\label{eq:S}
    -\frac{1}{2 \kappa} \square \bar{h}_{ab}^{\mathrm{S}}
	= - \frac{1}{4} \cS \cK_0 [\bar{h}_{ab}^{\mathrm{S}}] +\frac{1}{64 \pi^2} m^4 \bar{h}_{ab}^{\mathrm{S}}  -\frac{1}{2}
{\tilde{\alpha}}^{\mathrm{S}}_2
 m^2 \square \bar{h}_{ab}^{\mathrm{S}}+
 {\tilde{\alpha}}^{\mathrm{S}}_3 \square\square \bar{h}_{ab}^{\mathrm{S}}
 + S_{ab}^{\text{S}}
	\\
    \partial^a \bar{h}_{ab}^{\mathrm{S}}=0\\
    \bar{h}^{\mathrm{S}}_{ab}|_{\Sigma_{-\epsilon}}=\partial_t\bar{h}^{\mathrm{S}}_{ab}|_{\Sigma_{-\epsilon}}=0
\end{cases}
\\
&\begin{cases}\label{eq:TT}
    -\frac{1}{2 \kappa} \square \bar{h}_{ab}^{\mathrm{TT}}
		=\frac{1}{2} 
		\cT \cK_0 [\bar{h}_{ab}^{\mathrm{TT}}] -\frac{1}{2}
{\tilde{\alpha}}^{\mathrm{TT}}_2 m^2 \square \bar{h}_{ab}^{\mathrm{TT}}+{\tilde{\alpha}}^{\mathrm{TT}}_4 \square\square \bar{h}_{ab}^{\mathrm{TT}}+S_{ab}^{\mathrm{TT}}
	\\
    \partial^a \bar{h}_{ab}^{\mathrm{TT}}=0\\
    \bar{h}^{\mathrm{TT}}_{ab}|_{\Sigma_{-\epsilon}}=\partial_t\bar{h}^{\mathrm{TT}}_{ab}|_{\Sigma_{-\epsilon}}=0
\end{cases}
\end{align}
are also solution of the Cauchy problem~\eqref{eq:FPmetric}.
Here, $\square$ 
denotes the connection d'Alembert operator of the Minkowski metric, the operators $\cS$ and $\cT$ are defined by
$$\cS = \frac{2}{3} \at m^2 + \frac{1}{2} (1 - 6\xi) \square \ct^2,  \qquad  \cT = \frac{1}{60} \at \square - 4m^2 \ct ^2,$$ respectively,
while $\mathcal{K}_0$ is constructed from
	\begin{align*}
		\cK_{0} (x) = 	-\mathrm{i} \at \mathsf{G}_F^2 (-x) - \mathsf{G}_+^2 (-x) \ct_{\mathrm{reg}} 	= \frac{1}{16\pi^2} \square  \int_{\mathbb{R}^+} \frac{1}{M }\sqrt{1-\frac{4m^2}{M}}\Theta(M-4m^2) \mathsf{G}_{\mathrm{ret}}(x,M)\, dM,
	\end{align*}
	which is a regularised integral kernel that maps past-compact smooth functions to past-compact smooth functions. As always, $\mathsf{G}_{\mathrm{ret}}(x,M)$ denotes the retarded propagator of the flat Klein-Gordon operator on the Minkowski background with mass $\sqrt{M}$,  $\mathsf{G}_F$ is the Feynman propagator while $\mathsf{G}_+=\omegazd$ is the two point distribution of the Poincar\'e vacuum. As usual, $\Theta$ denotes the Heaviside step function.
\end{theorem}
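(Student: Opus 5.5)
We split the proof into an algebraic/analytic decomposition step and a dynamical step.

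\emph{Decomposition.} We look for $\bar h^{\mathrm S}_{ab}$ of the form generated by a single scalar. The natural divergence-free candidate is
\[
\bar h^{\mathrm S}_{ab}=(\partial_a\partial_b-\eta_{ab}\square)\bar w .
\]
Taking the trace of $\bar h_{ab}=\bar h^{\mathrm S}_{ab}+\bar h^{\mathrm{TT}}_{ab}$ and using $\eta^{ab}\bar h^{\mathrm{TT}}_{ab}=0$ gives the scalar wave equation
\[
-(n-1)\square\bar w=\eta^{ab}\bar h_{ab}.
\]
Since the right-hand side is past-compact, the retarded Green operator of $\square$, which is a bijection on $C^\infty_{\mathrm{pc}}$ with inverse $\square$, yields a unique past-compact solution $\bar w$. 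We then set $\bar h^{\mathrm{TT}}_{ab}:=\bar h_{ab}-\bar h^{\mathrm S}_{ab}$. This tensor is traceless by construction and divergence free because $\partial^a(\partial_a\partial_b-\eta_{ab}\square)\bar w=0$ identically.

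Uniqueness follows from a short argument. A past-compact $\bar w$ with $(\partial_a\partial_b-\eta_{ab}\square)\bar w$ traceless must satisfy $\square\bar w=0$. Since $\bar w$ is past-compact, this forces $\bar w=0$, and hence uniqueness of $\bar h^{\mathrm{TT}}_{ab}$ as well.

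\emph{Dynamics.} Next we insert the decomposition into the operator $\mathfrak G$ of \eqref{eq:FPmetric}. In de Donder gauge ($\partial^a\bar h_{ab}=0$, with $\bar h$ the trace-reversed perturbation) we have $G^{(1)}_{ab}=-\tfrac12\square\bar h_{ab}$. The local part $\langle{:}T_{ab}[\tilde h]{:}\rangle_{\omega_0}$ together with the curvature counterterms linearises to the $m^4$ term, the $\alpha_2 m^2\square$ term, and the $\square\square$ terms. On the S sector the $J_{ab}$ contributions act as $\square^2$, and on the TT sector only the Weyl-type part $I_{ab}$ survives. We then regroup these constants into $\tilde\alpha^{\mathrm S}_{2,3}$ and $\tilde\alpha^{\mathrm{TT}}_{2,4}$, using the renormalisation-constant matching from Section~\ref{sec3}.

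For the nonlocal part $\langle N_{ab}\rangle_{\omega_0}$ we use the quantum M{\o}ller map expression. Its kernel is a Poincaré-covariant bidistribution built from $-\mathrm i(\mathsf G_F^2-\mathsf G_+^2)_{\mathrm{reg}}$, with polynomial-in-$\partial$ tensor structure. By transversality (conservation), this structure must be a combination of the two projectors $(\partial_a\partial_b-\eta_{ab}\square)(\partial_c\partial_d-\eta_{cd}\square)$ and the spin-2 projector. On the S part the prefactor is the square of the scalar vertex $m^2+\tfrac12(1-6\xi)\square$, giving $\cS$. On the TT part it is the $\tfrac1{60}(\square-4m^2)^2$ factor from the Källén–Lehmann spectral weight $\sqrt{1-4m^2/M}$, giving $\cT$. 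Each sector is mapped into itself and the cross terms vanish. Summing the two sector equations therefore gives back $\mathfrak G(\tilde h)=S'$ with $S'=S^{\mathrm S}+S^{\mathrm{TT}}$, and the initial data at $\Sigma_{-\epsilon}$ follow from past-compactness.

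\emph{Main obstacle.} The hardest step is the explicit computation of the vacuum expectation of the nonlocal term, together with the precise identification of $\cK_0$ as the regularised spectral integral $\tfrac1{16\pi^2}\square\int \frac{1}{M}\sqrt{1-4m^2/M}\,\Theta(M-4m^2)\,\mathsf G_{\mathrm{ret}}(x,M)\,dM$. We would first compute $(\mathsf G_F^2-\mathsf G_+^2)$ in momentum space via the two-particle phase space. That difference is retarded, which is why it maps past-compact sections to past-compact sections. The renormalisation then contributes only local terms that are absorbed into the $\tilde\alpha$'s.
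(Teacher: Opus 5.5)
Your decomposition step is correct and more direct than the paper's. The paper first proves the general scalar/vector/TT decomposition (Proposition~\ref{prop:decomposition}, with $w$ built from $\mathsf G_{\mathrm{ret}}^{\circ 2}$ and the vector from $\mathsf G_{\mathrm{ret}}$), and only then shows in Proposition~\ref{prop:decomp-divfree} that the vector part vanishes in de Donder gauge. Your ansatz $(\partial_a\partial_b-\eta_{ab}\square)\bar w$ with $\bar w=-\frac1{n-1}\mathsf G_{\mathrm{ret}}\bar h$ reproduces exactly $\bar h^{\mathrm S}_{ab}=\frac1{n-1}\tau_{ab}\bar h$, and your trace argument for uniqueness is sound. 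Your structural account of $N_{ab}$ is also the paper's route: Poincar\'e covariance plus conservation leave only the spin-0 and spin-2 transverse structures, with weights from two-particle phase space. However, you assert the prefactors rather than compute them. The normalisations $-\frac14\cS$ versus $+\frac12\cT$ also need $P^{(\mathrm S)}h=-\frac12\bar h^{\mathrm S}$, $P^{(\mathrm{TT})}h=h^{\mathrm{TT}}$ and $2\Pi_{ab}=\Pi_{abcd}h^{cd}$.

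The genuine gap is the $m^4$ terms, which the statement pins down: $\frac{1}{64\pi^2}m^4\bar h^{\mathrm S}_{ab}$ in the scalar equation and no $m^4$ term at all in the TT equation. You attribute ``the $m^4$ term'' to the linearised local part $\langle\Wick{T_{ab}[\tilde h]}\rangle_{\omega_0}$. You then say renormalisation only adds local terms absorbed into free $\tilde\alpha$'s. Neither produces these values.

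In the paper's scheme the local part vanishes identically. The constants $d_1,c_1$ are chosen in Proposition~\ref{prop:background} so that $\omega_0(\Wick{\Phi})=\omega_0(\Wick{\Phi_{ab}})=0$, hence $\omega_0(\Wick{T^{(1)}_{ab}})=0$. The $m^4$ terms therefore come entirely from the renormalisation freedom of the regularised time-ordered product inside $N_{ab}$. This freedom is independent in the two sectors ($\tilde\alpha^{\mathrm S}_1$, $\tilde\alpha^{\mathrm{TT}}_1$) and is not determined by the point-splitting constant $\alpha_1$ on its own.

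The missing idea is Theorem~\ref{thm:fixing-alpha1}. Once $\alpha_1$ is fixed by requiring $(\eta_{ab},\omega_0)$ to solve the background equation, general local covariance and perturbative agreement fix $\tilde\alpha^{\mathrm{S}}_1$ and $\tilde\alpha^{\mathrm{TT}}_1$. They do so by comparing $N_{ab}$ with exactly computable late-time stress tensors:
\begin{itemize}
\item For an adiabatically switched-on conformal perturbation $\Omega^2\eta_{ab}$, which acts at late times as a mass shift $\delta m^2=2m^2W$, the trace changes by $-\frac{4}{64\pi^2}m^2\delta m^2$. This forces $\tilde\alpha^{\mathrm S}_1=\frac1{64\pi^2}$.
\item For a long-wavelength TT wave, which acts as a constant rescaling of $x^1,x^2$, the renormalised stress tensor stays zero with the chosen $\alpha_1$. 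This forces $\tilde\alpha^{\mathrm{TT}}_1=0$.
\end{itemize}
Both comparisons need WKB estimates on the modes, together with the decay of $\chi_\nu^{(k)}$. These show that the state tends to the corresponding vacuum and that the $\cK_0$ contributions vanish in the adiabatic limit. Without this step your argument yields the two sector equations only with undetermined terms $\tilde\alpha^{\mathrm S}_1 m^4\bar h^{\mathrm S}_{ab}$ and $\tilde\alpha^{\mathrm{TT}}_1 m^4\bar h^{\mathrm{TT}}_{ab}$, in place of $\frac1{64\pi^2}m^4\bar h^{\mathrm S}_{ab}$ and $0$.
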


Now that we have settled the Cauchy problems~\eqref{eq:S} and~\eqref{eq:TT} for the linearised semiclassical Einstein equations in the de Donder gauge, we can finally discuss its well-posedness. This is performed in Section~\ref{se:prototype-and-solution}. As a preliminary step, we shall rewrite the equations~\eqref{eq:S} and~\eqref{eq:TT} as
\begin{equation}\label{eq:Protointro}
\iota_{\mathcal{M}}\tilde{\mathsf{G}}_{\mathrm{ret}}( \square (\square-a_1)(\square-a_2)\phi\otimes \check\rho) +\sum_{j=0}^{2}b_j \square^j \phi=S \,,
\end{equation}
where $S$ is some past-compact source,  $\rho$ and $\check{\rho}$ are suitable $L^2$-functions (see Eq.~\eqref{eq:nonlocalK0} and Eq.~\eqref{eq:rhocheck}), $a_{i}$ and $b_{i}$ are real constants, $\tilde{\mathsf{G}}_{\mathrm{ret}}$ is the retarded fundamental solution of the Klein-Gordon operator $\tilde{\square}-4m^{2}$ on \emph{$5$-dimensional Minkowski spacetime} $\mathcal{M}_{5}:=\mathcal{M}\times\mathbb{R}$ and $\iota_{\mathcal{M}}$ realises the restriction to $\mathcal{M}\cong\mathcal{M}\times\{0\}\subset\mathcal{M}_{5}$.

As written, the analysis of the prototypical equations~\eqref{eq:Protointro} is challenging, since the highest-derivative term is governed by a \emph{nonlocal} operator, thereby precluding the direct application of standard techniques for local partial differential equations. The central idea of the following analysis, therefore, is to suitably \emph{invert} the nonlocal contribution, thereby recasting the equation into a form that is more amenable for an analysis of its Cauchy problem and asymptotic behaviour to be performed.

The analysis will be performed in two steps. First of all, by introducing suitable auxiliary functions $\mathfrak{h}_{j}$ for $j=0,1,2$, we will rewrite the prototypical Eq.~\eqref{eq:Proto} in the form 
\begin{align}\label{eq:Proto2intro}
\iota_{\mathcal{M}}\tilde{\mathsf{G}}_{\mathrm{ret}} ((\square^3 \phi+\beta_2\square^{2} \phi + \beta_1 \square \phi +\beta_0\phi) \otimes ( \check \rho+b_2\mathfrak{h}_2 )) =S
\end{align}
for suitable constants $\beta_{i}$. Subsequently, we show that, provided that $\check \rho+b_2\mathfrak{h}_2$ satisfies suitable regularity and positivity conditions, which ultimately will translate into restrictions on the parameters $a_i$ and $b_i$, the operator $\iota_{\mathcal{M}}\tilde{\mathsf{G}}_{\mathrm{ret}}$ becomes invertible, thereby leading to a reformulation of the equation into a fully local problem. As a second step, we show that a similar procedure can actually always been applied to the highest-order terms without any restrictions on the parameters $b_{i}$, hence leading to a 6th order hyperbolic equation with \emph{subleading} nonlocal contributions. More precisely, we will show that Eq.~\eqref{eq:Proto2intro} can equivalently be written in the form
\begin{align*}
\square^3 \phi+\beta_2\square^{2} \phi + \beta_1 \square \phi +\beta_0\phi
+
\mathcal{G}_{\varsigma}^{-1}( 
(\tilde{b}_1-b_1) \square \phi + 
(\tilde{b}_0-b_0) \phi)
=
\mathcal{G}_{\varsigma}^{-1}(S)\,,
\end{align*}
where $\mathcal{G}_{\varsigma}(\phi):=\iota_{\mathcal{M}}\tilde{\mathsf{G}}_{\mathrm{ret}}(\phi\otimes\check{\varsigma})$ with $\check{\varsigma}:=\check{\varrho}+b_{2}\mathfrak{h}_{2}$ denotes the nonlocal operator appearing in Eq.~\eqref{eq:Proto2intro} and where $\overline{b}_{i}$ are suitable real constants.\\
Having recast the system in this form, we establish the well-posedness of the Cauchy problem by employing perturbation theory, treating the nonlocal potential as a perturbative term (see Theorem~\ref{thm:convergence}).

In the last part of Section~\ref{se:prototype-and-solution}, we then perform an asymptotic analysis by studying the large-time behaviour of solutions to the prototypical semiclassical equations~\eqref{eq:Protointro}. This is the second main result of this section, Theorem~\ref{thm:stability}, in which we show that for some ranges of parameters $a_{i}$ and $b_{i}$, solutions are bounded and decay at most as $t^{-\frac{3}{2}}$, while for some others we obtain solutions with exponential growth. The implications of this analysis of the toy model~\eqref{eq:Protointro} when applied to the semiclassical Einstein equations will be discussed in Section~\ref{sec:5}. In particular, we shall prove that for any Cauchy data, the solutions of the system are linearly unstable. However, we will show that the growth is at most exponential and can be controlled by a factor that depends only on the mass of the Klein-Gordon fields. 
This is the main result of our paper.
\begin{theorem}
For any choice of past-compact smooth tensor $S_\ab(x)$ and for any symmetric past-compact smooth function $\Xi(x,y)$, there exists unique past-compact solution $(\tilde\omega_1,\tilde h_\ab)$ of the Cauchy problem for the linearised semiclassical Einstein-Klein-Gordon equations
$$\begin{cases}
 G_\ab^{(1)}[\tilde{h}_\ab] = 
\left.\kappa \frac{d}{d\lambda} \Big(\langle \Wick{T_{\ab}[\eta_\ab+\lambda\tilde h_\ab]}\rangle _{[\omega_0+\lambda\tilde\omega_1]}\Big)\right|_{\lambda=0}+S_\ab\\
\left.\frac{d}{d\lambda} \Big({\mathcal{P}_x}_{[\eta_\ab+\lambda \tilde h_\ab]}(\omega_0(x,y)+\lambda\tilde\omega_1(x,y))\Big)\right|_{\lambda=0}=\Xi(x,y)\\
\left.\frac{d}{d\lambda} \Big({\mathcal{P}_y}_{[\eta_\ab+\lambda \tilde h_\ab]}(\omega_0(x,y)+\lambda\tilde\omega_1(x,y))\Big)\right|_{\lambda=0}=\Xi(x,y)\\
\tilde{h}_\ab|_{\Sigma_{\epsilon}}=\partial_t\tilde{h}_\ab|_{\Sigma_{-\epsilon}}=0  \\
 \tilde\omega_1|_{\Sigma_{-\epsilon}\times \Sigma_{-\epsilon}}  =\partial_t\tilde\omega_1|_{\Sigma_{-\epsilon}\times \Sigma_{-\epsilon}}  =\partial_\tau \tilde\omega_1|_{\Sigma_{-\epsilon}\times \Sigma_{-\epsilon}}= \partial_t\partial_\tau \tilde\omega_1|_{\Sigma_{-\epsilon}\times \Sigma_{-\epsilon}}   =0  \\
\end{cases}$$
for any time $t\in[0,\infty)$. For any nontrivial choice of the sources $S_\ab$ and $\Xi$, the solution $\tilde h_\ab$ grows exponentially in time, but it can be bounded from above by the conformal rescaling of the Minkowski metric using a universal factor $H$, i.e.
$$\Tilde{h}_\ab \leq e^{-Ht}\eta_\ab \,.$$

\end{theorem}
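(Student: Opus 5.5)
The proof splits the coupled forcing problem into a state equation and a nonlocal metric equation via the quantum Møller map, solves each, and then reads off the late-time growth from the characteristic roots of the resulting sixth-order equation.

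\emph{Decoupling.} First solve the state part. Using the quantum M\o ller map of Subsection~\ref{sec:nonlocalcontr}, write $\tilde\omega_1 = \tilde\omega + (\text{linear pullback of }\omega_0\text{ in }\tilde h_{ab})$. The past-compact bisolution $\tilde\omega$ solves the flat Klein--Gordon forcing problem with source $\Xi'$ and vanishing data on $\Sigma_{-\epsilon}\times\Sigma_{-\epsilon}$. It therefore exists and is unique: apply the retarded propagator $\mathsf{G}_{\mathrm{ret}}\otimes\mathsf{G}_{\mathrm{ret}}$ in each variable, which preserves past compactness and smoothness. Its contribution $\kappa\langle \Wick{T_{ab}[\eta_{ab}]}\rangle_{\tilde\omega}$ is absorbed into $S'_{ab}$.

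\emph{Gauge fixing and the metric equation.} The problem reduces to \eqref{eq:FPmetric} modulo $\operatorname{ran}\nabla^S$. Fix the de Donder gauge $\partial^a\bar h_{ab}=0$ by solving a wave equation for the gauge vector $X_a$ with past-compact data. Then apply the first theorem above: decompose $\bar h_{ab}=\bar h^{\mathrm S}_{ab}+\bar h^{\mathrm{TT}}_{ab}$ uniquely and reduce to the scalar problem \eqref{eq:S} and the tensor problem \eqref{eq:TT}. Each of these has the prototypical form \eqref{eq:Protointro}, with constants $a_i,b_i$ determined by $m,\xi,\kappa$ and the matched constants $\tilde\alpha_i$.

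\emph{Well-posedness.} Invoke Theorem~\ref{thm:convergence}. It rewrites \eqref{eq:Protointro} as the sixth-order local hyperbolic operator $\square^3+\beta_2\square^2+\beta_1\square+\beta_0$ plus the subleading nonlocal perturbation $\mathcal G_\varsigma^{-1}((\tilde b_1-b_1)\square+(\tilde b_0-b_0))$. It then gives a unique past-compact smooth solution on $[0,\infty)$ by a Neumann-series argument on finite time slabs, glued together by past compactness. Reassembling $\bar h^{\mathrm S}+\bar h^{\mathrm{TT}}$ and undoing the gauge gives existence. Uniqueness up to gauge follows from the uniqueness of the decomposition and of each subproblem.

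\emph{Asymptotics (main obstacle).} Apply Theorem~\ref{thm:stability} to both subsystems for the physically admissible renormalisation constants fixed in Section~\ref{sec3}, including the value of $\alpha_1$ from Proposition~\ref{prop:background}. One must check that for these parameters the scalar sector lands in the unstable regime. That is, the symbol obtained after inverting $\mathcal G_\varsigma$ has a characteristic root with positive real part at spatial momentum $k=0$. Moreover, the real parts of all roots, uniformly in $k$, are bounded by some $H$ that depends only on $m$ (and $\kappa,\xi$), not on the sources.

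To do this, I would Fourier transform in space and use the Källén--Lehmann-type representation of $\mathcal K_0$ given above to control the nonlocal symbol in the complex half-plane $\operatorname{Re}\lambda>0$. Then I would apply a Rouché or argument-principle count of the roots. This gives exponential growth $e^{Ht}$, with polynomial prefactors absorbed by enlarging $H$ slightly. Showing that the growth is genuine for every nontrivial source requires a non-degeneracy check: the projection of the source onto the unstable mode must not vanish generically. I expect the uniform-in-$k$ root localisation to be the hardest step. Finally, interpret the bound componentwise as the linearisation of a conformal factor $e^{2Ht}$ multiplying $\eta_{ab}$, i.e.\ a de Sitter-type expansion with Hubble rate $H$.
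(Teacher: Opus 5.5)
Your decoupling via the quantum M\o ller map, the de Donder gauge fixing, the S/TT reduction and the appeal to Theorem~\ref{thm:convergence} follow the paper. One caveat: Theorem~\ref{thm:convergence} and Proposition~\ref{prop:normal-form} require $a_i<4m^2$. In the scalar sector this restricts $\xi$ through $\frac{2}{6\xi-1}<4$, and the TT value $a=4m^2$ meets it only at the boundary.

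The gap is in the asymptotic part, which is the real content of the statement. You say one ``must check'' that the scalar sector is unstable and that all rates are bounded by a universal $H$, but you give no mechanism. Moreover, the step you flag as hardest is not an issue. By Poincar\'e invariance the Fourier--Laplace symbol is a function $Q(w^2)$ of $w^2=s^2+|\mathbf p|^2$ only, see \eqref{eq:Q}. A zero $\gamma$ of $F^{\mathrm S}$ or $F^{\mathrm{TT}}$ therefore gives $s=\pm\sqrt{-\gamma-|\mathbf p|^2}$, so the fastest growth sits at $\mathbf p=0$ automatically and no uniform-in-$k$ Rouch\'e count is needed.

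The paper proceeds in two steps. First, a Stieltjes--Perron splitting writes the solution operator as Klein--Gordon retarded propagators with squared masses at the zeros, plus a branch-cut part decaying like $t^{-3/2}$ (Theorem~\ref{thm:stability}). Second, it locates the zeros of $F^{\mathrm S}(\gamma)=\gamma(a-\gamma)^2J(\gamma)-(b_0+b_1\gamma+b_2\gamma^2)$. The decisive input, which your plan never identifies, is $b_0\neq 0$. Since $b_1\propto\kappa^{-1}$ dominates near $\gamma=0$, there is a tachyonic root $\tilde\gamma\simeq -b_0/b_1=-2\kappa\tilde\alpha^{\mathrm S}_1 m^4$, and $H=\sqrt{-\tilde\gamma}$. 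This root exists only because local covariance and perturbative agreement transfer the background choice of $\alpha_1$ to $\tilde\alpha^{\mathrm S}_1=1/(64\pi^2)$ (Theorem~\ref{thm:fixing-alpha1}). With $b_0=0$ it would sit at $\gamma=0$, like the TT gravitational-wave root.

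Your claim that all rates are bounded by an $H$ depending only on $m,\kappa,\xi$ is also not available unconditionally. In the regime the paper considers, $F^{\mathrm{TT}}$ has a negative zero of size roughly $b_1/b_2\propto(\kappa\tilde\alpha^{\mathrm{TT}}_4)^{-1}$, which is negligible only for $\tilde\alpha^{\mathrm{TT}}_4$ large and positive. In the scalar sector the remaining zeros depend on $\tilde\alpha^{\mathrm S}_3$, and for some values they are complex and also produce growth. The paper obtains a universal $H$ (to leading order independent of $\xi$ and $\tilde\alpha^{\mathrm S}_3$) only for $\tilde\alpha^{\mathrm S}_3$ large negative and $\tilde\alpha^{\mathrm{TT}}_4$ large positive, and your argument needs the same restriction.

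Your non-degeneracy caveat, on the other hand, is correct and sharper than the statement. For $S=(\square-\tilde\gamma)f$ with $f$ compactly supported one has $\mathsf G^{\tilde\gamma}_{\mathrm{ret}}S=f$, so the unstable pole is not excited. ``Any nontrivial source'' should read ``generic source''.
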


Although one might suspect that semiclassical models generally lack linear stability --and are therefore physically meaningless-- it is possible to establish the linear stability of the Cauchy problem for the \textbf{semiclassical Maxwell–Dirac system} in Minkowski spacetime. The details of this analysis are deferred to a forthcoming paper.

\subsection{Future outlook}

The analysis presented here 
provides the first step 
towards the study of the full 
nonlinear problem.
In the case of cosmological spacetimes, the full nonlinear analysis has been carried out in 
\cite{MedaPinSiemcosmo}. In that paper, it has been shown that the biggest difficulties in the treatments of the corresponding nonlinear semiclassical model are already present at the linear order. Actually, it has been shown that, once the linearised problem around various backgrounds is solved,  the analysis of the full nonlinear model follows by applying certain fixed-point methods and from standard estimates. This holds because the nonlinear contribution to the fixed point equation is a continuous functional on the space of metric perturbations and no loss of derivatives arises there.
Although that paper considers directly an initial value formulation of the problem and not a forcing problem and although it uses directly the classical M\o ller map to control the expectation value of the stress-energy tensor, the very same difficulties as those addressed in the present paper, like the presence of nonlocalities and higher derivatives, already show up at the linear order. 
We thus believe that the careful analysis of the linear problem studied in this paper consists in a 
robust foundation for extending our results to the full nonlinear and nonlocal regime. 
However, while the conceptual path toward local well-posedness in the nonlinear case is clear, its formal implementation would significantly increase the length and complexity of the present manuscript; therefore, we postpone this detailed derivation to future work.

Beyond these analytical goals, our results offer a striking physical insight, namely that the semiclassical Minkowski background $(\mathbb{R}^4, \eta_{ab})$ is fundamentally unstable. We have shown that, regardless of renormalisation constants, linear perturbations grow exponentially for arbitrary Cauchy data. However, this instability is not merely divergent; it is structured. Because these fluctuations are bounded by the conformal transformation $e^{tH} \eta_{ab}$, the quantum backreaction effectively drives the static flat spacetime toward an expanding, de Sitter-like geometry. 
This mirrors the dynamics studied in inflationary models, such as the Starobinsky model \cite{Starobinsky}, and suggests a ``geometric compensation'' where the expansion of the manifold tames the backreaction. 
Consequently, we believe that linear stability will be naturally recovered when the theory is formulated directly on an expanding de Sitter universe rather than Minkowski spacetime. We stress once more that there is, however, a striking difference in comparison to these models of inflation where de Sitter like expansion arises. We have actually seen that for a quantum field whose mass is in the regime of masses present in the standard model, the effective cosmological constant which arises is comparable to the Dark Energy measured at present day in our universe and not orders of magnitude higher than expected (this discrepancy is usually referred to in the Physics literature as ``cosmological constant problem'' \cite{Weinberg}).

\subsection{Historical survey}

The local well-posedness of the Cauchy problem for the Einstein equations, established by Fourès-Bruhat \cite{Foures-Bruhat}, with earlier works by de Donder \cite{deDonder}, Lanczos \cite{Lanczos}, Lichnerowicz \cite{Lichnerowicz2}, Darmois \cite{Darmois1,Darmois2} and Stellmacher \cite{Stellmacher1,Stellmacher2}, among others, and extended to maximal globally hyperbolic developments by Choquet-Bruhat and Geroch \cite{CBG}, naturally redirected the field towards the study of the global behaviour of solutions.\footnote{For a detailed treatment of the Cauchy problem for the Einstein equations, see the monographs of Choquet-Bruhat \cite{ChoquetBruhatBook} and Ringström \cite{RingstromBook}, the latter being based on a \emph{coordinate-free approach} developed by DeTurck \cite{DeTurck1,DeTurck2}; for a comparison, see Graham and Lee \cite{GrahamLee}. An alternative prove, in which the Einstein equations are rewritten into the form of a first-order quasilinear hyperbolic system, was studied by Fischer and Marsden \cite{FischerMarsdenCauchy}. A historical account on the Cauchy problem for Einstein's equation can be found, for instance, in \cite{Ringstrom2,ChoquetBruhatHistory}.} Within this framework, asymptotic analysis determines whether a distinguished spacetime $(\mathcal{M}, g_{ab})$ remains stable under small perturbations $h_{ab}$. Historically, the theory of nonlinear stability first matured through the study of highly symmetric background geometries. 

For spacetimes with a positive cosmological constant, Friedrich \cite{Friedrich} established the global nonlinear stability of de Sitter space, rigorously proving both asymptotic simplicity and null geodesic completeness. 

In the asymptotically flat regime, a major breakthrough was realised by Christodoulou and Klainerman \cite{Christodoulou-Klainerman}, who proved the global nonlinear stability of Minkowski space. This foundational result was later expanded by Lindblad and Rodnianski \cite{Lindblad-Rodnianski}, who introduced a robust alternative proof using the wave-coordinate (harmonic) gauge to control nonlinearities, and by Hintz and Vasy \cite{HVannPDE}, who utilised microlocal methods to derive exact asymptotic series for the metric's behaviour at infinity. 

The transition from vacuum stability to black hole stability represents a significant increase in analytical complexity. A landmark achievement in this area was realised by Dafermos, Holzegel, and Rodnianski \cite{DHR}, who established the linear stability of the static, spherically symmetric Schwarzschild spacetime by proving rigorous decay to a linearised Kerr solution. Extending these results to rotating black holes, however, remained a formidable challenge for decades. This difficulty persisted despite a rich body of literature that had successfully established mode stability and decay bounds for the associated Teukolsky equation \cite{AMPW, Whiting, FSKerr, STeix, Teixeira, Millet, MaZhang, DHRKerr}. Recently, the nonlinear stability of slowly rotating black holes was achieved through a monumental series of works by Klainerman and Szeftel \cite{KleiSze1, KleiSze2, KleiSze3}, Giorgi, Klainerman, and Szeftel \cite{GKS}, and Shen \cite{She23}. Parallelly, Hintz and Vasy \cite{HV} established the nonlinear stability of the slowly rotating Kerr–de Sitter family in the presence of a positive cosmological constant. The ultimate frontier of this program remains the proof of global nonlinear stability for the full subextremal range. A pivotal first step toward this goal was taken by H\"afner, Hintz, and Vasy \cite{HHVslow, HHV}, who demonstrated linear stability for the entire subextremal regime.

The coupling of the Einstein equations to a massive classical Klein-Gordon field  results in the Einstein–Klein–Gordon system. Upon fixing a gauge, this system reduces to a quasilinear hyperbolic system coupled to a wave equation. While local well-posedness follows from standard energy estimates, the dispersive characteristics of the massive field present unique challenges; specifically, the lack of scaling symmetry complicates the extension of vacuum global existence results to the massive case. The global nonlinear stability of Minkowski spacetime for this self-gravitating massive system was first established by LeFloch and Ma, who proved in~\cite{LeMa} that small, asymptotically flat initial data lead to globally dispersive solutions. This was followed by a definitive proof of global asymptotic stability for the full Einstein-Klein-Gordon system by Ionescu and Pausader, who utilised a combination of vector field methods and Fourier analysis to handle small, localised perturbations in~\cite{IoPa}.
\medskip

While a fairly extensive literature is currently available on the Cauchy problem for the classical Einstein–Klein–Gordon system, including both linear and nonlinear results, the situation changes drastically in the semiclassical setting. 
As previously delineated, in contrast to the classical theory, establishing the hyperbolic character of the semiclassical Einstein equations constitutes a highly non-trivial endeavour, even subsequent to the imposition of standard gauge-fixing conditions. Within this framework, 
 the presence of the expectation value of the quantum stress-energy tensor as matter source introduces severe analytical pathologies: it is intrinsically state-dependent, acts as a highly nonlocal functional of the background metric $g_{ab}$, and generically incorporates fourth-order derivatives in time. These characteristics collectively render the formulation of the {\it Cauchy problem} substantially more challenging than its counterpart in general relativity.

The first mathematical treatment of the Cauchy problem for the semiclassical Einstein--Klein--Gordon system was performed on {\em spatially flat Robertson--Walker spacetimes}. In~\cite{Pinamonti}, it was proved the local well-posedness and analysed the backreaction of a quantised, massive, conformally coupled scalar field. Subsequently, this analysis was extended in~\cite{PS}, where it was establish the global well-posedness. More precisely, it was proved that any local solutions admit a maximal globally hyperbolic development, terminating either at a spacetime singularity or a critical threshold of the Hubble parameter corresponding to a divergence for the scalar curvature. The problem of local well-posedness for a massive scalar field with an arbitrary curvature coupling $\xi$ was subsequently resolved in~\cite{MedaPinSiemcosmo}. The key idea was to isolate the highest-order derivative contributions --which generically manifest nonlocally through the action of a linear, retarded, unbounded operator-- and inverting it on a suitable space of functions. In this manner, the Cauchy problem can be rewritten as fixed-point problem.

A complementary perspective on the semiclassical Einstein-Klein-Gordon equations in spatially flat Robertson--Walker spacetimes was established by Gottschalk and Siemssen~\cite{GS}, building upon the work of Eltzner and Gottschalk~\cite{Eltzner}. By recasting the equations as an infinite-dimensional dynamical system, they proved the existence of maximal global solutions for vacuum-like states and local solutions for thermal-like states. However, a significant conceptual hurdle in this framework is the problem of state reconstruction: the quantum state is encoded by an infinite sequence of moments, meaning the full state cannot be recovered without supplying an infinite amount of initial information. This analysis was later complemented by Gottschalk, Rothe and Siemssen in~\cite{GRScosmo,GRSdesitter} in which it was fixed specific conditions on the quantum state  to reduce the infinite-dimensional problem into a manageable 4th-order differential equation.
These structural difficulties are mirrored in the approach of Kay, Juárez-Aubry, Miramontes, and Sudarsky~\cite{Juarez-Aubry2020init, Juarez-Aubry2023coll, Juarez-Aubry2024Cauch}, who addressed the initial value problem in generic globally hyperbolic spacetimes.
Also, in these approaches, to keep the regularity of the underlying quantum state, an infinite tower of initial conditions needs to be imposed.

Let us remark that, to circumvent these difficulties, simplifying assumptions are often imposed on the studied spacetimes~\cite{Juarez-Aubry,Juarez-Aubry2,Sanders}, on the expectation value of the state~\cite{DFP,Juarez-Aubry3}, or on the form of the stress-energy tensor itself~\cite{Juarez-Aubry2021conf}.

Since a general theory for the well-posedness of the Cauchy problem on arbitrary background and for arbitrary Cauchy data is missing, a mathematical discussion regarding the stability of the equations, in the strict PDE sense, is currently unavailable. 
However, in the physical literature, the issue of linear instability of the Minkowski spacetime has been approached,
showing that among the solutions to the linearised semiclassical equations, there exist modes that naturally exhibit exponential growth in time, known as {\bf runaway solutions}\footnote{The terminology of ``runaway solutions'' originates in classical electrodynamics in connection with radiation reaction for a point charge. For instance, the Lorentz–Dirac equation \cite{Dirac}, in which the self-force depends on higher derivatives of the particle worldline. In this case, the enlarged space of solutions contains exponentially growing accelerations even in the absence of external forces, which were later called runaway solutions.}. Crucially, while the global well-posedness of the underlying equations is not formally established, these physical treatments explicitly analyse the relationship between these runaway solutions and the associated Cauchy data, demonstrating how specific initial configurations inevitably excite catastrophic growth.

The first investigation of this nature was performed by Horowitz \cite{Horowitz}, who analysed Minkowski spacetime in the weak-field limit. Rather than formulating a well-posed initial value problem, Horowitz demonstrated that the equations admit exponentially growing perturbations on the Planck timescale and showed how these spurious solutions are triggered by the initial data, rendering the semiclassical evolution inherently unstable.

This phenomenological instability was subsequently corroborated and extended by several other physicists \cite{Kay,Yamagishi,Jordan,Suen,MW}. Further generalisations by Flanagan and Wald \cite{Flanagan} incorporated a broader class of first-order perturbations alongside non-vanishing incoming quantum states as sources. Again, without establishing a general well-posedness framework, this work reaffirmed that flat spacetime is destabilised by Cauchy data that naturally couple to these runaway modes. Finally, Anderson, Molina-París, and Mottola \cite{Anderson} investigated these exponentially growing solutions under the influence of a massive quantum field. Recognising the catastrophic role of the initial configurations, they proposed a stringent validity criterion for semiclassical gravity: a background spacetime can only be considered physically admissible if the allowed Cauchy data do not excite these unbounded modes, thereby ensuring that all physical solutions remain globally bounded in time.

\paragraph{General notation and conventions:}

\begin{itemize}
    \item[$\bullet$]For a given vector bundle $E\xrightarrow{\pi}\mathcal{M}$, we denote by $\Gamma(E)$ the $C^{\infty}(\mathcal{M})$-module of \emph{smooth sections} and by $\Gamma_{\mathrm{c}}(E)$, $\Gamma_{\mathrm{sc}}(E)$, $\Gamma_{\mathrm{pc}}(E)$ and $\Gamma_{\mathrm{fc}}(E)$ the subspaces consisting of sections with \emph{compact}, \emph{space-compact}, \emph{past-compact} and \emph{future-compact} support, respectively.
    \item[$\bullet$]If $(\mathcal{M},g_{ab})$ is Lorentzian manifold, we will use the \emph{mostly plus} signature convention $(-,+,\dots,+)$. In particular, the Minkowski metric on $\mathcal{M}:=\mathbb{R}^{1+n}$ is given by $\eta_{ab}=\mathrm{diag}(-1,1,\dots,1)$. 
    \item[$\bullet$]Both the \emph{abstract index notation} and \emph{Einstein summation convention} are used to denote tensor fields and their contractions. We refer to the extensive discussion in \cite[Sec.~2.4]{Waldbook} and\cite[Chap.~2]{PenroseBook} for further details on this notation.
    \item[$\bullet$]If $(\mathcal{M},g_{ab})$ is a pseudo-Riemannian manifold with Levi-Civita connection $\nabla$, we use the following sign and index conventions for the Riemann and Ricci curvature tensors:
    \begin{align*}
        (\nabla_{a}\nabla_{b}-\nabla_{b}\nabla_{a})v^{c}=\tensor{R}{_a_b^c_d}v^{d}\,,\qquad R_{ab}:=\tensor{R}{^c_a_c_b}\qquad\text{and}\qquad R:=g^{ab}R_{ab}\,,
    \end{align*}
    which agree with the one used, for instance, in the textbooks \cite{MTW,RingstromBook,ChoquetBruhatBook,WaldBook2}.
    \item[$\bullet$]If $(\mathcal{M},g_{ab})$ is a Lorentzian manifold with Levi-Civita connection $\nabla$, we define the \emph{(connection or rough) d'Alembertian} with sign convention \begin{align*}
        \square_{g}:=g^{\ab}\nabla_{a}\nabla_{b}\, ,
    \end{align*} 
    which agrees with the \emph{Bochner d'Alembertian} up to a sign, i.e.~$\square_{g}=-\nabla^{\ast}\nabla$. In the special case of Minkowski spacetime $(\mathcal{M}:=\mathbb{R}^{1+n},\eta_{ab})$, we write $\square:=\square_{\eta}=\eta^{ab}\partial_{a}\partial_{b}$, i.e.~$\square=-\partial_{t}^{2}+\Delta$, where $\Delta$ denotes the Euclidean Laplacian. The corresponding \emph{retarded Green operator} is denoted by $\mathsf{G}_{\mathrm{ret}}\colon C^{\infty}_{\mathrm{pc}}(\mathbb{R}^{n})\to C^{\infty}_{\mathrm{pc}}(\mathbb{R}^{n})$ and similarly when acting on tensor fields.  
    \item[$\bullet$]For the Fourier transform $\mathcal{F}\colon L^{2}(\mathbb{R}^{n})\to L^{2}(\mathbb{R}^{n})$, we employ the normalisation convention
    \begin{align*}
        (\mathcal{F} f)(\textbf{p}) := \int_{\mathbb{R}^{n}} e^{-\mathrm{i}\textbf{p}\cdot \textbf{x}}f(\textbf{x})\, d^n\textbf{x},\qquad\qquad f(\textbf{x}) = \frac{1}{(2\pi)^{n}} \int_{\mathbb{R}^{n}} e^{\mathrm{i}\textbf{p}\cdot \textbf{x}}(\mathcal{F}f)(\textbf{p})\, d^n\textbf{p}.
    \end{align*}
    \item[$\bullet$] Denoting with $H^2(\mathbb{C}_+)$ the Hardy space of holomorphic functions in the right complex half-plane $\mathbb{C}_+$, we employ for the Laplace transform $\mathscr{L}\colon L^{2}(0,\infty)\to H^2(\mathbb{C}_+)$ the standard convention
    \begin{align*}
        (\mathscr{L} f)(s) := \int_0^\infty e^{-s t} f(t)\, dt.
    \end{align*}
    
\end{itemize}

\paragraph{Acknowledgements and Funding:}  
We are grateful to Edoardo D'Angelo, Nicolò Drago, Benito A. Juárez-Aubry and Valter Moretti for helpful discussions. 

The work of P.M.~is financed by the European Union - NextGenerationEU - National Recovery and Resilience Plan (NRRP) - Mission 4 Component 2 Investment 1.2 - ``Funding projects presented by young researchers'' Seal of Excellence PNRR Young Researchers, ``SPACE project'' - SOE20240000129 - CUP E63C24002410003. The work of S.G.~is funded by the EPSRC Open Fellowship EP/Y014510/1. S.M. was partially supported by the INdAM - GNFM Project ``Hadamard states for linearized Yang Mills theories'', CUP E5324001950001. The work of G.S.~is supported by a research grant from the German Academic Exchange  Service (DAAD) through the programme ``Forschungsstipendien in Deutschland, 2026'' (funding programme number: 57811724). 
The research was supported in part by the MIUR Excellence Department Project
2023–2027 awarded to the Department of Mathematics of the University of Genoa, 
CUP D33C23001110001, and by the INFN project ``Bell''.

\section{Past-compact metric perturbations}\label{se:metric-perturbation-decomposition}

In this section, we consider general relativity with past-compact metric perturbations linearised around a Minkowski background. We demonstrate that any such past-compact perturbation can be uniquely decomposed into a scalar, a vector, and a transverse-traceless tensor component, all of which are past-compact. This decomposition will be crucial for the subsequent analysis of the linearised semiclassical Einstein–Klein–Gordon system, as it enables us to decouple the equations into two independent sectors: one governing the scalar part and the other governing the tensorial part, with the vectorial part vanishing due to a natural gauge fixing condition. Each sector is then described by a hyperbolic equation with nonlocal contribution of the same type, hence facilitating a more tractable analysis of the system's dynamics and linear (in)stability.

\subsection{Degrees of freedom and metric perturbations}
We consider $(1+3)$-dimensional Minkowski spacetime $(\mathbb{R}^{4},\eta_{ab})$ as a background and discuss the linearisation of the metric around $\eta_{ab}$. Moreover, as explained in the introduction, we restrict ourselves to \emph{past-compact} metric perturbations. Formally, we consider a one-parameter family of Lorentzian metrics $g_{ab}(\lambda)$ on $\mathcal{M}=\mathbb{R}^{4}$, depending smoothly on $\lambda$, such that $g_{ab}(0)=\eta_{ab}$ and $\frac{d}{d\lambda}\big\vert_{\lambda=0}g_{ab}(\lambda)=:h_{ab}\in\Gamma_{\mathrm{pc}}(T\mathbb{R}^{n}\otimes_{s}T\mathbb{R}^{n})$. In other words, the metric $g_{ab}$ is formally expanded as $g_{ab}(\lambda)=\eta_{ab}+\lambda h_{ab}+\mathcal{O}(\lambda^{2})$. For the sole purpose of deriving linearised expressions, it is sufficient to work on a formal level; we refer to the approaches discussed in \cite{StewartWalker,FischerMarsden} for rigorous perturbation theory in the context of general relativity. 

In the context of linearised gravity, it is convenient to introduce the \emph{trace-reversed perturbation}, i.e.~the symmetric $(0,2)$-tensor defined by
\begin{equation*}
	\bar{h}_{ab} := h_{ab} - \frac{1}{2}\eta_{ab} \eta^{cd} h_{cd}= h_{ab} - \frac{1}{2}\eta_{ab}  {h_c}^c.
\end{equation*}
Clearly, $\bar{h}_{ab}$ is past-compact as well. Moreover, as the name suggests, the trace-reversal satisfies
\[
	{\bar{h}_c}^c = - {h_c}^c.
\]

As a next step, consider the Einstein tensor $G_{ab}(\lambda)$ of the full metric $g_{ab}(\lambda)$, which can be expanded as $G_{ab}(\lambda)=G_{ab}^{(0)}+\lambda G_{ab}^{(1)}+\mathcal{O}(\lambda^{2})$ with $G_{ab}^{(0)}=G_{ab}(0)$ and $G_{ab}^{(1)}=\frac{d}{d\lambda}\big\vert_{\lambda=0}G_{ab}(\lambda)$. Similarly, we expand the Ricci tensor $R_{ab}(\lambda)$ and scalar curvature $R(\lambda)=g^{ab}(\lambda)R_{ab}(\lambda)$. For Minkowski spacetime, it clearly holds that $G_{ab}^{(0)}=0$ as well as $R_{ab}^{(0)}=0$ and $R^{(0)}=0$. Moreover, if we impose the \emph{(flat) de Donder} (also called \emph{harmonic} or \emph{Bianchi}) \emph{gauge} condition, 
\begin{align}\label{eq:deDonderGauge}
\pa^c \bar{h}_{cd} = 0,
\end{align}
a straightforward computation shows that the first-order contributions are given by
\begin{subequations}
\label{eq:line-hbar}
\begin{align}
R_{ab}^{(1)}&=-\frac{1}{2}\square (\bar{h}_{ab} - \frac{1}{2}\eta_{ab} {\bar{h}_c}^c )
=-\frac{1}{2}\square {h}_{ab}\\
R^{(1)} &=\eta^{ab}R^{(1)}_{ab}-h^{ab}R^{(0)}_{ab}=\frac{1}{2}\square {\bar{h}_c}^c,\\
G_{ab}^{(1)}&=R_{ab}^{(1)}-\frac{1}{2}\eta_{ab}R^{(1)}-\frac{1}{2}h_{ab}R^{(0)}=-\frac{1}{2}\square \bar{h}_{ab},
\end{align}
\end{subequations}
where $\square=\eta^{ab}\partial_{a}\partial_{b}$ is the d'Alembert operator of Minkowski spacetime. Moreover, indices are raised/lowered with respect to the background (Minkowski) metric. For a detailed derivation of the linearised expressions, including the case of general background spacetimes, we refer to \cite[Sec.~2.2.4]{SchmidThesis} or the classical reference \cite{Lichnerowicz}.

\subsection{Decomposition of past-compact symmetric tensor fields}

In this section, we present a decomposition of past-compact tensors into scalar, vectorial and transverse–traceless components. This construction is similar in spirit to the decomposition of metric perturbations in \emph{cosmological perturbation theory}, in which one isolates the gauge-invariant degrees of freedom of gravity linearised around cosmological backgrounds (see e.g.~\cite{Muk,AQFTCos}). The difference, however, is that here, we use retarded fundamental solutions of \emph{hyperbolic} equations to decompose the metric perturbation $h_{ab}$ on the (\emph{Lorentzian}) spacetime manifold $\mathcal{M}$, rather than solving \emph{elliptic} problems to perform a decomposition separately for the metric components $h_{00}$, $h_{0i}$ and $h_{ij}$ on the spatial (\emph{Riemannian}) hypersurfaces.

To keep some generality in this subsection the dimension of the spacetime is kept generic and equal to $n\geq 2$, one for the time coordinate plus $n-1$ for the spatial coordinates. 

\begin{proposition}[Past-compact tensor decomposition]\label{prop:decomposition}
For any smooth and past-compact symmetric (0,2)-tensor $h_{ab}\in\Gamma_{\mathrm{pc}}(T^*\mathbb{R}^n\otimes_s T^*\mathbb{R}^n)$, there are unique smooth and past-compact fields
\begin{align*}
    w\in C^{\infty}_{\mathrm{pc}}(\mathbb{R}^{n}),\qquad v_{a}^{\mathrm{T}}\in\Gamma_{\mathrm{pc}}(T^{\ast}\mathbb{R}^{n}),\qquad h_{ab}^{\mathrm{T}\mathrm{T}}\in\Gamma_{\mathrm{pc}}(T^{\ast}\mathbb{R}^{n}\otimes_{s}T^{\ast}\mathbb{R}^{n}) 
\end{align*}
with $\partial^{a}v_{a}^{\mathrm{T}}=0$ as well as $\partial^{a}h_{ab}^{\mathrm{T}\mathrm{T}}=0$ and $\eta^{ab}h_{ab}^{\mathrm{T}\mathrm{T}}=0$, such that $h_{ab}$ decomposes as 
\begin{align}\label{eq:TT-decomp}
    h_{ab}=h_{ab}^{\mathrm{S}}+h_{ab}^{\mathrm{V}}+h_{ab}^{\mathrm{T}\mathrm{T}},
\end{align}
where the scalar and vectorial components, $h_{ab}^{\mathrm{S}}$ and $h_{ab}^{\mathrm{V}}$, respectively, are given by
\begin{align*}
    h_{ab}^{\mathrm{S}} := 
 \Big(\partial_a\partial_b - \frac{1}{n}\,\eta_{ab}\,\square\Big) w
+ \frac{1}{n}\,\eta_{ab}\,h, \quad\qquad h_{ab}^{\mathrm{V}} := 2\partial_{(a}v^{\mathrm{T}}_{b)}=\partial_a v^{\mathrm{T}}_b +\partial_b v^{\mathrm{T}}_a.
\end{align*}
More explicitly, the scalar $w$ and vector $v_{a}^{\mathrm{T}}$ are fully determined by $h_{ab}$ and given by
\begin{align}\label{eq:decompComp}
    w = \frac{n}{n-1}\,
\mathsf{G}^{\circ 2}_{\mathrm{ret}}
\Big(\partial^a\partial^b h_{ab} - \frac{1}{n}\,\square h\Big),\qquad v_b^{\mathrm{T}} = 
\mathsf{G}_{\mathrm{ret}}
\,\Big(\partial^a h_{ab} - \frac{1}{n}\partial_b h - \frac{n-1}{n}\,\partial_b\square w \Big).
\end{align}
\end{proposition}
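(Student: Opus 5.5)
The plan rests on one fact: on past-compact sections the d'Alembertian $\square$ is invertible, with inverse $\mathsf{G}_{\mathrm{ret}}$. More precisely, $\square\circ\mathsf{G}_{\mathrm{ret}}=\mathsf{G}_{\mathrm{ret}}\circ\square=\mathrm{id}$ on $C^\infty_{\mathrm{pc}}(\mathbb{R}^n)$, and the same holds componentwise on tensors in Cartesian coordinates. Moreover, $\mathsf{G}_{\mathrm{ret}}$ commutes with $\partial_a$ and with contraction against $\eta_{ab}$. Both facts follow from uniqueness of past-compact solutions of $\square u=f$ together with translation invariance. Consequently, any past-compact solution of a homogeneous equation $\square u=0$, or $\square^2u=0$, vanishes. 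This single principle will drive both existence and uniqueness. Throughout, $h:=\eta^{ab}h_{ab}$.

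\textbf{Existence.} I take $w$ and $v^{\mathrm{T}}_b$ as defined in \eqref{eq:decompComp}; they are smooth and past-compact because $\mathsf{G}_{\mathrm{ret}}$ preserves $C^\infty_{\mathrm{pc}}$. I then set
\[
h^{\mathrm{TT}}_{ab}:=h_{ab}-h^{\mathrm{S}}_{ab}-h^{\mathrm{V}}_{ab}
\]
and check the constraints in order.

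First, the trace. We have $\eta^{ab}h^{\mathrm{S}}_{ab}=\square w-\square w+h=h$ and $\eta^{ab}h^{\mathrm{V}}_{ab}=2\partial^a v^{\mathrm{T}}_a$, so tracelessness of $h^{\mathrm{TT}}$ reduces to transversality of $v^{\mathrm{T}}$.

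Second, transversality of $v^{\mathrm{T}}$. Commuting $\partial^b$ through $\mathsf{G}_{\mathrm{ret}}$ gives
\[
\partial^b v^{\mathrm{T}}_b=\mathsf{G}_{\mathrm{ret}}\Big(\partial^a\partial^b h_{ab}-\tfrac1n\square h-\tfrac{n-1}{n}\square^2 w\Big).
\]
Substituting $\square^2 w=\frac{n}{n-1}\big(\partial^a\partial^bh_{ab}-\frac1n\square h\big)$, which follows from $\square^2\mathsf{G}^{\circ2}_{\mathrm{ret}}=\mathrm{id}$, the argument vanishes.

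Third, transversality of $h^{\mathrm{TT}}$. A direct computation gives
\[
\partial^a h^{\mathrm{S}}_{ab}=\tfrac{n-1}{n}\partial_b\square w+\tfrac1n\partial_bh,\qquad \partial^ah^{\mathrm{V}}_{ab}=\square v^{\mathrm{T}}_b+\partial_b\partial^a v^{\mathrm{T}}_a=\square v^{\mathrm{T}}_b .
\]
Since $\square v^{\mathrm{T}}_b=\partial^ah_{ab}-\frac1n\partial_bh-\frac{n-1}{n}\partial_b\square w$ by construction, the divergence of $h^{\mathrm{TT}}$ cancels exactly.

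\textbf{Uniqueness.} Suppose $h_{ab}=0$ admits a decomposition with data $(w,v^{\mathrm{T}},h^{\mathrm{TT}})$ satisfying all the constraints. I show that every field vanishes.

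Taking the trace yields $0=h+2\partial^av^{\mathrm{T}}_a=h$. This is automatic here, but in general it fixes the trace term in $h^{\mathrm{S}}$.

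Taking the divergence of the decomposition gives
\[
0=\tfrac{n-1}{n}\partial_b\square w+\square v^{\mathrm{T}}_b .
\]
A further divergence, using $\partial^bv^{\mathrm{T}}_b=0$, gives $\frac{n-1}{n}\square^2w=0$. Since $n\ge2$ and $w$ is past-compact, this forces $w=0$. Then $\square v^{\mathrm{T}}_b=0$ with $v^{\mathrm{T}}$ past-compact, so $v^{\mathrm{T}}=0$, and consequently $h^{\mathrm{TT}}=0$.

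Applied to a general $h_{ab}$, the same manipulation shows that any admissible $w$ and $v^{\mathrm{T}}$ must solve $\square^2 w=\frac{n}{n-1}\big(\partial\partial h-\frac1n\square h\big)$ and the stated equation for $\square v^{\mathrm{T}}$. Their unique past-compact solutions are exactly the formulas \eqref{eq:decompComp}, which establishes the explicit representation.

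\textbf{Main obstacle.} The only nontrivial point is the functional-analytic input: that $\mathsf{G}_{\mathrm{ret}}$ is a two-sided inverse of $\square$ on past-compact smooth functions and commutes with derivatives. The algebra above is routine once this is in place. I would justify it by the standard existence and uniqueness theorem for past-compact solutions of normally hyperbolic equations on globally hyperbolic spacetimes, applied componentwise in Cartesian coordinates where $\square$ acts diagonally. Commutation with $\partial_a$ follows because $\partial_a\mathsf{G}_{\mathrm{ret}}f$ and $\mathsf{G}_{\mathrm{ret}}\partial_af$ are both past-compact solutions of $\square u=\partial_af$, hence equal.
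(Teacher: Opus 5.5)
Your proposal is correct and follows essentially the same route as the paper. The trace and divergence constraints are reduced to $\square^2 w=\frac{n}{n-1}\big(\partial^a\partial^b h_{ab}-\frac{1}{n}\square h\big)$ and $\square v^{\mathrm{T}}_b=K_b(w)$, both are solved with $\mathsf{G}_{\mathrm{ret}}$, and uniqueness follows by taking the divergence twice and using uniqueness of past-compact solutions. Your explicit check that the constructed $v^{\mathrm{T}}$ satisfies $\partial^b v^{\mathrm{T}}_b=0$ spells out a step the paper leaves implicit.
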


\begin{remark}
    A similar decomposition for symmetric $(0,2)$-tensor fields has been employed in the discussion of Anderson et al.~in \cite{Anderson}, see Appendix~A therein. The difference lies in the choice of perturbations and Green’s operators used in the construction: in our setting, we restrict to \emph{past-compact} tensor fields and make use of the \emph{retarded fundamental solution}, which yields a canonical (and hence \emph{unique}) decomposition for this class of metric perturbations. By contrast, the approach in \cite{Anderson} is formulated in terms of arbitrary symmetric $(0,2)$-tensors and an arbitrary (and hence non-unique) propagator. This naturally allows for additional freedom in the construction and therefore does not single out a unique decomposition. This distinction becomes particularly important in the subsequent analysis, where uniqueness will play a crucial role.
\end{remark}

\begin{proof}[Proof of Proposition~\ref{prop:decomposition}.] Let $h_{ab}\in\Gamma_{\mathrm{pc}}(T^*\mathbb{R}^n\otimes_s T^*\mathbb{R}^n)$ be arbitrary. Obtaining a decomposition of the form~\eqref{eq:TT-decomp} is equivalent to finding $w\in C^{\infty}_{\mathrm{pc}}(\mathbb{R}^{n})$ and $v_{a}^{\mathrm{T}}\in\Gamma_{\mathrm{pc}}(T^{\ast}\mathbb{R}^{n})$ with $\partial^{a}v_{a}^{\mathrm{T}}=0$ such that the $(0,2)$-tensor
    \begin{align*}
        h_{ab}^{\mathrm{T}\mathrm{T}}:=h_{ab}- \frac{1}{n}\,\eta_{ab}\,h-\Big(\partial_a\partial_b - \frac{1}{n}\,\eta_{ab}\,\square\Big) w-\partial_a v^{\mathrm{T}}_b -\partial_b v^{\mathrm{T}}_a
    \end{align*}
    is transverse and traceless. A straightforward computation shows that the requirement $\eta^{ab} h_{ab}^{\mathrm{T}\mathrm{T}}=0$ is equivalent to the condition $\partial^{a}v_{a}^{\mathrm{T}}=0$, while the requirement $\partial^{a}h_{ab}^{\mathrm{T}\mathrm{T}}=0$ yields the hyperbolic equation
    \begin{align*}
        \square v^{\mathrm{T}}_b = K_b(w)\qquad\text{with}\qquad K_b(w):=\partial^a h_{ab}- \frac{1}{n}\partial_b h - \frac{n-1}{n}\,\partial_b\square w .
    \end{align*}
    Hence, we are left to find a smooth and past-compact solution $(w,v_{a}^{\mathrm{T}})$ to the coupled system 
    \begin{align*}
        \begin{cases}
            \square v^{\mathrm{T}}_b &= K_b(w)\\
            \partial^{a}v_{a}^{\mathrm{T}}&=0
        \end{cases}.
    \end{align*}
    First of all, we note that the second equation implies that necessarily $\partial^{a}K_a(w)=0$, which results into the 4th order hyperbolic equation
    \begin{align*}
        \square^2 w 
= \frac{n}{n-1}\,\Big(\partial^a\partial^b h_{ab} - \frac{1}{n}\,\square h\Big).
    \end{align*}

    The unique smooth and past-compact solution $w\in C^{\infty}_{\mathrm{pc}}(\mathbb{R}^{n})$ to this equation with fixed $h_{ab}\in\Gamma_{\mathrm{pc}}(T^*\mathbb{R}^n\otimes_s T^*\mathbb{R}^n)$ can be written as 
    \begin{align*}
        w = \frac{n}{n-1}\,
\mathsf{G}^{\circ 2}_{\mathrm{ret}}
\Big(\partial^a\partial^b h_{ab} - \frac{1}{n}\,\square h\Big),
    \end{align*}
    as claimed. With this scalar at our disposal, we obtain the unique past-compact solution $v_{a}^{\mathrm{T}}\in\Gamma_{\mathrm{pc}}(T^{\ast}\mathbb{R}^{n})$ of the hyperbolic problem $\square v^{\mathrm{T}}_b= K_b(w)$, which reads
    \begin{align*}
        v_{b}^{\mathrm{T}}:=\mathsf{G}_{\mathrm{ret}}K_{b}(w)=\Big(\partial^a h_{ab} - \frac{1}{n}\partial_b h - \frac{n-1}{n}\,\partial_b\square w \Big).
    \end{align*}
    This yields the required decomposition~\eqref{eq:TT-decomp}. 
  
For uniqueness, suppose that $(w,v^{\mathrm{T}}_a,h^{\mathrm{TT}}_{ab})$ and 
$(\hat w,\hat v^{\mathrm{T}}_a,\hat h^{\mathrm{TT}}_{ab})$ are two triples of smooth and past-compact fields providing the required decomposition \eqref{eq:TT-decomp} of $h_{ab}$. Now, consider the differences
\[
\delta w := w - \hat w,\qquad
\delta v^{\mathrm{T}}_a := v^{\mathrm{T}}_a - \hat v^{\mathrm{T}}_a,\qquad
\delta h^{\mathrm{TT}}_{ab} := h^{\mathrm{TT}}_{ab} - \hat h^{\mathrm{TT}}_{ab}.
\]
By assumption, $\delta w$, $\delta v^{\mathrm{T}}_a$ and $\delta h^{\mathrm{TT}}_{ab}$ are past-compact. Moreover, $\delta v^{\mathrm{T}}_a$ is itself transverse, while $\delta h^{\mathrm{TT}}_{ab}$ is transverse and traceless. Taking the difference of the corresponding decompositions of $h_{ab}$, we obtain
\begin{align*}
    0&=h_{ab}-h_{ab}=(h_{ab}^{\mathrm{S}}-\hat{h}_{ab}^{\mathrm{S}})+(h_{ab}^{\mathrm{V}}-\hat{h}_{ab}^{\mathrm{V}})+(h_{ab}^{\mathrm{T}\mathrm{T}}-\hat{h}_{ab}^{\mathrm{T}\mathrm{T}})=\\&= \Big(\partial_a\partial_b - \frac{1}{n}\,\eta_{ab}\,\square\Big) \delta w
+\partial_a \delta v^{\mathrm{T}}_b +\partial_b \delta v^{\mathrm{T}}_a+\delta h_{ab}^{\mathrm{T}\mathrm{T}}.
\end{align*}
Now, taking the divergence of this equation, we obtain the hyperbolic problem
\begin{align*}
    \square \delta v_{b}^{\mathrm{T}}=\frac{1-n}{n}\square\partial_{b}\delta w.
\end{align*}
Moreover, since the left-hand side of this equation is divergence-free, by assumption, we conclude that $\square^{2}\delta w=0$. To sum up, $(\delta w,\delta v_{a}^{\mathrm{T}})$ is a solution to the coupled hyperbolic system
\begin{align*}
    \begin{cases}
        \square \delta v^{\mathrm{T}}_b &= \frac{1-n}{n}\,\partial_b\square \delta w\\
        \square^2 \delta w &= 0
    \end{cases}
\end{align*}
and since $(\delta w,\delta v_{a}^{\mathrm{T}})$ are smooth and past-compact, we conclude that $\delta w=0$ and $\delta v_{a}^{\mathrm{T}}=0$. As consequence, also $\delta h_{ab}^{\mathrm{T}\mathrm{T}}=0$, which concludes the proof.
\end{proof}

As discussed in the beginning of this section, it is convenient to discuss also the \emph{trace-reversed perturbation}, which in the $n$-dimensional setting is given by
\begin{align*}
    \bar{h}_{ab}=h_{ab} - \frac{2}{n}\eta_{ab} {h_c}^{c}.
\end{align*}
Being a past-compact symmetric $(0,2)$-tensor field in its own right, it admits a decomposition as in Proposition~\ref{prop:decomposition}. The following proposition follows immediately from the definition of the trace-reversal and Proposition~\ref{prop:decomposition}.

\begin{corollary}[Trace-reversal and decomposition]\label{Cor:decomposition}
    Let $h_{ab}\in\Gamma_{\mathrm{pc}}(T^{\ast}\mathbb{R}^{n}\otimes_{s} T^{\ast}\mathbb{R})$ be arbitrary and consider its trace-reversal $\bar{h}_{ab}$. Moreover, consider the decompositions of $h_{ab}$ and $\bar{h}_{ab}$ as in Proposition~\ref{prop:decomposition} with corresponding $w,\bar{w}\in C^{\infty}_{\mathrm{pc}}(T^{\ast}\mathbb{R}^{n})$ and $v_{a}^{\mathrm{T}},\bar{v}_{a}^{\mathrm{T}}\in\Gamma_{\mathrm{pc}}(T^{\ast}\mathbb{R}^{n})$. Then,
    \begin{align*}
        w=\bar{w}\qquad\text{and}\qquad v_{a}^{\mathrm{T}}=\bar{v}_{a}^{\mathrm{T}}.
    \end{align*}
    As a consequence, the trace-reversal is compatible with the decomposition in Proposition~\ref{prop:decomposition}, i.e.
    \begin{align*}
        \bar{h}_{ab} = \bar{h}_{ab}^{\mathrm{S}}  + \bar{h}_{ab}^{\mathrm{V}}+ \bar{h}_{ab}^{\mathrm{TT}}\qquad\text{with}\qquad \bar{h}_{ab}^{\mathrm{S}/\mathrm{V}/\mathrm{T}\mathrm{T}}=\overline{h_{ab}^{\mathrm{S}/\mathrm{V}/\mathrm{T}\mathrm{T}}},
    \end{align*}
    where $\overline{h_{ab}^{\mathrm{S}/\mathrm{V}/\mathrm{T}\mathrm{T}}}$ denotes the trace-reversal of $h_{ab}^{\mathrm{S}/\mathrm{V}/\mathrm{T}\mathrm{T}}$, respectively. In fact, it holds that $\bar{h}_{ab}^{\mathrm{V}}=h_{ab}^{\mathrm{V}}$ and $\bar{h}_{ab}^{\mathrm{T}\mathrm{T}}=h_{ab}^{\mathrm{T}\mathrm{T}}$, since these two components are traceless tensors.
\end{corollary}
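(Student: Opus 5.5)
The plan is to prove $w=\bar w$ and $v^{\mathrm T}_a=\bar v^{\mathrm T}_a$ by exhibiting an explicit decomposition of $\bar h_{ab}$ built from the decomposition of $h_{ab}$. The uniqueness part of Proposition~\ref{prop:decomposition} then identifies it with the canonical one. Concretely, start from $h_{ab}=h^{\mathrm S}_{ab}+h^{\mathrm V}_{ab}+h^{\mathrm{TT}}_{ab}$ with data $(w,v^{\mathrm T}_a,h^{\mathrm{TT}}_{ab})$, and apply the trace-reversal, which is a linear pointwise map, to both sides:
\begin{align*}
\bar h_{ab}=\overline{h^{\mathrm S}_{ab}}+\overline{h^{\mathrm V}_{ab}}+\overline{h^{\mathrm{TT}}_{ab}}.
\end{align*}

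First I would compute the traces. Since $\partial^a v^{\mathrm T}_a=0$, we have $\eta^{ab}h^{\mathrm V}_{ab}=2\partial^a v^{\mathrm T}_a=0$. Also $\eta^{ab}h^{\mathrm{TT}}_{ab}=0$. Hence $\overline{h^{\mathrm V}_{ab}}=h^{\mathrm V}_{ab}$ and $\overline{h^{\mathrm{TT}}_{ab}}=h^{\mathrm{TT}}_{ab}$. For the scalar part, $\eta^{ab}h^{\mathrm S}_{ab}=\square w-\square w+h=h$, because the operator $\partial_a\partial_b-\frac1n\eta_{ab}\square$ is traceless. Therefore
\begin{align*}
\overline{h^{\mathrm S}_{ab}}=\Big(\partial_a\partial_b-\tfrac1n\eta_{ab}\square\Big)w+\tfrac1n\eta_{ab}h-\tfrac2n\eta_{ab}h=\Big(\partial_a\partial_b-\tfrac1n\eta_{ab}\square\Big)w+\tfrac1n\eta_{ab}\bar h,
\end{align*}
where I use $\bar h:=\eta^{ab}\bar h_{ab}=h-2h=-h$. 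This matches the prescribed form of a scalar component for the tensor $\bar h_{ab}$, with the same $w$, because the trace term in $h^{\mathrm S}$ is, by construction, the trace of the tensor being decomposed.

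So the triple $(w,v^{\mathrm T}_a,h^{\mathrm{TT}}_{ab})$ consists of smooth past-compact fields satisfying the transversality and tracelessness constraints, and it produces a decomposition of $\bar h_{ab}$ of the form~\eqref{eq:TT-decomp}. By the uniqueness statement of Proposition~\ref{prop:decomposition}, it coincides with $(\bar w,\bar v^{\mathrm T}_a,\bar h^{\mathrm{TT}}_{ab})$. This gives $w=\bar w$, $v^{\mathrm T}_a=\bar v^{\mathrm T}_a$ and $\bar h^{\mathrm{TT}}_{ab}=h^{\mathrm{TT}}_{ab}$. The identities $\bar h^{\mathrm S/\mathrm V/\mathrm{TT}}_{ab}=\overline{h^{\mathrm S/\mathrm V/\mathrm{TT}}_{ab}}$ then follow directly from the computations above.

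As a cross-check, one can also use the explicit formula~\eqref{eq:decompComp}. Substituting $\bar h_{ab}=h_{ab}-\frac2n\eta_{ab}h$ gives $\partial^a\partial^b\bar h_{ab}-\frac1n\square\bar h=\partial^a\partial^b h_{ab}-\frac2n\square h+\frac1n\square h$, which equals the corresponding expression for $h$. The expression for $K_b$ is unchanged in the same way. The only step needing care is to keep the correct trace-reversal convention $\frac2n$ in dimension $n$, so that $\bar h=-h$ and the scalar trace term reproduces exactly $\frac1n\eta_{ab}\bar h$. I expect no real obstacle: the argument is bookkeeping plus the already established uniqueness.
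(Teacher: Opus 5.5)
Your argument is correct and is essentially what the paper intends. The paper gives no separate proof, saying only that the corollary follows immediately from the definition of the trace-reversal and Proposition~\ref{prop:decomposition}; your observation that $(w,v^{\mathrm T}_a,h^{\mathrm{TT}}_{ab})$ is an admissible decomposition of $\bar h_{ab}$, combined with the uniqueness part of that proposition (or equivalently the invariance of the sources in \eqref{eq:decompComp}), makes this precise. As a minor remark, the care you flag about the $\frac{2}{n}$ convention is unnecessary: any reversal $h_{ab}\mapsto h_{ab}-c\,\eta_{ab}h$ only turns the pure-trace term $\frac{1}{n}\eta_{ab}h$ into $\frac{1}{n}\eta_{ab}\bar h$ and leaves the traceless pieces untouched, so the conclusion holds for every $c$.
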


The linearised Einstein equations inherit a linear gauge symmetry, originating from the diffeomorphism invariance on the level of the background spacetime. More precisely, in the setting of \emph{past-compact perturbations}, the gauge transformations are parameterised by $X_{a}\in\Gamma_{\mathrm{pc}}(T^{\ast}\mathbb{R}^{n})$ and given by $h_{ab}\mapsto h_{ab}+\mathcal{L}_{X}\eta_{ab}$, where $\mathcal{L}_{X}\eta_{ab}=2\partial_{(a}X_{b)}$ denotes the Lie-derivative of $\eta_{ab}$ along the vector field $X_{a}$, i.e.
\[
h_{ab}\mapsto h_{ab} + \partial_a X_b + \partial_b X_a. 
\]
On the level of trace-reversed perturbations, the gauge transformations act correspondingly as 
\[
\bar{h}_{ab}\mapsto \bar{h}_{ab} + \partial_a X_b + \partial_a X_b   -\frac{4}{n}\eta_{ab} \partial^c X_c.
\]

Now, since we restrict ourselves to past-compact gauge transformations, we note that also $X_a$ admits a unique decomposition into a \emph{transverse} and \emph{longitudinal part}, i.e.~for every $X_{a}\in\Gamma_{\mathrm{pc}}(T^{\ast}\mathbb{R}^{n})$ there exist unique $Y\in C^{\infty}_{\mathrm{pc}}(\mathbb{R}^{n})$ and $X_{a}^{\mathrm{T}}\in\Gamma_{\mathrm{pc}}(T^{\ast}\mathbb{R}^{n})$ with $\partial^{a}X_{a}^{\mathrm{T}}=0$, such that 
\[
X_a = X^{\mathrm{T}}_{a} + \partial_a Y.
\]
Indeed, for a given $X_{a}\in\Gamma_{\mathrm{pc}}(T^{\ast}\mathbb{R}^{n})$, this is equivalent to finding a scalar $Y\in C^{\infty}_{\mathrm{pc}}(\mathbb{R}^{n})$ such that $X_{a}^{\mathrm{T}}:=X_{a}-\partial_{a}Y$ is divergence-free. This, in turn, yields the hyperbolic equation $\square Y=\partial^{a}X_{a}$, whose unique smooth and past-compact solution is given by
\[
Y:=\mathsf{G}_{\mathrm {ret}} (\partial^a X_a).
\]

If $h_{ab}$ and $\bar{h}_{ab}$ are (uniquely) decomposed according to Proposition~\ref{prop:decomposition} and Corollary~\ref{Cor:decomposition} with components \begin{align*}
    (w,v_{a}^{\mathrm{T}})\in C^{\infty}_{\mathrm{pc}}(\mathbb{R}^{n})\times\Gamma_{\mathrm{pc}}(T^{\ast}\mathbb{R}^{n})\qquad\text{and}\qquad (\bar{w},\bar{v}_{a}^{\mathrm{T}})\in C^{\infty}_{\mathrm{pc}}(\mathbb{R}^{n})\times\Gamma_{\mathrm{pc}}(T^{\ast}\mathbb{R}^{n}),
\end{align*}
respectively, and if we decompose a gauge transformation $X_{a}\in\Gamma_{\mathrm{pc}}(T^{\ast}\mathbb{R}^{n})$ (uniquely) as $X_{a}=X_{a}^{\mathrm{T}}+\partial_{a}Y$ with $Y\in C^{\infty}_{\mathrm{pc}}(\mathbb{R}^{n})$ and $X_{a}^{\mathrm{T}}\in\Gamma_{\mathrm{pc}}(T^{\ast}\mathbb{R}^{n})$ satisfying $\partial^{a}X_{a}^{\mathrm{T}}=0$, as explained above, then uniqueness of the decomposition implies that the individual components of $h_{ab}$ and $\bar{h}_{ab}$ admit the following gauge transformation laws:
\begin{equation}\label{eq:gauge-transf-decomposition}
\begin{cases}
h &\mapsto\qquad h + 2 \square Y \\
w &\mapsto\qquad w + 2  Y \\
v^{\mathrm{T}}_b &\mapsto\qquad v^{\mathrm{T}}_b + X^{\mathrm{T}}_b  \\
h^{\mathrm{TT}}_{ab} &\mapsto\qquad h^{\mathrm{TT}}_{ab}
\end{cases}\qquad \text{and}\qquad\begin{cases}
\bar{h} &\mapsto\qquad \bar{h} - 2 \square Y \\
\bar{w} &\mapsto\qquad \bar{w} + 2  Y \\
\bar{v}^{\mathrm{T}}_b &\mapsto\qquad \bar{v}^{\mathrm{T}}_b + X^{\mathrm{T}}_b  \\
\bar{h}^{\mathrm{TT}}_{ab} &\mapsto\qquad \bar{h}^{\mathrm{TT}}_{ab}
\end{cases}
\end{equation}

In the following proposition, we show that the \emph{de Donder gauge condition} (cf.~Eq.~\eqref{eq:deDonderGauge}) provides a \emph{complete gauge fixing} in the setting of \emph{past-compact} metric perturbations. More precisely, given $h_{ab}\in\Gamma_{\mathrm{pc}}(T^{\ast}\mathbb{R}^{n}\otimes_{s} T^{\ast}\mathbb{R}^{n})$, its gauge equivalence class $[h_{ab}]_{\sim}$, where two past-compact tensors are identified if they differ by a linear gauge transformation generated by some $X_{a}\in\Gamma_{\mathrm{pc}}(T^{\ast}\mathbb{R}^{n})$, contains \emph{exactly one} representative satisfying the de Donder gauge condition $\partial^{a}\bar{h}_{ab}=0$.
\begin{proposition}[Complete gauge fixing]\label{prop:gauge-trasformation}
Let $h_{ab}\in\Gamma_{\mathrm{pc}}(T^{\ast}\mathbb{R}^{n}\otimes_{s}T^{\ast}\mathbb{R}^{n})$. Then, there is exactly one vector field $\tilde{X}_a \in  \Gamma_{\mathrm{pc}}(T^*\mathbb{R}^n)$, which makes the trace-reversal $\bar{h}_{ab}$ divergence-free, i.e.
\begin{align*}
    \partial^{a}\bar{h}_{ab}^{\prime}=0\qquad\text{with}\qquad \bar{h}_{ab}^{\prime}:=h_{ab}+\partial_a \tilde{X}_b + \partial_b \tilde{X}_a-\frac{4}{n}\eta_{ab}\partial^{c}\tilde{X}_{c}
\end{align*}
If $\bar{h}_{ab}$ is uniquely decomposed as in Proposition~\ref{prop:decomposition} (cf.~Corollary~\ref{Cor:decomposition}), the field $\tilde{X}_{a}$ is given by
\[
\tilde{X}_a^{\mathrm{T}} = - \bar{v}_a^{\mathrm{T}} , \qquad
\tilde{Y}= -\mathsf{G}_{\mathrm{ret}} \Big(    \frac{n-1}{2(n-2)}\square \bar{w} 
+ \frac{1}{2(n-2)}\,\bar{h} 
   \Big),\qquad \tilde{X}_{a}:=\tilde{X}_{a}^{\mathrm{T}}+\partial_{a}\tilde{Y}.
\]
\end{proposition}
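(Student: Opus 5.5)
The plan is to write the de Donder condition for the gauge-transformed trace-reversed perturbation as a linear hyperbolic system for $\tilde X_a$. We then split $\tilde X_a$ into its transverse and longitudinal parts and solve each part with the retarded Green operator, using that past-compact solutions of $\square u=0$ vanish. I read the statement with $\bar h_{ab}$ in place of $h_{ab}$ on the right-hand side of the definition of $\bar h'_{ab}$, in accordance with the transformation law for trace-reversed perturbations stated above. I also assume $n\geq 3$, so that the factor $n-2$ is invertible.

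\emph{Step 1: divergence of the gauge term.} For arbitrary $X_a\in\Gamma_{\mathrm{pc}}(T^*\mathbb{R}^n)$, a direct computation gives
\begin{align*}
\partial^a\Big(\partial_aX_b+\partial_bX_a-\tfrac{4}{n}\eta_{ab}\partial^cX_c\Big)=\square X_b+\tfrac{n-4}{n}\partial_b\partial^cX_c .
\end{align*}
Write $X_a=X^{\mathrm T}_a+\partial_aY$ using the unique past-compact splitting constructed above. Since $\partial^aX^{\mathrm T}_a=0$ and $\partial^cX_c=\square Y$, the right-hand side becomes $\square X^{\mathrm T}_b+\tfrac{2(n-2)}{n}\partial_b\square Y$.

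\emph{Step 2: divergence of $\bar h_{ab}$.} Decompose $\bar h_{ab}$ as in Proposition~\ref{prop:decomposition} and Corollary~\ref{Cor:decomposition}. The scalar part contributes $\tfrac{n-1}{n}\partial_b\square\bar w+\tfrac1n\partial_b\bar h$. The vector part contributes $\square\bar v^{\mathrm T}_b$, because $\bar v^{\mathrm T}$ is transverse. The TT part contributes nothing. The gauge condition $\partial^a\bar h'_{ab}=0$ is therefore equivalent to
\begin{align*}
\square\big(\bar v^{\mathrm T}_b+X^{\mathrm T}_b\big)+\partial_b B=0,\qquad B:=\tfrac{n-1}{n}\square\bar w+\tfrac1n\bar h+\tfrac{2(n-2)}{n}\square Y .
\end{align*}

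\emph{Step 3: decoupling and solving.} Take the divergence of this equation. The first term drops out because both $\bar v^{\mathrm T}$ and $X^{\mathrm T}$ are transverse, leaving $\square B=0$. Since $B$ is smooth and past-compact, uniqueness of past-compact solutions (as used in the proof of Proposition~\ref{prop:decomposition}) forces $B=0$. This is a hyperbolic equation for $Y$, namely
\begin{align*}
\square Y=-\tfrac{n-1}{2(n-2)}\square\bar w-\tfrac{1}{2(n-2)}\bar h ,
\end{align*}
whose unique past-compact solution is the stated $\tilde Y=-\mathsf G_{\mathrm{ret}}(\cdots)$. Substituting $B=0$ back leaves $\square(\bar v^{\mathrm T}_b+X^{\mathrm T}_b)=0$ with a past-compact argument, hence $\tilde X^{\mathrm T}_b=-\bar v^{\mathrm T}_b$; this is indeed transverse, as required.

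For existence, one checks that these choices give $B=0$ and make the bracket in Step 2 vanish, so $\partial^a\bar h'_{ab}=0$. For uniqueness, any admissible $\tilde X_a$ has a unique splitting $\tilde X^{\mathrm T}_a+\partial_a\tilde Y$, and Step 3 pins down both pieces. The only step requiring care is the bookkeeping of the trace-reversal factors ($\tfrac{4}{n}$ versus $\tfrac{2}{n}$) in Steps 1 and 2, which must reproduce the constants $\tfrac{n-1}{2(n-2)}$ and $\tfrac{1}{2(n-2)}$. Conceptually, the key input is that past-compact solutions of $\square u=0$ vanish; this is what makes the gauge fixing complete.
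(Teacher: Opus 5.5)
Your proof is correct and follows essentially the same route as the paper: split $\tilde X_a=\tilde X^{\mathrm T}_a+\partial_a\tilde Y$, compute $\partial^a\bar h'_{ab}$ from the decomposition of $\bar h_{ab}$, take one more divergence to isolate the scalar equation, and invert with $\mathsf{G}_{\mathrm{ret}}$, using that past-compact solutions of $\square u=0$ vanish. The differences are cosmetic: you compute the divergence of the gauge term directly instead of going through the transformation laws \eqref{eq:gauge-transf-decomposition}, and you spell out the uniqueness argument that the paper only sketches; your reading of $h_{ab}$ as $\bar h_{ab}$ in the statement and your restriction to $n\geq 3$ (needed to divide by $n-2$) are both right.
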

\begin{proof}
We aim to find $\tilde{X}_{b}=\tilde{X}_{b}^{\mathrm{T}}+\partial_{b}\tilde{Y}\in\Gamma_{\mathrm{pc}}(T^{\ast}\mathbb{R}^{n})$ such that
\[
\bar{h}_{ab}'=\bar{h}_{ab}+\partial_a \tilde{X}_b+\partial_b \tilde{X}_a - \frac{4}{n}\eta_{ab}\partial^c \tilde{X}_c
\]
is divergence-free. In this case, utilising the decomposition given in Proposition \ref{prop:decomposition} and Corollary~\ref{Cor:decomposition} and, recalling the transformation rules under gauge transformation given in \Eq~\eqref{eq:gauge-transf-decomposition}, yields the two equations
\begin{align*}
0=\partial^a \bar{h}_{ab}' 
&= 
 \frac{n-1}{n}\partial_b\,\square \bar{w} +
 2\frac{n-1}{n}\partial_b\,\square \tilde{Y}
+ \frac{1}{n}\,\partial_b\,\bar{h} -\frac{2}{n}\,\partial_b\,\square \tilde{Y}
+\square (\bar{v}_b^{\mathrm{T}}+\tilde{X}_b^{\mathrm{T}})
\\
&= 
 \frac{n-1}{n}\partial_b\,\square \bar{w} 
+ \frac{1}{n}\,\partial_b\,\bar{h} 
+
 2\frac{n-2}{n}\partial_b\,\square \tilde{Y}
 +\square (\bar{v}_b^{\mathrm{T}}+\tilde{X}_b^{\mathrm{T}})\\
 0=\partial^b \partial^a \bar{h}_{ab}'&=\frac{n-1}{n}\square\square \bar{w} 
+ \frac{1}{n}\,\square\,\bar{h} 
+
 2\frac{n-2}{n}\square\square \tilde{Y},
\end{align*}
whose unique past-compact and smooth solution is given by $\tilde{X}_{b}^{\mathrm{T}}=-\bar{v}_{b}^{\mathrm{T}}$ and 
\[
\tilde{Y}= -\mathsf{G}_{\mathrm{ret}} \Big(    \frac{n-1}{2(n-2)}\square \bar{w} 
+ \frac{1}{2(n-2)}\,\bar{h}
   \Big).
\]
We conclude that $\tilde{X}_{b}=\tilde{X}_{b}^{\mathrm{T}}+\partial_{b}\tilde{Y}$ is the required vector field. Uniqueness follows from similar arguments as above.
\end{proof}

    As a next step, we observe that the vector component in the decomposition of Proposition~\ref{prop:decomposition} vanishes upon imposing the de Donder gauge. The decomposition therefore simplifies to include only the scalar and transverse-traceless sectors in this case.

\begin{proposition}[Vanishing of the vector component]\label{prop:decomp-divfree}
Let $\bar{h}_{ab}\in\Gamma_{\mathrm{pc}}(T^{\ast}\mathbb{R}^{n}\otimes_{s}T^{\ast}\mathbb{R}^{n})$ be such that $\partial^{a}\bar{h}_{ab}=0$. Then, the decomposition of Proposition \ref{prop:decomposition} simplifies to
\begin{align*}
\bar{w} =  - \frac{1}{n-1}\,\mathsf{G}_{\mathrm{ret}}( \bar{h}), \qquad\qquad
\bar{v}^{\mathrm{T}}_a =  0. 
\end{align*}
In particular, the scalar and vectorial components in the decomposition are given by
\[
\bar{h}_{ab}^{\mathrm{S}} = \frac{1}{n-1} \tau_{ab} \bar{h}, \qquad\qquad \bar{h}_{ab}^{\mathrm{V}} = 0,
\] 
where $\tau_{ab}\colon C^{\infty}_{\mathrm{pc}} (\mathbb{R}^n)\to \Gamma_{\mathrm{pc}} (T^*\mathbb{R}^n\otimes_{s} T^*\mathbb{R}^n)$ is the operator that acts on $f\in C^{\infty}_{\mathrm{pc}} (\mathbb{R}^n)$ as
\[
\tau_{ab} f := \eta_{ab} f -\partial_a\partial_b \mathsf{G}_{\mathrm{ret}} (f).
\]
\end{proposition}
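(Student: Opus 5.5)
We plan to substitute the de Donder condition $\partial^{a}\bar h_{ab}=0$ directly into the explicit formulas \eqref{eq:decompComp} of Proposition~\ref{prop:decomposition}, applied to $\bar h_{ab}$ with trace $\bar h=\eta^{ab}\bar h_{ab}$. Alternatively, we can exhibit a candidate decomposition and conclude by the uniqueness part of Proposition~\ref{prop:decomposition}; we will do both, since the explicit computation is short.

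\textbf{Scalar.} Since $\partial^a\bar h_{ab}=0$, also $\partial^a\partial^b\bar h_{ab}=0$, and \eqref{eq:decompComp} gives
\begin{align*}
\bar w=\frac{n}{n-1}\mathsf{G}^{\circ 2}_{\mathrm{ret}}\Big(-\frac{1}{n}\square\bar h\Big)=-\frac{1}{n-1}\mathsf{G}_{\mathrm{ret}}\big(\mathsf{G}_{\mathrm{ret}}\square\bar h\big)=-\frac{1}{n-1}\mathsf{G}_{\mathrm{ret}}(\bar h).
\end{align*}
The last step uses $\mathsf{G}_{\mathrm{ret}}\square f=f$ for $f\in C^\infty_{\mathrm{pc}}(\mathbb{R}^n)$. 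This holds because $\mathsf{G}_{\mathrm{ret}}\square f-f$ is a past-compact solution of the homogeneous wave equation, hence vanishes.

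\textbf{Vector.} We next check that $\bar v^{\mathrm T}_b=0$. Using the expression just found, $\square\bar w=-\frac{1}{n-1}\square\mathsf{G}_{\mathrm{ret}}\bar h=-\frac{1}{n-1}\bar h$, since $\mathsf{G}_{\mathrm{ret}}$ is a right inverse of $\square$. Hence
\begin{align*}
\partial^a\bar h_{ab}-\frac{1}{n}\partial_b\bar h-\frac{n-1}{n}\partial_b\square\bar w=0-\frac1n\partial_b\bar h+\frac1n\partial_b\bar h=0,
\end{align*}
and therefore $\bar v^{\mathrm T}_b=\mathsf{G}_{\mathrm{ret}}(0)=0$, which gives $\bar h^{\mathrm V}_{ab}=0$. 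Note that $\mathsf{G}_{\mathrm{ret}}$ commutes with $\partial_b$ on past-compact functions, by the same uniqueness argument.

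\textbf{Scalar component.} Inserting into $\bar h^{\mathrm S}_{ab}=(\partial_a\partial_b-\frac1n\eta_{ab}\square)\bar w+\frac1n\eta_{ab}\bar h$ gives
\begin{align*}
\bar h^{\mathrm S}_{ab}=-\frac{1}{n-1}\partial_a\partial_b\mathsf{G}_{\mathrm{ret}}\bar h+\frac{1}{n(n-1)}\eta_{ab}\bar h+\frac1n\eta_{ab}\bar h=\frac{1}{n-1}\big(\eta_{ab}\bar h-\partial_a\partial_b\mathsf{G}_{\mathrm{ret}}\bar h\big)=\frac{1}{n-1}\tau_{ab}\bar h,
\end{align*}
as claimed.

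The main (mild) obstacle is justifying the identities $\mathsf{G}_{\mathrm{ret}}\square=\square\mathsf{G}_{\mathrm{ret}}=\mathrm{id}$ and $[\partial_b,\mathsf{G}_{\mathrm{ret}}]=0$ on $C^\infty_{\mathrm{pc}}$, together with the factor $\mathsf{G}^{\circ2}_{\mathrm{ret}}\square=\mathsf{G}_{\mathrm{ret}}$. All of these follow from uniqueness of past-compact solutions of the wave equation, the same fact already invoked in the proof of Proposition~\ref{prop:decomposition}. As a consistency check, we will verify that $\bar h-\frac{n}{n-1}\bar h+\frac{1}{n-1}\bar h=0$, i.e. that $\bar h^{\mathrm{TT}}=\bar h_{ab}-\bar h^{\mathrm S}_{ab}$ is indeed traceless, and that $\partial^a\tau_{ab}\bar h=\partial_b\bar h-\partial_b\bar h=0$, so that it is also transverse. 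By the uniqueness in Proposition~\ref{prop:decomposition}, this independently confirms the result.
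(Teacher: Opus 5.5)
Your proposal is correct and is essentially the paper's argument. The paper also inserts $\partial^a\bar h_{ab}=0$ into the explicit formulas \eqref{eq:decompComp} and simplifies using that $\mathsf{G}_{\mathrm{ret}}$ inverts $\square$ on past-compact functions, which gives $\bar v^{\mathrm{T}}_a=0$ and $\bar h^{\mathrm{S}}_{ab}=\frac{1}{n-1}\tau_{ab}\bar h$. Your additional check via the uniqueness part of Proposition~\ref{prop:decomposition}, verifying that $\bar h_{ab}-\frac{1}{n-1}\tau_{ab}\bar h$ is transverse and traceless, is a valid confirmation that the paper does not spell out.
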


\begin{proof}
    The first claim follows directly from the explicit expressions of $(\bar{w},\bar{v}_{a}^{\mathrm{T}})$ in terms of $\bar{h}_{ab}$ as specified in Eq.~\eqref{eq:decompComp}. On the other hand, this implies that $\bar{h}^{\mathrm{V}}_{ab}=0$ and similarly
\begin{align*}
\bar{h}_{ab}^{\mathrm{S}} &= 
 \Big(\partial_a\partial_b - \frac{1}{n}\,\eta_{ab}\,\square\Big) \bar{w}
+ \frac{1}{n}\,\eta_{ab}\,\bar{h}
\\
&= \frac{n}{n-1}\,
\partial_a\partial_b \mathsf{G}^{\circ 2}_{\mathrm{ret}}
\Big(\partial^c\partial^d \bar{h}_{cd}\Big)  
- \frac{1}{n-1} \eta_{ab} \mathsf{G}_{\mathrm{ret}}
\Big(\partial^c\partial^d \bar{h}_{cd}\Big)
-\frac{1}{n-1}\partial_a\partial_b \mathsf{G}_{\mathrm{ret}}( \bar{h})
+ \frac{1}{n-1}\,\eta_{ab}\,\bar{h}\\&=\frac{1}{n-1}\bigg(\eta_{ab}\bar{h}-\partial_{a}\partial_{b}\mathsf{G}_{\mathrm{ret}}(\bar{h})\bigg),
\end{align*}
where we used that $\partial^{a}\bar{h}_{ab}=0$, which proves the claim.
\end{proof}

\begin{remark}
    To sum up, once the de Donder gauge condition is imposed, either for a past-compact symmetric tensor field $h_{ab}$ or its trace-reversal $\bar{h}_{ab}$, the gauge freedom is completely fixed when restricting to past-compact gauge transformations. This is also the reason why only the trace $\bar{h}$ appears in the scalar component $\bar{h}_{ab}^{\mathrm{S}}$ of $h_{ab}$. 
\end{remark}

Having imposed the de Donder gauge condition, the decomposition reduces to scalar and transverse-traceless degrees of freedom. As a final step, we derive explicit expressions for the corresponding projectors, which will be useful for decomposing various tensors. To start with, let us consider the operator already introduced in Proposition \ref{prop:decomp-divfree}, i.e.
\begin{equation}\label{eq:tau}
\tau_{ab}\colon C^{\infty}_{\mathrm{pc}} (\mathbb{R}^n)\to \Gamma_{\mathrm{pc}} (T^*\mathbb{R}^n\otimes_{s} T^*\mathbb{R}^n),\qquad \tau_{ab} := \eta_{ab} -  \pa_a \pa_b \mathsf{G}_{\mathrm{ret}}.
\end{equation}
With this notation, we have the following.

\begin{proposition}\label{Prop:Projectors}
Consider $\Gamma_{\mathrm{pc}}^{\delta}(T^{\ast}\mathbb{R}^n\otimes T^{\ast}\mathbb{R}^n)$, which is the set of smooth, symmetric past-compact $2$-tensor which are also divergence free.
Consider the following operators on $\Gamma_{\mathrm{pc}}^{\delta}$, 
\begin{equation}
	\label{eq:projector-S}
	{{P^{(\mathrm{S})}}_{ab}}^{cd} := \frac{1}{n-1} {\tau}_{ab} {\tau}^{cd},
\end{equation}
which is called  {\bf scalar projector},
and 
\begin{equation}
	\label{eq:projector-T}
	{{P^{(\mathrm{TT})} }_{ab}}^{cd} := \frac{1}{2} \at {{\tau}_a}^c {{\tau}_b}^d + {{\tau}_a}^d {{\tau}_b}^c \ct - \frac{1}{n-1} {\tau}_{ab} {\tau}^{cd},
\end{equation}
which is called  {\bf transverse traceless projector} and where both
are given in terms of $\tau_{ab}$ introduced in \eqref{eq:tau}.
They are projectors which extract the scalar and transverse traceless part of $\bar{h}_{ab}\in  \Gamma_{\mathrm{pc}}^{\delta}$, that is, according to the decomposition introduced in Proposition \ref{prop:decomposition},
\[
{P^{(\mathrm{S})}_{ab}}^{cd} \,\bar{h}_{cd} = \bar{h}_{cd}^{\mathrm{S}}, \qquad  {P^{(\mathrm{TT})}_{ab}}^{cd}\, \bar{h}_{cd} = \bar{h}_{cd}^{\mathrm{TT}}\, .
\]  
\end{proposition}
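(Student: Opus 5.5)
The plan is to reduce everything to Proposition~\ref{prop:decomp-divfree}, which already gives $\bar h^{\mathrm{S}}_{ab}=\frac{1}{n-1}\tau_{ab}\bar h$ and $\bar h^{\mathrm{V}}_{ab}=0$ for divergence-free $\bar h_{ab}$. Once that identity is available, the statement becomes a computation with $\tau_{ab}$. I would first record the elementary properties of $\tau_{ab}=\eta_{ab}-\partial_a\partial_b\mathsf{G}_{\mathrm{ret}}$ on past-compact functions and divergence-free tensors. On $C^\infty_{\mathrm{pc}}$ we have $\square\mathsf{G}_{\mathrm{ret}}=\mathsf{G}_{\mathrm{ret}}\square=\mathrm{id}$, and $\mathsf{G}_{\mathrm{ret}}$ commutes with $\partial_a$; both follow from uniqueness of past-compact solutions. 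These give
\begin{align*}
\partial^a\tau_{ab}f=0,\qquad \eta^{ab}\tau_{ab}f=(n-1)f,\qquad \tau_a{}^c\tau_{cb}=\tau_{ab},
\end{align*}
where the last identity is understood as operators on past-compact objects. For $\bar h_{cd}\in\Gamma^\delta_{\mathrm{pc}}$, I would interpret $\tau^{cd}\bar h_{cd}$ as $\eta^{cd}\bar h_{cd}-\mathsf{G}_{\mathrm{ret}}\partial^c\partial^d\bar h_{cd}=\bar h$, since the divergence vanishes. Likewise $\tau_a{}^c\bar h_{cd}=\bar h_{ad}-\partial_a\mathsf{G}_{\mathrm{ret}}\partial^c\bar h_{cd}=\bar h_{ad}$, and similarly on the other index. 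In words, $\tau$ acts as the identity on transverse indices.

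The scalar projector then follows immediately: ${P^{(\mathrm{S})}_{ab}}^{cd}\bar h_{cd}=\frac{1}{n-1}\tau_{ab}\bar h=\bar h^{\mathrm{S}}_{ab}$ by Proposition~\ref{prop:decomp-divfree}. For the transverse-traceless projector, the symmetrised term $\frac12(\tau_a{}^c\tau_b{}^d+\tau_a{}^d\tau_b{}^c)\bar h_{cd}$ equals $\bar h_{ab}$ by the identity-on-transverse-indices property and the symmetry of $\bar h_{cd}$. Hence ${P^{(\mathrm{TT})}_{ab}}^{cd}\bar h_{cd}=\bar h_{ab}-\bar h^{\mathrm{S}}_{ab}$. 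By the decomposition of Proposition~\ref{prop:decomposition} with vanishing vector part, this is exactly $\bar h^{\mathrm{TT}}_{ab}$. To confirm consistency, I would check directly that the result is divergence-free (because $\partial^a\tau_{ab}=0$) and traceless (because $\eta^{ab}\tau_{ab}\bar h=(n-1)\bar h$ cancels $\bar h$). Uniqueness in Proposition~\ref{prop:decomposition} then identifies it with $\bar h^{\mathrm{TT}}$.

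Finally, I would verify the projector property: idempotence and mutual annihilation $P^{(\mathrm{S})}P^{(\mathrm{TT})}=P^{(\mathrm{TT})}P^{(\mathrm{S})}=0$. Idempotence of $P^{(\mathrm{S})}$ follows from $\tau^{cd}\tau_{cd}f=(n-1)f$. This is the one nontrivial contraction: $\tau^{cd}$ applied to $\tau_{cd}f$ gives $(n-1)f$ because $\tau_{cd}f$ is itself divergence-free, so the $\partial\partial\mathsf{G}_{\mathrm{ret}}$ part drops, and its trace is $(n-1)f$. The remaining relations then follow because $P^{(\mathrm{S})}+P^{(\mathrm{TT})}$ acts as the identity on $\Gamma^\delta_{\mathrm{pc}}$, and both projectors map $\Gamma^\delta_{\mathrm{pc}}$ into itself. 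The main obstacle I expect is bookkeeping: the operator $\tau^{cd}$ acting on a tensor must be given a precise meaning, with the contraction performed before $\mathsf{G}_{\mathrm{ret}}$. I also need to justify the commutation of $\mathsf{G}_{\mathrm{ret}}$ with derivatives and its invertibility on past-compact sections. I would settle this by using support properties and the uniqueness of past-compact solutions of $\square u=f$.
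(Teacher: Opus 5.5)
Your proposal is correct and follows essentially the same route as the paper. Both arguments use the algebraic identities for $\tau_{ab}$ on past-compact divergence-free tensors and the identity $\frac12(\tau_a{}^c\tau_b{}^d+\tau_a{}^d\tau_b{}^c)\bar h_{cd}=\bar h_{ab}$, then invoke Proposition~\ref{prop:decomp-divfree} to get $P^{(\mathrm{S})}\bar h=\bar h^{\mathrm{S}}$ and $P^{(\mathrm{TT})}\bar h=\bar h-\bar h^{\mathrm{S}}=\bar h^{\mathrm{TT}}$. You are more explicit than the paper in two places: you fix what $\tau$ means when acting on tensors, and you derive idempotence and mutual annihilation from $\tau^{cd}\tau_{cd}f=(n-1)f$ together with $P^{(\mathrm{S})}+P^{(\mathrm{TT})}=\mathrm{id}$, where the paper just says ``direct inspection''.
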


\begin{remark}
    The projectors ${P^{(\mathrm{S})}_{ab}}^{cd}$ and ${P^{(\mathrm{TT})}_{ab}}^{cd}$ defined on $\Gamma_{\mathrm{pc}}^{\delta}(T^{\ast}\mathbb{R}^n\otimes T^{\ast}\mathbb{R}^n)$ are orthogonal with respect to the natural fibre metric on $T^{\ast}\mathbb{R}^n\otimes T^{\ast}\mathbb{R}^n$ induced by the background metric $\eta_{ab}$. Moreover, it holds that
    \begin{align*}
        {P^{(\mathrm{S})}_{ab}}^{cd}{P^{(\mathrm{TT})}_{cd}}^{ef}={P^{(\mathrm{TT})}_{ab}}^{cd}{P^{(\mathrm{S})}_{cd}}^{ef}=0\,,\qquad {P^{(\mathrm{S})}_{ab}}^{cd}+{P^{(\mathrm{TT})}_{ab}}^{cd}=\delta^{c}_{a}\delta_{b}^{d}\quad\text{on}\quad \Gamma_{\mathrm{pc}}^{\delta}(T^{\ast}\mathbb{R}^n\otimes T^{\ast}\mathbb{R}^n)\,.
    \end{align*}
\end{remark}

\begin{proof}[Proof of Proposition~\ref{Prop:Projectors}.]
When applied to past-compact and divergence-free symmetric $(0,2)$-tensors, it holds that
\begin{equation}\label{eq:algbra-tau}
\partial^b\tau_{ab}=0, \qquad {\tau_{a}}^a = n-1,\qquad \tau_{ab}\tau^{ab}=n-1, \qquad \tau_{ab}{\tau^{b}}_d = \tau_{ad}.
\end{equation}
In particular, we obtain the identity
\[
\frac{1}{2} \at {{\tau}_a}^c {{\tau}_b}^d + {{\tau}_a}^d {{\tau}_b}^c \ct \bar{h}_{cd} = \bar{h}_{ab}\,.
\]
Hence, using Eq.~\eqref{eq:algbra-tau}, a direct inspection shows that
\[
{P^{(\mathrm{S})}_{ab}}^{cd} +  {P^{(\mathrm{TT})}_{ab}}^{cd} = {\mathrm{I}_{ab}}^{cd}, \qquad  {P^{(\mathrm{S})}_{ab}}^{cd}{P^{(\mathrm{TT})}_{cd}}^{ef} = 0 = {P^{(\mathrm{TT})}_{ab}}^{cd}{P^{(\mathrm{S})}_{cd}}^{ef}\,,
\] 
where ${\mathrm{I}_{ab}}^{cd}=\delta_{a}^{c}\delta_{b}^{s}$ denotes the identity on $\Gamma_{\mathrm{pc}}^\delta$. Moreover, on divergence-free $(0,2)$-tensors, it holds that
\[
{P^{(\mathrm{S})}_{ab}}^{cd} \,\bar{h}_{cd} = \frac{1}{n-1}\tau_{ab} {\bar{h}_c}^c = \bar{h}^{\mathrm{S}}_{cd}\,,
\]
where we used Proposition \ref{prop:decomp-divfree} in the last equality. The same proposition further implies that 
\[
{P^{(\mathrm{TT})}_{ab}}^{cd} \bar{h}_{cd} = \bar{h}^{\mathrm{TT}}_{ab}\,,
\]
which shows that ${P^{(\mathrm{S})}_{ab}}^{cd}$ and ${P^{(\mathrm{TT})}_{ab}}^{cd}$ are the required projectors.
\end{proof}

\subsection{The four-dimensional case and decomposition of some tensors}

For later purposes, we specialise the analysis now to the $(1+3)$-dimensional case. After imposing the de Donder gauge, i.e.~the gauge condition Eq.~\eqref{eq:deDonderGauge}, we obtain the decomposition
\begin{equation}
\label{eq:metric-decomposition}
	\bar{h}_{ab} = \bar{h}_{ab}^{{\mathrm{S}}} + \bar{h}_{ab}^{{\mathrm{TT}}},
\end{equation}
where we recall that the vector-component vanishes by Proposition~\ref{prop:decomp-divfree} and where
\begin{align*}
	\bar{h}_{ab}^{\mathrm{S}}  &:= {P^{(\mathrm{S})}_{ab}}^{cd}\bar{h}_{cd}  
	= \frac{1}{3} \tau_{ab} {\bar{h}^c}_c\,, 
	\qquad  
	&{{\bar{h}^{\mathrm{S}}}_c}^c = {\bar{h}_c}^c\;,\\
	\bar{h}_{ab}^{{\mathrm{TT}}}  &:= {P^{(\mathrm{TT})}_{ab}}^{cd}\bar{h}_{cd}
	= \bar{h}_{ab} - \frac{1}{3} \tau_{ab} {\bar{h}^c}_c\,,
	&{{\bar{h}^{{\mathrm{TT}}}}_c}^c = 0\,
	, 
\end{align*}
are the \emph{gauge-invariant components} of the metric perturbation. These components correspond to the linearly independent scalar and tensorial degrees of freedom of the model, because they are related to the trace and the transverse traceless part of ${\bar{h}}_{ab}$. 

In particular, notice that if $\bar{h}_{ab}$ satisfies the de Donder gauge, we find
\begin{equation}\label{eq:projectors-onhab}
{P^{(\mathrm{S})}_{ab}}^{cd}{h}_{cd}  =-\frac{1}{2}  \bar{h}^{\mathrm{S}}_{cd}, \qquad {P^{(\mathrm{TT})}_{ab}}^{cd}{h}_{cd} = h^{\mathrm{TT}}_{ab}.  
\end{equation}

Based on the decomposition introduced in Eq.~\eqref{eq:metric-decomposition}, and after imposing the de Donder gauge, as will be shown in detail below, all contributions entering the linearised semiclassical Einstein equations can be decoupled into scalar and transverse-traceless sectors, i.e.
\begin{equation}
	\label{eq:Einstein-tensor-1-decomposition}
	{G}_{ab}^{(1)} = -\frac{1}{2} \square \bar{h}_{ab} = {{G}_{ab}^{(1)}}^{\mathrm{S}} + {{G}_{ab}^{(1)}}^{\mathrm{TT}} = -\frac{1}{2} {{P}^{(\mathrm{S})}_{ab}}^{cd} \square \bar{h}_{cd} - \frac{1}{2} {{P}^{(\mathrm{TT})}_{ab}}^{cd} \square \bar{h}_{cd},
\end{equation}
where the scalar and transverse-traceless sectors are given by
\begin{subequations}\label{eq:Einstein-tensor-1}
\begin{align}
	{{G}_{ab}^{(1)}}^{\mathrm{S}} &= -\frac{1}{2} \square \bar{h}_{ab}^{\mathrm{S}} = - \frac{1}{6} \tau_{ab} \square \bar{h}\,; \\
	{{G}_{ab}^{(1)}}^{\mathrm{TT}} &= -\frac{1}{2} \square \bar{h}_{ab}^{\mathrm{TT}}\,.
\end{align}
\end{subequations}

The linearised local curvature tensors $I$ and $J$, which appears in Eq.~\eqref{eq:T_ab-omega} and which we recall in Eq.~\eqref{eq:def-I-J} of the appendix, can accordingly be written as
\begin{subequations}\label{eq:local-curvature-tensor-A}
\begin{align}
	I^{(1)}_{ab} &= -2 \square {{G}_{ab}^{(1)}}^{\mathrm{TT}} =  \square\square \bar{h}_{ab}^{\mathrm{TT}}\,; \\
	J^{(1)}_{ab} &= 6 \square {{G}_{ab}^{(1)}}^{\mathrm{S}} = \square\square \bar{h}_{ab}^{\mathrm{S}}\,.
\end{align}
\end{subequations}
Note that $I_{ab}$ depends exclusively on ${\bar{h}}^{\mathrm{TT}}_{ab}$, while $J_{ab}$ depends only on ${\bar{h}}^{\mathrm{S}}_{ab}$. Moreover, although $\tau_{ab}$ is in general nonlocal, the presence of the d’Alembertian operator $\square$ acting on ${\bar{h}}^{\mathrm{TT}}_{ab}$ and ${\bar{h}}^{\mathrm{S}}_{ab}$ removes these nonlocal contributions. As a consequence, ${{G}_{ab}^{(1)}}^{\mathrm{S}}$, ${{G}_{ab}^{(1)}}^{\mathrm{TT}}$, $I^{(1)}_{ab}$, and $J^{(1)}_{ab}$ are all local (differential) operators acting on $\bar{h}_{ab}$.

\section{Linearised semiclassical Einstein equations}\label{sec3}

The central aim of this chapter is to derive the {\bf linearised semiclassical Einstein equations} in the form~\eqref{eq:FPmetric} on a Minkowski background. We begin by imposing that the vacuum solution $(\eta_\ab,\omega_0)$ solves the semiclassical Einstein-Klein-Gordon system,
$$ G_\ab^{(0)}:= G_\ab[\eta_\ab]  = \kappa \langle \Wick{T_{\ab}[\eta_\ab]}\rangle_{\omega_0} \,. $$
This will be achieved once the renormalisation constants $\alpha_i$ in \Eq~\eqref{eq:stress-tensor} are chosen appropriately. Following the strategy outlined in subsection~\ref{subsec:Nonlocsemi} of the introduction, we can then set a forcing problem, and using the decomposition obtained in Section~\ref{se:metric-perturbation-decomposition}, we can further decouple the Einstein-Klein-Gordon system using the quantum M\o ller map. The price to pay is that we have to match the renormalisation constant $\alpha_1$ with ${\tilde{\alpha}}_1$, which is the one obtained using the quantum M\o ller map. The main result of the chapter is presented in the final section, where we employ the \emph{principle of general local covariance} together with the \emph{principle of perturbative agreement} to fix some of the renormalisation constants that arise at linear order in this construction.

In this section, we shall use the following notation: given a state $\omega$ we denote its two-point correlation distribution by $\omega_2$. With this notation,  the corresponding two-point distribution of the Poincar\'e vacuum $\omega_0$ will be denoted by $\omegazd$.

\subsection{Background solution}
The aim of this section is to find the suitable renormalisation constant that makes the vacuum solution satisfy
$$ G_\ab^{(0)}= \kappa \langle \Wick{T_{\ab}[\eta_\ab]}\rangle_{\omega_0} =:\kappa \langle \Wick{T_{\ab}^{(0)}}\rangle_{\omega_0} \,. $$
We begin by noticing that 
the semiclassical Einstein equations simplifies to 
\begin{equation}
\label{eq:SEE-0}
	0 =  G^{(0)}_{ab} = \kappa \omega_0(\Wick{T^{(0)}_{ab}}).
\end{equation}
This equation, in turn, simply becomes a constraint on the renormalisation constants contained in $T_{ab}$, which are $\alpha_1$,  $\alpha_2$, $\alpha_3$, $\alpha_4$. Actually, on Minkowski spacetime, the curvature tensors multiplying $\alpha_2$, $\alpha_3$ and $\alpha_4$ as appearing in \eqref{eq:T_ab-omega} vanish. Therefore, these renormalisation freedoms do not appear in \eqref{eq:SEE-0}. Furthermore, $\alpha_2$ is a renormalisation of the Newton constant which is fixed by experiments. The only relevant constant is $\alpha_1$, which plays the role of a cosmological constant.\\

Although $\alpha_1$ gets fixed in this way, the freedom of choosing the corresponding constants in $\Wick{\Phi}$ and $\Wick{\Phi_{ab}}$, keeping their linear combination equal to $\alpha_1$, is left (see Appendix \ref{app:reno-freedom} for further details). We choose to fix this remaining freedom in a way that the vacuum expectation values of $\Wick{\Phi}$ and $\Wick{\Phi_{ab}}$ vanish. We have the following.

\begin{proposition}\label{prop:background}
Consider the Klein-Gordon field of mass $m$ and generic coupling to the scalar curvature $\xi$ propagating on a Minkowski spacetime. Assume the theory to be in vacuum $\omega_0$. We have 
	\[
	\omega_0(\Wick{\Phi}) = [\mathsf{w}]+d_1 m^2, \qquad \omega_0(\Wick{\Phi_{ab}}) = [\partial_a\partial_b\mathsf{w}]+c_1 m^4\eta_{ab}\,,
	\]
	where $\mathsf{w}(x,y) = \omegazd(x,y)-\mathcal{H}(x,y)$
    is the smooth part of the two-point function, $[\cdot]$ denotes the coinciding point limit and $d_1$, $c_1$ are the regularisation constants that have a role on the Minkowski background. 
    Here, $\mathcal{H}$ is the Hadamard parametrix.
	If $d_1$ and $c_1$ are such that 
	\[
	\begin{aligned}
d_1  &= -\frac{1}{16 \pi^2} \left(-1+ 2\gamma + \log \left(\frac{m^2}{2\mu^2}\right)\right)
\\
c_1 &= 
-\frac{1}{64 \pi^2} \left({-\frac{5}{2}} + 2\gamma + \log \left(\frac{m^2}{2\mu^2}\right)
\right),
\end{aligned}
\] 
with $\gamma$ the Euler-Mascheroni constant, the following expectation values vanish
	\begin{equation*}
	\omega_0(\Wick{\Phi}) = 0 , \qquad \omega_0(\Wick{\Phi_{ab}})=0
	\end{equation*}
	and thus
	\[
	\omega_0(\Wick{T^{(0)}_{ab}})=0.
	\]
	With these choices, the renormalisation parameter $\alpha_1$ given in \Eq \eqref{eq:T_ab-omega} is
	\begin{equation*}
	\alpha_1  =\frac{1}{64 \pi^2} \left( - \frac{3}{2} + 2\gamma + \log \left(\frac{m^2}{2\mu^2}\right)
	\right).
	\end{equation*}
and the pair $(\eta_{ab},\omega_0)$ is a solution of \Eq \eqref{eq:SEE-0}.
\end{proposition}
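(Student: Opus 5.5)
The plan is to compute the coinciding-point limits $[\mathsf{w}]$ and $[\partial_a\partial_b\mathsf{w}]$ explicitly in Minkowski spacetime, fix $d_1$ and $c_1$ so that the two Wick expectation values vanish, and then read off $\alpha_1$ from the way $\Wick{\Phi}$ and $\Wick{\Phi_{ab}}$ assemble into $\Wick{T^{(0)}_{ab}}$ (Appendix~\ref{app:reno-freedom}).

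First, the Poincaré vacuum two-point function is
\begin{equation*}
\omegazd(x,y)=\frac{m}{4\pi^2\sqrt{2\sigma_\epsilon}}K_1\bigl(m\sqrt{2\sigma_\epsilon}\bigr),
\end{equation*}
which depends only on $\sigma=\tfrac12(x-y)^2$. I will expand $K_1$ for small argument using
\begin{equation*}
K_1(z)=\frac1z+\frac z2\Bigl(\log\frac z2+\gamma-\frac12\Bigr)+\frac{z^3}{16}\Bigl(\log\frac z2+\gamma-\frac54\Bigr)+O(z^5\log z).
\end{equation*}
The Hadamard parametrix in flat space has $\mathsf{u}=1/(8\pi^2)$, and $\mathsf{v}=\sum_k \mathsf{v}_k\sigma^k$ with $\mathsf{v}_0=m^2/(16\pi^2)$ and $\mathsf{v}_1=m^4/(128\pi^2)$ up to normalisation conventions that I must check against \eqref{eq:Hadamard-parametrix}. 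Subtracting $\mathcal{H}$, the $1/\sigma$ and $\log\sigma$ terms cancel. The result is a smooth function
\begin{equation*}
\mathsf{w}(\sigma)=A_0+A_1\sigma+O(\sigma^2\log\sigma),
\end{equation*}
where $A_0\propto m^2\bigl(2\gamma-1+\log(m^2/(2\mu^2))\bigr)$ and $A_1\propto m^4\bigl(2\gamma-\tfrac52+\log(m^2/(2\mu^2))\bigr)$. The $\log(m^2/2\mu^2)$ comes from $\log(z^2/4)=\log(m^2\sigma/2)$ against $\log(\mu^2\sigma)$.

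From this expansion, $[\mathsf{w}]=A_0$. Since $\partial_a\partial_b\sigma=\eta_{ab}$ and $\partial_a\sigma\to0$ at coincidence, $[\partial_a\partial_b\mathsf{w}]=A_1\eta_{ab}$. Choosing $d_1=-A_0/m^2$ and $c_1=-A_1/m^4$ then kills both expectation values, reproducing the displayed constants. It follows that $\omega_0(\Wick{T^{(0)}_{ab}})=0$. The reason is that on flat space $T^{(0)}_{ab}$ is a linear combination of $\partial_a\partial_b\Wick{\Phi}$, $\Wick{\Phi_{ab}}$, $\eta_{ab}\eta^{cd}\Wick{\Phi_{cd}}$, $\eta_{ab}\square\Wick{\Phi}$ and $m^2\eta_{ab}\Wick{\Phi}$. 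Derivatives of the constant expectation value of $\Wick{\Phi}$ vanish by translation invariance, and the anomaly $A$ vanishes on Minkowski.

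Finally, $\alpha_1$ is the total coefficient of $m^4\eta_{ab}$ produced by the freedoms $d_1$ and $c_1$ when inserted into \eqref{eq:T_ab-omega}. Using the trace structure $\Phi_{ab}\to\phi\partial_a\partial_b\phi$, the term $-\phi\partial_a\partial_b\phi$ gives $-c_1m^4\eta_{ab}$. The terms $-\tfrac12\eta_{ab}\phi\square\phi$ give $-2c_1m^4\eta_{ab}$, and $-\tfrac12 m^2\eta_{ab}\phi^2$ gives $-\tfrac12 d_1m^4\eta_{ab}$. I would then verify that the combination yields $\alpha_1=\frac{1}{64\pi^2}\bigl(-\tfrac32+2\gamma+\log\frac{m^2}{2\mu^2}\bigr)$. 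For instance, $-c_1-\tfrac14 d_1$ gives $\tfrac{1}{64\pi^2}[(-\tfrac52+L)+(-1+L)\cdot1]$, which is wrong, so the exact weights must be taken from the appendix's definition of the freedom.

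The main obstacle is bookkeeping rather than a conceptual step. It consists of matching the normalisation of $\mathsf{v}$, the $\mu$-dependence and the $i\epsilon$ prescription, so that the constants $-1$ and $-\tfrac52$ come out exactly, and of fixing the precise linear combination relating $(d_1,c_1)$ to $\alpha_1$ from Appendix~\ref{app:reno-freedom}. I would cross-check the final value by requiring conservation and tracelessness consistency: in vacuum $\langle T_{ab}\rangle\propto\eta_{ab}$, and the trace identity $\langle T\rangle=-m^2\langle\phi^2\rangle+\dots$ must hold.
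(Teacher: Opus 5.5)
Your computation of $[\mathsf{w}]$ and $[\partial_a\partial_b\mathsf{w}]$ from the $K_1$ expansion, and the resulting choice of $d_1,c_1$, is exactly the paper's argument. It gives $A_0=\tfrac{m^2}{16\pi^2}(-1+L)$ and $A_1=\tfrac{m^4}{64\pi^2}(-\tfrac52+L)$, with $L:=2\gamma+\log\frac{m^2}{2\mu^2}$. Incidentally, the coefficient of $\sigma\log(\mu^2\sigma)$ in $\mathcal{H}$ is $\tfrac{m^4}{64\pi^2}$, not $\tfrac{m^4}{128\pi^2}$.

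The gap is the final claim, the value of $\alpha_1$. Reading $\alpha_1$ off as the $m^4\eta_{ab}$-coefficient generated by $d_1,c_1$ cannot give the stated value, whatever weights you take from the appendix. There are two problems.

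First, the $\phi\square\phi$ term has the opposite sign to the one you used; the sign displayed in \eqref{eq:T_ab-omega} is inconsistent with the paper's own \eqref{eq:T0}. From $T_{ab}=\nabla_a\phi\nabla_b\phi-\tfrac12 g_{ab}(\nabla^c\phi\nabla_c\phi+m^2\phi^2)$ one gets $+\tfrac12 g_{ab}\phi\square\phi$. This matches the $-\Wick{\bar\Phi_{ab}}$ in the paper's formula for $\Wick{T^{(0)}_{ab}}$. The freedoms therefore contribute $(-c_1+2c_1-\tfrac12 d_1)m^4\eta_{ab}=(c_1-\tfrac12 d_1)m^4\eta_{ab}=\tfrac{m^4}{64\pi^2}(\tfrac12+L)\eta_{ab}$.

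Second, this still differs from the claimed $\alpha_1 m^4\eta_{ab}$ by exactly $\tfrac{m^4}{32\pi^2}\eta_{ab}$. The missing piece is the term $\tfrac{1}{4\pi^2}[\mathsf{v}_1]\eta_{ab}$ carried by the prescription \eqref{eq:stress-tensor}. This is the $\mathsf{v}_1$-contribution that ensures conservation and produces the trace anomaly. You dropped it because the curvature expression \eqref{eq:Anomaly} vanishes on Minkowski, but its $m^4$ part does not. In the normalisation $\mathcal{H}=\tfrac{1}{8\pi^2}(\sigma^{-1}+\mathsf{v}\log(\mu^2\sigma))$ one has $\mathsf{v}_1=\tfrac{m^4}{8}$. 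Equivalently, the term is twice the coefficient of $\sigma\log(\mu^2\sigma)$ in $\mathcal{H}$, i.e.\ $\tfrac{m^4}{32\pi^2}\eta_{ab}$. It is also what hides in the ``$\dots$'' of your trace cross-check.

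The paper needs no dictionary between $(d_1,c_1)$ and $\alpha_1$. It evaluates \eqref{eq:stress-tensor} on $\omega_0$ directly and demands that
\begin{equation*}
-[\partial_a\partial_b\mathsf{w}]-\tfrac12\eta_{ab}\bigl(m^2[\mathsf{w}]-[\partial_c\partial^c\mathsf{w}]\bigr)+\alpha_1m^4\eta_{ab}+\tfrac{1}{4\pi^2}[\mathsf{v}_1]\eta_{ab}
\end{equation*}
vanish. With $[\partial_a\partial_b\mathsf{w}]=A_1\eta_{ab}$ and $[\partial_c\partial^c\mathsf{w}]=4A_1$, the point-split part is $(A_1-\tfrac12 m^2A_0)\eta_{ab}=-\tfrac{m^4}{64\pi^2}(\tfrac12+L)\eta_{ab}$. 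The whole expression is then $\bigl(\alpha_1-\tfrac{1}{64\pi^2}(-\tfrac32+L)\bigr)m^4\eta_{ab}$, which yields the stated $\alpha_1$. Equivalently, the correct relation is $\alpha_1=c_1-\tfrac12 d_1-\tfrac{1}{32\pi^2}$.
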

\begin{proof}
The two-point function of the  Poincar\'e vacuum in four-dimensional Minkowski spacetime is given in terms of the halved squared geodesic distance 
\[
    \sigma(x,y)=\frac{1}{2} (x-y)_\mu (x-y)^{\mu}
\]
and takes the form 
\[
    \omegazd = \lim_{\epsilon\to 0^+}\frac{1}{4\pi^2} \frac{m}{\sqrt{2\sigma_\varepsilon}}
    K_1\left(m \sqrt{2\sigma_\varepsilon}\right),
\]
where the limit is taken in the distributional sense and $\sigma_{\epsilon}= \sigma(x,y) + \mathrm{i} \epsilon (x_0-y_0)$ (see, e.g., the appendix of \cite{BDF}). It can be expanded as follows:
\[
\begin{aligned}
\omega_0 = & \frac{1}{8\pi^2 \sigma_\varepsilon} + \left(\frac{m^2}{16 \pi^2} + \frac{m^4}{64 \pi^2} \sigma + O(\sigma^2)  \right)\log (\sigma_\varepsilon \mu^2) +
\\
&  +\frac{m^2}{16 \pi^2} \left(-1+ 2\gamma + \log \left(\frac{m^2}{2\mu^2}\right)\right)
+  \frac{m^4}{64 \pi^2} \left({-\frac{5}{2}} + 2\gamma + \log \left(\frac{m^2}{2\mu^2}\right)
\right)\sigma + O(\sigma^2),
\end{aligned}
\]
where $\sigma(x,0)= \frac{(-t^2+|\mathbf{x}|^2)}{2}$, and $\mu$ is the mass parameter that appears in the Hadamard singularity. The first two terms are in $\mathcal{H} = (8\pi^2 \sigma_\epsilon)^{-1} + \mathsf{v} \log(\mu^2\sigma_\epsilon)$, whereas the last term is $\mathsf{w}$. Thus, we obtain
\begin{equation}
\label{eq:wwab0}
\begin{aligned}
    [\mathsf{w}] &= \frac{m^2}{16 \pi^2} \left(-1+ 2\gamma + \log \left(\frac{m^2}{2\mu^2}\right)\right), \\
    [\partial_a\partial_b \mathsf{w}] &= \frac{m^4}{64 \pi^2} \left({-\frac{5}{2}} + 2\gamma + \log \left(\frac{m^2}{2\mu^2}\right) \right) \eta_{ab},
\end{aligned}
\end{equation}
after using $[\partial_a\partial_b \sigma]=\eta_{ab}$. Furthermore,
\[
    [\mathsf{w}_0]= \frac{m^2}{16\pi^2}, \qquad [\mathsf{w}_1] = \frac{m^4}{64\pi^2}.
\]
Hence, we may evaluate the expectation value of the stress-energy tensor on the  Poincar\'e vacuum
\begin{equation}\label{eq:T0}
\begin{aligned}
\omega_0(\Wick{T^{(0)}_{ab}})=0
 & =
 - [\partial_a\partial_b \mathsf{w}]- \frac{1}{2} \eta_{ab} \at  
 m^2  [\mathsf{w}]  -  [\partial_c\partial^c \mathsf{w}]  \ct 
 +\alpha_1 m^4 \eta_{ab} + \frac{1}{4\pi^2} [\mathsf{v}_1] \eta_{ab}
 \\
 & =
\left( 
-\frac{1}{64 \pi^2} \left( {-\frac{3}{2}} + 2\gamma + \log \left(\frac{m^2}{2\mu^2}\right)
\right)
+
\alpha_1
 \right) m^4\eta_{ab}.
\end{aligned}
\end{equation}
The thesis directly follows from \Eq \eqref{eq:wwab0} and from the analysis of
$\omega_0(\Wick{T^{(0)}_{ab}})=0$ 
given in \Eq \eqref{eq:T0}.
\end{proof}

\subsection{Linearisation of the expectation value of the stress-energy tensor}
Following the algebraic viewpoint, in order to linearise the expectation value of the stress-energy tensor, we need to connect the $*$-algebra of the quantum theory on $(\mathcal{M},g_{ab})$ with the corresponding one on $(\mathcal{M},\eta_{ab})$. The $*$-algebra $\mathcal{A}(\mathcal{M},g_{ab})$ we refer to is generated by the local Wick polynomials, endowed with the quantum product and with an involution, see, e.g., \cite{HW01,BDF,Rejzner,AdvancesAQFT,Hack}. Then, it is possible to construct an operator which maps the generators of $\mathcal{A}(\mathcal{M},g_{ab})$ to those of $\mathcal{A}(\mathcal{M},\eta_{ab})$ and later extend it to a $*$-isomorphism \cite{DHP}. This operator descends from perturbation theory, and relates (local) observables on $(\mathcal{M},g_{ab})$ with (possibly nonlocal) observables on $(\mathcal{M},\eta_{ab})$. 
The connection with perturbation theory is established as follows. Consider the quadratic Lagrangian density of the theory on $(\mathcal{M},g_{ab})$ given by
\[
\mathcal{L}[g_{ab}] := \frac{1}{2} \phi \mathcal{P}_g\phi  \sqrt{-g}, \qquad \mathrm{for}\qquad\mathcal{P}_g := \square_g-m^2 -\xi R\,,
\]
where $g=\mathrm{det}(g_{ab})$ denotes the metric determinant, and the corresponding Lagrangian $\mathcal{L}[\eta_{ab}]$ of the theory on $(\mathcal{M},\eta_{ab})$. Then, their difference 
\begin{equation*}
    \mathcal{L}_I=\mathcal{L}[g_{ab}]-\mathcal{L}[\eta_{ab}],
\end{equation*}
which is quadratic in the quantum fields and past-compact in our setting, is considered as an interacting Lagrangian and allows us to define the map
\[
    \mathcal{R}_{\mathcal{L}_I}:\mathcal{A}(\mathcal{M},g_{ab})\to \mathcal{A}(\mathcal{M},\eta_{ab})
\]
in terms of a convergent perturbative expansion. $\mathcal{R}_{\mathcal{L}_I}$ realises the $*$-isomorphism and can be constructed in two different ways, usually known in the literature as {\bf classical M\o ller map} or {\bf quantum M\o ller map}; in fact, these two methods must agree by the principle of perturbative agreement \cite{HW05,DHP}. We shall discuss in Appendix \ref{se:source-perturbation}, and more specifically in Appendix \ref{se:source-cutoff}, how to make $h$ past-compact with the suitable use of a time cutoff function.  The quantum M\o ller map, also known as {\bf Bogoliubov map}, is a linear map on the generators of $\mathcal{A}(\mathcal{M},g_{ab})$ constructed out of the $\mathsf{S}$-matrix of $\mathcal{L}_{I}$ (time ordered exponential of $\mathcal{L}_I$) which respects the $*$-operation and the product. Its action on a local field $A$ is
\begin{equation}
    \label{eq:Bogoliubov}
    \mathsf{R}_{V}(A) := -\mathrm{i}\left.\frac{d}{d\lambda}\mathsf{S}(V)^{-1}\mathsf{S}(\lambda A+V)\right|_{\lambda=0},
\end{equation}
where $\mathsf{S}(A):=\mathscr{T}(\exp(i\mathscr{T}^{-1}A))$ is the time ordered exponential of $iA$ and
\[
V := \int_{\mathcal{M}}  (\mathcal{L}_I)f d^4 x  = 
\int_{\mathcal{M}} \frac{1}{2}\left(\phi(\mathcal{P}_{g})\phi \sqrt{-g}- \phi(\mathcal{P}_{\eta})\phi \sqrt{-\eta}\right) f  d^4x
\]
where $f\in C_{\mathrm{sc}}^\infty(\mathcal{M})\cap C_{\mathrm{fc}}^\infty(\mathcal{M})$ is equal to $1$ on a sufficiently large domain. In the particular case of interest here, if we let $x \in \mathcal{M}$ the point where we compute $\Wick{T_{ab}}(x)$, we require $f$ equal to $1$ on a domain that contains $J^{-}(x)$. Notice that with this choice, $\mathcal{L}_I f$ is of compact support because $\mathcal{L}_I$ is past-compact. Finally, notice that by the causal factorisation property of the Bogoliubov map, the result does not depend on the particular form $f$. \\

States, in the algebraic framework, are described by positive normalised linear functionals over $\mathcal{A}(\mathcal{M},g_{ab})$. Hence, if we have a state $\omega$ on $\mathcal{A}(\mathcal{M},\eta_{ab})$, we can pull it back via $\mathcal{R}_{\mathcal{L}_I}$ to a state on $\mathcal{A}(\mathcal{M},g_{ab})$. Usually, the action of $\mathcal{R}_{\mathcal{L}_I}$ is best described when it is realised by the {\bf classical M\o ller map} \cite{DHP, Dappiaggi, HW05}.
In this case, its action on the two-point function of the state is 
\[
\mathsf{R}_{\mathcal{L}_I}^*\omega(f_1,f_2) = \omega(\mathsf{R}_{\mathcal{L}_I} f_1 , \mathsf{R}_{\mathcal{L}_I} f_2), \qquad f_i\in C^{\infty}_c(M).
\] 
The classical M\o ller map on a smooth compactly supported function $f$ is defined as
\[
\mathsf{R}_{\mathcal{L}_I} f :=  (1-\mathcal{L}_I^{(1)}\mathsf{G}_{\mathrm{adv}}^{\mathcal{L}_I}) f,
\]
and it is given in terms of $\mathsf{G}_{\mathrm{adv}}^{\mathcal{L}_I}$  which is the advanced fundamental solution of the linear equation of motion associated to $\mathcal{L}[g_{ab}]$.\\

The stress-energy tensor $\Wick{T_{ab}[g_{ab}]}$ is constructed out of the metric, $\Phi$ and $\Phi_{ab}$. Both $\Phi$ and $\Phi_{ab}$ are elements of $\mathcal{A}(\mathcal{M},g_{ab})$ and can be mapped to $\mathcal{A}(\mathcal{M},\eta)_{ab}$ by means of $\mathsf{R}_{V}$. The state $\omega$ we use is a linear functional on $\mathcal{A}(\mathcal{M},\eta_{ab})$, hence, if we evaluate the expectation values of $\Wick{T_{ab}[g_{ab}]}$ in $\mathcal{R}_{\mathcal{L}_I}^*\omega$, the pull-back under $\mathcal{R}_{\mathcal{L}_I}$, we have 
\[
\expval{\Wick{T_{ab}[g_{ab}]}}_{\mathcal{R}_{\mathcal{L}_I}^*\omega} =
\mathcal{R}_{\mathcal{L}_I}^*\omega((\Wick{T_{ab}[g_{ab}]})).
\]
We now discuss properties of the linearisation of this formula around 
$\expval{\Wick{T_{ab}[\eta_{ab}]}}_{\omega_{0}}$ where $\omegaz$ is the  Poincar\'e vacuum.
We need to linearise three objects each of which will give a separate contribution to the linearised expectation value of $\Wick{T_{ab}}$ 
\begin{itemize}
\item[(i)] the state on $\mathcal{A}(\mathcal{M},\eta_\ab)$ admits the expansion $\omega=\omegaz +  \omegau + O(2)$. The two-point function $\omegaud$ is smooth and adds to $\mathsf{w}$, the smooth part of $\omegazd$. In the following discussion, the contribution at first order due to $\omegau$ is included in a redefinition of the source term (Appendix \ref{se:source-omega1}).

\item[(ii)] the Bogoliubov map $\mathsf{R}_{V}(F)$ for any $F \in \mathcal{A}(\mathcal{M},g_{ab})$ can be expanded as
\begin{equation}
    \label{eq:Bogoliubov-lin}
        \mathsf{R}_{V}(F) = F + \mathrm{i} \left( \mathscr{T}(V F)_{\mathrm{reg}} - V   F\right) + O(2),
\end{equation}
where $\mathscr{T}(V F)_{\mathrm{reg}}$ is the regularised time ordered product of $V$ and $F$, while $VF$ is the quantum product. In addition, $V$ admits a power expansion in the metric perturbation. Recalling that 
\[
T_{ab}= - \frac{2}{\sqrt{-g}} \frac{\delta \mathcal{L}}{\delta g^{ab}}, 
\]
the first order expansion of $V$ is 
\begin{equation*}
V = -\frac{1}{2}\int  f h^{cd} T^{(0)}_{cd}(y) d \mu_\eta + O(2).
\end{equation*}

\item[(iii)] the way in which the classical $T_{ab}$, and thus $\Wick{T_{ab}}$, depend on $h$ to first order. In particular
\[
T_{ab}=T^{(0)}_{ab}+T^{(1)}_{ab} + O(2). 
\]
For any observable, depending on $g_\ab=\eta_\ab+h_\ab$, the operator
\[
\mathfrak{L}:F[g_{ab}]\mapsto F[\eta_{ab}]+ \int h^{ab}(y)\left.\frac{1}{\sqrt{|g|}}\frac{\delta F}{\delta g^{ab}(y)}\right|_{g=\eta} d\mu_\eta(y)
\] 
extracts up to the linear contribution in $h$. Hence, ${T}^{(0)}_{ab}(x) = \left.-\frac{2}{\sqrt{|g|}}\frac{\delta \mathcal{L}}{\delta g^{ab}(x)}\right|_{g^{ab}=\eta^{ab}}$ and
\begin{align*}
T^{(1)}_{ab}(x)
&=-\frac{2}{\sqrt{|\eta|}}
\int \left.\frac{\delta^2 \mathcal{L}} {\delta g^{cd}(y) \delta g^{ab}(x)}\right|_{g^{ab}=\eta^{ab}}h^{cd}(y) \frac{1}{\sqrt{|\eta|}} d\mu_\eta(y)
\\
&=
\left(\frac{\delta T^{(0)}_{ab}(x)}{\delta g^{cd}(x)}-\frac{1}{2} T^{(0)}_{ab}(x) g_{cd}(x) \right)h^{cd}(x).
\end{align*}
\end{itemize} 
Combining these observations, we have proven the following proposition.
\begin{proposition}\label{prop:T-linearised}
Consider a metric $g_{ab}=\eta_{ab}+h_{ab}$, past-compact perturbation of the Minkowski background, and a Hadamard state $\omega=\omega_0+\tilde \omega$ on $\mathcal{A}(\mathcal{M},\eta_\ab)$, perturbation of the  Poincar\'e vacuum $\omega_0$. 
Assume that $h_{ab} \in \Gamma_{\mathrm{pc}} (T^*\mathbb{R}^4\otimes_{s} T^*\mathbb{R}^4)$. Then, up to first order in $h_{ab}$ and $\tilde\omega$, the expectation value of the stress-energy tensor $\Wick{T_{ab}}$ is
\begin{align*}
\langle
\Wick{T_{ab}(x)}
\rangle_\omega
&= (\omega_0+\tilde\omega)(\mathsf{R}_V(\Wick{T^{(0)}_{ab}(x)}+\Wick{T^{(1)}_{ab}(x)})) + O(2) \\
&= \omega_0(\Wick{{T}^{(0)}_{ab}(x)}) + \omega_0(\Wick{T^{(1)}_{ab}(x)})
+ \tilde\omega(\Wick{{T}^{(0)}_{ab}(x)})
\\
&- \frac{\mathrm{i}}{2} \int \omega_0\left(\mathscr{T}(\Wick{{T}^{(0)}_{cd}(y)} \Wick{{T}^{(0)}_{ab}(x)})_{\mathrm{reg}} - \Wick{{T}^{(0)}_{cd}(y)}  \Wick{{T}^{(0)}_{ab}(x)}\right)h^{cd}(y) d\mu_\eta(y) + O(2)\,,
\end{align*}
where $\mathscr{T}(\Wick{T^{(0)}_{cd}(y)} \Wick{T^{(0)}_{ab}(x)})_{\mathrm{reg}}$ denotes the regularised time ordering product.
\end{proposition}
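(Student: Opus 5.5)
The plan is to expand the three ingredients (i)--(iii) listed before the statement to first order and collect terms. First, I would write the full expectation value as $\langle \Wick{T_{ab}[g]}\rangle_{\mathcal{R}^*\omega} = \omega\big(\mathsf{R}_V(\Wick{T_{ab}[g]})\big)$, with the $*$-isomorphism realised by the quantum M\o ller (Bogoliubov) map \eqref{eq:Bogoliubov}. Since $h_{ab}$ is past-compact and $f$ equals $1$ on a neighbourhood of $J^-(x)$, the interaction $V$ has compact support, so the time-ordered exponentials $\mathsf{S}(V)$ and $\mathsf{S}(\lambda A+V)$ are well defined as formal power series in $h$. Causal factorisation then makes the result independent of the choice of $f$.

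Next, I would insert three first-order expansions. The local field expands as $\Wick{T_{ab}[g]}=\Wick{T^{(0)}_{ab}}+\Wick{T^{(1)}_{ab}}+O(2)$, via the operator $\mathfrak{L}$ of item (iii). The Bogoliubov map expands by differentiating $\mathsf{S}(V)^{-1}\mathsf{S}(\lambda A+V)$ at $\lambda=0$ and keeping terms linear in $V$. Using $\mathsf{S}(V)=1+\mathrm{i}V+O(V^2)$, $\mathsf{S}(V)^{-1}=1-\mathrm{i}V+O(V^2)$, and $\frac{d}{d\lambda}\mathsf{S}(\lambda A+V)|_0=\mathrm{i}A-\mathscr{T}(VA)+O(V^2)$, this gives \eqref{eq:Bogoliubov-lin}. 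Finally, $V=-\tfrac12\int f h^{cd}T^{(0)}_{cd}\,d\mu_\eta+O(2)$, which follows from $T_{ab}=-\frac{2}{\sqrt{-g}}\delta\mathcal{L}/\delta g^{ab}$ together with $\delta g^{cd}=-h^{cd}$ to first order; the sign conventions in this step need to be tracked carefully.

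Substituting these expansions, I would discard the products $\mathrm{i}(\mathscr{T}(V\,T^{(1)})-V\,T^{(1)})$ and the cross terms $\tilde\omega(\mathrm{i}(\ldots))$ as second order, treating $h$ and $\tilde\omega$ as the same order. The remaining first-order terms are then $\omega_0(\Wick{T^{(1)}_{ab}})$, $\tilde\omega(\Wick{T^{(0)}_{ab}})$, and $\mathrm{i}\,\omega_0(\mathscr{T}(V\Wick{T^{(0)}_{ab}(x)})_{\mathrm{reg}}-V\Wick{T^{(0)}_{ab}(x)})$. Inserting the expansion of $V$ into the last term and pulling the integral and $h^{cd}(y)$ outside the state yields the stated $-\frac{\mathrm{i}}{2}\int(\cdots)h^{cd}(y)\,d\mu_\eta(y)$ term. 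The factor $f$ can be dropped because the integrand is supported in $J^-(x)$: the retarded combination $\mathscr{T}(AB)-AB$ vanishes when $y\notin J^-(x)$.

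The main obstacle is justifying this support and regularisation argument rigorously. The difference of the time-ordered and the ordinary product, both regularised, must be a distribution in $y$ supported in $J^-(x)$ that can be tested against the past-compact smooth $h^{cd}$. I would argue via causal factorisation, $\mathscr{T}(AB)=AB$ when $y$ is not in the past of $x$, combined with the Hadamard property of $\omega_0$, which ensures that $\omega_0(\mathscr{T}(\cdot\,\cdot)_{\mathrm{reg}})$ is an extension of the Feynman-type square to the diagonal, unique up to local terms absorbed in the renormalisation constants. Past-compactness of $h$ then makes $\operatorname{supp}h\cap J^-(x)$ compact, so the pairing is well defined.
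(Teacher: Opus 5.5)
Your proposal is correct and is essentially the paper's argument. The paper proves the statement simply by combining the three first-order expansions (i)--(iii) (of the state, of the Bogoliubov map as in \eqref{eq:Bogoliubov-lin} with $V=-\tfrac12\int f h^{cd}T^{(0)}_{cd}\,d\mu_\eta+O(2)$, and of $\Wick{T_{ab}}$ via $\mathfrak{L}$), collecting linear terms, and getting independence from $f$ by causal factorisation exactly as you argue.

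On the sign you flagged: with $T_{ab}=-\frac{2}{\sqrt{-g}}\,\delta\mathcal{L}/\delta g^{ab}$, the factor $-\tfrac12$ in $V$ holds only if $h^{cd}$ denotes the first-order variation $\delta g^{cd}$ of the inverse metric, which is the convention implicit in $\mathfrak{L}$. Your choice $\delta g^{cd}=-h^{cd}$ with $h^{cd}=\eta^{ca}\eta^{db}h_{ab}$ would instead give $+\tfrac12$ and flip the sign of the nonlocal term.
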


We recall, and later carefully discuss, that in the analysis of $\mathscr{T}(\Wick{T^{(0)}_{cd}(y)} \Wick{T^{(0)}_{ab}(x)})_{\mathrm{reg}}$ a renormalisation procedure needs to be employed \cite{HW02,BF}. This extra renormalisation freedom is of the expected form according to \cite{HW05}, and the anomaly $A$ given in \eqref{eq:Anomaly} gets also corrected by linear contributions in $h$.

\subsubsection{Preliminary discussion}
From now on, starting from Proposition \ref{prop:T-linearised}, we shall specialise the expansion up to linear order of the semiclassical Einstein equation, choosing Minkowski spacetime as background geometry. The choice of background geometry, as solution of the semiclassical Einstein equations (see Proposition \ref{prop:background}), gives
\[
    \omega_0(\Wick{{T}^{(0)}_{ab}(x)})=0.
\]
With similar reasoning:
\begin{equation*}
    \begin{aligned}
       \omega_0(\Wick{T^{(1)}_{ab}})
        =&\,\omegaz \left(\left(\xi-\frac{1}{4} \right) \square \Wick{\Phi}-\frac{1}{2}m^2\Wick{\Phi}+\frac12\Wick{{\Phi_c}^c} \right) h_{ab} \\
        +&\,\omegaz\left(\left(\xi-\frac{1}{4} \right) \partial_a\partial_b \Wick{\Phi}+\frac12{\Wick{\Phi_{ab}}}  \right) {h^{c}}_c -\frac{1}{2}\omegaz\left(  \Wick{T^{(0)}_{ab}} \right) {h^{c}}_c
\end{aligned}
\end{equation*}
With the choices of the renormalisation freedom listed in Proposition \ref{prop:background}, it holds that 
\[
   \omegaz(\Wick{T^{(1)}_{ab}(x)}) =0.
\]

Therefore, only two contributions in $\langle \Wick{T_{ab}}\rangle_\omega$ are left. On the one hand, the first one, can be included as a contribution to the source term in the linearised semiclassical equation
\[
    S'_{ab}(x)=\tilde\omega(\Wick{T_{ab}^{(0)}(x)})+ S_{ab}(x).
\]
According to the analysis presented in the introduction, see also Appendix \ref{se:source-perturbation} and in particular Appendix \ref{se:source-omega1} for further details, we recall that the source $S'_{ab}$ can be chosen to be past-compact keeping the information of the state perturbation in the future, with a suitable use of a time cutoff. The second term is intrinsically nonlocal and will be analysed in the next section.

\subsubsection{The nonlocal contribution to $\langle \Wick{T_{ab}}\rangle_\omega$} \label{sec:nonlocalcontr}
We recall, by our previous discussion, that the linearised expectation value to the stress-energy tensor is 
\begin{equation}
\label{eq:linearisedT1}
    \langle \Wick{T_{ab}(x)} \rangle_{\omega} = N_{ab}(x) +\omega_0({T}^{(1)}_{ab}(x))+{\tilde\omega}({T}^{(0)}_{ab}(x)),
\end{equation}
where the nonlocal contribution to $\langle\Wick{T_{ab}}\rangle_\omega$ at linear order is denoted, for simplicity, by
\begin{equation}
\label{eq:def-N}
    N_{ab}(x):=- \frac{\mathrm{i}}{2} \int\omega_0\left(\mathscr{T}(\Wick{{T}^{(0)}_{cd}(y)} \Wick{{T}^{(0)}_{ab}(x)})_{\mathrm{reg}} - \Wick{{T}^{(0)}_{cd}(y)} \Wick{{T}^{(0)}_{ab}(x)}\right)h^{cd}(y) d\mu_\eta(y).
\end{equation}
The expectation value of the stress-energy tensor is covariantly conserved, in other words it is divergence-free. Therefore, according to Proposition \ref{prop:decomposition}, the expectation value of $\Wick{T_{ab}}$ at linear order can be decomposed in a scalar and a transverse traceless part as a past-compact, divergence-free symmetric 2-tensor. Moreover, both the contributions  $\omega_0(\Wick{T_{ab}^{(1)}})$  and ${\tilde\omega}(\Wick{T_{ab}^{(0)}})$ are covariantly conserved, because
$\omega_0(\Wick{T_{ab}^{(1)}})$ vanishes thanks to the choices made in Proposition \ref{prop:background}, while ${\tilde\omega}(\Wick{T_{ab}^{(0)}})$ is the difference of the expectation values of the stress-energy tensor on Minkowski in two states. Even when ${\tilde\omega}(\Wick{T_{ab}^{(0)}})$ is taken into account in the past-compact source, the conservation of the corresponding term is assured by the analysis presented in the appendix \ref{se:source-omega1}. Hence, we conclude that the contribution $N_{ab}$ must be covariantly conserved as well.\\

In the analysis of the nonlocal contribution arising at first order, there is also a non trivial renormalisation procedure which needs to be taken into account. However, according to \cite{HW01,HW05}, the expected renormalisation freedom must be chosen also at the linearised level, in a way that the resulting tensor is covariantly conserved. Therefore, still according to \cite{HW01,HW05}, the nonlocal contribution must have the form
\begin{equation}\label{eq:ren-freedom-nonlocal}
\begin{aligned}
N_{ab}  = \Pi_{ab}(\bar{h})  
+{\tilde{\alpha}}^{\mathrm{S}}_1 m^4 \bar{h}_{ab}^{\mathrm{S}}  
 +{\tilde{\alpha}}^{\mathrm{TT}}_1 m^4 \bar{h}_{ab}^{\mathrm{TT}} 
 -\frac{1}{2}{\tilde{\alpha}}^{\mathrm{S}}_2 m^2 \square \bar{h}_{ab}^{\mathrm{S}}
 -\frac{1}{2}{\tilde{\alpha}}^{\mathrm{TT}}_2 m^2 \square \bar{h}_{ab}^{\mathrm{TT}}
+ {\tilde{\alpha}}_3^{\mathrm{S}} J^{(1)}_{ab} + {\tilde{\alpha}}^{\mathrm{TT}}_4I^{(1)}_{ab},  
\end{aligned}
\end{equation}
where the constants ${\tilde{\alpha}}^{\mathrm{S}/\mathrm{TT}}_{i}$ are redefined renormalisation constants. 
Actually, these parameters differ from those that descend from the action in
\begin{equation*}
    \mathcal{L}_F = -\frac{1}{2} \left( \alpha_1 m^4 + \alpha_2 m^2 R + \alpha_3 R^2 + \alpha_4 C_{abcd}C^{abcd} \right) \sqrt{-g},
\end{equation*}
with $C_{abcd}$ representing the Weyl tensor,
and appear in \Eq \eqref{eq:T_ab-omega}. Although, this redefinition is crucial as the renormalisation procedure used in the derivation of $\Pi_{ab}$ may differ from the point splitting renormalisation discussed in \Eq \eqref{eq:T_ab-omega}. It should be pointed out that some of the constants appear to be doubled in the above equations, because the $\mathrm{S}$ and $\mathrm{TT}$ modes decouple at the linear order. As such, when the Bogoliubov map \eqref{eq:Bogoliubov} is employed at the linear order, the renormalisation constants arise from separate renormalisation procedures, see \Eq \eqref{eq:Bogoliubov-lin}. Finally, the eventual contribution at linear order arising from the trace anomaly in \Eq \eqref{eq:T_ab-omega}, which is necessary to make $\omega(\Wick{T_{ab}})$ covariantly conserved, is implicitly contained in the expression $N_{ab}$ that we are going to derive.\\

With all these observations at disposal, we can now present the following proposition for the explicit form of the nonlocal contribution $N_{ab}$ in \Eq \eqref{eq:def-N}.

\begin{proposition}\label{prop:non-local}
Let $\bar{h}_{ab} \in \Gamma^{\delta}_{\mathrm{pc}}(T^{\ast}\mathbb{R}^{n}\otimes_{s}T^{\ast}\mathbb{R}^{n})$ be the trace reversed metric perturbation in the de Donder gauge and consider its decomposition in Proposition \ref{prop:decomp-divfree}. Then, the nonlocal contribution $N_{ab}$ to the linearised expectation value of the quantum stress-energy tensor in \Eq \eqref{eq:linearisedT1} reads
\begin{equation}\label{eq:stress-energy-tensor-expval-lin}
\begin{aligned}
N_{ab}  = \Pi_{ab}(\bar{h}) 
&+{\tilde{\alpha}}^{\mathrm{S}}_1 m^4 \bar{h}_{ab}^{\mathrm{S}} 
 -\frac{1}{2}{\tilde{\alpha}}^{\mathrm{S}}_2 m^2 \square \bar{h}_{ab}^{\mathrm{S}} + {\tilde{\alpha}}^{\mathrm{S}}_3 \square\square \bar{h}_{ab}^{\mathrm{S}}\\
 & +{\tilde{\alpha}}^{\mathrm{TT}}_1 m^4 \bar{h}_{ab}^{\mathrm{TT}} -\frac{1}{2}{\tilde{\alpha}}^{\mathrm{TT}}_2 m^2 \square \bar{h}_{ab}^{\mathrm{TT}}
+{\tilde{\alpha}}^{\mathrm{TT}}_4 \square\square \bar{h}_{ab}^{\mathrm{TT}},
\end{aligned}	
\end{equation}
where $\Pi_{ab}$ is an operator that acts on the metric perturbation and decomposes diagonally with respect to the $\mathrm{S}$ and $\mathrm{TT}$ sectors
\begin{equation}\label{eq:polarisation}
\begin{aligned}
		2 \Pi_{ab}(\bar{h}) = \Pi_{abcd} [h^{cd}](x) 
		= - \frac{1}{2} \cS \cK_0 [\bar{h}_{ab}^{\mathrm{S}}] (x) + \cT \cK_0 [\bar{h}_{ab}^{\mathrm{TT}}] (x).
\end{aligned}
\end{equation}
In the above,
\begin{equation}\label{eq:S-T-operator}
		\cS = \frac{2}{3} \at m^2 + \frac{1}{2} (1 - 6\xi) \square \ct^2,  \qquad  \cT = \frac{1}{60} \at \square - 4m^2 \ct ^2
\end{equation}
are quadratic differential operators taken in the distributional sense and
\begin{align}\label{eq:K}
		\cK_{0} (x) = 	-\mathrm{i} \at \mathsf{G}_F^2 (-x) - \mathsf{G}_+^2 (-x) \ct_{\mathrm{reg}} 	= \square  \int_{\mathbb{R}^+} \frac{\varrho(M)}{M } \mathsf{G}_{\mathrm{ret}}(x,M) dM
\end{align}
is a regularised integral kernel which maps past-compact smooth functions to past-compact smooth functions with $\mathsf{G}_{\mathrm{ret}}(x,M)$ the retarded propagator of the Klein-Gordon operator with mass $\sqrt{M}$ on the Minkowski background. 
Finally, $\varrho$ is the positive-defined spectral density
\begin{equation}\label{eq:spectral-rho}
\begin{aligned}
		\varrho(M) &= \frac{1}{(2\pi)^3}\int \frac{d^3 \vec{q}}{4(|\vec{q}|^2 + m^2)} \delta(\sqrt{M}- 2\sqrt{|\vec{q}|^2 + m^2})\\
		&= \frac{1}{16\pi^2} \sqrt{1-\frac{4m^2}{M}} \Theta(M-4m^2),
\end{aligned}
\end{equation}	
where $\delta$ is the Dirac delta function and $\Theta$ the Heaviside step function.
\end{proposition}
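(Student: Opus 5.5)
We start from the definition of $N_{ab}$ in \eqref{eq:def-N} and compute it as a vacuum correlation of two stress-energy tensors on Minkowski spacetime. Since $\omega_0$ is quasifree, Wick's theorem reduces both $\omega_0(\mathscr{T}(\Wick{T^{(0)}_{cd}(y)}\Wick{T^{(0)}_{ab}(x)}))$ and $\omega_0(\Wick{T^{(0)}_{cd}(y)}\Wick{T^{(0)}_{ab}(x)})$ to sums of products of two propagators. The one-point contributions vanish by the normalisation fixed in Proposition~\ref{prop:background}, so only the connected parts survive. Each connected part has the form $\mathcal{D}_{abcd}(\partial_x,\partial_y)\,\mathsf{G}(x-y)^2$, with $\mathsf{G}$ equal to $\mathsf{G}_F$ or $\mathsf{G}_+$ and $\mathcal{D}_{abcd}$ the bidifferential operator read off from \eqref{eq:T_ab-omega}. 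By translation invariance, the difference of the time-ordered and Wightman products is $\mathcal{D}_{abcd}$ applied to $-\mathrm{i}(\mathsf{G}_F^2-\mathsf{G}_+^2)(y-x)$. This difference vanishes when $y$ is not in the causal past of $x$, which is what makes the kernel retarded and lets it act on past-compact $h^{cd}$. The pointwise square $\mathsf{G}_F^2$ is defined only outside the diagonal. Its extension is where the regularisation ``reg'' enters, and the extension freedom consists of derivatives of $\delta$ of scaling degree at most $4+\text{(number of derivatives in }\mathcal{D})$. Together with conservation, this is exactly the freedom parametrised by the $\tilde\alpha_i^{\mathrm{S}/\mathrm{TT}}$ in \eqref{eq:ren-freedom-nonlocal}.

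The next step is the spectral (Källén--Lehmann) representation of the square of the two-point function. In momentum space, $\mathsf{G}_+^2$ is the convolution of two positive-frequency mass-shell measures. Its value at total momentum $p$ depends only on $M=-p^2$ and equals $\int\varrho(M)\,\delta_+(p^2+M)\,dM$, with $\varrho$ given by the phase-space integral in \eqref{eq:spectral-rho}. Evaluating that integral in the rest frame of $p$ yields $\frac{1}{16\pi^2}\sqrt{1-4m^2/M}\,\Theta(M-4m^2)$. The same identity holds for $\mathsf{G}_F^2$ with Feynman propagators of mass $\sqrt{M}$. Since $\mathsf{G}_F-\mathsf{G}_+=-\mathrm{i}\mathsf{G}_{\mathrm{adv}}$ in the appropriate convention, the difference becomes $\int\varrho(M)\,\mathsf{G}_{\mathrm{ret}}(\cdot,M)\,dM$ up to the reflection $x\mapsto -x$. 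This integral diverges logarithmically at large $M$. We subtract once by writing $1=\frac{1}{M}(M)$ and using $M\mathsf{G}_{\mathrm{ret}}(M)=\square\,\mathsf{G}_{\mathrm{ret}}(M)-\delta$. The $\delta$ term is local and is absorbed into the renormalisation constants, which produces the representation \eqref{eq:K} with $\int\varrho(M)M^{-1}\mathsf{G}_{\mathrm{ret}}(M)\,dM$ convergent. Past-compactness and smoothness of $\cK_0[f]$ for past-compact $f$ then follow from the support properties of each $\mathsf{G}_{\mathrm{ret}}(M)$ and the convergence of the $M$-integral.

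It remains to identify the tensor structure. Because $\varrho$ is supported on $M\ge 4m^2$ and $\mathcal{D}_{abcd}$ is quadratic in derivatives on each field, the result is a sum of terms of the form $\partial\partial\partial\partial$ acting on $\cK_0$ contracted with $h^{cd}$. By Lorentz invariance and conservation in both index pairs, this operator must be built from $\tau_{ab}$ (or, in momentum space, $\eta_{ab}-p_ap_b/p^2$). It therefore splits into the spin-0 projector times a scalar symbol $\mathcal{S}(p^2)$ plus the spin-2 projector times $\mathcal{T}(p^2)$, i.e.\ into $P^{(\mathrm{S})}$ and $P^{(\mathrm{TT})}$ of Proposition~\ref{Prop:Projectors}. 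The two symbols are obtained by contracting $\mathcal{D}_{abcd}$, evaluated on the two on-shell momenta $q_1,q_2$ with $q_1+q_2=p$, with the respective projectors and averaging over the two-body phase space. For the trace part, the trace of the $\xi$-improved tensor on shell is $\propto(m^2+\frac12(1-6\xi)p^2)$, which squares to $\cS$ after the sign convention $p^2\to-\square$. For the TT part, the angular average of $(q_1-q_2)^a(q_1-q_2)^b(q_1-q_2)^c(q_1-q_2)^d$ projected on spin 2 gives $\frac{1}{60}(p^2+4m^2)^2$, i.e.\ $\cT$. Finally, we use \eqref{eq:projectors-onhab} to rewrite $h^{cd}$ in terms of $\bar h^{\mathrm{S}}$ and $\bar h^{\mathrm{TT}}$, which produces the factor $-\frac12$ in \eqref{eq:polarisation}.

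I expect the main obstacle to be this last computation: keeping track of normalisations, signs, and the $-\frac12$ from trace reversal, and checking that the local leftovers are precisely of the conserved forms $m^4\bar h$, $m^2\square\bar h$, $J^{(1)}$ and $I^{(1)}$ of \eqref{eq:local-curvature-tensor-A}, with no cross terms between the sectors.
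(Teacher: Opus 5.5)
Your outline follows the paper's proof in all its structural steps:
\begin{itemize}
\item Wick reduction of $\omega_0(\mathscr{T}(\cdot\,\cdot)_{\mathrm{reg}}-\cdot\,\cdot)$ to bilinears in the propagators;
\item the K\"all\'en--Lehmann representation of $\mathsf{G}_+^2$ with the density $\varrho$;
\item the once-subtracted extension $\mathsf{G}_F^2=\square\int\varrho(M)M^{-1}\mathsf{G}_F(\cdot,M)\,dM$ (your identity $M\mathsf{G}_{\mathrm{ret}}(M)=\square\mathsf{G}_{\mathrm{ret}}(M)-\delta$ is exactly how the paper's choice of extension arises);
\item the step $\mathsf{G}_F-\mathsf{G}_+\propto\mathsf{G}_{\mathrm{adv}}$, which produces the retarded kernel $\mathcal{K}_0$;
\item \eqref{eq:projectors-onhab} for the factor $-\frac12$.
\end{itemize}

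Where you genuinely differ is the tensor algebra. The paper computes $\mathsf{G}_+\partial_a\partial_b\mathsf{G}_+$, $\partial_a\partial_b\mathsf{G}_+\,\partial_c\partial_d\mathsf{G}_+$ and $\mathsf{G}_+\partial_a\partial_b\partial_c\partial_d\mathsf{G}_+$ component by component in the rest frame of $p$. These are the tensors $Q$, $D_{ab}$, $D_{abcd}$, $D'_{abcd}$. It then contracts them with $\mathscr{D}_{ab}$ and $\mathscr{P}_{ab}{}^{cd}$. Only at the end does it find that everything collapses onto $\frac13\tau_{ab}\tau_{cd}$ and $\frac12(\tau_{ac}\tau_{bd}+\tau_{ad}\tau_{bc})-\frac13\tau_{ab}\tau_{cd}$.

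You instead use Lorentz covariance and transversality of the connected Wightman correlator (true because $T^{(0)}_{ab}$ is conserved on shell). This tells you in advance that only these two structures occur, so you need just two scalar contractions:
\begin{itemize}
\item trace--trace: the on-shell trace $-(m^2+\frac12(1-6\xi)\square)\phi^2$, the Wick factor $2$ and $\eta^{ab}\eta^{cd}P^{(\mathrm{S})}_{abcd}=3$ give the $\frac23$ in $\mathcal{S}$;
\item TT--TT: only the $\partial_a\phi\,\partial_b\phi$ part of the vertex survives.
\end{itemize}
Your route is shorter and makes the absence of S--TT cross terms automatic. The paper's route yields the full intermediate tensors and exhibits transversality (e.g.\ $\mathscr{D}_{ab}-\overline{D}_{ab}\propto\tau_{ab}$) as an output rather than an input. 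In both, the contact terms from derivatives hitting the time ordering are pushed into the $\tilde\alpha^{\mathrm{S}/\mathrm{TT}}_i$.

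There are three points to fix when you write it out.
\begin{itemize}
\item The claim that each connected part is $\mathcal{D}_{abcd}(\partial_x,\partial_y)\mathsf{G}(x-y)^2$, with $\mathcal{D}_{abcd}$ read off from \eqref{eq:T_ab-omega}, is not right. A term like $\partial_a\partial_c\mathsf{G}\,\partial_b\partial_d\mathsf{G}$ becomes an operator acting on $\mathsf{G}_+^2$ only after the phase-space average. That operator contains $\tau_{ab}$ and is therefore nonlocal; this is what $D_{abcd}$ and $D'_{abcd}$ encode. Your later momentum-space formulation, with the vertex evaluated on $q_1,q_2$, is the correct one.
\item With your convention $p^2\to-\square$, the on-shell trace is proportional to $m^2-\frac12(1-6\xi)p^2$. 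As written, $m^2+\frac12(1-6\xi)p^2$ would square to $(m^2-\frac12(1-6\xi)\square)^2$, which agrees with the bracket in $\mathcal{S}$ only for $\xi=\frac16$.
\item The bare spin-2 angular average of $(q_1-q_2)^{\otimes 4}$ is $\frac{2}{15}(M-4m^2)^2$. The $\frac1{60}$ in $\mathcal{T}$ appears only after two further factors are included: the factor $\frac14$ from the TT part $\pm\frac12(q_1-q_2)_a(q_1-q_2)_b$ of the vertex, and the identical-particle factor $\frac12$.
\end{itemize}
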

\begin{proof}
Up to renormalisation freedom and up to the regularisation of the time ordered product \cite{HW01,HW02}, the integral kernel of $\Pi_{ab}$ reads
\begin{equation}
\label{eq:Piabcd}
\Pi_{abcd}(x-y) = -\mathrm{i}\omega_0\left(\mathscr{T} \at \Wick{{T}^{(0)}_{cd}(y)} \Wick{{T}^{(0)}_{ab}(x)}\ct_{\mathrm{reg}} - \Wick{{T}^{(0)}_{cd}(y)} \Wick{{T}^{(0)}_{ab}(x)}\right). 
\end{equation}
The first step of the analysis consists of studying $L_{abcd}(x-y)=\omega^{(0)}\left(\Wick{{T}^{(0)}_{cd}(y)} \Wick{{T}^{(0)}_{ab}(x)}\right)$ and eventually extract its retarded part. We recall that 
\begin{align*}
    \Wick{T^{(0)}_{ab}} 
    &= \frac{1}{2} \nabla_a\nabla_b \Wick{\Phi} -\Wick{\bar{\Phi}_{ab}} - \frac{1}{2}\eta_{ab} \left(\frac{1}{2}\square\Wick{\Phi}+m^2\Wick{\Phi} \right) - \xi\left(\nabla_a\nabla_b-\eta_{ab}\square \right)\Wick{\Phi} \\
    &= \mathscr{D}_{ab} \Wick{\Phi} + {\mathscr{P}_{ab}}^{cd} \Wick{\Phi_{cd}},
\end{align*}
where $\bar{\Phi}_{ab}$ is the trace reversal of $\Phi_{ab}$ and
\begin{align*}
	\mathscr{D}_{ab} &= \at \frac{1}{2} - \xi \ct \pa_a \pa_b - \at \frac{1}{4} -\xi \ct \eta_{ab} \square_\eta - \frac{1}{2}m^2 \eta_{ab}, 
    \\
	{\mathscr{P}_{ab}}^{cd} &= -\delta^c_a \delta^d_b + \frac{1}{2} \eta_{ab} \eta^{cd}. 
\end{align*}
Thus, in order to compute $L_{abcd}(x-x')$ we have to first evaluate 
\begin{align*}
\omegaz(\Wick{\Phi(x')}  \Wick{\Phi(x)}) &=  2 \omegazd(x',x)^2, \\
\omegaz(\Wick{\Phi(x')}  \Wick{\Phi_{ab} (x)}) &=  2\omegazd(x',x) \pa_{a} \pa_{b} \omegazd(x',x), \\
\omegaz(\Wick{\Phi_{cd} (x')}  \Wick{\Phi_{ab} (x)}) &= \omegazd(x',x) \pa_{c'} \pa_{d'} \pa_{a} \pa_{b} \omegazd(x',x) + 
\pa_{c} \pa_{d} \omegazd(x',x) \pa_{a} \pa_{b} \omegazd(x',x),
\end{align*}
and then apply either $\mathscr{P}_{ab}^{\,\,\,\,\,\, cd}$ or $\mathscr{D}_{ab}$ to the obtained results. Notice that in the above the products of distributions are well defined due to the microlocal spectrum condition satisfied by the two-point function of Hadamard states.\\  

To evaluate these contributions, we use the K\"{a}llén-Lehmann spectral representations of the squared flat propagators and its derivatives. 
To obtain these representations, we denote  the Poincar\'e vacuum two-point distribution by $ \mathsf{G}_+:=\omegazd$ and the integral kernel of the Feynman propagator by $ \mathsf{G}_F$. Working in Fourier domain, $\mathsf{G}_+$ admits the following representation 
\[
\hat{\mathsf{G}}_+(p) = 2\pi \delta(p^2+m^2) \theta(p_0),
\]
which is supported on the positive mass hyperboloid. Thus, we have that the convolution with itself can be represented by
\begin{align*}
\widehat{\mathsf{G}_+^2}(p) =  
   \frac{1}{(2\pi)^4} 
   \int    \delta^{(4)}{(p -q_1-q_2)} 
   \hat{\mathsf{G}}_+(q_1)\hat{\mathsf{G}}_+(q_2)
   d^4q_1 d^4q_2.
\end{align*}
Since $p$ is supported in the future light cone, we can always find a Lorentz transformation $\Lambda$ which maps $p$ to $(s,0,0,0)$ for $s=\sqrt{p_0^2-|\vec{p}|^2}$. 
Exploiting the Lorentz invariance of the involved measures, we have that 
\begin{align*}
\widehat{\mathsf{G}_+^2}(p) 
   &=  \frac{1}{4\pi^2} \iint   \frac{\delta(s-\omega(\vec{q}_1)-\omega(\vec{q}_2))}{4\omega(\vec{q}_1)\omega(\vec{q}_2)} 
   \delta^{(3)}{(\vec{q}_1+\vec{q}_2)} 
   d^3\vec{q}_1 d^3\vec{q}_2
   \\
   &=  \frac{1}{4\pi^2} \int   \frac{\delta(s-2\omega(\vec{q}))}{4\omega(\vec{q})^2 } 
   d^3\vec{q} 
   \\
   &=  \frac{1}{4\pi^2} \int_{4m^2}^\infty  \int   \frac{\delta(\sqrt{M}-2\omega(\vec{q}))}{4\omega(\vec{q})^2 } 
    \delta(M+s^2) \theta(s)  dM d^3\vec{q}
   \\
   &=  \frac{1}{(2\pi)^3} \int_{4m^2}^\infty    \frac{\delta(\sqrt{M}-2\omega(\vec{q}))}{4\omega(\vec{q})^2 }  \hat{\mathsf{G}}_+(p,M) dM d^3\vec{q},
\end{align*}
where $\omega(\vec{q}) = \sqrt{|\vec{q}|^2+m^2}$ and $M=4\omega^2$.
Denoting by $\mathsf{G}_+(p,s^2)$ the two-point function of the vacuum state of the Klein-Gordon field of mass $s$ and in terms of the spectral density $\varrho(M)$ introduced in \Eq \eqref{eq:spectral-rho}
	\begin{equation*}
		\mathsf{G}_+^2(x) = \int_{4m^2}^\infty \varrho(M) \mathsf{G}_+(x,M) dM,
	\end{equation*}
	which is the kernel of a globally defined distributions over $\cM$. \\
	
To evaluate the contributions containing derivatives of $\mathsf{G}_+$, we consider the pure Lorentz transformation $\Lambda \in \mathbf{SO}(1,3)^+$  
\begin{align*}
		\Lambda = \begin{bmatrix}
		\frac{p^0}{s} & -\frac{\vec{p}^t}{s} \\ -\frac{\vec{p}}{s}  & \mathbb{I} + \frac{\vec{p} \vec{p}^t}{\at 1+\frac{p^0}{s} \ct s^2}\end{bmatrix}\,,
\end{align*}
which maps a generic momentum $p_a$ in the forward light cone to its rest frame representation $(s,\vec{0})$, with $s = \sqrt{p_0^2 - |\vec{p}|^2}$
\begin{align*}
		-\mathsf{G}_+ \pa_a \pa_b \mathsf{G}_+ &= \cF^{-1} \ag \frac{1}{(2\pi)^4}\hat{\mathsf{G}}_+ \ast p_a p_b \hat{\mathsf{G}}_+ \cg \\
		&= \cF^{-1}  \frac{1}{4\pi^2}\ag \int \frac{d^3 \vec{q}}{(2\omega)^2} \delta(2\omega - s) \at \Lambda^{-1} q \ct_a \at \Lambda^{-1} q \ct_b  \cg\\
		&= \cF^{-1} \ag {(\Lambda^{-1})_a}^c  Q_{cd}  {(\Lambda^{-1})_{b}}^d\cg,
\end{align*}
where $\omega=\sqrt{|\vec{q}^2|+m^2}$. Notice that, in view of the rotation invariance of the measure, $Q_{ab}$ as defined above is diagonal and $Q_{jj}$ for $j\in\{1,2,3\}$ does not depend on $j$. Furthermore, the two-point functions are weak solutions of the Klein-Gordon equation of mass $m$, hence the trace satisfies ${Q_{c}}^c=Q_{cd}\eta^{cd} = -m^2 \widehat{\mathsf{G}_+^2}(p)$ and 
therefore, 
\[
Q = \text{diag}(Q_{00},\frac{1}{3} (Q_{00}-{Q_{c}}^c)).
\]
To estimate $Q_{00}$, we observe that the corresponding spectral density is 
\[
\frac{1}{(2\pi)^3} \int \frac{d^3 \vec{q}}{(2\omega)^2}  \delta(\sqrt{M}- 2\omega(\vec{q}))\omega(\vec{q})^2 = \frac{M}{4} \varrho(M)
\]
and hence,
\[
 \cF^{-1} \{Q_{00}\} =  \int_{4m^2}^\infty \frac{M}{4} \varrho(M)  \mathsf{G}_+(x,M)  dM = \frac{\square}{4} \mathsf{G}_+^2(x),
 \qquad
 \cF^{-1} \{Q_{jj}\} =  \frac{1}{3} \frac{\square-4m^2}{4} \mathsf{G}_+^2(x).
 \]
Finally, taking into account the action of $\Lambda^{-1}$ before evaluating the inverse Fourier transform, we obtain
	\begin{equation*}
		-\mathsf{G}_+ \pa_a \pa_b \mathsf{G}_+ (x) = D_{ab} \square \int_{4m^2}^\infty \frac{\varrho(M)}{M} \mathsf{G}_+(x,M) \,dM,
	\end{equation*}
	where
	\begin{equation}
	\label{eq:operator-DeltaPPDelta} 
		D_{ab} = \frac{1}{4} \pa_a \pa_b - \frac{1}{12} (\square - 4m^2) \tau_{ab}
	\end{equation}
	and
	\[
	\tau_{ab}\square = - \pa_a \pa_b + \eta_{ab} \square.
	\]	
	The evaluation of the contributions having four derivatives of $\mathsf{G}^2_+$ can be obtained in a similar way, after taking into account the minus sign in the spatial component of $\bar{q}_a = (\omega(\vec{q}),-\vec{q})$ in some of them. Thus, to study
	\begin{align*}
		\pa_a \pa_b \mathsf{G}_+ \pa_c \pa_d \mathsf{G}_+ 
		&= \cF^{-1} \ag \frac{1}{(2\pi)^2}\int \frac{d^3 \vec{q}}{(2\omega)^2} \delta(2\omega - s) \at \Lambda^{-1} q \ct_a \at \Lambda^{-1} q \ct_b \at \Lambda^{-1} \bar{q} \ct_c \at \Lambda^{-1} \bar{q} \ct_d \cg,
	\\
				 \mathsf{G}_+ \pa_a \pa_b \pa_c \pa_d \mathsf{G}_+ 
		&= \cF^{-1} \ag \frac{1}{(2\pi)^2} \int \frac{d^3 \vec{q}}{(2\omega)^2} \delta(2\omega - s) \at \Lambda^{-1} q \ct_a \at \Lambda^{-1} q \ct_b \at \Lambda^{-1} {q} \ct_c \at \Lambda^{-1} {q} \ct_d \cg,
	\end{align*}
	we first of all analyse the very same objects without the application of $\Lambda^{-1}$
	\begin{align*}
	Q_{abcd} &=  \frac{1}{(2\pi)^2}\int \frac{d^3 \vec{q}}{(2\omega)^2} \delta(2\omega - s)  q_a q_b \bar{q}_c\bar{q}_d,
	\\
	Q_{abcd}' &=  \frac{1}{(2\pi)^2}\int \frac{d^3 \vec{q}}{(2\omega)^2} \delta(2\omega - s)  q_a q_b {q}_c{q}_d\, .
	\end{align*}
	Then, both of them are non vanishing only when the corresponding indices are equal in pairs. 
Those components can be computed to be
\begin{align*}
Q_{0000}&=\int_{4m^2}^\infty  \frac{M^2}{16}\varrho(M)\hat{\mathsf{G}}_+(p,M)dM,
\\ 
Q_{iiii}&=\frac{1}{5}\int_{4m^2}^\infty\left({1-\frac{4m^2}{M}}\right)^2\frac{M^2}{16}\varrho(M)\hat{\mathsf{G}}_+(p,M) dM,
\\ 
Q_{00jj}&=-Q_{0j0j}=-Q_{j0j0}=Q_{jj00} = \frac{1}{3}\int_{4m^2}^\infty\left({1-\frac{4m^2}{M}}\right)\frac{M^2}{16} \varrho(M)\hat{\mathsf{G}}_+(p,M)dM,
\\ 
Q_{iijj}&=Q_{ijij}=Q_{jiji}=Q_{jjii} = \frac{1}{15}\int_{4m^2}^\infty\left({1-\frac{4m^2}{M}}\right)^2\frac{M^2}{16} \varrho(M)\hat{\mathsf{G}}_+(p,M)dM,
 \qquad i\neq j,
\end{align*}
where the factor $1/3$, $1/5$ and $1/15$ arise because of the appropriate integration over the two dimensional unit sphere.
The result is similarly for $Q'$ up to the sign without the minus signs.
After applying $\Lambda^{-1}$ we get
	\begin{align*}
		\pa_a \pa_b \mathsf{G}_+ \pa_c \pa_d \mathsf{G}_+ (x) &= D_{abcd} \square \int_{4m^2}^\infty \frac{\varrho(M)}{M} \mathsf{G}_+(x,M) dM, 	
		\\
		\mathsf{G}_+ \pa_a \pa_b \pa_c \pa_d \mathsf{G}_+ (x) &= D_{abcd}' \square \int_{4m^2}^\infty \frac{\varrho(M)}{M} \mathsf{G}_+(x,M) dM,
\end{align*}
	where 
	\begin{align*}
		D_{abcd} &= \frac{1}{16} \pa_a \pa_b \pa_c \pa_d \nonumber \\
		&- \frac{1}{48} \at \square - 4m^2 \ct \at \pa_a \pa_b \tau_{cd} + \pa_c \pa_d \tau_{ab} - \pa_a \pa_c \tau_{bd} - \pa_b \pa_c \tau_{ad} - \pa_a \pa_d \tau_{bc} - \pa_b \pa_d \tau_{ac} \ct \nonumber \\ 
		&+ \frac{1}{240} \at \square - 4m^2 \ct^2 \at \tau_{ab}\tau_{cd} + \tau_{ac} \tau_{bd} + \tau_{ad} \tau_{bc} \ct, 
	\end{align*}
while
	\begin{align*}
		D'_{abcd} &= \frac{1}{16} \pa_a \pa_b \pa_c \pa_d \nonumber \\
		&- \frac{1}{48} \at \square - 4m^2 \ct \at \pa_a \pa_b \tau_{cd} + \pa_c \pa_d \tau_{ab} + \pa_a \pa_c \tau_{bd} + \pa_b \pa_c \tau_{ad} + \pa_a \pa_d \tau_{bc} + \pa_b \pa_d \tau_{ac} \ct \nonumber \\ 
		&+ \frac{1}{240} \at \square - 4m^2 \ct^2 \at \tau_{ab}\tau_{cd} + \tau_{ac} \tau_{bd} + \tau_{ad} \tau_{bc} \ct. 
	\end{align*}
Combining those terms, and recalling the form of $D_{ab}$ given in \eqref{eq:operator-DeltaPPDelta} we have that 
	\begin{align*}
				D_{abcd}+D'_{abcd} =&  2 D_{ab}D_{cd}+
		\frac{1}{120} \at \square - 4m^2 \ct^2 \at \tau_{ac} \tau_{bd} + \tau_{ad} \tau_{bc} -\frac{2}{3}\tau_{ab}\tau_{cd} \ct .
	\end{align*}
Let us now collect all the terms, getting
	\begin{align*}
	L_{abcd} =&    
	\left(2(\mathscr{D}_{ab}- \overline{D}_{ab} )(\mathscr{D}_{cd}- \overline{D}_{cd} ) +  \frac{1}{120} \at \square - 4m^2 \ct^2 \at \tau_{ac} \tau_{bd} + \tau_{ad} \tau_{bc} -\frac{2}{3}\tau_{ab}\tau_{cd} \ct\right) 
	\\
	&\square \int_{\mathbb{R}} \frac{\varrho(M)}{M}  \mathsf{G}_+(x,M) dM,
	\end{align*}
	where $\overline{D}_{ab}$ is the trace reversal of  $D_{ab}$, hence 
	\[
	\mathscr{D}_{ab}- \overline{D}_{ab} = -\frac{1}{3}\left(m^2+\frac{1}{2} (1-6\xi) \square \right) \tau_{ab}\, .
	\]
Thus,
	\begin{equation*}
		\begin{aligned}
			L_{abcd}(x) &= \aq \frac{1}{3} \tau_{cd} \tau_{ab} \cq \frac{2}{3} \at m^2 + \frac{1}{2}(1-6\xi) \square \ct^2 \square\int_{\mathbb{R}} \frac{\varrho(M)}{M} \mathsf{G}_+(x,M) dM \\
			&+ \aq \frac{1}{2} \at \tau_{ac} \tau_{bd} + \tau_{ad} \tau_{bc} \ct - \frac{1}{3} \tau_{ab} \tau_{cd} \cq \frac{1}{60} \at \square - 4m^2\ct^2  \square \int_{\mathbb{R}} \frac{\varrho(M)}{M}  \mathsf{G}_+(x,M) dM.
		\end{aligned}
	\end{equation*}
Now, in order to obtain the expression for $\Pi_{abcd}(x-y)$, as in \Eq \eqref{eq:Piabcd}, we need to combine $L_{abcd}(x)$ with the contribution arising from the regularised time ordered product. The latter, is defined up to the ordinary renormalisation ambiguity in the extension of the distribution $\mathsf{G}^2_F(x-x')$, originally defined for $x\neq x'$, to test functions which are supported also on the diagonal \cite{BDF}. In particular, if the action of $\Pi_{abcd}$ on the metric perturbation decouples at linear order the $\mathrm{S}$ and $\mathrm{TT}$ modes, the renormalisation freedoms for the two components become, in principle, independent. Moreover, as a consequence of \cite{HW01,HW05}, these renormalisation freedoms appear in a form permitting their reabsorption in a redefinition of the $\alpha_i$ already present in the stress energy tensor. Therefore, using the decomposition at first order of $G_{ab}$, $I_{ab}$ and $J_{ab}$ in Eq.~\eqref{eq:Einstein-tensor-1}
and Eq.~\eqref{eq:local-curvature-tensor-A}, if we prove that the action of $\Pi_{abcd}$ decouples, then the way in which the constants ${\tilde{\alpha}}^{\mathrm{S}}_i$ and ${\tilde{\alpha}}^{\mathrm{TT}}_i$ appear in \eqref{eq:stress-energy-tensor-expval-lin} is justified.\\

One of the possible extensions to the diagonal of $\mathsf{G}_F^2$ is
	\begin{equation*}
		\mathsf{G}_F^2(x) = \square \int_{4m^2}^\infty \frac{\varrho(M)}{M} \mathsf{G}_F(x,M) dM.
	\end{equation*}
	Thus, using that $\mathsf{G}_F = \mathsf{G}_+ + i \mathsf{G}_{\mathrm{adv}}$, that $\mathsf{G}_{\mathrm{adv}}(-x) = \mathsf{G}_{\mathrm{ret}}(x)$ and \Eq \eqref{eq:K} we have that \Eq \eqref{eq:Piabcd} becomes
	\begin{equation*}
		\begin{aligned}
			\Pi_{cdab}(x-x') &= \aq \frac{1}{3} \tau_{cd} \tau_{ab} \cq \frac{2}{3} \at m^2 + \frac{1}{2}(1-6\xi) \square \ct^2 \cK_{0} (x-x') \\
			&+ \aq \frac{1}{2} \at \tau_{ac} \tau_{bd} + \tau_{ad} \tau_{bc} \ct - \frac{1}{3} \tau_{ab} \tau_{cd} \cq \frac{1}{60} \at \square - 4m^2\ct^2 \cK_{0} (x-x').
		\end{aligned}
	\end{equation*}
	Finally, recognising the form of the operators $P^{(\mathrm{S})}$ and $P^{(\mathrm{TT})}$ as given respectively in \Eqs \eqref{eq:projector-S} and \eqref{eq:projector-T}, and using their action on $h_{cd}$ as given in \eqref{eq:projectors-onhab}, we have that  
	\begin{align*}
	\int_\cM \Pi_{abcd}(x-x') h^{cd}(x') dx' &= 
	-\frac{1}{3} \at m^2 + \frac{1}{2}(1-6\xi) \square \ct^2 \cK_{0} (\bar{h}^{\mathrm{S}}_{ab}) 
			+\frac{1}{60} \at \square - 4m^2\ct^2 \cK_{0} (\bar{h}^{\mathrm{TT}}_{ab}).
	\end{align*}
\end{proof}

Finally, consistently with the result of the previous proposition, we have the following result.
\begin{proposition}
Consider the operator $\Pi_{abcd}$, introduced in \Eq \eqref{eq:polarisation}, which acts on smooth past-compact metric perturbations. The operator $\Pi_{abcd}$ is invariant under gauge transformations and vanishes when applied on $h^{\mathrm{V}}_{ab}$.  
\end{proposition}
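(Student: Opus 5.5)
Since both claims concern the action of $\Pi_{abcd}$ on an arbitrary past-compact $h_{cd}$, I will work directly with the integral kernel obtained in the proof of Proposition~\ref{prop:non-local}:
\begin{equation*}
\Pi_{cdab}(x-x')=\tfrac{1}{3}\tau_{cd}\tau_{ab}\,\cS\cK_0(x-x')+\Big[\tfrac12(\tau_{ac}\tau_{bd}+\tau_{ad}\tau_{bc})-\tfrac13\tau_{ab}\tau_{cd}\Big]\cT\cK_0(x-x').
\end{equation*}
Every term contains at least one factor $\tau^{cd}$, $\tau^{c}{}_{a}$ or $\tau^{d}{}_{b}$ whose index is contracted with $h_{cd}$. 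Here I will not rely on the projector identities of Proposition~\ref{Prop:Projectors}, which hold only on $\Gamma^\delta_{\mathrm{pc}}$, but go back to the definition $\tau_{ab}=\eta_{ab}-\partial_a\partial_b\mathsf{G}_{\mathrm{ret}}$.

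The key elementary identity is that $\tau$ annihilates pure gradients: for past-compact $X$,
\begin{equation*}
\tau^{c}{}_{a}\,\partial_c X=\partial_a X-\partial_a\mathsf{G}_{\mathrm{ret}}\square X=0,
\end{equation*}
because $\mathsf{G}_{\mathrm{ret}}\square=\mathrm{id}$ on $C^\infty_{\mathrm{pc}}$. This uses uniqueness of past-compact solutions, and all operators involved are translation-invariant convolutions, so they commute.

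I will first show that $\Pi_{abcd}$ vanishes on $h^{\mathrm{V}}_{cd}=\partial_c v^{\mathrm T}_d+\partial_d v^{\mathrm T}_c$, and then treat a general gauge term $\partial_cX_d+\partial_dX_c$ in the same way.
\begin{itemize}
\item In the terms $\tau^{c}{}_{a}\tau^{d}{}_{b}(\partial_cX_d+\partial_dX_c)$, each summand has a derivative index contracted with a $\tau$. By the identity above, these terms vanish.
\item In the trace-type term $\tau^{cd}(\partial_cX_d+\partial_dX_c)=2\tau^{cd}\partial_cX_d$, the same identity, applied componentwise, gives zero.
\end{itemize}
Hence the full kernel annihilates every pure-gauge perturbation $\mathcal{L}_X\eta$ with $X\in\Gamma_{\mathrm{pc}}$. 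In particular it annihilates $h^{\mathrm V}_{ab}$, taking $X=v^{\mathrm T}$. Gauge invariance then follows from linearity: $\Pi(h+\mathcal{L}_X\eta)=\Pi(h)$.

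It remains to check that the differential operators $\cS$, $\cT$ and the kernel $\cK_0$ can be moved past the $\tau$'s. All of them are Poincaré-covariant convolution operators preserving past-compact smoothness, as stated in \eqref{eq:K}, so they commute with $\partial$ and $\mathsf{G}_{\mathrm{ret}}$. As a consistency check, I will verify that the form \eqref{eq:polarisation} agrees: under the transformations \eqref{eq:gauge-transf-decomposition} $\bar h^{\mathrm{TT}}$ is unchanged. The scalar part is only invariant after re-imposing de Donder gauge, which Proposition~\ref{prop:gauge-trasformation} fixes uniquely, so I prefer the kernel argument above, since it avoids this subtlety.

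The main obstacle I expect is justifying the identity $\tau^{c}{}_{a}\partial_c=0$ rigorously at the level of the regularised kernel $\cK_0$. The representation $\square\int\frac{\varrho(M)}{M}\mathsf{G}_{\mathrm{ret}}(\cdot,M)\,dM$ must be shown to commute with $\partial_a\partial_b\mathsf{G}_{\mathrm{ret}}$ on $\Gamma_{\mathrm{pc}}$. I would handle this by working in Fourier–Laplace variables for past-compact functions, where all operators become multipliers analytic in the lower half-plane of $p_0$ and commutation is automatic.
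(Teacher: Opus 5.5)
Your proof is correct and is essentially the paper's argument. The paper observes that the kernel is transverse in its $c,d$ slots, $\partial^c\Pi_{abcd}=0=\partial^d\Pi_{abcd}$, moves the derivative in $\partial^cX^d$ onto the kernel, and concludes that the gauge term drops out; this is your identity $\tau^{c}{}_{a}\partial_c=0$ read dually, since $\tau$ is symmetric and commutes with derivatives on past-compact functions, while your term-by-term version spells out the commutation of $\tau$, $\cS$, $\cT$, $\cK_0$ on $\Gamma_{\mathrm{pc}}$ (and $\mathsf{G}_{\mathrm{ret}}\square=\mathrm{id}$ there) that the paper's one-line integration by parts uses implicitly.
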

\begin{proof}
We observe that $\partial_x^c \Pi_{abcd}=0=\partial_x^d \Pi_{abcd}$. Hence, for any past-compact smooth vector field $X_a$, it holds that 
\[
\Pi_{abcd}(\partial_y^cX^d) = \partial_x^c(\Pi_{abcd}X^d)=0.
\]
This implies that $\Pi_{abcd}(h^{cd} + \partial_y^cX^d+\partial_y^dX^c) = \Pi_{abcd}(h^{cd})$, namely $\Pi_{abcd}$ is invariant under gauge transformations. Furthermore, since $h^{\mathrm{V}}_{ab}=\partial_a v_b^{\mathrm{T}} +\partial_b v_a^{\mathrm{T}}$, $\Pi_{abcd}$ vanishes when applied on $h^{\mathrm{V}}_{ab}$.
\end{proof}

\subsection{The linearised semiclassical Einstein equations}
Combining the results of the previous section, the linearised semiclassical Einstein equations for $h_\ab$ in the de Donder gauge with a past-compact source $S_{ab}$, can now be expressed in more explicit terms. In particular, the $\mathrm{S}$ and $\mathrm{TT}$ contributions decouple both in the expectation value of the linearised stress-energy tensor, see \Eq \eqref{eq:linearisedT1}, and in the linearisation of $G_{ab}$, see \Eq \eqref{eq:Einstein-tensor-1-decomposition}. Furthermore, as discussed in the Appendix \ref{se:source-omega1}, the state perturbations are fully reabsorbed in a redefinition of the source term $S_{ab}$, leaving an equation for the unknown $h_{ab}$ only. Finally, as the source term $S_{ab}$ is also past-compact, according to Proposition \ref{prop:decomposition}, admits itself a unique decomposition in scalar, vector and transverse traceless part. Moreover, since $S_{ab}$ must be covariantly conserved, according to Proposition \ref{prop:decomp-divfree}, we have
\[
    S_{ab} =  S_{ab}^{\mathrm{S}}+S_{ab}^{\mathrm{TT}}.
\]
Therefore, we conclude that the scalar and transverse traceless degrees of freedom in the metric perturbation decouple completely in the linearised equation \eqref{eq:FPmetric}. To summarise and express this in an explicit form, we have proven that the linearised semiclassical Einstein equations are
\begin{equation}\label{eq:semiclassical-sources}
\begin{aligned}
	-\frac{1}{2 \kappa} \square \bar{h}_{ab}^{\mathrm{S}}
	&= - \frac{1}{4} \cS \cK_0 [\bar{h}_{ab}^{\mathrm{S}}] +{\tilde{\alpha}}^{\mathrm{S}}_1 m^4 \bar{h}_{ab}^{\mathrm{S}}  -\frac{1}{2}
{\tilde{\alpha}}^{\mathrm{S}}_2
 m^2 \square \bar{h}_{ab}^{\mathrm{S}}+
 {\tilde{\alpha}}^{\mathrm{S}}_3 \square\square \bar{h}_{ab}^{\mathrm{S}}
 + S_{ab}^{\text{S}}
	\\
	-\frac{1}{2 \kappa} \square \bar{h}_{ab}^{\mathrm{TT}}
		&=\frac{1}{2} 
		\cT \cK_0 [\bar{h}_{ab}^{\mathrm{TT}}]+{\tilde{\alpha}}^{\mathrm{TT}}_1 m^4 \bar{h}_{ab}^{\mathrm{TT}} -\frac{1}{2}
{\tilde{\alpha}}^{\mathrm{TT}}_2 m^2 \square \bar{h}_{ab}^{\mathrm{TT}}+{\tilde{\alpha}}^{\mathrm{TT}}_4 \square\square \bar{h}_{ab}^{\mathrm{TT}}+S_{ab}^{\mathrm{TT}}.
\end{aligned}
\end{equation}
We remind the reader that the new renormalisation constants ${\tilde{\alpha}}_i$ should take into account the new renormalisation freedom which arises in the analysis of the nonlocal contribution discussed in Proposition \ref{prop:non-local}. Also, due to the decoupling of the metric perturbation into scalar and transverse traceless components, the corresponding renormalisation constants ${\tilde{\alpha}}_i^{\mathrm{S}}$ and ${\tilde{\alpha}}_i^{\mathrm{TT}}$ are independent.

\begin{remark}
Our formulas are in analogy with those in \cite{Anderson}, where a similar analysis of the linear response of the semiclassical Einstein equations around Minkowski spacetime was studied. The difference though, compared to that derivation, is the choice and form of the renormalisation constants ${\tilde{\alpha}}_i^{\mathrm{S}}$, ${\tilde{\alpha}}_i^{\mathrm{TT}}$ appearing in \Eqs \eqref{eq:semiclassical-sources}. In particular, as it is a crucial point of distinction and a relevant observation for later discussions, we shall focus on it in the next section.
\end{remark}

\subsection{Renormalisation constants and their background values}
\label{se:renormalisation-constants}

In the perturbative treatment of interacting quantum field theory, it is known that counterterms and renormalisation constants admit a power expansion in the coupling constants.
By the principle of general local covariance \cite{BFV,HW01}, if these constants can be fixed with some principle in a globally hyperbolic spacetime, then they must be coherently fixed in all other globally hyperbolic spacetimes. Therefore, we can use general local covariance, together with a deformation argument and the requirement that Minkowski spacetime is a solution of the semiclassical Einstein equation, to fix, to all order, some of these constants.
Indeed, we did fix $\alpha_1$ in \Eq \eqref{eq:T_ab-omega}, as explained in Proposition \ref{prop:background} in order to impose the validity of the semiclassical equation on the background. Therefore, this choice fixes by general local covariance also the values of ${\tilde{\alpha}}_1^{\mathrm{S}/\mathrm{TT}}$. 
We have the following Proposition.

\begin{theorem}\label{thm:fixing-alpha1}
Consider the semiclassical Einstein equations given in \eqref{eq:semiclassical-sources}. In the hypothesis of Proposition \ref{prop:background} and by the principle of general local covariance, the fixed value of $\alpha_1$ gives that the values taken by ${\tilde{\alpha}}_1^{\mathrm{S}}$ and ${\tilde{\alpha}}^{\mathrm{TT}}_1$ are
\begin{equation*}
    {\tilde{\alpha}}^{\mathrm{S}}_1 = \frac{1}{64 \pi^2}\, , \qquad {\tilde{\alpha}}^{\mathrm{TT}}_1 = 0.
\end{equation*}
\end{theorem}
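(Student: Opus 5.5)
The plan is to read off $\tilde\alpha_1^{\mathrm S}$ and $\tilde\alpha_1^{\mathrm{TT}}$ from the requirement that the cosmological-constant term $\alpha_1 m^4 g_{ab}$ in \eqref{eq:stress-tensor}, once fixed on the background by Proposition~\ref{prop:background}, is carried over by general local covariance to the linearised theory. In other words, the linear order of the whole local renormalisation freedom must agree with the linearisation of the same local covariant terms evaluated on $g_{ab}=\eta_{ab}+h_{ab}$. Concretely, we expand $\alpha_1 m^4 g_{ab}$ together with every other $m^4$-type local contribution to first order in $h_{ab}$. We then compare the result with the ansatz \eqref{eq:ren-freedom-nonlocal}, where all ambiguity proportional to $m^4$ without derivatives is collected into $\tilde\alpha_1^{\mathrm S}m^4\bar h^{\mathrm S}_{ab}+\tilde\alpha_1^{\mathrm{TT}}m^4\bar h^{\mathrm{TT}}_{ab}$.

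The first step is to determine all contributions of order $m^4 h$ with no derivatives in $\langle\Wick{T_{ab}}\rangle$ at linear order. These are $\alpha_1 m^4 h_{ab}$ from the cosmological term, and $\omega_0(\Wick{T^{(1)}_{ab}})$, which vanishes by the choices of $d_1,c_1$ in Proposition~\ref{prop:background}. The remaining contribution is the local part of $N_{ab}$ produced by the time-ordered product. By perturbative agreement \cite{HW05,DHP}, the quantum Møller map must reproduce what one gets by linearising the covariant expression, i.e.\ by varying $\omega_0(\Wick{T_{ab}})$ with respect to the metric. The $m^4$ local part of the linearised two-point function (the $\sigma$-coefficient $[\mathsf v_1]=m^4/(64\pi^2)$ and its $\mu$-dependence) then contributes a term $\propto m^4\eta_{ab}h^c{}_c$ together with the term coming from $\alpha_1$. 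Adding these, the $\log(m^2/2\mu^2)$ and $\gamma$ pieces must cancel, because they are exactly the combination that set $\omega_0(\Wick{T^{(0)}_{ab}})=0$. What survives is a pure number times $m^4/(64\pi^2)$.

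The second step is to project that leftover with $P^{(\mathrm S)}$ and $P^{(\mathrm{TT})}$ from Proposition~\ref{Prop:Projectors}. Here we use \eqref{eq:projectors-onhab} and the de Donder gauge, and we require the outcome to be divergence-free. A term proportional to $\eta_{ab}$ times the trace falls entirely into the scalar sector, because $P^{(\mathrm{TT})}$ annihilates trace terms. A term proportional to $h_{ab}$ itself would enter both sectors, but in our case it cancels: $\alpha_1 m^4 h_{ab}$ is compensated by the $-\tfrac12\omega_0(\Wick{T^{(0)}})$-type and $[\partial\partial\mathsf w]$ contributions, which is precisely the content of $\omega_0(\Wick{T^{(1)}_{ab}})=0$. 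This cancellation gives $\tilde\alpha_1^{\mathrm{TT}}=0$. The scalar coefficient then comes from $[\mathsf v_1]=m^4/(64\pi^2)$, whose trace part is not cancelled, because the anomaly-type term $\frac{1}{4\pi^2}[\mathsf v_1]g_{ab}$ in \eqref{eq:T0} is linearised in a trace-dependent way. After rewriting $h$ in terms of $\bar h^{\mathrm S}$, this should yield $\tilde\alpha_1^{\mathrm S}=1/(64\pi^2)$.

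The main obstacle is to justify rigorously that the local $m^4$ ambiguity in the renormalised $\mathscr T(\Wick{T^{(0)}}\Wick{T^{(0)}})_{\mathrm{reg}}$ is fixed by $\alpha_1$, i.e.\ that the extension of $\mathsf G_F^2$ chosen in the proof of Proposition~\ref{prop:non-local} is compatible with the point-splitting scheme of \eqref{eq:stress-tensor}. I would first try a deformation argument. Consider a one-parameter family of metrics interpolating between $\eta$ and $\eta+h$. Apply general local covariance, which forces the derivative of the renormalised $T_{ab}$ along the family to equal the Bogoliubov first-order term. Then compare both sides on constant-$h$ (that is, a linear change of coordinates) perturbations. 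On such perturbations every derivative term and $\Pi_{ab}$ vanish, since $\mathcal K_0$ comes with $\square$ factors, so only the $\tilde\alpha_1$ terms survive. They can therefore be computed directly by evaluating $\omega_0(\Wick{T_{ab}})$ on the rescaled flat metric.
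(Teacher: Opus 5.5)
Your TT argument is essentially the paper's. There the perturbation is the long-wavelength plane wave $r_{ab}W\cos(\mathsf{k}(t-x^3))\chi_\nu(t)$, which for $\mathsf{k}\to0$, $\nu\to\infty$ becomes an anisotropic constant rescaling of $x^1,x^2$. A WKB estimate shows the M\o ller-mapped state tends to the vacuum with rescaled Synge function, and $\mathcal{K}_0$ drops out. $\langle T_{ab}\rangle$ is then the vanishing bracket of Proposition~\ref{prop:background} times $m^4(\eta_{ab}+h_{ab})$, so $\tilde\alpha^{\mathrm{TT}}_1=0$. You do need such switched-on families rather than literally constant $h_{ab}$. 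A constant tensor is not past-compact, so neither $\tau_{ab}$ nor $N_{ab}$ is defined on it, and the state has to be identified through the adiabatic limit.

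The scalar sector has a genuine gap: your method, carried out, gives $\tilde\alpha^{\mathrm S}_1=0$ rather than the stated value. A constant pure-trace perturbation $h_{ab}=2W\eta_{ab}$ is also a linear change of coordinates, $x\mapsto\sqrt{1+2W}\,x$. By the same covariance you use for TT, $\langle T_{ab}\rangle$ in the corresponding vacuum is the pull-back of the Minkowski value, i.e.\ zero. Concretely, after $\psi=\Omega\phi$ the covariant parametrix on $\Omega^2\eta$ is the flat one at mass $\Omega m$ with $\mu\to\Omega\mu$, so $\log(m^2/2\mu^2)$ does not move. Matching this with the trace $-8\tilde\alpha^{\mathrm S}_1Wm^4$ of the surviving terms forces $\tilde\alpha^{\mathrm S}_1=0$.

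Your heuristic for a nonzero value does not hold either. At zero derivative order, $\frac{1}{4\pi^2}[\mathsf v_1]g_{ab}$ linearises to $\frac{1}{4\pi^2}[\mathsf v_1]h_{ab}$, and no local covariant tensor without derivatives produces $m^4\eta_{ab}h^c{}_c$. The value $1/(64\pi^2)$ is asserted, not computed.

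The missing idea is the paper's mass-shift argument. The switched-on conformal factor $\Omega=1+Wf_n\chi_\nu$ acts on the massive field at late times as a mass shift $\delta m^2=2m^2W$. The classical M\o ller image of the mass-$m$ vacuum converges to the Poincar\'e vacuum of mass $m^2+\delta m^2$. Its stress tensor is evaluated with \eqref{eq:T0} at the new mass, but with $\alpha_1$ and $\mu$ kept at their background values. The surviving term is the $m^2$-derivative of $-\frac{1}{64\pi^2}m^4\log(m^2/2\mu^2)$, i.e.\ $-\frac{1}{64\pi^2}m^2\delta m^2\eta_{ab}$, with trace $-\frac{4}{64\pi^2}m^2\delta m^2$. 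Equating this with $-8\tilde\alpha^{\mathrm S}_1Wm^4$ (the $\mathcal{K}_0$ and $\mathcal{S}\mathcal{K}_0$ terms vanish in the limit) gives $1/(64\pi^2)$.

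So the stated scalar value depends on reading the scalar perturbation as a change of mass at fixed renormalisation scale, not as a diffeomorphism. That difference is exactly the $[\mathsf v_1]\log\Omega^2$ scaling term, which your constant-$h$ comparison removes by construction.
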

\begin{proof}
In Proposition \ref{prop:non-local} we denoted by ${\tilde{\alpha}}_1^{\mathrm{S}}$ and ${\tilde{\alpha}}_1^{\mathrm{TT}}$ the renormalisations arising from the Bogoliubov map. These freedoms have to be related to $\alpha_1$ which was fixed in Proposition \ref{prop:background} by the requirement that the  Poincar\'e vacuum is a solution of the semiclassical Einstein equation. In order to fix these spurious renormalisation constants, we make use of the principle of general local covariance by means of which the renormalisation constant $\alpha_1$ fixed on Minkowski spacetime fixes the constants in all other globally hyperbolic spacetime. Therefore, in order to fix this extra renormalisation freedom, we just need to compute $T_{ab}$ at linear order in perturbation theory, use the classical M\o ller map for some particular perturbation for the already chosen $\alpha_1$ and equate the obtained expression with the corresponding one derived in Proposition \ref{prop:non-local} given in terms of ${\tilde{\alpha}}^{\mathrm{S}}_1$ and ${\tilde{\alpha}}^{\mathrm{TT}}_1$.  

\bigskip
\noindent
{\bf The S-degrees of freedom and  ${\tilde{\alpha}}^{\mathrm{S}}_1$:}
	
\noindent	
Consider the perturbation of $\eta_{ab}$ given by a conformal rescaling 
 	\[
		g_{ab} = \Omega^2 \eta_{ab}, \qquad \Omega^2(t,\mathbf{x}) = (1 + W f_n(\mathbf{x}) \chi_\nu(t))^2\,,
	\]
where $f_n$ here is a family of compactly supported smooth functions that are constant in time and tend to $1$ for large $n$. 
Furthermore, $W$ is a positive constant that, as we will see, corresponds to a variation of the mass of the scalar field. The function $\chi_\nu$ is given by $\chi_\nu(t) = \int_{-\infty}^t \lambda\left(\frac{t'}{\nu}\right)\frac{1}{\nu} dt'$, where $\lambda$ is a positive compactly supported smooth function in the region $(-1,-\epsilon)$ with small positive $\epsilon$, and such that $\chi_\nu(t)=1$ for $t\geq -\epsilon \nu$ and $\chi_\nu(t) = 0 $ for $t<-\nu$. Hence, $\chi_\nu$ is a function of time which realises a smooth switch on of the interaction. In the limit $\nu \to \infty$, the switch on $\chi_\nu$ converges pointwise to $1$ and for larger $\nu$ the region where $\chi_\nu\neq1$ translates back in the past. For later purposes we also assume that, 
\begin{equation}\label{eq:constraint-model}
	|W \chi_\nu''(6\xi-1)|< m^2 \,,
\end{equation}
which holds for any choice of $W$ if $\nu$ is sufficiently large.\\

The linearised action of this perturbative model reads
	\begin{align*}
		V_{1,tr} &= -  \int_{\cM} {\phi}^2 (x') \at m^2 + \frac{1}{2} \at 1- 6\xi \ct \square \ct W f_n(\mathbf{x}') \chi_\nu (t(x')) dx'.
	\end{align*}
and in the limit of large $n$ and large time, the action of this interaction Lagrangian corresponds to a change in the mass of the Klein Gordon field. Therefore, we denote by 
\[
\delta m^2 = 2 m^2 W
\] 
the change in the mass occurring in the limit $\nu\to\infty$ and $n\to\infty$ by means of this interaction Lagrangian. The two-point function of the system can be determined through the classical M\o ller map applied to the vacuum of the scalar field theory of mass $m$ on the Minkowski spacetime
	\begin{align*}
		\bar{\omega}_2(x',x) &= \mathcal{R}_{\mathcal{L}_I}\omegazd \mathcal{R}_{\mathcal{L}_I}^{\ast}(x',x).
	\end{align*}
In particular, as shown in the appendix of \cite{drago}, in the limits $n \to \infty$ and $\nu \to \infty$ the system converges to the  Poincar\'e vacuum ${\omega}^\delta_2$ of the Klein-Gordon field of mass $m^{2}+\delta m^2$.

\bigskip
\noindent 
{\it State in the large time limit $\nu\to \infty$:}\\
We consider the limit $n\to\infty$ so that $f_n$ tends to $1$. The state for $t<-\nu$ is the  Poincar\'e vacuum, while for later times we consider a generic mode expansion
\[
\bar{\omega}_2(x,x') = \frac{1}{(2\pi)^3}\int \frac{\overline{\zeta_k(t)}}{(1+\chi_\nu(t) W)} \frac{\zeta_{k}(t')}{{(1+\chi_\nu(t') W)}}   e^{i \mathbf{k} (\mathbf{x}-\mathbf{x}') } d^3 k.
\]
Here, the modes $\zeta_k$ are taken to satisfy the equation 
\begin{equation}\label{eq:modes-zeta-q}
\zeta_k(t)'' + q_k^2(t) \zeta_k = 0,
\end{equation}
subject to the constraint
\[
\zeta_k(t) = \frac{e^{\mathrm{i} \Omega_k t }}{\sqrt{2\Omega_k}}, \qquad \Omega_k^2 = \mathbf{k}^2 +m^{2} \qquad t<-\nu,
\]
and for
\begin{align*}
    q_k(t)^2 &= \left(k_1^2 +k_2^{2}+k_3^2+v(t))\right)\\
    v(t)&=(1+\chi_\nu W)^2m^2+\left(6\xi-1\right)\frac{(1+\chi_\nu W)''}{ (1+\chi_\nu W)}.
\end{align*}
Notice that the inequality \eqref{eq:constraint-model} implies that $v(t)>0$. Furthermore, the modes for $t<-\nu$ fix the state to be the ordinary  Poincar\'e vacuum in the past. To estimate the form of these modes for large positive times, we use a method developed in the appendix of \cite{DHP} and we start with a WKB (Wentzel–Kramers–Brillouin) approximation 
\[
\zeta_k^0 = \frac{e^{\mathrm{i} \int_{t_0}^{t} q_k(t')dt'}}{\sqrt{2 q_k(t)}}.
\]
We observe that these modes satisfy the equation
\begin{equation}\label{eq:equation-modes-WKB}
{\zeta_k^0}(t) '' +q_k^2(t){\zeta_k^0(t)}  + w_k(t){\zeta_k^0(t)}  = 0,
\end{equation}
where
\begin{equation*}
    w_k(t) = \frac{1}{4}  \frac{(q_k^2(t))''}{q_k^2(t)}-\frac{5}{16}
\left(\frac{(q_k^2(t))'}{q_k^2(t)}\right)^2
\end{equation*}
and the function $w_k(t)$ is past-compact in time. Furthermore, $q_k^2(t) = \Omega_k^2(t)$ for $t< -\nu$ and thus we have that 
\[
\zeta_k^0(t) = \zeta_k(t) , \qquad t<-\nu.
\]  
The modes $\zeta_k(t)$ are then seen as perturbations of $\zeta^0_k(t)$ due to an external potential $-w_k(t)$. Hence, since the retarded operator of \Eq \eqref{eq:equation-modes-WKB} has integral kernel
\begin{equation*}
{g_{\text{ret}}}_k(t,t') 
\coloneqq -{g_{\text{PJ}}}_k(t,t')\theta(t-t' )
\end{equation*}
with
\begin{align*}
    {g_{\text{PJ}}}_k(t,t') &\coloneqq -\mathrm{i} \left(\overline{\zeta^0_k(t')}\zeta^0_k(t) - \overline{\zeta^0_k(t)}\zeta^0_k(t')\right)\\
&= \frac{e^{\mathrm{i} \int_{t'}^t q_k(s) ds}-e^{-\mathrm{i} \int_{t'}^t q_k(s) ds}}{2\sqrt{q_k(t)q_k(t')} \mathrm{i}},
\end{align*}
it holds that
\begin{equation*}
\zeta_k(t) = \zeta^{0}_k(t)+ \int_{-\infty}^t  \frac{e^{\mathrm{i} \int_{r}^t q_k(u) du}-e^{-\mathrm{i} \int_{r}^t q_k(u) du}}{2\sqrt{q_k(t)q_k(r)} \mathrm{i}}  w_k(r)\zeta_k(r) dr.
\end{equation*}
Therefore, using recursively this relation, we get the perturbative series for $\zeta_k$
\[
\zeta_k = \sum_{n\geq 0} (({g_{\text{ret}}}_k) w_k)^n \zeta_k^{0}\,,
\]
which can be proven to be absolutely convergent. Actually, observing that $q_k(t)\geq \epsilon>0$ for a sufficiently small positive $\epsilon$, we have the uniform estimate $|{g_{PJ}}_k(t,t')| \leq \frac{1}{\epsilon}$ and
\[
|\zeta_k(t)| \leq \frac{1}{\sqrt{2\epsilon}}e^{\int_{-\infty}^t \frac{|w_k(s)|}{\epsilon} ds }.
\] 
Notice that $\chi_\nu(t)=1$ for $t>0$ and $\nu\to\infty$ and therefore, it holds that
\begin{equation}\label{eq:chi-bounds}
\|\chi_\nu^{(l+2)}\|_1 \leq  \nu^{-(l+1)}\|\lambda^{(l+1)}\|_1  \underset{\nu\to\infty}{\longrightarrow} 0,
\qquad
\|\chi_\nu^{(l+1)}\|_\infty 
\leq  \nu^{-{l+1}}\|\lambda^{(l)}\|_\infty  \underset{\nu\to\infty}{\longrightarrow} 0 
\qquad l\in \mathbb{N},
\end{equation}
while $\|\chi_\nu^{(1)}\|_1 =C$ uniformly in $\nu$ for some positive constant $C$. 
Hence, recalling \Eq \eqref{eq:modes-zeta-q}, we have that 
\[
|w_k(t)| \leq  \frac{1}{4}  \frac{(v(t))''}{\epsilon^2}-\frac{5}{16}
\left(\frac{v(t)'}{\epsilon^2}\right)^2
\]
and
\[
\|v''\|_1+ \|(v')^2\|_1 \leq C(\| \chi_\nu'\|_1\| \chi_\nu'\|_\infty+\| \chi_\nu''\|_1+\| \chi_\nu^{(3)}\|_1+\| \chi_\nu^{(4)}\|_1)\,,
\]
where $C$ is a suitable positive constant uniform in $\nu$ and $k$. Thus, for large $\nu$ we can bound the integral
\[
\int_{-\infty}^t |w_k(s)| ds \leq 
\int_{-\infty}^\infty 
\left(\frac{1}{4}  \frac{(q_k^2(t))''}{\epsilon^2 }-\frac{5}{16}
\left(\frac{(q_k^2(t))'}{\epsilon^2 }\right)^2\right) ds\leq \frac{C}{\nu}\,,
\]  
where $C$ is uniform in $\nu$, $t$ and $k$.
Similarly, we have
\[
|\zeta_k(t)-\zeta^0_k(t)| \leq \frac{1}{\sqrt{2\epsilon}}\left|e^{\int_{-\infty}^t \frac{|w_k(s)|}{2\epsilon} ds }-1\right| \leq \frac{C}{\nu}.
\]
and thus
\[
\lim_{\nu\to\infty}|\zeta_k-\zeta_k^0| = 0.
\]

With this result at disposal, we can now study   $\bar\omega_2(f,g)$ with $f,g$ compactly supported smooth functions with support in the future of the Cauchy surface at $t=0$ and compare it with the  Poincar\'e vacuum. First, let us express the two-point functions in terms of the modes:
\begin{align*}
\bar{\omega}_2(f,g) &= 
\frac{1}{(2\pi)^3}
\iiint
\frac{\overline{\zeta_k(t)}}{(1+\chi_\nu(t) W)} \frac{\zeta_{k}(t')}{{(1+\chi_\nu(t') W)}}
\overline{\hat{f}}(t,\mathbf{k})\hat{g}(t', \mathbf{k}) d^3\mathbf{k}\, dt\,  dt'
\\
\omega^\delta_2(f,g)&= 
\frac{1}{(2\pi)^3}
\iiint 
\frac{\overline{\zeta^0_k(t)}}{(1+\chi_\nu(t) W)} \frac{\zeta^0_{k}(t')}{{(1+\chi_\nu(t') W)}}
\overline{\hat{f}}(t,\mathbf{k})\hat{g}(t', \mathbf{k}) d^3\mathbf{k}\, dt\,  dt'\,,
\end{align*}
where $\omega^\delta_2$ denotes the two-point function of the  Poincar\'e vacuum of the Klein-Gordon field of mass squared  $m^2+\delta m^2$. Then, by the above, in the limit $\nu\to\infty$ the state $\bar{\omega}_2$, which is the  Poincar\'e vacuum of mass squared $m^2$ in the past, tends to $\omega^\delta_2$ 
\[
\lim_{\nu\to\infty}\left(\bar{\omega}_2(f,g) -\omega_2^\delta(f,g)\right) = 0.
\]

\bigskip
\noindent
{\it Evaluation of the stress-energy tensor at late time:}

\noindent
For $t>0$, the expectation value of the stress-energy tensor on $\bar{\omega}$ can thus be computed using the explicit form of $\Wick{T_{ab}^{(0)}}$ given in \eqref{eq:T0} with square mass equal to $m^{2}+\delta m^{2}$ and $\alpha_1$ as fixed in Proposition~\ref{prop:background}. For $t > 0$, it holds that
\begin{equation}\label{eq:Tab-vacuum}
\lim_{\nu\to\infty}\lim_{n\to\infty}\expval{\Wick{T_{ab}}}_{\tilde{\omega}}
	=
-\frac{1}{64 \pi^2} \left(\log \left(\frac{m^2+\delta m^2}{2\mu^2}\right)- \log \left(\frac{m^2}{2\mu^2}\right)\right)(m^2+\delta m^2)^2\eta_{ab}.
\end{equation}
Then, the value of ${\tilde{\alpha}}^{\mathrm{S}}_1$ descends by this expression as a consequence of the principle of perturbative agreement. Namely, we fix it computing the first order contribution in $W$ arising from this expression and compare it with that in \Eq \eqref{eq:linearisedT1} using in particular the results of Proposition \ref{prop:non-local}.

\bigskip
\noindent
{\it Evaluation of the stress-energy tensor with perturbative interacting quantum field theory:}	
	
\noindent	
At linear order in $W$, the metric perturbation we are considering is
\[
	\overline{h}^{\mathrm{S}}_{ab} = -\frac{4}{3}\tau_{ab} W f_n 2\chi_\nu
\]
In particular, in the large time limit, the non-local contribution described by $\mathcal{K}_0$ and evaluated for this perturbation vanishes. Namely, taking the trace, it holds that 
	\[
	 \lim_{\nu\to\infty}\lim_{n\to\infty}\mathcal{K}_0(4 W f_n2\chi_\nu) \to 0. 
	\]
To prove this claim, we notice that 
\begin{align*}
\mathcal{K}_0(8 W f_n\chi_\nu)  
&= \square \int_{4m^2}^\infty dM \frac{1}{M}\rho(M) \mathsf{G}_{\mathrm{ret}}(8 W f_n\chi_\nu,M)\\
&= \square \square \int_{4m^2}^\infty dM \frac{1}{M^2} \rho(M)  \mathsf{G}_{\mathrm{ret}}(8 W f_n\chi_\nu,M)- 
\square 8 W f_n\chi_\nu  \int_{4m^2}^\infty dM \frac{1}{M^2} \rho(M) \,,
\end{align*}
where we used that $\mathsf{G}_{\mathrm{ret}}$ is a fundamental solution of the Klein-Gordon equation of mass square equal to $M$. We now observe that in the limit $n\to\infty$ and for large values of $t$ (where $\chi_\nu=1$), it further holds
\begin{align*}
\mathcal{K}_0(8 W f_n\chi_\nu)  
&= 8 W \square \square \int_{4m^2}^\infty dM \frac{1}{M^2} \rho(M)  \int_{-\infty}^t 
\frac{\sin(\omega(\mathbf{p},M)(t-t'))}{\omega(\mathbf{p},M)}e^{\mathrm{i}\mathbf{p} \mathbf{x}}
\hat{f_n}(\mathbf{p})\chi_\nu(t')dt'd^3 \mathbf{p}\\
&- \square 8 W f_n\chi_\nu  \int_{4m^2}^\infty dM \frac{1}{M^2} \rho(M).   
\end{align*}
Now, notice that the function $\rho(M)/M^2$ is integrable on $[4m^2,\infty)$ and thus, in the limit $n \to \infty$, $\square 4 W f_n \to 0$ and the last contribution in the previous expression tends to $0$. Instead, for the first contribution, we observe that $\hat{f}_n$ tends to the Dirac delta function as $n \to \infty$ and thus
\begin{align*}
\lim_{n\to\infty}\mathcal{K}_0(8 W f_n\chi_\nu)  
&= 
8 W \frac{d^4}{dt^4} \int_{4m^2}^\infty dM \frac{1}{M^2} \rho(M)  \int_{-\infty}^t 
\frac{\sin(\sqrt{M}(t-t'))}{\sqrt{M}}\chi_\nu(t')dt' . 
\end{align*}
Taking the derivatives in time and by partial integration in $dt'$ four times we get 
\begin{align*}
\lim_{n\to\infty}\mathcal{K}_0(8 W f_n\chi_\nu)  
&= 
8 W \int_{4m^2}^\infty dM \frac{1}{M^2} \rho(M)  \int_{-\infty}^t
\frac{\sin(\sqrt{M}(t-t'))}{\sqrt{M}} \chi_\nu^{(4)}(t')dt'  
\end{align*}
and the fourth order derivative of $\chi$ is a smooth and of compactly supported function. Hence if $t$ is in the future of the support of $\chi^{(4)}$, we can extend the domain of the $t'$ integration to the whole $\mathbb{R}$ 
\begin{align*}
\lim_{n\to\infty}\mathcal{K}_0(8 W f_n\chi_\nu)  
&= 
8 W \int_{4m^2}^\infty dM \frac{1}{M^2} \rho(M)  \int_{-\infty}^\infty
\frac{\sin(\sqrt{M}(t-t'))}{\sqrt{M}} \chi_\nu^{(4)}(t')dt'  .
\end{align*} 
To conclude, we notice that $\sqrt{M}^{-{5}} \rho(M) \theta(M-4m^2)$ is integrable and the function $\chi^{(4)}_{\nu}$ is bounded with $L^1$-norm
\[
\|\chi^{(4)}_{\nu}\|_1 \leq \frac{C}{\nu^{3}}.
\]
Hence by dominated convergence, we can take the limit $\nu\to\infty$ before the integration in $M$. So, in this limit 
\[
\lim_{\nu\to\infty}\lim_{n\to\infty} \mathcal{K}_0(8W f_n \chi_\nu) =0
\]
With the same reasoning and still for $t>0$ we can prove
\[
\lim_{\nu\to\infty} \lim_{n\to\infty} \mathcal{S}\mathcal{K}_0(8W f_n\chi_\nu) =0.
\]

Hence, recalling the form of $\langle \Wick{T_{ab}(x)}\rangle_\omega^{(1)}$ given in \Eq \eqref{eq:linearisedT1} and the discussion around \Eq \eqref{eq:ren-freedom-nonlocal}, 
we have that
\begin{equation}\label{eq:pert-agr-T}
	\lim_{\nu\to\infty}\lim_{n\to\infty} \langle
\Wick{{{T_{a}}^a}(x)}
\rangle_\omega^{(1)} = - 8{\tilde{\alpha}}^{\mathrm{S}}_1 W m^4.
\end{equation}
Finally, for consistency, notice that this result holds despite the definition of $h_{ab}$ is not divergence free. This because the possibly additional terms appearing in \eqref{eq:linearisedT1} are uniformly convergent to zero in the limit $\nu,n \to \infty$.

\bigskip 
\noindent
{\it Comparison of the results:}
	
\noindent
Hence, comparing the contribution proportional to $\delta m^2$ in $\lim_{\nu\to\infty}\lim_{n\to\infty}\expval{\Wick{{T_{a}}^a}}_{\tilde{\omega}}$ given in \Eq \eqref{eq:Tab-vacuum} with the one obtained in \Eq \eqref{eq:pert-agr-T} we have 
    \[
	-\frac{4}{64 \pi^2}  m^2\delta{m^2} = -8{\tilde{\alpha}}^{\mathrm{S}}_1 m^4 W,
	\]
and recalling that at linear order in $W$ it holds $\delta m^2 = 2 m^2 W + O(W^2)$, we get 
    \[
	{\tilde{\alpha}}^{\mathrm{S}}_1 = \frac{1}{64 \pi^2}.
	\]
	
\bigskip
\noindent
{\bf The TT-degrees of freedom and  ${\tilde{\alpha}}^{\mathrm{TT}}_1$:}

\noindent
In order to fix the value of ${\tilde{\alpha}}_{1}^{\mathrm{TT}}$ in \eqref{eq:semiclassical-sources} we consider a transverse traceless perturbation ${h}_{ab}^{\mathrm{TT}}$ of the Minkowski metric which is particularly convenient. In this way, we compute the stress-energy tensor explicitly in this spacetime and compare it with the result in \Eq \eqref{eq:semiclassical-sources} and in Proposition \ref{prop:decomposition}. Again, by the principle of perturbative agreement, using the retarded classical M\o ller map, this fixes the freedom ${\tilde{\alpha}}_1^{\mathrm{TT}}$ in \eqref{eq:semiclassical-sources}.\\
 
We consider here the metric perturbation 
\begin{equation}\label{eq:perturbation-z}
		h_{ab} = r_{ab} W f_{\mathsf{k}}(t - x^3) \chi_\nu(t)=  \bar{h}^{\mathrm{TT}}_{ab}. 
\end{equation}
Again, as for the scalar case, $\chi_\nu(t) = \int_{-\infty}^t \lambda(\frac{s}{\nu}) \frac{1}{\nu} ds$ with $\lambda  \in C_{\mathrm{c}}^{\infty}([-1,-\epsilon])$ as before. Furthermore, in the above $f_\mathsf{k}(t-x^3) = \cos(\mathsf{k} (t - x^3))$ is a plane wave propagating forward along the $x^3$-direction, with frequency $\mathsf{k}$ and
 \[
 r_{ab} dx^a dx^b = (dx^1)^2 - (dx^2)^2.
 \] 
Finally, $W$ is positive and smaller than $1$. Then, the perturbed spacetime $(M,\eta + h)$ coincides with the Minkowski background for $t<-\nu$ and is a spacetime which contains a plane gravitational wave propagating along $x^3$ for $t>0$. The equations of motion of a scalar Klein Gordon field on the Minkowski background are then perturbed according to the following interaction Lagrangian
\begin{equation*}
		\mathcal{L}_I = -\frac{1}{2} \pa^{c'} \phi^{(0)} (x') \bar{h}_{c'd'}^{\mathrm{TT}} (x') \pa^{d'} \phi^{(0)} (x') f(x').
\end{equation*}
Notice that by construction, in the limit $\mathsf{k} \to 0$ and $\nu \to \infty$ we have that $\square h_{ab}^{\mathrm{TT}} = 0$. Hence, this perturbation is a simple solution of the TT part of \Eq \eqref{eq:semiclassical-sources} with vanishing ${\tilde{\alpha}}_1^{\mathrm{TT}}$ at least for large times. Indeed, using a similar method as the one discussed above for the S-modes, we can prove that 
\[
\lim_{\nu\to\infty} \mathcal{K}_0(\bar{h}^{\mathrm{TT}}_{ab}) =0.
\] 
Hence, in that limit and for now generic ${\tilde{\alpha}}_1^{\mathrm{TT}}$, it holds that
\begin{equation}\label{eq:TabHtt-caso}
\lim_{\nu\to\infty} 
\lim_{\mathsf{k}\to 0}
\langle
\Wick{{T_{ab}}(x)}
\rangle_\omega^{(1)} 
= {\tilde{\alpha}}_1^{\mathrm{TT}}\bar{h}^{\mathrm{TT}}_{ab} m^4. 
\end{equation}

\bigskip
\noindent
{\it State in the large time limit $\nu\to\infty$:}

\noindent
We analyse the state in the limit $\mathsf{k} \to 0$ in $h_{ab}^{\mathrm{TT}}$. From the given form of the perturbation, the state is the  Poincar\'e vacuum for $t<-\nu$ and has the generic form
\[
\bar{\bar{\omega}}_2(x,x') = \frac{1}{(2\pi)^3}\int \frac{\overline{\eta_k(t)}}{(1-(\chi_\nu(t) W)^2)^{\frac{1}{4}}} \frac{\eta_{k}(t')}{{
(1-(\chi_\nu(t') W)^2)^{\frac{1}{4}}}}   e^{\mathrm{i} \mathbf{k} (\mathbf{x}-\mathbf{x}') }     d^3 k\,,
\]
where, the modes $\eta_k$ satisfy the equation 
\begin{equation}\label{eq:modes-eta-q}
\eta_k(t)'' + \tilde{q}_k^2(t) \eta_k = 0\,,
\end{equation}
where
\begin{equation*}
\begin{aligned}
&\tilde{q}_k(t)^2 = \left(\frac{k_1^2}{1+W\chi_\nu(t)} +\frac{k_2^{2}}{1-W\chi_\nu(t)}+k_3^2 + m^2+\tilde{v}(t)\right)\\
&\tilde{v}(t)= \xi R + S
\end{aligned}
\end{equation*}
with scalar curvature
\[
R = -
\frac{(3+W^2\chi_\nu^2)}{2 (1-W^2 \chi_\nu^2)^2}(W\chi_\nu')^2- 2  \frac{W\chi_\nu}{1-W^2 \chi_\nu^2}(W \chi_\nu'')
\]
and
\[
S = -\frac{R}{4} + 
   \frac{(1+W^2\chi_\nu^2)}{8 (1-W^2 \chi_\nu^2)^2}(W\chi_\nu')^2.
\]
Moreover, at early time, it must hold that
\[
\eta_k(t) = \frac{e^{\mathrm{i} \Omega_k t }}{\sqrt{2\Omega_k}}, \qquad \Omega_k^2 = \mathbf{k}^2 +m^{2} \qquad t<-\nu.
\]
Notice that the inequality \eqref{eq:constraint-model} implies that $m^2 + \tilde{v}(t) > 0$ when $W$ is assumed to be sufficiently small. Furthermore, the condition at $t<-\nu$ fixes the state to be the ordinary  Poincar\'e vacuum in the past.\\

To estimate the form of the state in the future we analyse its two-point function using a WKB approximation similar to the one used for the $\mathrm{S}$ modes. The WKB-approximated state is obtained with the mode 
\[
\eta_k^0 = \frac{e^{\mathrm{i} \int_{t_0}^{t} \tilde{q}_k(t')dt'}}{\sqrt{2 \tilde{q}_k(t)}}.
\]
which satisfies the following equation: 
\begin{equation*}
{\eta_k^0}(t) '' +\tilde{q}_k^2(t){\eta_k^0(t)}  + \tilde{w}_k(t){\eta_k^0(t)}  = 0 ,
\end{equation*}
where
\begin{equation*}
    \tilde{w}_k(t) = \frac{1}{4}  \frac{(\tilde{q}_k^2(t))''}{\tilde{q}_k^2(t)}-\frac{5}{16}
\left(\frac{(\tilde{q}_k^2(t))'}{\tilde{q}_k^2(t)}\right)^2.
\end{equation*}
The function $\tilde{w}_k(t)$ is past-compact in time and $\tilde{q}_k^2(t) = \Omega_k^2(t)$ for $t < -\nu$. Thus, we have that 
\[
\eta_k^0(t) = \eta_k(t), \qquad \mathrm{for} \quad t<-\nu.
\]  
Similar to the $\mathrm{S}$ case it holds
\begin{equation*}
\eta_k(t) = \eta^{0}_k(t)+ \int_{-\infty}^t  \frac{e^{\mathrm{i} \int_{r}^t \tilde{q}_k(u) du}-e^{-\mathrm{i} \int_{r}^t \tilde{q}_k(u) du}}{2\sqrt{\tilde{q}_k(t)\tilde{q}_k(r)} \mathrm{i}}  \tilde{w}_k(r)\eta_k(r) dr,
\end{equation*}
that, when used recursively, gives the perturbative expansion of $\eta_k$ in terms of $\eta_k^0$, i.e.
\[
\eta_k = \sum_{n\geq 0} (({\tilde{g}_{\text{ret}}}_k) \tilde{w}_k)^n \eta_k^{0}.
\]
 
To prove that this perturbative series is absolutely convergent, we use estimates similar to those discussed for the $S$ modes. In particular, if $W$ is sufficiently small, we can find a strictly positive $\epsilon$ such that $\tilde{q}_k(t)\geq \epsilon>0$ and giving the uniform estimate $|{\tilde{g}_{PJ}}_k(t,t')| \leq \frac{1}{\epsilon}$. This leads to
\[
|\eta_k(t)-\eta^0_{k}| \leq \frac{1}{\sqrt{2 \epsilon}}
\left| e^{\int_{-\infty}^t \frac{|\tilde{w}_k(s)|}{\epsilon} ds } -1\right|.
\] 
To test this inequality in the limit $\nu\to \infty$, we analyse $\tilde{w}_k(t)$ given in \eqref{eq:modes-zeta-q}, and we observe that  
\begin{align*}
\tilde{w}_k(t) 
&=  \frac{1}{4}  \log(\tilde{q}_k^2(t))''-\frac{1}{16} (\log(\tilde{q}_k^2(t))')^2.
\end{align*}
Furthermore, $\tilde{q}_k^2$ as given in \Eq \eqref{eq:modes-eta-q} depends on $t$ through $\chi_\nu$, $\chi_\nu'$ and  $\chi_\nu''$.
Hence
\begin{align*}
\log(\tilde{q}_k^2(t))'
&=  
\sum_{\chi \in\{\chi_\nu,\chi_\nu',\chi_\nu''\}}
\frac{\partial}{\partial \chi} \log(\tilde{q}_k^2(t)) \chi'
\\
\log(\tilde{q}_k^2(t))''
&=  
\sum_{\chi \in\{\chi_\nu,\chi_\nu',\chi_\nu''\}}
\frac{\partial}{\partial \chi} \log(\tilde{q}_k^2(t)) \chi''
+
\sum_{\chi, \rho \in\{\chi_\nu,\chi_\nu',\chi_\nu''\}}
\frac{\partial}{\partial \chi}
\frac{\partial}{\partial \rho} \log(\tilde{q}_k^2(t)) \chi'\rho'.
\end{align*}
Recalling the form of $\tilde{q}_k^{2}$, we observe that  
all the derivatives of $\log(\tilde{q}_k^{2})$ with respect to $\chi$ are bounded functions and satisfy a bound uniform in $\nu,t,k$. The behaviour of the derivative of $\chi_\nu$ given in \eqref{eq:chi-bounds} gives the following uniform bound 
\[
\| \tilde{w}_k\|_1 \leq \frac{C}{\nu}
\]
valid for large $\nu$ and with $C$ uniform in $k$. With this bound at disposal, the series defining 
$\eta_k$ is absolutely convergent also in the limit $\nu\to\infty$. Furthermore, for $t>0$
\[
\lim_{\nu\to\infty} |\eta_k-\eta_k^0| = 0.
\] 

We have now enough information to compare $\bar{\bar{\omega}}_2(f,g)$, where now $f,g$ are compactly supported smooth functions with support in the future of the Cauchy surface at $t=0$, with the  Poincar\'e vacuum $\bar{\bar{\omega}}^\delta_2$ with $x_1$ rescaled by $\sqrt{1+W}$ and $x_2$ rescaled by $\sqrt{1-W}$. We have that
\begin{align*}
\bar{\bar{\omega}}_2(f,g) &= 
\frac{1}{(2\pi)^3}\iiint \frac{\overline{\eta_k(t)}}{(1-(\chi_\nu(t) W)^2)^{\frac{1}{4}}} \frac{\eta_{k}(t')}{{
(1-(\chi_\nu(t') W)^2)^{\frac{1}{4}}}}   e^{\mathrm{i} \mathbf{k} (\mathbf{x}-\mathbf{x}') }  
\overline{\hat{f}}(t,\mathbf{k})\hat{g}(t', \mathbf{k}) d^3\mathbf{k}
d^3\mathbf{k}\, dt\,  dt'
\\
\bar{\bar{\omega}}^\delta_2(f,g)&= 
\frac{1}{(2\pi)^3}\iiint \frac{\overline{\eta^0_k(t)}}{(1-(\chi_\nu(t) W)^2)^{\frac{1}{4}}} \frac{\eta^0_{k}(t')}{{
(1-(\chi_\nu(t') W)^2)^{\frac{1}{4}}}}   e^{\mathrm{i} \mathbf{k} (\mathbf{x}-\mathbf{x}') }  
\overline{\hat{f}}(t,\mathbf{k})\hat{g}(t', \mathbf{k}) d^3\mathbf{k}
d^3\mathbf{k}\, dt\,  dt'
\end{align*}
and thus 
\[
\lim_{\nu\to\infty}\left(\bar{\bar{\omega}}_2(f,g) -\bar{\bar{\omega}}_2^\delta(f,g)\right) = 0 .
\]
Therefore, for $t>0$, in the limits $\mathsf{k}\to 0$ and  $\nu \to \infty$ and at first order, we have that the perturbed two-point function gets coordinates $x_1$ and $x_2$ rescaled respectively by $\sqrt{1+W}$ and $\sqrt{1-W}$. In other words, the two-point function is the vacuum two-point function with Synge function for the perturbation \Eq \eqref{eq:perturbation-z} $\sigma^{\mathrm{TT}}$ at the place of $\sigma$. Namely
\[
\bar{\bar{\omega}}_2 = \frac{1}{4\pi^2} \frac{m}{\sqrt{2\sigma^{\mathrm{TT}}}}
K_1\left(m \sqrt{2\sigma^{\mathrm{TT}}}\right).
\]
where $\sigma^{\mathrm{TT}}$ for $t>0$ and $\mathsf{k}\to 0$ is
\begin{align*}
	\sigma^{\mathrm{TT}}(x',x) &= \sigma(x',x) + \frac{W}{2}\at (x_1' - x_1)^2 - (x_2' - x_2)^2 \ct \\
	&= \frac{1}{2} \at - (t' - t)^2 + (1+W) (x_1' - x_1)^2 + (1-W) (x_2' - x_2)^2 + (x_3' - x_3)^2\ct.
\end{align*}
As a consequence, where $\chi_{\nu} = 1$ and for $\mathsf{k}\to 0$, the Hadamard singularity looks like the Hadamard singularity on the background with $\sigma^{\mathrm{TT}}$ instead of $\sigma$. Moreover, by direct inspection, we get that the van Vleck-Morette determinant does not depend on $W$ and $[\mathsf{v}_0]=m^2/{16\pi^2}$ as well as $[\mathsf{v}_1]= m^{4}/{64\pi^2}$ alike to the Minkowski background.	Hence, the Hadamard parametrix in the perturbed spacetime is 
	\[
	\mathcal{H} =  \frac{1}{8\pi^2 \sigma^{TT}} + \left(\frac{m^2}{16 \pi^2} + \frac{m^4}{64 \pi^2} \sigma^{TT} + O({\sigma^{TT}}^2)  \right)\log (\sigma^{TT} \mu^2),
	\]
giving
\[
	\expval{\Wick{T_{ab}}}_{\bar{\bar\omega}}=
\left( 
-\frac{1}{64 \pi^2} \left(-\frac{3}{2}+ 2\gamma + \log \left(\frac{m^2}{2\mu^2}\right)
\right)
+
 \alpha_1
 \right) m^4(\eta_{ab}+ h_{ab} + O(W^2)).
\]
Finally, since $\alpha_1$ is chosen according to Proposition \ref{prop:background}, we have that $\expval{\Wick{T_{ab}}}_{\bar{\bar\omega}}=0$ and, recalling \Eq \eqref{eq:TabHtt-caso}, it implies that ${\tilde{\alpha}}_1^{\mathrm{TT}}=0$.
	\end{proof}

\section{Well-posedness of the Cauchy problem and asymptotic analysis}\label{se:prototype-and-solution}

Through the analysis presented in Section~\ref{sec3}, we have derived the linearised semiclassical Einstein–Klein–Gordon equations and demonstrated that they can be written in terms of the decoupled pair of nonlocal partial differential equations
\begin{align}\label{eq:semiclassical-sources-2}
   & \begin{cases}
 	&-\frac{1}{2 \kappa} \square \bar{h}_{ab}^{\mathrm{S}}
	= - \frac{1}{4} \cS \cK_0 [\bar{h}_{ab}^{\mathrm{S}}] +{\tilde{\alpha}}^{\mathrm{S}}_1 m^4 \bar{h}_{ab}^{\mathrm{S}}  -\frac{1}{2}
{\tilde{\alpha}}^{\mathrm{S}}_2
 m^2 \square \bar{h}_{ab}^{\mathrm{S}}+
 {\tilde{\alpha}}^{\mathrm{S}}_3 \square\square \bar{h}_{ab}^{\mathrm{S}}
 + S_{ab}^{\text{S}}\\
 &\bar{h}_{ab}^{\mathrm{S}}\vert_{\Sigma_{-\varepsilon}}= 0\\
 &\partial_t\bar{h}_{ab}^{\mathrm{S}}\vert_{\Sigma_{-\varepsilon}}=0\\
\end{cases}
\\
&\begin{cases}
 &  -\frac{1}{2 \kappa} \square \bar{h}_{ab}^{\mathrm{TT}}
		=\frac{1}{2} 
		\cT \cK_0 [\bar{h}_{ab}^{\mathrm{TT}}]+{\tilde{\alpha}}^{\mathrm{TT}}_1 m^4 \bar{h}_{ab}^{\mathrm{TT}} -\frac{1}{2}
{\tilde{\alpha}}^{\mathrm{TT}}_2 m^2 \square \bar{h}_{ab}^{\mathrm{TT}}+{\tilde{\alpha}}^{\mathrm{TT}}_4 \square\square \bar{h}_{ab}^{\mathrm{TT}}+S_{ab}^{\mathrm{TT}}\\
&\bar{h}_{ab}^{\mathrm{TT}}\vert_{\Sigma_{-\varepsilon}}  =0\\
&\partial_t\bar{h}_{ab}^{\mathrm{TT}}\vert_{\Sigma_{-\varepsilon}}=0
\end{cases}
\end{align}
with given past-compact sources $S_{ab}^{\text{S}}$ and $S_{ab}^{\text{TT}}$, supported in the future of $\Sigma_{-\varepsilon}$, where we decomposed the past-compact and divergence-free metric perturbation $\bar{h}_{ab}$, following the discussion of Section~\ref{se:metric-perturbation-decomposition}, into its scalar and transverse-traceless components, i.e.
\begin{align*}
\bar{h}_{ab}=\bar{h}_{ab}^{\mathrm{S}}+\bar{h}_{ab}^{\mathrm{TT}}\, .
\end{align*}
The linear operators $\mathcal{K}_0$, $\mathcal{S}$ and $\mathcal{T}$ appearing in the above equations are defined in Proposition~\ref{prop:non-local} and ${\tilde{\alpha}}^{\mathrm{S}}_i$ and ${\tilde{\alpha}}^{\mathrm{TT}}_i$ are renormalisation constants. Moreover, according to Theorem~\ref{thm:fixing-alpha1}, we recall that the \emph{principle of general local covariance} and the \emph{principle of perturbative agreement} imply that some renormalisation constants are already fixed by their background value, i.e.~it holds that
\begin{align*}
    {\tilde{\alpha}}^{\mathrm{S}}_1=\frac{1}{64 \pi^2}\,, \qquad {\tilde{\alpha}}^{\mathrm{TT}}_1=0\,.
\end{align*}

This section is devoted to the analysis of the linearised semiclassical equations as written in Eq.~\eqref{eq:semiclassical-sources-2}. We begin by observing that the equations governing $\bar{h}_{ab}^{\mathrm{S}}$ and $\bar{h}_{ab}^{\mathrm{TT}}$ share a common structural form. By examining the structure of the nonlocal operator $\mathcal{K}_{0}$, we shall first demonstrate that both equations can be recast into the following prototypical form:
\begin{equation}\label{eq:Proto}\iota_{\mathcal{M}}\tilde{\mathsf{G}}_{\mathrm{ret}}( \square (\square-a_1)(\square-a_2)\phi\otimes \check\rho) +\sum_{j=0}^{2}b_j \square^j \phi=S \,,\end{equation}
where $S$ is some past-compact source, where $a_{i}$ and $b_{i}$ are real constants, where $\tilde{\mathsf{G}}_{\mathrm{ret}}$ is the retarded fundamental solution of the Klein-Gordon operator $\tilde{\square}-4m^{2}$ on \emph{$5$-dimensional Minkowski spacetime} $\mathcal{M}_{5}:=\mathcal{M}\times\mathbb{R}$ and where $\iota_{\mathcal{M}}$ realises the restriction to $\mathcal{M}\cong\mathcal{M}\times\{0\}\subset\mathcal{M}_{5}$. The functions $\rho$ and $\check{\rho}$ have been defined in Eq.~\eqref{eq:nonlocalK0} and Eq.~\eqref{eq:rhocheck}, respectively.

As written, the analysis of the prototypical equations~\eqref{eq:Proto} is challenging, since the highest-derivative term is governed by a \emph{nonlocal} operator, thereby precluding the direct application of standard techniques for local partial differential equations. The central idea of the following analysis, therefore, is to suitably \emph{invert} the nonlocal contribution, thereby recasting the equation into a form that is more amenable for an analysis of its Cauchy problem and asymptotic behaviour to be performed.

The analysis will be performed in two steps. First of all, by introducing suitable auxiliary functions $\mathfrak{h}_{j}$ for $j=0,1,2$, we will rewrite the prototypical \Eq~\eqref{eq:Proto} in the form 
\begin{align}\label{eq:Proto2}
\iota_{\mathcal{M}}\tilde{\mathsf{G}}_{\mathrm{ret}} ((\square^3 \phi+\beta_2\square^{2} \phi + \beta_1 \square \phi +\beta_0\phi) \otimes ( \check \rho+b_2\mathfrak{h}_2 )) =S
\end{align}
for suitable constants $\beta_{i}$. Subsequently, we show that, provided that $\check \rho+b_2\mathfrak{h}_2$ satisfies suitable regularity and positivity conditions, which ultimately will translate into restrictions on the parameters $a_i$ and $b_i$, the operator $\iota_{\mathcal{M}}\tilde{\mathsf{G}}_{\mathrm{ret}}$ becomes invertible, thereby leading to a reformulation of the equation into a fully local problem. As a second step, we show that a similar procedure can actually always been applied to the highest-order terms without any restrictions on the parameters $b_{i}$, hence leading to a 6th order hyperbolic equation with \emph{subleading} nonlocal contributions. More precisely, we will show that Eq.~\eqref{eq:Proto2} can equivalently be written in the form
\begin{align*}
\square^3 \phi+\beta_2\square^{2} \phi + \beta_1 \square \phi +\beta_0\phi
+
\mathcal{G}_{\varsigma}^{-1}( 
(\tilde{b}_1-b_1) \square \phi + 
(\tilde{b}_0-b_0) \phi)
=
\mathcal{G}_{\varsigma}^{-1}(S)\,,
\end{align*}
where $\mathcal{G}_{\varsigma}(\phi):=\iota_{\mathcal{M}}\tilde{\mathsf{G}}_{\mathrm{ret}}(\phi\otimes\check{\varsigma})$ with $\check{\varsigma}:=\check{\varrho}+b_{2}\mathfrak{h}_{2}$ denotes the nonlocal operator appearing in Eq.~\eqref{eq:Proto2} and where $\tilde{b}_{i}$ are suitable real constants.

Having recast the system in this form, we establish the well-posedness of the Cauchy problem by employing perturbation theory, treating the nonlocal potential as a perturbative term. Altogether, this leads to the first principal result of this section, Theorem~\ref{thm:convergence}, the well-posedness of the Cauchy problem for the linearised semiclassical Einstein-Klein-Gordon system in Minkowski spacetime.

In the last part of this section, we then perform an asymptotic analysis by studying the large-time behaviour of solutions to the prototypical semiclassical equations~\eqref{eq:Proto}. This is the second main result of this section, Theorem~\ref{thm:stability}, in which we show that for some ranges of parameters $a_{i}$ and $b_{i}$, solutions are bounded and decay at most as $t^{-\frac{3}{2}}$, while for some others we obtain solutions with exponential growth. The implications of this analysis of the toy model~\eqref{eq:Proto} when applied to the semiclassical Einstein equations are discussed in Section~\ref{sec:5}.

\subsection{The form of the linearised nonlocal semiclassical Einstein equations}

To examine the structure of the obtained equations~\eqref{eq:semiclassical-sources-2}, we recall from Proposition \ref{prop:non-local} that the operator $\cK_0$, acting on past-compact smooth functions, i.e.~$\cK_0:C^{\infty}_{\mathrm{pc}}(\mathcal{M})\to C^{\infty}_{\mathrm{pc}}(\mathcal{M})$, is given by
\begin{equation}\label{eq:nonlocalK0}
\cK_0 [\phi] = \square \int_{4m^2}^\infty \rho(M) \mathsf{G}_{\mathrm{ret}}^{M}(\phi) dM , \qquad 
\rho(M)= \frac{1}{16\pi^2} \sqrt{1-\frac{4m^2}{M}} \frac{1}{M}\,,
\end{equation}
where $\mathsf{G}_{\mathrm{ret}}^M$ is the retarded fundamental solution of the Klein-Gordon equation of mass square $M$, namely,
\[
\mathsf{G}_{\mathrm{ret}}^M:C^{\infty}_{\mathrm{pc}}(\mathcal{M})\to C^{\infty}_{\mathrm{pc}}(\mathcal{M})\,, \qquad (\square -M)\mathsf{G}_{\mathrm{ret}}^M(\phi) = \phi\,.
\]
We shall show now that the nonlocal operator present in $\cK_0$ can be understood as a restriction to four dimensions of a retarded operator in a five-dimensional spacetime. To this end, we consider the five-dimensional Minkowski space 
\[
\mathcal{M}_5 := \mathcal{M} \times \mathbb{R}
\]   
and the retarded propagator of the Klein-Gordon equation with mass $2m$ on past-compact functions $\tilde{\mathsf{G}}_{\mathrm{ret}}:C^\infty_{\mathrm{pc}}(\mathcal{M}_5)\to C^\infty_{\mathrm{pc}}(\mathcal{M}_5)$. Hence, denoting by $\tilde{\square}$ the d'Alembert operator on $\mathcal{M}_5$, it is such that
\[
(\tilde\square - 4m^2) \tilde{\mathsf{G}}_{\mathrm{ret}} \tilde\phi = \tilde\phi, \qquad \tilde{\phi}\in  C^\infty_{\mathrm{pc}}(\mathcal{M}_5).
\]
To start with, we discuss suitable mapping properties of the five-dimensional retarded operator $\tilde{\mathsf{G}}_{\mathrm{ret}}$.

\begin{proposition}\label{prop:extension}
The propagator $\tilde{\mathsf{G}}_{\mathrm{ret}}$ can be linearly extended to the domain $\mathcal{D} := C^\infty_{\mathrm{pc}}(\mathcal{M})\otimes L^2(\mathbb{R})$. Moreover, it holds that
\begin{align*}
    \tilde{\mathsf{G}}_{\mathrm{ret}} u\in C^{\infty}_{\mathrm{pc}}(\mathcal{M})\otimes C^{0}(\mathbb{R})\,,\qquad\forall u\in\mathcal{D}\, .
\end{align*}
In particular, $(\tilde{\mathsf{G}}_{\mathrm{ret}} u)(x,z)$ with $(x,z)\in\mathcal{M}\times\mathbb{R}$ is continuous in $z$ for any $u\in\mathcal{D}$.
\end{proposition}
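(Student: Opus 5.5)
Since the extra variable $z$ is a flat, Killing direction of $\mathcal{M}_5$, the natural route is a partial Fourier transform in $z$. For an elementary tensor $u=\phi\otimes f$ with $\phi\in C^\infty_{\mathrm{pc}}(\mathcal{M})$ and $f\in L^2(\mathbb{R})$, the Fourier variable $k$ turns $\tilde\square-4m^2$ into $\square-(4m^2+k^2)$. This suggests the definition
\[
(\tilde{\mathsf{G}}_{\mathrm{ret}}(\phi\otimes f))(x,z):=\frac{1}{2\pi}\int_{\mathbb{R}} e^{\mathrm{i}kz}\,\hat f(k)\,\big(\mathsf{G}^{4m^2+k^2}_{\mathrm{ret}}\phi\big)(x)\,dk,
\]
extended linearly to $\mathcal{D}$. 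The first step is to check that this agrees with the honest five-dimensional retarded propagator when $f\in C^\infty_c(\mathbb{R})$; there $\phi\otimes f$ is past-compact in $\mathcal{M}_5$. The check uses uniqueness of past-compact solutions: the right-hand side is past-compact in the $\mathcal{M}$-variable (uniformly in $k$) and, by differentiating under the integral, solves $(\tilde\square-4m^2)v=\phi\otimes f$. Linearity in $f$ then makes the extension well defined on the algebraic tensor product.

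The second step is regularity. We fix a time slab $t\le T$ and a compact spatial set $K$. By finite propagation speed, $\mathsf{G}^{M}_{\mathrm{ret}}\phi$ restricted to $(-\infty,T]\times K$ depends only on $\phi$ on a compact set $J^-(\,[\,\cdot\,,T]\times K)\cap\operatorname{supp}\phi$, which is independent of $M$. The key estimate needed is that, for every multi-index $\alpha$,
\[
\sup_{(-\infty,T]\times K}\big|\partial^\alpha \mathsf{G}^{M}_{\mathrm{ret}}\phi\big| \le C_{\alpha,T,K}(\phi),
\]
uniformly in $M\ge 4m^2$ (or with at most polynomial growth in $M$). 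I would prove this by standard energy estimates for $(\square-M)v=\phi$ with zero data in the past. Since the mass term enters the energy with a favourable sign, the energy bound is uniform in $M$. Commuting with $\partial^\alpha$ and applying Sobolev embedding then gives pointwise bounds on derivatives.

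Granting this bound, the conclusion follows. Since $\hat f\in L^2$ but not necessarily $L^1$, I would pair one factor $\hat f(k)$ with a decay factor extracted from the propagator. Writing $\mathsf{G}^{M}_{\mathrm{ret}}\phi = M^{-1}\big(\square\mathsf{G}^{M}_{\mathrm{ret}}\phi-\phi\big)$, the term $-M^{-1}\phi$ contributes $\phi\otimes\mathcal{F}^{-1}(\hat f/(4m^2+k^2))$, which is continuous in $z$. The remaining term carries a $(4m^2+k^2)^{-1}$ weight, and $(4m^2+k^2)^{-1}$ lies in $L^2(dk)$. Cauchy–Schwarz in $k$ then gives absolute and locally uniform convergence of the integral and of all its $x$-derivatives. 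Joint smoothness in $x$ therefore follows by dominated convergence, and continuity in $z$ follows from the Riemann–Lebesgue / dominated convergence argument for $e^{\mathrm{i}kz}$. Past-compactness in $x$ is inherited from the $M$-independent support of $\mathsf{G}^{M}_{\mathrm{ret}}\phi$.

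The main obstacle is the uniform-in-$M$ derivative estimate, together with the fact that $\hat f$ is only in $L^2$. If uniformity fails for higher derivatives, the fallback is to absorb polynomial growth in $M$ by applying the identity $\mathsf{G}^{M}_{\mathrm{ret}}=M^{-1}(\square\mathsf{G}^{M}_{\mathrm{ret}}-1)$ repeatedly. Each application gains a factor $M^{-1}$ at the cost of derivatives of $\phi$, which is harmless since $\phi$ is smooth. The resulting statement is that $\tilde{\mathsf{G}}_{\mathrm{ret}}u\in C^\infty_{\mathrm{pc}}(\mathcal{M})\otimes C^0(\mathbb{R})$, understood as smooth past-compact in $x$ and continuous in $z$.
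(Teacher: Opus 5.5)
Your argument is correct, and its core matches the paper's. There too, $\tilde{\mathsf{G}}_{\mathrm{ret}}(f\otimes g)(t,x,\cdot)$ is written as a convolution of $g$ with a kernel $K^{(t,x)}_f$ whose Fourier transform in $z$ is, up to normalisation, $(\mathsf{G}^{4m^2+p_z^2}_{\mathrm{ret}}f)(t,x)$. This is done after inserting a cutoff equal to $1$ on $J^-(t,x)$, just like your finite-propagation-speed localisation. Continuity in $z$ and the extension to $L^2$ then come from Cauchy--Schwarz against a weight that is square integrable in $p_z$.

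The differences lie in the two technical inputs.

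For the decay in $p_z$, the paper simply bounds the explicit Fourier kernel, $|\sin(\omega(t-t'))/\omega|\le\omega^{-1}\le(p_z^2+4m^2)^{-1/2}$, which is already in $L^2(dp_z)$. You instead use $\mathsf{G}^M_{\mathrm{ret}}=M^{-1}(\mathsf{G}^M_{\mathrm{ret}}\square-1)$ together with mass-uniform energy estimates. This gives the stronger $M^{-1}$ decay, at the price of proving those estimates. They do hold, since the mass term has the favourable sign and $\partial^\alpha$ commutes with the constant-coefficient operator, so your fallback is not needed.

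For smoothness in $(t,x)$, the paper argues microlocally via H\"ormander's Theorem 8.2.13, showing that the wavefront set lies in $\{(t,x,z;0,0,p_z)\}$. You instead differentiate under the $k$-integral, with dominating functions supplied by the uniform estimates.

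Your route is more elementary and gives more. It yields joint continuity in $(x,z)$ of every $x$-derivative, which the wavefront-set inclusion alone does not give. Those covectors are exactly the conormals of the slices $z=\mathrm{const}$, so pointwise control in $z$ must come from the $L^2$ argument anyway. Your route also does not depend on an explicit flat-space kernel. The paper's route is shorter once the explicit Fourier representation is at hand.
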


\begin{proof}
To start with, we recall that if $u$ is both past- and spatially compact, i.e.~$u\in C^{\infty}_{\mathrm{pc}}(\mathcal{M}_5) \cap C^{\infty}_{\mathrm{sc}}(\mathcal{M}_5)$, the  operator $\tilde{\mathsf{G}}_{\mathrm{ret}}$, i.e.~the retarded operator of the Klein-Gordon equation of mass $2m$ on $\mathcal{M}_5$, is most conveniently expressed using the ordinary spatial Fourier transform:
\begin{equation}\label{eq:retarded-spectral}
\tilde{\mathsf{G}}_{\mathrm{ret}} u(t,x,z) = -\iiint \theta(t-t') \frac{\sin (\omega(\mathbf{p},p_z)(t-t'))}{\omega(\mathbf{p},p_z)}  \tilde{u}(t',\mathbf{p},p_z)        e^{-\mathrm{i}\mathbf{p} x } e^{-\mathrm{i}p_z z} \,dt'\,d^3\mathbf{p}\, dp_z\,,
\end{equation}
where $\tilde{u}(t,\mathbf{p},p_z) = \int e^{\mathrm{i}\mathbf{p} x } e^{\mathrm{i}p_z z} u(t,x,z)\,d^3xdz$ and where $\omega(\mathbf{p},p_z) = \sqrt{\mathbf{p}^2+p_z^2+4m ^2}$. 
This expression can be extended to the case of $u\in C^\infty_{\mathrm{pc}}(\mathcal{M}_5)$ exploiting the causal properties of the retarded propagator.

Indeed, for any $(t,x,z) \in \mathcal{M}_5$, we can find a spatially compact smooth function $\chi_{(t,x,z)}$ such that $\chi_{(t,x,z)}=1$ on $J^{-}(t,x,z)$. Then, by the causal properties of $\tilde{\mathsf{G}}_{\mathrm{ret}}$, we have for any $u\in C^{\infty}_{\mathrm{pc}}(\mathcal{M}_5)$ that
\[
\tilde{\mathsf{G}}_{\mathrm{ret}}(u)(t,x,z) = \tilde{\mathsf{G}}_{\mathrm{ret}}(u\chi_{(t,x,z)})(t,x,z)\,.
\]
Moreover, $u\chi_{(t,x,z)}$ is in $C^{\infty}_{\mathrm{pc}}(\mathcal{M}_5)\cap C^{\infty}_{\mathrm{sc}}(\mathcal{M}_5)$ and we can apply the expression of $\tilde{\mathsf{G}}_{\mathrm{ret}}$ given in Eq.~\eqref{eq:retarded-spectral}.

To prove that we can extend $\tilde{\mathsf{G}}_{\mathrm{ret}}$ to $\mathcal{D}$, we consider
$ f\in C^{\infty}_{\mathrm{pc}}(\mathcal{M})\cap C^{\infty}_{\mathrm{sc}}(\mathcal{M})$ and use it to construct the function
\[
K_f^{(t,x)}(z) := -
\iiint \theta(t-t') \frac{\sin (\omega(\mathbf{p},p_z)(t-t'))}{\omega(\mathbf{p},p_z)}  \tilde{f}(t',\mathbf{p})         e^{-\mathrm{i}\mathbf{p} x } e^{-\mathrm{i}p_z z}  \,dt' \,d^3\mathbf{p} \,dp_z\,, \qquad 
\] 
where, as before, $\omega(\mathbf{p},p_z) = \sqrt{\mathbf{p}^2+p_z^2+4m^2}$ and now
\[
\tilde{f}(t,\mathbf{p}) = \int f(t,x) e^{\mathrm{i}\mathbf{p}x}\, dx\,.
\]
For any $(t,x)\in\mathcal{M}$, the function $K_f^{(t,x)}$ is square-integrable in $z$, because $\tilde{f}(t,\mathbf{p})$ is square-integrable in $\mathbf{p}$ and $\omega(\mathbf{p},p_z)^{-1}$ is square-integrable in $p_z$ with respect to the Lebesgue measure, with a uniformly bounded $L^2$-norm for every value of $\mathbf{p}$.
Hence, by convolution, it
defines a map from $L^2(\mathbb{R},dz)$ to $C^{0}(\mathbb{R})$ given by
\[
\mathcal{K}^{(t,x)}_f(g) := K^{(t,x)}_f * g\,.
\]
To prove continuity in $z$, we observe that $\mathcal{K}^{(t,x)}_f(g)(z)= \langle  K^{(t,x)}_f(z-\cdot)  , g \rangle$. Hence, 
\[
|\mathcal{K}^{(t,x)}_f(g)(z)|\leq \|{K}^{(t,x)}_f(z-\cdot)\|_2 \|g\|_2 =\|{K}^{(t,x)}_f\|_2 \|g\|_2
\]
and, for any $g$, we conclude that the map $z\mapsto g(z-\cdot)$ is a continuous map from $\mathbb{R}$ to $L^2(\mathbb{R})$. 

Consider now a compactly supported smooth function $g\in C^{\infty}_{\mathrm{c}}(\mathbb{R})$ and $f \in  C^{\infty}_{\mathrm{pc}}(\mathcal{M})$
and use them to build 
\[
u(t,x,z) = f(t,x) g(z)\,,
\]
which is a past-compact smooth function defined on $\mathcal{M}_5$. Moreover, consider a point $(t,x)\in \mathcal{M}$. Then, as explained above, we can find a spatially compact function $\chi$ defined on $\mathcal{M}$, which is $1$ on $J^{-}(t,x)$. 
This leads, by direct inspection, to 
\[
\tilde{\mathsf{G}}_{\mathrm{ret}}(u )(t,x,z)=\tilde{\mathsf{G}}_{\mathrm{ret}}(f\otimes g )(t,x,z) =      \tilde{\mathsf{G}}_{\mathrm{ret}}(\chi f\otimes g )(t,x,z) = \mathcal{K}^{(t,x)}_{\chi f}(g)(z) 
\]
and the right-hand side is continuous in $z$ and smooth in $(t,x)$. Furthermore,
\[
|\tilde{\mathsf{G}}_{\mathrm{ret}}(f\otimes g)(t,x,z)|\leq  \| K^{(t,x)}_{\chi f}\|_2 \|g\|_2.
\]
We can thus extend $\tilde{\mathsf{G}}_{\mathrm{ret}}(f\otimes g)$ to $\mathcal{D}$ keeping continuity in $z$, by density of $C_{\mathrm{c}}^{\infty}(\mathbb{R})$ in $L^2(\mathbb{R})$.

To prove that $\tilde{\mathsf{G}}_{\mathrm{ret}}(f\otimes g)$ is smooth in $(t,x)$, we analyse its wavefront set for $f\otimes g\in\mathcal{D}$. 
Since $f$ is smooth, we have by Theorem 8.2.9 in \cite{Hormander} that
\begin{equation*}
    \mathrm{WF}(f\otimes g) \subset \{(t,x,z;0,0,p_z) \in T^{*}\mathcal{M}_5\mid p_z \in \mathbb{R} \backslash \{ 0 \} \}\,.
\end{equation*}
Furthermore, the singularity of the $5$-dimensional retarded propagator is captured by
\begin{equation*}
    \mathrm{WF}(\tilde{\mathsf{G}}_{\mathrm{ret}}) = \{(\tilde{x},\tilde{y}; \tilde{k},\tilde{p})\subset T^{*}(\mathcal{M}_5 \times \mathcal{M}_5) \setminus \{ 0 \} \mid (\tilde{x},\tilde{k})\sim (\tilde{y},-\tilde{p}) \}  \cup \{(\tilde{x},\tilde{x}; \tilde{k},-\tilde{k})\subset T^{*}(\mathcal{M}_5 \times \mathcal{M}_5)\mid \tilde{k}\neq 0 \} \,,
\end{equation*}
where $(\tilde{x},\tilde{k})\sim (\tilde{y},-\tilde{p})$ holds if it exists an affinely parametrised null geodesic $\gamma$ in $\mathcal{M}_5$ which connects $\tilde{x}$ and $\tilde{y}$ and such that its tangent vector $\dot\gamma$ equals to $\eta^{-1} \tilde{k}$ in $\tilde{x}$ and $-\eta^{-1} \tilde{p}$ in $\tilde{y}$. Then, by the support properties of the retarded propagator and up to the introduction of a suitable cutoff function similar to the one used above to make $f\otimes g$ of compact support, we can apply Theorem 8.2.13 of H\"ormander \cite{Hormander} as (using the notation introduced therein) $\mathrm{WF}'(\tilde{\mathsf{G}}_{\mathrm{ret}})_Y=\emptyset$. 
The very same theorem implies that 
\begin{align*}
\mathrm{WF}(\tilde{\mathsf{G}}_{\mathrm{ret}}(f\otimes g)) 
&\subset \mathrm{WF}'(\tilde{\mathsf{G}}_{\mathrm{ret}}) \circ \mathrm{WF}(f\otimes g) 
\\
&= \{(\tilde{x},\tilde{k})\in T^{*}\mathcal{M}_5 | (\tilde{x},\tilde{y},\tilde{k},-\tilde{p})\in \mathrm{WF}(\tilde{\mathsf{G}}_{\mathrm{ret}}) \text{ for some } (\tilde{y},\tilde{p})\in \mathrm{WF}(f\otimes g)\}.
\end{align*}
In view of the form of the wavefront sets of $\tilde{\mathsf{G}}_{\mathrm{ret}}$ and of $f\otimes g$, we have that  
\[
\mathrm{WF}(\tilde{\mathsf{G}}_{\mathrm{ret}}(f\otimes g)) \subset \{(t,x,z;0,0,p_z)\in T^{*}\mathcal{M}_5 \backslash \{0\}\}\,,
\]
thus concluding the proof.
\end{proof}
The following proposition permits now to rewrite the nonlocal operation present in the semiclassical Einstein equations in a suitable form: 

\begin{proposition}\label{prop:nonlocal-ret5d}
Consider the map $\rho$ given in Eq.~\eqref{eq:nonlocalK0}. Then, for every $\phi\in C^{\infty}_{\mathrm{pc}}(\mathcal{M})$, we have that
\begin{equation}\label{eq:operatorIG}
\int_{4m^2}^\infty   \rho(M) \mathsf{G}_{\mathrm{ret}}^M(\phi)   dM  = \iota_{\mathcal{M}}
\tilde{\mathsf{G}}_{\mathrm{ret}}
(\phi\otimes \check{\rho}) \,,
\end{equation}
where $\iota_{\mathcal{M}}:C^{\infty}_{\mathrm{pc}}(\mathcal{M}_5)\to C^{\infty}_{\mathrm{pc}}(\mathcal{M})$ realises the restriction to $\mathcal{M}$ seen as a suitable subspace of $\mathcal{M}_5$, i.e.
\[
(\iota_{\mathcal{M}} \tilde{\phi})(x) = \tilde{\phi}(x,0)\, , \qquad \forall\tilde{\phi}\in C^{\infty}_{\mathrm{pc}}(\mathcal{M}_5)\,,
\]
and where $\check{\rho}$ denotes the inverse Fourier-Plancharel transform of  ${\rho}({p^2+4m^2}) |p|$, i.e.
\begin{equation}\label{eq:rhocheck}
\check{\rho}(z) = \int_\mathbb{R}   {\rho}({p_z^2+4m^2}) |p_z| e^{\mathrm{i}p_zz }\,dp_z\, .
\end{equation}
\end{proposition}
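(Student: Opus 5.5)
The plan is to work in spatial Fourier variables, exactly as in the proof of Proposition~\ref{prop:extension}, and to use the change of variables $M = p_z^2+4m^2$. By the causal localisation argument of that proof, it suffices to verify the identity pointwise at a fixed $(t,x)$. We may therefore replace $\phi$ by $\chi\phi$ with $\chi$ spatially compact and equal to $1$ on $J^-(t,x)$. Both sides are retarded, since $\mathsf{G}^M_{\mathrm{ret}}$ for every $M$ and $\tilde{\mathsf{G}}_{\mathrm{ret}}$ are retarded. Both sides are also well defined for $\phi\in C^\infty_{\mathrm{pc}}(\mathcal{M})$. For the right-hand side this holds because $\check\rho\in L^2(\mathbb{R})$: indeed $\rho(p_z^2+4m^2)|p_z|\sim |p_z|^{-1}$ at infinity and is bounded near $0$, hence square integrable. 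Proposition~\ref{prop:extension} then applies, and $\iota_{\mathcal{M}}$ makes sense thanks to continuity in $z$.

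First, the left-hand side. For $\phi$ past- and spatially compact,
\[
\mathsf{G}^M_{\mathrm{ret}}\phi(t,x) = -\iint\theta(t-t')\frac{\sin(\omega_M(\mathbf{p})(t-t'))}{\omega_M(\mathbf{p})}\tilde\phi(t',\mathbf{p})e^{-\mathrm{i}\mathbf{p}x}\,dt'\,d^3\mathbf{p},
\]
with $\omega_M(\mathbf{p})=\sqrt{\mathbf{p}^2+M}$, up to the normalisation convention of \eqref{eq:retarded-spectral}. We integrate this against $\rho(M)\,dM$ over $[4m^2,\infty)$ and substitute $M=p_z^2+4m^2$ for $p_z\ge 0$, so that $dM = 2p_z\,dp_z$ and $\omega_M(\mathbf{p})=\omega(\mathbf{p},p_z)$. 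Symmetrising over $p_z\in\mathbb{R}$ turns $2p_z\,dp_z$ on the half-line into $|p_z|\,dp_z$ on the whole line. The result is
\[
-\iiint\theta(t-t')\frac{\sin(\omega(\mathbf{p},p_z)(t-t'))}{\omega(\mathbf{p},p_z)}\,\rho(p_z^2+4m^2)|p_z|\,\tilde\phi(t',\mathbf{p})e^{-\mathrm{i}\mathbf{p}x}\,dt'\,d^3\mathbf{p}\,dp_z .
\]

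Second, the right-hand side. For $u=\phi\otimes\check\rho$, the $z$-Fourier transform of $\check\rho$ is $2\pi\,\rho(p_z^2+4m^2)|p_z|$, up to the sign and normalisation conventions, which are matched by the same conventions used in \eqref{eq:retarded-spectral} and \eqref{eq:rhocheck}. Evaluating \eqref{eq:retarded-spectral} at $z=0$ removes the factor $e^{-\mathrm{i}p_z z}$ and reproduces the expression above. I would first check the identity with $\check\rho$ replaced by Schwartz approximants $g_k\to\check\rho$ in $L^2$, where \eqref{eq:retarded-spectral} applies literally. I would then pass to the limit: on the right via the bound $|\tilde{\mathsf{G}}_{\mathrm{ret}}(f\otimes g)(t,x,z)|\le\|K_{\chi f}^{(t,x)}\|_2\|g\|_2$ from Proposition~\ref{prop:extension}, and on the left via the corresponding $L^2$ convergence of $\hat g_k|p_z|^{-1}$-type weights, using Cauchy--Schwarz in $p_z$ against $\omega^{-1}$.

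The main obstacle is justifying the interchange of the $M$ (equivalently $p_z$) integral with the $\mathbf{p}$ and $t'$ integrals. The density $\rho(M)\sim M^{-1}$ is not integrable, so $\int\rho(M)\mathsf{G}^M_{\mathrm{ret}}\phi\,dM$ converges only because of the extra decay $\omega_M^{-1}\sim M^{-1/2}$. I would handle this with Fubini. On $[t_0,t]\times\mathbb{R}^3\times\mathbb{R}$ the integrand is bounded by $\rho(p_z^2+4m^2)|p_z|\,\omega^{-1}|\tilde\phi(t',\mathbf{p})|$. This is integrable in $p_z$ uniformly in $\mathbf{p}$, since it behaves like $|p_z|^{-2}$. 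It is also integrable in $(t',\mathbf{p})$, because $\tilde\phi$ is Schwartz in $\mathbf{p}$ and the $t'$ range is compact by past-compactness. Under this bound the two computations coincide, which gives \eqref{eq:operatorIG}.
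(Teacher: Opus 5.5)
Your proposal is correct and follows the same route as the paper. You write both sides in spatial Fourier variables, substitute $M=p_z^2+4m^2$ and symmetrise $2p_z\,dp_z$ into $|p_z|\,dp_z$ on $\mathbb{R}$, recognise the five-dimensional retarded propagator evaluated at $z=0$, and use Proposition~\ref{prop:extension} to place $\phi\otimes\check\rho$ in the extended domain and give $\iota_{\mathcal{M}}$ a meaning; your causal cutoff, the Fubini bound $\rho(p_z^2+4m^2)|p_z|\,\omega^{-1}\lesssim |p_z|^{-2}$, and the $L^2$-density argument for $\check\rho$ just make explicit the interchange of integrals and the extension step that the paper leaves implicit.
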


\begin{remark}\label{Remark:nonlocal-ret5d}
    In \Eq~\eqref{eq:operatorIG}, we note that ${\rho}({p^2+4m^2}) |p|$ is square-integrable, which implies that also $\check{\rho}\in L^{2}(\mathbb{R})$. Moreover, notice that $f\otimes \check\rho$ is past-compact and sufficiently regular to be in the extended domain $\mathcal{D}$ of $\tilde{\mathsf{G}}_{\mathrm{ret}}$ introduced in Proposition~\ref{prop:extension}. Last but not least, $\tilde{\mathsf{G}}_{\mathrm{ret}}(\phi\otimes \check{\rho})$ is at least continuous and past-compact and hence, the action of the restriction operator $\iota_{\mathcal{M}}$ is well-defined.
\end{remark}

\begin{proof}[Proof of Proposition~\ref{prop:nonlocal-ret5d}.]
Consider a function $f\in C^{\infty}_{\mathrm{sc}}(\mathcal{M})\cap C^{\infty}_{\mathrm{pc}}(\mathcal{M})$. Then,
writing $\omega_M(\mathbf{p})=\sqrt{\mathbf{p}^2+M}$ and $\omega(\mathbf{p})=\sqrt{\mathbf{p}^2+p_z^2+4m^2}$, as usual, and denoting by $\tilde{f}$ the spatial Fourier transform of $f$, we have that 
\begin{align*}
    \int\limits_{4m^2}^{\infty} \int {\mathsf{G}}_{\mathrm{ret}}^M(x-y) &f(y) \rho(M)\,  d y \, d M 
    =     
    -\int\limits_{4m^2}^{\infty}  \rho(M) \int\limits_{\mathbb{R}^3}\int\limits_{-\infty}^{t}   \frac{\sin (\omega_M(\mathbf{p})(t-t'))}{\omega_M(\mathbf{p})}  \tilde{f}(t',\mathbf{p})         e^{-\mathrm{i}\mathbf{p} x } \, dt' \,d^3\mathbf{p} \,dM, 
    \\
    &=     
    -\int\limits_{-\infty}^{\infty}   \int\limits_{\mathbb{R}^3}\int\limits_{-\infty}^{t} \frac{\sin (\omega(\mathbf{p},p_z)(t-t'))}{\omega(\mathbf{p},p_z)}  \tilde{f}(t',\mathbf{p}) \rho(p_z^2+4m^2)|p_z|         e^{-\mathrm{i}\mathbf{p} x }\,  dt' \,d^3\mathbf{p} \,d p_z,    
    \\
    &= 
    \tilde{\mathsf{G}}_{\mathrm{ret}}
    (f \otimes \check{\rho})\big|_{z = 0}.
\end{align*}
As explained in Remark~\ref{Remark:nonlocal-ret5d}, $\check{\rho}$ is a square-integrable function, hence $f \otimes \check{\rho} \in \mathcal{D}$, where $\mathcal{D}$ is the domain introduced in Proposition \ref{prop:extension}. Moreover, by the very same proposition, $\Tilde{\mathsf{G}}_{\mathrm{ret}}$ can be applied on $f \otimes \check{\rho}$ and gives origin to a continuous function in $z$. In particular, the restriction to $z=0$ can be taken. 
\end{proof}

As an immediate consequence of Proposition \ref{prop:nonlocal-ret5d}, we obtain the following result, which allows us to express the nonlocal operator $\mathcal{K}_0$ introduced in Eq.~\eqref{eq:nonlocalK0}, which appears in the linearised semiclassical Einstein-Klein-Gordon equations (see Proposition~\ref{prop:non-local}), in terms of the 5-dimensional retarded propagator $\tilde{\mathsf{G}}_{\mathrm{ret}}$.

\begin{corollary}\label{cor: formaK0}
Consider the nonlocal operator $\mathcal{K}_0$ introduced in Eq.~\eqref{eq:nonlocalK0}. Then, for all $\phi\in C^\infty_{\mathrm{pc}}(\mathcal{M})$, it holds that
\[
\mathcal{K}_0(\phi) = \iota_{\mathcal{M}}
\tilde{\mathsf{G}}_{\mathrm{ret}}(\square\phi\otimes \check{\rho})\,,
\]
where $\rho$ and $\check{\rho}$ are defined in Eq.~\eqref{eq:nonlocalK0} and Eq.~\eqref{eq:rhocheck}, respectively.
\end{corollary}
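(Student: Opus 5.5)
The plan is to combine the integral representation of $\mathcal{K}_0$ in Eq.~\eqref{eq:nonlocalK0} with Proposition~\ref{prop:nonlocal-ret5d}, and then to move the d'Alembertian $\square$ inside the five-dimensional retarded operator. By definition,
\[
\mathcal{K}_0[\phi]=\square\int_{4m^2}^\infty \rho(M)\,\mathsf{G}^M_{\mathrm{ret}}(\phi)\,dM .
\]
Proposition~\ref{prop:nonlocal-ret5d}, applied to $\phi\in C^\infty_{\mathrm{pc}}(\mathcal{M})$, identifies the integral with $\iota_{\mathcal{M}}\tilde{\mathsf{G}}_{\mathrm{ret}}(\phi\otimes\check\rho)$. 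It therefore remains to show
\[
\square\,\iota_{\mathcal{M}}\tilde{\mathsf{G}}_{\mathrm{ret}}(\phi\otimes\check\rho)=\iota_{\mathcal{M}}\tilde{\mathsf{G}}_{\mathrm{ret}}(\square\phi\otimes\check\rho).
\]

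This commutation is cleanest one step earlier, at the level of each Klein--Gordon retarded operator. Since $\mathsf{G}^M_{\mathrm{ret}}$ is translation invariant on $C^\infty_{\mathrm{pc}}(\mathcal{M})$, it commutes with $\partial_a$ and hence with $\square$. Concretely, $\square\,\mathsf{G}^M_{\mathrm{ret}}\phi$ and $\mathsf{G}^M_{\mathrm{ret}}\square\phi$ are both past-compact solutions of $(\square-M)u=\square\phi$, so they coincide by uniqueness of past-compact solutions. One then needs to pass $\square$ through the $M$-integral, i.e.
\[
\square\int_{4m^2}^\infty\rho(M)\,\mathsf{G}^M_{\mathrm{ret}}(\phi)\,dM=\int_{4m^2}^\infty\rho(M)\,\mathsf{G}^M_{\mathrm{ret}}(\square\phi)\,dM,
\]
and finally apply Proposition~\ref{prop:nonlocal-ret5d} again, now with $\square\phi\in C^\infty_{\mathrm{pc}}(\mathcal{M})$ in place of $\phi$.

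Equivalently, one can argue directly in five dimensions. The operator $\square$ acts only on the $\mathcal{M}$ variables and commutes with $\tilde\square-4m^2$, so $\square\,\tilde{\mathsf{G}}_{\mathrm{ret}}(\phi\otimes g)=\tilde{\mathsf{G}}_{\mathrm{ret}}(\square\phi\otimes g)$ for $g\in C^\infty_{\mathrm{c}}(\mathbb{R})$. This extends to $g=\check\rho\in L^2(\mathbb{R})$ by the density and continuity argument of Proposition~\ref{prop:extension}, and $\iota_{\mathcal{M}}$ commutes with derivatives along $\mathcal{M}$.

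The main obstacle is justifying the interchange of $\square$ (or the derivatives in $(t,x)$) with the $M$-integral, or equivalently with the $L^2$-limit in $z$. I would reduce to $\phi$ that is also spatially compact, using a cutoff $\chi$ equal to $1$ on $J^-(x)$ as in Proposition~\ref{prop:extension}. Then I would work with the explicit spectral representation \eqref{eq:retarded-spectral}, in which derivatives in $(t,x)$ become multiplication by polynomials in $\mathbf{p}$ acting on $\tilde\phi(t',\mathbf{p})$, and $\partial_t$ can be traded onto $\phi$ by integrating by parts in $t'$. Since $\tilde\phi$ is Schwartz in $\mathbf{p}$ and the kernel $\rho(M)\,\omega_M^{-1}$ is integrable against the resulting bounds, dominated convergence permits differentiation under the integral. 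Alternatively, the smoothness in $(t,x)$ established in Proposition~\ref{prop:extension} together with the uniform bound $\|K^{(t,x)}_{\chi f}\|_2$ already gives the needed continuity of the limit.
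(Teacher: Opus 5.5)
Your proposal is correct and follows the paper's route. The paper states the corollary as an immediate consequence of Proposition~\ref{prop:nonlocal-ret5d}, implicitly commuting $\square$ with the retarded operators $\mathsf{G}^M_{\mathrm{ret}}$ and then applying that proposition to $\square\phi$; your dominated-convergence argument for moving $\square$ under the $M$-integral (trading $\partial_t$ onto $\phi$ so the integrand is controlled by $\rho(M)/\omega_M$) supplies the justification the paper leaves unstated.
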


Now, examining the structure of the linearised semiclassical Einstein–Klein–Gordon equations as presented in \Eq~\eqref{eq:semiclassical-sources}, we observe that both the scalar and transverse-traceless mode can be cast into a nonlocal partial differential equation of the prototypical form
\[
\cK_0 [(\square-a_1)(\square-a_2)\phi] +\sum_{j=0}^{2}b_j \square^j \phi=S\,,
\]
where $S$ is a smooth past-compact source in $\mathcal{M}$ and where $a_i,b_j$ are fixed real constants. Using Corollary~\ref{cor: formaK0}, this can equivalently be written in the form 
\begin{equation}\label{eq:semiclassical-prototype}
\iota_{\mathcal{M}}\tilde{\mathsf{G}}_{\mathrm{ret}}( \square (\square-a_1)(\square-a_2)\phi\otimes \check\rho) +\sum_{j=0}^{2}b_j \square^j \phi=S
\end{equation}
with $\rho$ given in Eq.~\eqref{eq:nonlocalK0} and $\check{\rho}$ introduced in Proposition \ref{prop:nonlocal-ret5d}.

The next proposition is one of the central technical results of this chapter and will be of crucial importance for the subsequent analysis. It shows that nonlocal operators of the type considered in Corollary~\ref{cor: formaK0}, i.e.~$\mathcal{G}_\varsigma:=\iota_{\mathcal{M}}\tilde{\mathsf{G}}_{\mathrm{ret}}(\bullet\otimes \check{\varsigma})$, corresponding to different choices of functions $\varsigma$, are in fact invertible on the space of past-compact functions on $\mathcal{M}$, provided that the function $\check{\varsigma}$ appearing on the right-hand side satisfies suitable regularity assumptions. As we will see later on, this property enables us to recast \Eq~\eqref{eq:semiclassical-prototype}, which, as discussed above, are the prototypical linearised semiclassical Einstein-Klein-Gordon equations~\eqref{eq:semiclassical-sources-2}, as a hyperbolic PDE with a \emph{subleading} nonlocal contribution, thereby placing it within a well-controllable analytical framework.

\begin{proposition}\label{prop:inverse-log}
    Let $\varsigma$ be a positive and square-integrable function on $[4m^2,\infty)$ that is in addition H\"older continuous\footnote{By definition, H\"older continuity means that $|\varsigma(M)-\varsigma(M')|\leq C |M-M'|^{\alpha}$. This condition makes integrals of the form $\int_{x-\epsilon}^{x+\epsilon} \frac{\varsigma(M)}{M-x}\, dM$ finite, since $\int_{x-\epsilon}^{x+\epsilon} \frac{\varsigma(M)}{M-x}\, dM= \int_{x-\epsilon}^{x+\epsilon} \frac{\varsigma(x)}{M-x} dM+\int_{x-\epsilon}^{x+\epsilon} \frac{\varsigma(M)-\varsigma(x)}{M-x}\,dM$, where the first integral vanishes by symmetry, while the integrand of the second is integrable by H\"older continuity.} of weight $0<\alpha\leq 1$. Moreover, consider the operator\footnote{This operator acts on past-compact smooth functions $\phi$, which we assume to be supported in $t>0$ to keep the following notation simple.} $\mathcal{G}_\varsigma:C^{\infty}_{c}(\mathbb{R}^{+}\times\mathbb{R}^{3})
 \to C^{\infty}_{\mathrm{pc}}(\mathcal{M})$ defined by
    \begin{align*}
        \Phi = \mathcal{G}_\varsigma(\phi) := 
\iota_{\mathcal{M}}\tilde{\mathsf{G}}_{\mathrm{ret}}( \phi \otimes \check\varsigma)\,,
    \end{align*}
    where $\check\varsigma$ denotes the Fourier-Plancharel transform of $\varsigma(p^{2}+4m^{2})\vert p\vert$, as in Eq.~\eqref{eq:rhocheck}, where $\tilde{\mathsf{G}}_{\mathrm{ret}}$ is the retarded fundamental solution of mass $2m$ in $\mathcal{M}_5$ and where $\iota_{\mathcal{M}}$ realises the restriction to $\mathcal{M}$, as usual. Then, the following holds true:
    \begin{itemize}
        \item[(i)]The operator $\mathcal{G}_\varsigma$ is invertible and its inverse is given by 
        \begin{equation}\label{eq:inverse-L}
            \hat{\phi}(t,\mathbf{p}) = \int_0^{t} \mathsf{K}(t-t',\mathbf{p}) (c+\partial_{t'}^2+\mathbf{p}^2)\hat{\Phi}(t',\mathbf{p}) \,dt'\,,
        \end{equation}
        where $\hat{\cdot}$ denotes the spatial Fourier transform, $1<c<4m^2$ is an arbitrary constant and where $\mathsf{K}$ is defined by means of its Laplace transform in time as 
        \begin{equation}\label{eq:Fw2}
            \mathscr{L}(\mathsf{K})(s,\mathbf{p}) =  - F(\mathbf{p}^2+s^2)^{-1} \quad \text{with} \quad
                F(w^2) := (w^2+c)g(-w^2) =\int_{4m^2}^{\infty} \varsigma(M)    \frac{w^2+c}{w^2+M}  \,dM\,.
        \end{equation}
        The function $F(w^2)$ is a strictly positive monotonically increasing function of $w^2\in[0,\infty)$. Hence, also $\mathscr{L}(\mathsf{K})$ is a positive, monotonically decreasing function of $w^2 = \mathbf{p}^2+s^2$ that is bounded. Moreover, the function $F(z)$ is analytic in $\mathbb{C}\setminus (-\infty,-4m^2]$.

        In the special case $\varsigma(M)=\rho(M)$ with $\rho$ as in Eq.~\eqref{eq:nonlocalK0}, it holds that
        \begin{equation*}
            \mathscr{L}(\mathsf{K})(s,\mathbf{p}) =  -\left(\frac{(w^2+c)}{16\pi^2}\left(\frac{2m}{w^3}\sqrt{4+\frac{w^2}{m^2}}\log \left({\frac{w}{2m}}+\sqrt{1+\frac{w^2}{4m^2}}\right)-\frac{2}{w^2}\right) \right)^{-1},\,\,\, w^2=\textbf{p}^{2}+s^2\,.
        \end{equation*}
     \item[(ii)]If $F(z)$ grows as $\log(z)$ for large $z$, as in the special case $\varsigma=\rho$, the inversion formula can be extended to past-compact smooth functions that are not necessarily spatially compact, in which case we have
\begin{equation*}
\phi :=  \mathcal{G}_\varsigma^{-1}(\Phi)= \int_\mathbb{R} \left(\varphi_{\mathrm{con}}(M)\theta(M-4m^2)+  \varphi_{\mathrm{disc}} \delta(M-c)\right)\mathsf{G}_{\mathrm{ret}}^M(
(\square- c)\Phi\big)\, dM\,,
\end{equation*}
where the continuous part $\varphi_{\mathrm{cont}}$ and the discrete part $\varphi_{\mathrm{disc}}$ are given by
\begin{align*}
\varphi_{\mathrm{con}}(M) 
=\frac{\varsigma(M)}{(c-M)(\Re g(M)^{2} +\pi^2\varsigma(M)^2)}, \qquad
\varphi_{\mathrm{disc}} = \lim_{\epsilon\to0} \frac{\epsilon}{F(-c+\epsilon )} =: \frac{1}{g(c)}\, .
\end{align*}
    \end{itemize}
\end{proposition}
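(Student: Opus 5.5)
The plan is to reduce $\mathcal{G}_\varsigma$ to a scalar multiplier in Fourier--Laplace variables and invert that multiplier.

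\textbf{Symbol of $\mathcal{G}_\varsigma$.} Fix $\phi\in C^\infty_c(\mathbb{R}^+\times\mathbb{R}^3)$. Repeating the computation in the proof of Proposition~\ref{prop:nonlocal-ret5d} with $\rho$ replaced by $\varsigma$, and changing variables $M=p_z^2+4m^2$, gives
\begin{align*}
\hat\Phi(t,\mathbf{p})=-\int_{4m^2}^\infty\varsigma(M)\int_0^t\frac{\sin(\omega_M(\mathbf{p})(t-t'))}{\omega_M(\mathbf{p})}\hat\phi(t',\mathbf{p})\,dt'\,dM .
\end{align*}
Since $\hat\phi$ is supported in $t>0$, I take the Laplace transform in $t$. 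This turns the time convolution into a product, using $\mathscr{L}(\sin(\omega t)/\omega)=(s^2+\omega^2)^{-1}$. The result is
\begin{align*}
\mathscr{L}\hat\Phi(s,\mathbf{p})=-g(-w^2)\,\mathscr{L}\hat\phi(s,\mathbf{p}),\qquad w^2=\mathbf{p}^2+s^2,\qquad g(-w^2)=\int_{4m^2}^\infty\frac{\varsigma(M)}{w^2+M}\,dM .
\end{align*}
Writing $F(w^2)=(w^2+c)\,g(-w^2)$, we get $\mathscr{L}\hat\phi=-F(w^2)^{-1}(w^2+c)\,\mathscr{L}\hat\Phi$.

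Because $\hat\Phi$ vanishes near $t=0$, the factor $(w^2+c)$ is exactly the Laplace transform of $(c+\partial_t^2+\mathbf{p}^2)$, with no boundary terms. Inverting the Laplace transform then gives \eqref{eq:inverse-L} with $\mathscr{L}(\mathsf{K})=-F^{-1}$. Injectivity of $\mathcal{G}_\varsigma$ follows from the same identity.

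\textbf{Properties of $F$.} From
\begin{align*}
F(w^2)=\int_{4m^2}^\infty\varsigma(M)\,\frac{w^2+c}{w^2+M}\,dM,
\end{align*}
positivity is immediate because $\varsigma>0$. Monotonicity holds because $\partial_{w^2}\frac{w^2+c}{w^2+M}=\frac{M-c}{(w^2+M)^2}>0$, using $c<4m^2\le M$. Hence $F(w^2)\ge F(0)>0$, so $F^{-1}$ is bounded, positive and decreasing.

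Analyticity in $\mathbb{C}\setminus(-\infty,-4m^2]$ holds because the integrand is analytic in $z$ away from $z=-M$. Square-integrability of $\varsigma$, together with the factor $(w^2+M)^{-1}$, gives convergence locally uniformly on that domain.

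I still need to check that $\mathsf{K}$ is a genuine causal kernel, that is, the Laplace transform of a function supported in $t\ge0$. For this I would use a Paley--Wiener argument: $F^{-1}$ is bounded and holomorphic in $\Re s>0$, and $w^2$ stays out of the cut there. The explicit formula for $\varsigma=\rho$ is then an elementary integral. Substitute $M=4m^2\cosh^2 u$ or $M=4m^2/(1-v^2)$, and $\int\sqrt{1-4m^2/M}\,\frac{dM}{M(w^2+M)}$ yields the stated log/arcsinh expression.

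\textbf{Part (ii).} For (ii) I will write $F^{-1}$ as a Stieltjes-type representation and read off $\varphi$ by residues. Define $\psi(z)=\big((c-z)\,g(z)\big)^{-1}$ on $\mathbb{C}\setminus[4m^2,\infty)$. Then:
\begin{itemize}
\item The jump of $g$ across the cut is, by Sokhotski--Plemelj, $\mp i\pi\varsigma(M)$. Hölder continuity makes the principal value $\Re g(M)$ finite.
\item There is a simple pole at $z=c$ with residue $1/g(c)$.
\item The decay of $\psi$ at infinity comes from $F\sim\log$.
\end{itemize}
Cauchy's formula on a keyhole contour then gives
\begin{align*}
\psi(z)=\int\frac{\varphi_{\mathrm{con}}(M)\,\theta(M-4m^2)+\varphi_{\mathrm{disc}}\,\delta(M-c)}{M-z}\,dM,
\end{align*}
where $\varphi_{\mathrm{con}}=\frac{1}{\pi}\Im\psi$ on the cut, which reproduces the stated formula. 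Replacing $(M-z)^{-1}$ by $\mathsf{G}^M_{\mathrm{ret}}$ gives the position-space formula. It extends to past-compact, non-spatially-compact data by causality of $\mathsf{G}^M_{\mathrm{ret}}$, using the cutoff argument of Proposition~\ref{prop:extension}.

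\textbf{Main obstacle.} I expect the hardest step to be controlling the large-$|z|$ behaviour. The decay of $\psi$ is only like $1/(z\log z)$, and I need it to kill the arc contribution and to make $\int\varphi_{\mathrm{con}}(M)\,\mathsf{G}^M_{\mathrm{ret}}\,dM$ converge on smooth data. My first attempt is to exploit the logarithmic growth of $F$ together with the extra $(\square-c)$ factor, and then conclude by dominated convergence.
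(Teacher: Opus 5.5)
Your route is the paper's. You compute the Fourier--Laplace symbol $-g(-w^2)$ of $\mathcal{G}_\varsigma$, invert with $\mathscr{L}(\mathsf{K})=-F^{-1}$, and get positivity and monotonicity of $F$ from $\frac{M-c}{(w^2+M)^2}>0$. For (ii) you write $F(-z)^{-1}$ as a Stieltjes-type transform with a continuous part on the cut and an atom at $c$. Your keyhole contour is just an explicit version of the paper's appeal to Stieltjes--Perron plus the residue theorem.

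\textbf{The decay claim behind your ``main obstacle'' is false.} Since $\psi(z)=F(-z)^{-1}$ exactly and $F\sim\log$, $\psi$ decays only like $1/\log|z|$, not like $1/(z\log z)$. Accordingly, for $\varsigma=\rho$ one gets $|\varphi_{\mathrm{con}}(M)|\sim 16\pi^2/(\log^2 M+\pi^2)$, which is the paper's $\varphi_{\mathrm{lead}}$ and is \emph{not} integrable on $[4m^2,\infty)$.

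The arc term still vanishes, because the Cauchy kernel supplies the missing factor: the integrand is $O(1/(R\log R))$ on $|\zeta|=R$, hence $O(1/\log R)$ after integration. However, dominated convergence with weight $|\varphi_{\mathrm{con}}|$ and a uniform-in-$M$ bound on $\mathsf{G}^M_{\mathrm{ret}}((\square-c)\Phi)$ fails. What is integrable is $\varphi_{\mathrm{con}}(M)/(M-c)$, and the extra factor comes from the resolvent identity
\[
\mathsf{G}^M_{\mathrm{ret}}f=\frac{1}{M-c}\Big(\mathsf{G}^M_{\mathrm{ret}}\big((\square-c)f\big)-f\Big),\qquad f=(\square-c)\Phi .
\]
After this, $\mathsf{G}^M_{\mathrm{ret}}((\square-c)^2\Phi)$ is bounded uniformly in $M$ on compact sets. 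This is exactly the last line of \eqref{eq:inverse-L-S}, and it is how your ``extra $(\square-c)$ factor'' has to be used.

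\textbf{Two smaller points.}

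First, $F^{-1}$ is not bounded on $\Re s>0$. The function $F(\mathbf{p}^2+s^2)$ vanishes at $s=\pm\mathrm{i}\sqrt{c+\mathbf{p}^2}$, and $F^{-1}\sim 1/\log$ is not in $H^2$. So apply the distributional Paley--Wiener--Schwartz theorem on $\Re s\ge\sigma>0$ instead.

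Second, watch the signs. Your $\frac{1}{\pi}\Im\psi(M+\mathrm{i}0)=\varsigma/\big((M-c)|g(M+\mathrm{i}0)|^2\big)$ is positive, so it does not literally reproduce the stated $\varphi_{\mathrm{con}}$. Since $\mathsf{G}^M_{\mathrm{ret}}$ has symbol $-(w^2+M)^{-1}$, the measure in (ii) must represent $-F^{-1}$. Its density is then $\varsigma/((c-M)|g|^2)$, as stated, but its atom is $-1/g(c)$, opposite to the stated one. The statement's signs are therefore not mutually consistent, and the paper's proof has the same slips. Fix the signs from your own computation rather than matching them.
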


\begin{remark}
For later applications, we note that the inversion formula above can be written in the following equivalent ways
\begin{equation}\label{eq:inverse-L-S}
\begin{aligned}
\phi 
&:=  \mathcal{G}_\varsigma^{-1}(\Phi)= 
\int_\mathbb{R} \left(\varphi_{\mathrm{con}}(M)\theta(M-4m^2)+  \varphi_{\mathrm{disc}} \delta(M-c)\right)\mathsf{G}_{\mathrm{ret}}^M((\square- c
)
\Phi) \,dM
\\
&= 
\varphi_{\mathrm{disc}} \mathsf{G}_{\mathrm{ret}}^{c}(( \square- c)\Phi) 
+
\int_\mathbb{R} \varphi_{\mathrm{con}}(M)\theta(M-4m^2)\mathsf{G}_{\mathrm{ret}}^M((\square- c)\Phi) \,dM
\\
&= 
\varphi_{\mathrm{disc}} \Phi 
+
\int_{4m^2}^\infty  \varphi_{\mathrm{con}}(M)
\mathsf{G}_{\mathrm{ret}}^M((
\square- c)\Phi)\, dM
\\
&= 
\varphi_{\mathrm{disc}} \Phi
+
((\square- c)\Phi)\int_{4m^2}^\infty \frac{\varphi_{\mathrm{con}}(M)}{M-c} dM
-
\int_{4m^2}^\infty \varphi_{\mathrm{con}}(M)\frac{1}{M-c}\mathsf{G}_{\mathrm{ret}}^M((\square-c)^2\Phi)\, dM
\end{aligned}
\end{equation}
and the continuous measure can also be obtained as the limit of the difference approaching the branch cut, i.e.
\begin{equation*}
    \varphi_{\mathrm{con}}(M) = \lim_{\epsilon \to 0^+} \frac{1}{2\pi \mathrm{i}} \left( \frac{1}{F(-M+\mathrm{i} \epsilon )} - \frac{1}{F(-M-\mathrm{i} \epsilon )} \right) \, .
\end{equation*}
\end{remark}

\begin{proof}[Proof of Proposition~\ref{prop:inverse-log}.]
Since both $\varsigma$, for $M\in[4m^2,\infty)$, and $\frac{w^2+c}{w^2+M}$, for every $w^2$, are square-integrable functions, we note that $F$ in Eq.~\eqref{eq:Fw2} is well-defined.
Furthermore, $F$ is a strictly positive function of $w^2$, because $\varsigma$ is positive and the integrand defining $F$ is positive as well. Moreover, it is also monotonically increasing on account of
\[
F'(w^2) = \int_{4m^2}^{\infty} \varsigma(M)    \frac{M-c}{(w^2+M)^2}  dM
\]
and the integrand is again $L^1$ on $[4m^2,\infty)$ and positive, since $1 < c < 4m^2$.

Let us now consider an arbitrary function $\phi \in C^{\infty}_{c}(\mathbb{R}_+\times \mathbb{R}^3)$ and the corresponding $\Phi := \mathcal{G}_\varsigma(\phi)$. Applying the Laplace transform in time and the Fourier transform in space and making use of the convolution theorem for the Laplace transform, we obtain
\begin{align*}
\mathscr{L}(\Phi)(s,\mathbf{p}) 
&=  
-
\mathscr{L}(\phi)(s,\mathbf{p})  \int_{4m^2}^{\infty} \varsigma(M)  \int_0^\infty e^{-st}  \frac{\sin (\omega_M(\mathbf{p})t)}{\omega_M(\mathbf{p})}\,
dt\, dM
\\
&=  -\mathscr{L}(\phi)(s,\mathbf{p})  \int_{4m^2}^{\infty} \varsigma(M)    \frac{1}{\mathbf{p}^2+s^2+M}\,  dM
\\
&=  -\mathscr{L}(\phi)(s,\mathbf{p})  \frac{1}{c+\mathbf{p}^2+s^2} F(\mathbf{p}^2+s^2)\,.
\end{align*}
The constant $c$ is positive and $F$ is invertible as a function of $\mathbf{p}^2+s^2$. Hence,
\begin{align*}
\mathscr{L}(\phi)(s,\mathbf{p}) &=   -(F(\mathbf{p}^2+s^2))^{-1}
(c+\mathbf{p}^2+s^2)
\mathscr{L}(\Phi)(s,\mathbf{p})\,,
\end{align*}
which is the Laplace transform of the sought inversion formula.
Moreover, we observe that the operator associated to $\mathsf{K}$ is bounded because $F^{-1}$ is a bounded function. In the particular case of $\varsigma=\rho$, the form of $\mathscr{L}(\mathsf{K})$ can be obtained, leading to the expression:
\[
\mathscr{L}(\mathsf{K})(s,\mathbf{p}) = - \left(\frac{(w^2+c)}{16\pi^2}
\left(\frac{2m}{w^3}\sqrt{4+\frac{w^2}{m^2}}\log \left(\frac{w}{2m}+\sqrt{1+\frac{w^2}{4m^2}}\right)-\frac{2}{w^2}\right) \right)^{-1}, \quad w^2 = s^2 + \mathbf{p}^2.
\]

The function $F(w^2)=g(-w^2)(w^2+c)$ above is such that $g(z)$ is the Stieltjes transform of the measure $\varsigma(M)dM$. As such it is analytic on the complex plane up to the points of the real line $[4m^2,\infty)$, where it has a cut. Similarly, $F$ is analytic on $\mathbb{C}\setminus (-\infty,-4m^2]$. 

To obtain the extension of the inverse of $\mathcal{G}_\varsigma$ on past-compact function we now proceed as follows. The inverse of the function $F(-z)$ is also analytic on $\mathbb{C}\setminus ([4m^2,\infty)\cup\{c\})$, where $c$ is there because $c-z$ vanishes on $c$ and the only zero of $F(-z)$ in its  analytic domain occurs for $z=c$.
Hence,  $F(-z)^{-1}$ is the Stieltjes transform of a measure formed by two parts. A continuous part supported on $[4m^2,\infty)$ and an atomic part supported on $c$. 
It can be written as
\[
F(w^2)^{-1} = \int_\mathbb{R} \left(\varphi_{\mathrm{con}}(M)\theta(M-4m^2)+  \varphi_{\mathrm{disc}} \delta(M-c)\right)\frac{1}{M+w^2}\,   dM\, .
\] 
The continuous part is computed by the Stieltjes-Perron inversion formula, yielding
\begin{align*}
\varphi_{\mathrm{con}}(M) 
&= \lim_{\epsilon\to 0^+}\frac{1}{\pi} \Im F(-M+\mathrm{i}\epsilon)^{-1} 
\\
&
= \lim_{\epsilon\to 0^+}
\frac{1}{\pi}\frac{\Im F(-M + i\epsilon)}{(\Re F(-M))^{2} +(\Im F(-M))^2}
\\
&= 
\frac{(c-M)\varsigma(M)}{(\Re F(-M))^{2} +\pi^2 (c-M)^2 \varsigma(M)^2}\,,
\end{align*}
where we used the fact that $\lim_{\epsilon\to 0^+} \frac{1}{\pi} \Im F(-M+\mathrm{i}\epsilon) = (c-M)\varsigma(M)$ and the fact that the branch cat of $F(z)$ is in the imaginary part, while the real part is continuous.
The real part of $F(-M)$ is well-defined thanks to the H\"older continuity of $\varsigma$.

Notice that $\varphi_{\mathrm{con}}(M)$ decays as $1/\log(M)^2$ for large $M$. To prove this claim, we observe that the asymptotic growth of $F(-M)$ is $\log(M)$, by hypothesis.
Hence, since $\varsigma$ is continuous and $L^2$, we have that for $M\in [4m^2,\infty)$,
\[
|\varphi_{\mathrm{con}}(M)| \leq \frac{C}{(\log(M)-\log(2m^2))^2} 
\]
for some constant $C$.
This decay for large $M$ makes the function $\varphi_{\mathrm{con}}(M)M^{-1}$ integrable, hence the formula displayed in \eqref{eq:inverse-L-S} is well-defined also on past-compact smooth functions.
The discrete part of the measure is obtained with the residue theorem, i.e.
\[
\varphi_{\mathrm{disc}} = \lim_{\epsilon\to0} \frac{\epsilon}{F(-c+\epsilon )} = \frac{1}{g(c)}.
\]
With this inversion of the Stieltjes transform at disposal and with the explicit form of $\varphi_{\mathrm{disc}}$ and $\varphi_{\mathrm{con}}$ the inversion formula \eqref{eq:inverse-L} takes the form given in \eqref{eq:inverse-L-S}.
In view of the support properties of the retarded propagator and in view of the form of $\varphi_{\mathrm{con}}$, the inversion formula given in \eqref{eq:inverse-L-S} can be extended to past-compact smooth functions that are not necessarily spatially compact.
\end{proof}

\begin{remark}
    The operator $\mathscr{L}(\mathsf{K})$ is similar to the retarded operator $K$ studied in Proposition 5.3 of \cite{MedaPinSiemcosmo}, in which it is employed to prove the existence and uniqueness of solutions of the semiclassical Einstein equations in the cosmological case. Furthermore, if we denote by $\mathsf{K}_{\mathrm{ret}}$ the operator with integral kernel $\mathsf{K}$ given in Eq.~\eqref{eq:inverse-L}, then it holds that  $\mathsf{K}_{\mathrm{ret}}\circ\mathcal{K}_0=\square$.
\end{remark}

\subsubsection{The semiclassical Einstein equations as a mixed-dimensional local problem}
As an interesting observation, we note that the prototypical form of the semiclassical Einstein equations, Eq.~\eqref{eq:semiclassical-prototype}, can be reformulated as a \emph{local} mixed-dimensional problem for $\psi\oplus\phi$ defined over 
$(C^\infty_{\mathrm{pc}}(\mathcal{M}) \otimes C^0(\mathbb{R}) )\oplus C^\infty_{\mathrm{pc}}(\mathcal{M})$, explicitly given by
\begin{equation}\label{eq:local-problem}
\begin{cases}
(\tilde\square -4m^2) \psi = \phi\otimes \check\rho
\\
\sum_{j=0}^2b_j \square^j \phi= S-\iota_{\mathcal{M}}\square(\square-a_1)(\square-a_2) \psi\,,
\end{cases}
\end{equation}
where $\check\rho$ is the $L^2(\mathbb{R},dz)$-function introduced in Proposition \ref{prop:nonlocal-ret5d}; for $\psi$ defined on $\mathcal{M}_5$, where $\iota_{\mathcal{M}} \psi(\cdot) = \psi(\cdot,0)$ realises the restriction to  $\mathcal{M}\times \{0\} \subset \mathcal{M}_5$, as usual. Indeed, the equivalence of the Cauchy problems for Eq.~\eqref{eq:semiclassical-prototype} and Eq.~\eqref{eq:local-problem} can readily be established, as the following proposition demonstrates.

\begin{proposition}
Let $S$ be an assigned smooth past-compact source on $\mathcal{M}$ and let $\{a_i\}_{i\in\{1,2\}}$ and $\{b_j\}_{j\in\{0,1,2\}}$ be fixed real numbers. Moreover, consider $\psi\oplus \phi$ with smooth and past-compact $\phi$ and with $\psi \in 
C^\infty_{\mathrm{pc}}(\mathcal{M})\otimes C^0(\mathbb{R})$. Then, $\psi\oplus \phi$ is a solution of the local problem \eqref{eq:local-problem} if and only if $\phi$ is a solution of the nonlocal problem \eqref{eq:semiclassical-prototype}.
\end{proposition}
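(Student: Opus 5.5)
The plan is to prove both implications directly, using the uniqueness of past-compact solutions of the five-dimensional Klein–Gordon equation together with the extension of $\tilde{\mathsf{G}}_{\mathrm{ret}}$ to $\mathcal{D}=C^\infty_{\mathrm{pc}}(\mathcal{M})\otimes L^2(\mathbb{R})$ from Proposition~\ref{prop:extension}.

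\emph{Nonlocal implies local.} Let $\phi\in C^\infty_{\mathrm{pc}}(\mathcal{M})$ solve \eqref{eq:semiclassical-prototype}. Set $\psi:=\tilde{\mathsf{G}}_{\mathrm{ret}}(\phi\otimes\check\rho)$. By Proposition~\ref{prop:extension}, $\psi$ lies in $C^\infty_{\mathrm{pc}}(\mathcal{M})\otimes C^0(\mathbb{R})$. Since $\tilde{\mathsf{G}}_{\mathrm{ret}}$ is a right inverse of $\tilde\square-4m^2$ on $C^\infty_{\mathrm{pc}}(\mathcal{M}_5)$, the identity $(\tilde\square-4m^2)\psi=\phi\otimes\check\rho$ holds there. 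It extends to $\mathcal{D}$, in the distributional sense, by approximating $\check\rho$ in $L^2$ by $g_k\in C^\infty_c(\mathbb{R})$; the bound $|\tilde{\mathsf{G}}_{\mathrm{ret}}(f\otimes g)(t,x,z)|\le\|K^{(t,x)}_{\chi f}\|_2\|g\|_2$ gives locally uniform convergence. This is the first line of \eqref{eq:local-problem}.

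For the second line, the key point is that $\square$ acts only on the $\mathcal{M}$-variables. Hence $\square$ commutes with $\tilde{\mathsf{G}}_{\mathrm{ret}}$ on $\mathcal{D}$, because it commutes with $\tilde\square-4m^2$ and preserves past-compactness. It also commutes with $\iota_{\mathcal{M}}$. This gives
\[
\iota_{\mathcal{M}}\square(\square-a_1)(\square-a_2)\psi=\iota_{\mathcal{M}}\tilde{\mathsf{G}}_{\mathrm{ret}}\big(\square(\square-a_1)(\square-a_2)\phi\otimes\check\rho\big).
\]
To justify the commutation, I would again check it first for $\check\rho$ replaced by compactly supported $g_k$, where it is standard, and then pass to the limit. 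Moving this term to the right-hand side of \eqref{eq:semiclassical-prototype} yields the second line of \eqref{eq:local-problem}.

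\emph{Local implies nonlocal.} Suppose $\psi\oplus\phi$ solves \eqref{eq:local-problem}. The difference $\delta:=\psi-\tilde{\mathsf{G}}_{\mathrm{ret}}(\phi\otimes\check\rho)$ is past-compact and solves $(\tilde\square-4m^2)\delta=0$ weakly on $\mathcal{M}_5$. Uniqueness of past-compact solutions, i.e.\ injectivity of $\tilde\square-4m^2$ on past-compact distributions (test against the advanced propagator applied to $C^\infty_c$ test functions), gives $\delta=0$. Substituting $\psi=\tilde{\mathsf{G}}_{\mathrm{ret}}(\phi\otimes\check\rho)$ into the second line and using the commutation identity above recovers \eqref{eq:semiclassical-prototype}.

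\emph{Main obstacle.} The delicate step is regularity. The function $\psi$ is only continuous in $z$, so $\tilde\square$ cannot be applied classically. This is handled by working distributionally in $z$, while keeping smoothness in $x$ via the wavefront set bound in Proposition~\ref{prop:extension}. We also need that $\iota_{\mathcal{M}}$ commutes with the $x$-derivatives, which follows from continuity in $z$ of every $x$-derivative of $\psi$. The uniqueness argument must likewise be run in this weak class. I expect it to work via the duality pairing $\langle\delta,(\tilde\square-4m^2)\tilde{\mathsf{G}}_{\mathrm{adv}}F\rangle=0$ for all $F\in C^\infty_c(\mathcal{M}_5)$. The supports of $\delta$ (past-compact) and $\tilde{\mathsf{G}}_{\mathrm{adv}}F$ (future-compact) meet in a compact set, so this pairing is well defined.
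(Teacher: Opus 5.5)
Your proposal is correct and follows the same route as the paper. In both directions you identify $\psi$ with $\tilde{\mathsf{G}}_{\mathrm{ret}}(\phi\otimes\check\rho)$, using the triviality of the kernel of $\tilde\square-4m^2$ on past-compact functions together with Proposition~\ref{prop:extension}, and then substitute into the second line. You also make explicit what the paper leaves implicit: the commutation of $\square$ with $\tilde{\mathsf{G}}_{\mathrm{ret}}$ on $\mathcal{D}$, which is what actually makes the two second lines agree, and the distributional sense in which the first equation and the uniqueness argument must be read, since $\psi$ is only continuous in $z$.
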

\begin{proof}
Let $\psi\oplus \phi$ be a solution of \eqref{eq:local-problem}. From the first equation in $\eqref{eq:local-problem}$ we have that 
\[
\psi = \tilde{\mathsf{G}}_{\mathrm{ret}}(\phi \otimes \check{\rho})\,,
\]
because the kernel of $\tilde{\mathsf{G}}_{\mathrm{ret}}$ is trivial on past-compact functions. Furthermore, as proven in Proposition \ref{prop:extension}, since $\check{\rho}\in L^2(\mathbb{R})$ and $\phi\in C^{\infty}_{\mathrm{pc}}(\mathcal{M})$, we conclude that $\tilde{\mathsf{G}}_{\mathrm{ret}}(\phi \otimes \check{\rho}) \in C^\infty_{\mathrm{pc}}(\mathcal{M})\otimes C^0(\mathbb{R})$.
Substituting this $\psi$ into the second equation, we obtain that $\phi$ is a solution of \eqref{eq:semiclassical-prototype}.

Consider now $\phi$, a past-compact solution of \eqref{eq:semiclassical-prototype}, and construct
\[
\psi = \tilde{\mathsf{G}}_{\mathrm{ret}}(\phi \otimes \check{\rho})\,,
\]
which is in $C^\infty_{\mathrm{pc}}(\mathcal{M})\otimes C^0(\mathbb{R})$ by Proposition \ref{prop:extension}. Thus, $(\tilde{\square}-4m^2)\psi = \phi \otimes \check{\rho}$, which is the first equation in \eqref{eq:local-problem}. Furthermore, the nonlocal equation \eqref{eq:semiclassical-prototype} is the second equation of 
\eqref{eq:local-problem} for $\psi$ and $\phi$ just defined, which concludes the proof
\end{proof}

\subsection{Reduction of the equation to a simpler one}

In order to rewrite equations of the form \eqref{eq:semiclassical-prototype} in a more manageable way, we operate as follows.
We introduce auxiliary functions\footnote{We recall that $\mathfrak{h}_{j}\in C^{1}(\mathbb{R})$, by the Sobolev embedding theorems.}  $\mathfrak{h}_{j}\in W^{2,2}(\mathbb{R})$ (the $L^{2}$-Sobolev space of order 2)
with $j\in \{0,1,2\}$ with the requirement that $\mathfrak{h}_{j}(0)=1$.
With these functions at disposal, the semiclassical equation 
\[
\iota_{\mathcal{M}}\tilde{\mathsf{G}}_{\mathrm{ret}} \at \square (\square-a_1)(\square-a_2)\phi\otimes \check\rho \ct +\sum_{j=0}^{2}b_j \square^j \phi=S
\] 
for a given smooth and past-compact source $S$,
can be written in the form
\[
\iota_{\mathcal{M}}\tilde{\mathsf{G}}_{\mathrm{ret}} \at \square (\square-a_1)(\square-a_2)\phi\otimes \check\rho +\sum_{j=0}^{2}(\tilde\square-4m^2)(b_j \square^j \phi\otimes \mathfrak{h}_j) \ct
=S\,,
\]
where $\tilde\square$ is the d'Alembert operator on $\mathcal{M}_5$. Using the Leibniz rule and expanding $\tilde\square$, we have
\[
\iota_{\mathcal{M}}\tilde{\mathsf{G}}_{\mathrm{ret}} \at \square (\square-a_1)(\square-a_2)\phi\otimes \check\rho +\sum_{j=0}^{2} ((\square-4m^2)b_j \square^j \phi\otimes \mathfrak{h}_j)+ 
 \sum_{j=0}^{2}b_j \square^j \phi\otimes \partial_z^2\mathfrak{h}_j \ct
 =S.
\] 
Now, let us further assume that $\mathfrak{h}_j$ satisfy the following set of ordinary differential equations for some constants $\beta_i$ and with the constraint $\mathfrak{h}_j(0)=1$:
\begin{equation}\label{eq:equationsforh}
\begin{cases}
b_2\partial_z^2{\mathfrak{h}_2} - 4m^2 b_2 {\mathfrak{h}_2}  + b_1 {\mathfrak{h}_1}  -(a_1+a_2) \check \rho  =  \beta_2(\check \rho+b_2\mathfrak{h}_2)
\\
b_1\partial_z^2{\mathfrak{h}_1} - 4m^2 b_1 {\mathfrak{h}_1}  + b_0 {\mathfrak{h}_0}  +(a_1a_2) \check \rho  =  \beta_1(\check \rho+b_2\mathfrak{h}_2)
\\
b_0\partial_z^2{\mathfrak{h}_0} - 4m^2 b_0 {\mathfrak{h}_0}    =  \beta_0(\check \rho+b_2\mathfrak{h}_2).
\end{cases}
\end{equation}
Then, once a set of such $\mathfrak{h}_i$ solving the previous equations with constraints for certain $\beta_i$ is obtained, the semiclassical equations become
\[
\iota_{\mathcal{M}}\tilde{\mathsf{G}}_{\mathrm{ret}} ((\square^3 \phi+\beta_2\square^{2} \phi + \beta_1 \square \phi +\beta_0\phi) \otimes ( \check \rho+b_2\mathfrak{h}_2 )) =S\,.
\] 
Now, according to Proposition \ref{prop:inverse-log}, if $\check \rho+b_2\mathfrak{h}_2 $ is positive and sufficiently regular, as we shall discuss at the end of the section, then the map $f \mapsto \iota_{\mathcal{M}}\tilde{\mathsf{G}}_{\mathrm{ret}}(f \otimes ( \check \rho+b_2\mathfrak{h}_2 ))$ has trivial integral kernel. In this case, the semiclassical equation is satisfied provided it holds that 
\begin{equation}\label{eq:semi-local}
\square^3 \phi+\beta_2\square^{2} \phi + \beta_1 \square \phi +\beta_0\phi ={S}_N\,,
\end{equation} 
where $S_{N}$ is a new source implicitly defined by $\iota_{\mathcal{M}}\tilde{\mathsf{G}}_{\mathrm{ret}}({S}_N\otimes  ( \check \rho+b_2\mathfrak{h}_2 )) = S$. In other words, if the required regularity and positivity conditions are matched, then the prototypical semiclassical equation~\eqref{eq:semiclassical-prototype} can equivalently be written into the form of the fully \emph{local} hyperbolic equation~\eqref{eq:semi-local}.

Now, for the rest of this section, the goal is to show that, under suitable restrictions on the parameters $a_{i}$ and $b_{i}$ appearing in the semiclassical equations~\eqref{eq:semiclassical-prototype}, the needed regularity and positivity conditions on $\check \rho+b_2\mathfrak{h}_2 $ are satisfied so that the previous discussion indeed applies. Based on this result, we shall then see in the next section, Section~\ref{sec:Subleading}, that a similar inversion of the nonlocality can actually always be achieved \emph{without} restrictions on the parameters $b_{i}$ when performed just for the highest-order terms of the equation, hence allowing us to rewrite the prototypical semiclassical equation~\eqref{eq:semiclassical-prototype} into a hyperbolic equation with \emph{subleading} nonlocal terms.

To start with, we discuss the solvability of the system in Eq.~\eqref{eq:equationsforh}.

\begin{lemma}\label{le:solution-mathfrakh}
Consider $\{a_j\}_{j\in\{1,2\}}\subset \mathbb{R}$, $\{b_j\}_{j\in\{0,1,2\}}\subset \mathbb{R}\setminus \{0\}$ and $\{\beta_j\}_{j\in\{0,1,2\}}\subset \mathbb{R}$ be chosen in such a way that the polynomial $M^3 +\beta_2 M^2 + \beta_1 M + \beta_0$ does not have zeros in $M\in[4m^2,\infty)$.

The solution of the system of equations given in \eqref{eq:equationsforh} with these parameters is best displayed in the Fourier domain in $z$-direction. Writing $M=p_z^2 +4m^2$, the solution has the form
\begin{align*}
b_2 \hat{\mathfrak{h}}_2(p_z) &=  \left(\frac{
M(M-a_1)(M-a_2) 
}{
 M^3 +\beta_2 M^2 + \beta_1 M + \beta_0  
 }
-1
\right)\hat{\check{\rho}}(p_z)
\\
b_1 \hat{\mathfrak{h}}_1(p_z) 
&=  
-\frac{\left(M\beta_1+\beta_0\right)}{M^2}(\hat{\check{\rho}}+b_2 \hat{\mathfrak{h}}_2)(p_z)
+ \frac{a_1a_2}{M} \hat{\check{\rho}}(p_z) 
\\
b_0 \hat{\mathfrak{h}}_0(p_z) &=  
-\frac{\beta_0}{M}(\hat{\check{\rho}}+b_2 \hat{\mathfrak{h}}_2)(p_z).
\end{align*}
Moreover, the corresponding functions $\mathfrak{h}_{j}$ satisfy $\{\mathfrak{h}_j\}_{j} \subset W^{2,2}(\mathbb{R})$  . 
\end{lemma}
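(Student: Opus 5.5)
Our plan is to Fourier transform the system~\eqref{eq:equationsforh} in the $z$-variable, so that $\partial_z^2$ becomes multiplication by $-p_z^2$. Hence $b_j\partial_z^2\mathfrak{h}_j-4m^2b_j\mathfrak{h}_j$ becomes $-M\,b_j\hat{\mathfrak{h}}_j$ with $M=p_z^2+4m^2\geq 4m^2>0$. The system is then purely algebraic and triangular. We abbreviate $\hat\varsigma:=\hat{\check\rho}+b_2\hat{\mathfrak{h}}_2$. The third equation gives $b_0\hat{\mathfrak{h}}_0=-\frac{\beta_0}{M}\hat\varsigma$. Substituting this into the second equation gives $-Mb_1\hat{\mathfrak{h}}_1-\frac{\beta_0}{M}\hat\varsigma+a_1a_2\hat{\check\rho}=\beta_1\hat\varsigma$, and solving yields the stated formula for $b_1\hat{\mathfrak{h}}_1$.

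Inserting both expressions into the first equation, we get
\[
-Mb_2\hat{\mathfrak{h}}_2-\tfrac{M\beta_1+\beta_0}{M^2}\hat\varsigma+\tfrac{a_1a_2}{M}\hat{\check\rho}-(a_1+a_2)\hat{\check\rho}=\beta_2\hat\varsigma .
\]
We rewrite $-Mb_2\hat{\mathfrak{h}}_2=-M\hat\varsigma+M\hat{\check\rho}$ and multiply through by $M^2$. This gives
\[
\hat\varsigma\,(M^3+\beta_2M^2+\beta_1M+\beta_0)=\hat{\check\rho}\,M(M-a_1)(M-a_2),
\]
because $M^3-(a_1+a_2)M^2+a_1a_2M=M(M-a_1)(M-a_2)$. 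Since the polynomial $P(M):=M^3+\beta_2M^2+\beta_1M+\beta_0$ has no zeros on $[4m^2,\infty)$, we can divide by it. This gives $\hat\varsigma$, and hence $b_2\hat{\mathfrak{h}}_2=\hat\varsigma-\hat{\check\rho}$, which is the claimed formula. Because $b_j\neq0$, each $\hat{\mathfrak{h}}_j$ is determined. Conversely, these expressions satisfy the transformed system, and so by Plancherel they define solutions once we show they lie in $L^2$.

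It remains to prove the $W^{2,2}$-regularity, i.e.\ that $(1+p_z^2)\hat{\mathfrak{h}}_j\in L^2(\mathbb{R})$; this is the main obstacle. Recall $\hat{\check\rho}(p_z)\propto\rho(M)|p_z|$, with $\rho(M)=\frac{1}{16\pi^2}\sqrt{1-4m^2/M}\,M^{-1}$, so $|\hat{\check\rho}|\lesssim |p_z|/M\sim|p_z|^{-1}$ at large $|p_z|$. Therefore $\hat{\check\rho}\in L^2$, but $p_z^2\hat{\check\rho}$ is not. The key point is that in $b_2\hat{\mathfrak{h}}_2$ the factor
\[
\frac{M(M-a_1)(M-a_2)}{P(M)}-1=\frac{-(a_1+a_2+\beta_2)M^2+(a_1a_2-\beta_1)M-\beta_0}{P(M)}
\]
is a rational function that is continuous on $[4m^2,\infty)$, since $P$ does not vanish there, and is $O(M^{-1})=O(p_z^{-2})$ at infinity. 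Hence $|b_2\hat{\mathfrak{h}}_2|\lesssim |p_z|^{-3}$ for large $|p_z|$ and is bounded elsewhere, so $(1+p_z^2)\hat{\mathfrak{h}}_2\in L^2$.

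The same argument applies to the other two functions. The quotient $M(M-a_1)(M-a_2)/P(M)$ is bounded on $[4m^2,\infty)$, so $\hat\varsigma=O(|p_z|^{-1})$. Consequently $b_0\hat{\mathfrak{h}}_0=O(|p_z|^{-3})$, while $b_1\hat{\mathfrak{h}}_1$ is a sum of terms of order $O(M^{-1}\hat\varsigma)$ and $O(M^{-1}\hat{\check\rho})$, hence also $O(|p_z|^{-3})$. All of these expressions are continuous and therefore locally bounded, and $M\geq4m^2$ keeps the denominators away from zero. This gives $\mathfrak{h}_j\in W^{2,2}(\mathbb{R})$ for $j\in\{0,1,2\}$.
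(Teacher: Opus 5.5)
Your proof is correct and follows the paper's route: Fourier transform in $z$, solve the resulting triangular algebraic system (your back-substitution is the paper's matrix inversion done by hand), and deduce $W^{2,2}$ from the bound $|b_j\hat{\mathfrak{h}}_j|\lesssim|\hat{\check\rho}|/M$. Your explicit check that the $M^3$ terms cancel in $b_2\hat{\mathfrak{h}}_2$, giving $O(M^{-1})$ decay of the rational factor, fills in the step the paper only asserts.
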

\begin{proof}
 In Fourier domain in $z$-direction, the system of equations for $\mathfrak{h}_j$ in \eqref{eq:equationsforh} can be written as
 \[
 \begin{pmatrix}
 p_z^2 + 4m^2 + \beta_2 & -1 & 0  \\
 \beta_1 &  p_z^2+ 4m^2 & -1 \\
\beta_0 & 0&  p_z^2+ 4m^2 
 \end{pmatrix} 
 \begin{pmatrix}
b_2 \hat{\mathfrak{h}}_2
\\
b_1 \hat{\mathfrak{h}}_1
\\
b_0 \hat{\mathfrak{h}}_0
 \end{pmatrix} 
 = 
 \hat{\check{\rho}}
 \begin{pmatrix} 
 -\beta_2 - (a_1+a_2)
 \\
 -\beta_1 + a_1a_2
 \\
 -\beta_0
 \end{pmatrix}.
 \]
 Writing  $M=p_z^2+4m^2$ and inverting the previous relation, we obtain the claimed expressions for $b_{i}\hat{h}_{i}$.

Taking into account that $\hat{\check{\rho}}$ is a square-integrable function for every $j$, and assuming the hypothesis imposed on the constants $\beta_i$, we find
\[
|b_j\hat{\mathfrak{h}}_j|\leq |\hat{\check{\rho}}| \frac{C_j}{M(p_z)}
\]
for some positive constants $C_j$. Since $b_i\neq 0$, $\hat{\check{\rho}} \in L^2(\mathbb{R})$ and $M(p_z)=p_z^2+4m^2$, this bound implies that $\mathfrak{h}_j \in W^{2,2}(\mathbb{R})$.
\end{proof}
Once a set of $\mathfrak{h}_j$ solving \eqref{eq:equationsforh} is obtained, we can fix the  
parameter $\beta_0$, $\beta_1$ and $\beta_2$ with the condition $\mathfrak{h}_{j}(0)=1$, which in Fourier space translates to
\begin{equation}\label{eq:constraint}
\int_{4m^2}^\infty   \hat{\mathfrak{h}}_j  \frac{dM}{\sqrt{M-4m^2}} =1. 
\end{equation}

We discuss how to find those parameters in the next proposition.

\begin{proposition}\label{prop:gamma}
Consider $a_{1},a_2 \in \mathbb{R}$ and $b_0,b_1,b_2\in \mathbb{R}\setminus\{0\}$. 
Moreover, consider the Stieltjes transform of the density $\rho(M)$ given in \eqref{eq:nonlocalK0}, described by the complex-valued function 
\begin{equation}\label{eq:J}
J(z):= 
\int_{4m^2}^\infty \frac{\rho(M)}{M-z}dM  = 
\frac{1}{8\pi^2} \left(\frac{1}{z}
-
\frac{   (4 m^2 - z) 
     m \arccsc {\left(\frac{2m }{\sqrt{z}} \right)}}{m \sqrt{
 4 m^2 - z} z^{\frac{3}{2}}} 
\right) , \qquad z\in D\,,
\end{equation}
which has domain 
\begin{equation}\label{eq:domainD}
D=\{z\in \mathbb{C} \mid  \Im{z}\neq 0 \quad \text{or} \quad  \Im(z)=0,\;  \Re(z)< 4m^2\}.
\end{equation}
If the complex-valued equation 
\begin{equation}\label{eq:equation-beta}
\gamma_i(a_1-\gamma_i)(a_2-\gamma_i) J(\gamma_i)=b_0 +b_1\gamma_i+b_2\gamma_i^2, \qquad \gamma_i\in D
\end{equation}
admits three and only three independent solutions $\gamma_0\neq \gamma_1\neq\gamma_2$ in $D$, then the three parameters
\begin{equation}\label{eq:beta-gamma}\begin{aligned}
\beta_0 &=-\gamma_0\gamma_1\gamma_2,
\\
\beta_1 &= \gamma_0\gamma_1+\gamma_0\gamma_2+\gamma_1\gamma_2, 
\\
\beta_2 &= -\gamma_0-\gamma_1-\gamma_2
\end{aligned}
\end{equation}
are real-valued, the polynomial
$P(M)=M^3 +\beta_2 M^2 + \beta_1 M + \beta_0$ does not have zeros in $M\in[4m^2,\infty)$ and with these $\beta_{i}$, the constraint given in Eq.~\eqref{eq:constraint} is satisfied.
\end{proposition}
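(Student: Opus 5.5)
The plan is to evaluate the constraint $\mathfrak{h}_j(0)=1$ directly from the explicit Fourier-space solutions of Lemma~\ref{le:solution-mathfrakh}. The evaluation turns into a partial-fraction/residue identity involving the Stieltjes transform $J$. Two preliminary observations come first. Since $\hat{\check\rho}(p_z)$ is, up to normalisation, $\rho(M)|p_z|$ with $M=p_z^2+4m^2$, and $dp_z = dM/(2\sqrt{M-4m^2})$ with $|p_z|=\sqrt{M-4m^2}$, any evaluation at $z=0$ of the form $\int \hat f(p_z)\,dp_z$ with $\hat f = R(M)\hat{\check\rho}$ becomes $\int_{4m^2}^\infty R(M)\rho(M)\,dM$ up to a harmless constant. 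Consequently the three constraints $\mathfrak{h}_j(0)=1$ read $\int_{4m^2}^\infty R_j(M)\rho(M)\,dM=b_j$, with rational functions $R_j$ read off from Lemma~\ref{le:solution-mathfrakh}. For instance,
\[
R_2(M)=\frac{M(M-a_1)(M-a_2)}{P(M)}-1 .
\]

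Next, I reality-check the data. Since $\rho$ is real and positive on $(4m^2,\infty)$, we have $\overline{J(z)}=J(\bar z)$. Equation~\eqref{eq:equation-beta} therefore has real coefficients, and its solution set in $D$ is invariant under conjugation. Because there are exactly three solutions, either all are real or one is real and the other two form a conjugate pair. In both cases the symmetric functions \eqref{eq:beta-gamma} are real. Moreover $P(M)=\prod_i(M-\gamma_i)$, and each real $\gamma_i$ lies in $D$, hence $\gamma_i<4m^2$. So $P$ has no zeros on $[4m^2,\infty)$, and Lemma~\ref{le:solution-mathfrakh} applies.

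The core step verifies the constraints by partial fractions. Consider the rational function
\[
Q(M)=\frac{M(M-a_1)(M-a_2)}{P(M)} ,
\]
which, after polynomial division and partial fractions, is $Q(M)=M+\mathrm{const}+\sum_i \frac{c_i}{M-\gamma_i}$. Here
\[
c_i=\frac{\gamma_i(\gamma_i-a_1)(\gamma_i-a_2)}{P'(\gamma_i)} .
\]
The integrals $\int R_j\rho\,dM$ then become linear combinations of moments of $\rho$ and of the values $J(\gamma_i)$. Since $\rho(M)\sim M^{-1}$, the moment $\int M\rho$ diverges, so the terms must be grouped before integrating: $Q-1$ times $\rho$ is not integrable on its own. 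This forces me to recheck that Lemma~\ref{le:solution-mathfrakh}'s $b_2\hat{\mathfrak h}_2$ bound is used together with the factor $M^{-1}$ coming from the ODE structure. Concretely, I would work instead with the combination $\hat{\check\rho}+b_2\hat{\mathfrak h}_2=Q\,\hat{\check\rho}$ and use the second and third equations, which contain $1/M$ and $1/M^2$.

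I would then evaluate $\int \frac{1}{M-\gamma_i}\frac{\rho}{M^k}\,dM$ by the identity
\[
\frac{1}{M^k(M-\gamma)}=\frac{1}{\gamma^k(M-\gamma)}-\sum_{\ell<k}(\cdots) ,
\]
which expresses everything through $J(\gamma_i)$, $J(0)$ and convergent moments. Substituting the defining relation $\gamma_i(a_1-\gamma_i)(a_2-\gamma_i)J(\gamma_i)=b_0+b_1\gamma_i+b_2\gamma_i^2$ turns each constraint into a Lagrange-interpolation identity, $\sum_i \frac{\gamma_i^k(b_0+b_1\gamma_i+b_2\gamma_i^2)}{P'(\gamma_i)}=$ (the appropriate $b_j$). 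This holds by the standard formula $\sum_i \gamma_i^r/P'(\gamma_i)=\delta_{r,2}$ for $r\le 2$, and its analogues for $r=3,4$ expressed via the $\beta$'s.

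I expect the main obstacle to be the divergent large-$M$ behaviour and the correct normalisation at $z=0$. The $\log$-type growth of $J$ and the $M^{-1}$ tail of $\rho$ mean that the terms must be grouped so that every integral converges. I would handle this by subtracting the polynomial part of $Q$ using the ODEs before integrating, and I would check the resulting identity carefully index by index.
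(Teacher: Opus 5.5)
Your reality and root-location argument matches the paper's, and your overall plan (partial fractions, integrate against $\rho$ to produce $J(\gamma_i)$, substitute \eqref{eq:equation-beta}, close with residue identities) is sound. However, the step you flag as the main obstacle rests on a degree miscount, and the workaround you propose would fail.

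The numerator $M(M-a_1)(M-a_2)$ and the denominator $P(M)$ are both monic cubics, so the polynomial part of $Q$ is $1$, not $M+\mathrm{const}$:
\[
Q(M)=1+\sum_{i=0}^{2}\frac{c_i}{M-\gamma_i},\qquad c_i=\frac{\gamma_i(\gamma_i-a_1)(\gamma_i-a_2)}{P'(\gamma_i)} .
\]
Hence $(Q-1)\rho=O(M^{-2})$ is integrable, and no divergent moment of $\rho$ ever enters. Conversely, $\int_{4m^2}^\infty Q\rho\,dM$ diverges logarithmically, since $Q\to 1$ and $\rho(M)\sim(16\pi^2M)^{-1}$. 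This is precisely why $\check\rho$ and $\check\rho+b_2\mathfrak h_2$ are singular at $z=0$ while $\mathfrak h_2(0)$ is finite. So ``working instead with $Q\hat{\check\rho}$'' trades a convergent expression for a divergent one.

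With the correct decomposition, the $j=2$ constraint takes one line. Since $c_iP'(\gamma_i)J(\gamma_i)=\gamma_i(a_1-\gamma_i)(a_2-\gamma_i)J(\gamma_i)$ holds identically, \eqref{eq:equation-beta} gives
\[
\int_{4m^2}^\infty (Q-1)\rho\,dM=\sum_i c_iJ(\gamma_i)=\sum_i\frac{b_0+b_1\gamma_i+b_2\gamma_i^2}{P'(\gamma_i)}=b_2 .
\]

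For $j=0,1$ you need the power sums with $r=-1,-2$, not $r=3,4$. Write $A_i=c_i/\gamma_i$; note that $\gamma_i\neq 0$ automatically, because $\gamma=0$ in \eqref{eq:equation-beta} would force $b_0=0$. The constraints then read
\[
-\beta_0\sum_iA_iJ(\gamma_i)=b_0,\qquad -\beta_1\sum_iA_iJ(\gamma_i)-\beta_0\sum_i\gamma_i^{-1}A_iJ(\gamma_i)=b_1 .
\]
The $J(0)$ terms produced by splitting $R/M$ cancel exactly against the $a_1a_2/M$ term, because $\sum_iA_i/\gamma_i=-R(0)=-a_1a_2/\beta_0$. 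Both identities then follow from
\[
\sum_i\frac{1}{\gamma_iP'(\gamma_i)}=-\frac{1}{\beta_0},\qquad \sum_i\frac{1}{\gamma_i^2P'(\gamma_i)}=\frac{\beta_1}{\beta_0^2},
\]
which come from the residues of $M^{-k}P(M)^{-1}$ at $M=0$.

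The normalisation constant in your first step is also not merely harmless. Equation \eqref{eq:equation-beta} carries no free factor, so the constant must be exactly $1$. It is, with the paper's conventions: $\hat{\check\rho}=2\pi\rho(M)|p_z|$, and $|p_z|\,dp_z$ over $\mathbb{R}$ equals $dM$.

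Once repaired, your route is the paper's partial-fraction computation run in the opposite direction. The paper turns the three constraints into a Vandermonde-type linear system for $A_iJ(\gamma_i)$ and inverts it, under the provisional assumption $\gamma_i\neq 0,a_1,a_2$, obtaining an equivalence with \eqref{eq:equation-beta}. Direct substitution proves exactly the implication stated. It also needs no special treatment when some $\gamma_i=a_j$, since the identity $\gamma_iP'(\gamma_i)A_iJ(\gamma_i)=b_0+b_1\gamma_i+b_2\gamma_i^2$ still holds, with both sides zero.
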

\begin{proof}
By the symmetry property of the real and imaginary part of \Eq \eqref{eq:equation-beta}, if $z$ is a solution, then also $\bar{z}$ solves the equation. If there are three and only three solutions $\gamma_i$ in $D$, then either all of them are real-valued, or one is real-valued with the other two being related by complex conjugation. In all these cases, $\beta_{i}$ are real-valued. 

Now, notice that 
\[
P(M)=M^3 +\beta_2 M^2 + \beta_1 M + \beta_0 = (M-\gamma_0)(M-\gamma_1)(M-\gamma_2).
\]
Hence, $\gamma_i$ coincide with the zeros of the polynomial $P(M)$ and lead to the above form of $\beta_i$, which are real by the previous remark. If there are three independent $\gamma_i$ contained in $D$ solving \eqref{eq:equation-beta}, then the polynomial $P(M)$ cannot have zeros in $[4m^2,\infty)$, since $D\cap [4m^2,\infty)=\emptyset$.

It remains to prove that with this choice of $\beta_i$, the constraints given in \eqref{eq:constraint} are satisfied.
We start rewriting the solution obtained in Lemma \ref{le:solution-mathfrakh} with the $\gamma_i$ at the place of $\beta_j$ to obtain
\begin{align*}
b_2 \hat{\mathfrak{h}}_2 &=  \left(\frac{
M(M-a_1)(M-a_2) 
}{
  (M-\gamma_0)(M-\gamma_1)(M-\gamma_2) 
 }
-1
\right)\hat{\check{\rho}}\,,
\\
b_1 \hat{\mathfrak{h}}_1 
&=  
\left(\left(-(\gamma_1\gamma_2 + \gamma_0\gamma_1+\gamma_1\gamma_2)+\frac{\gamma_0 \gamma_1\gamma_2}{M}\right)
\left(\frac{
(M-a_1)(M-a_2) 
}{
 (M-\gamma_0)(M-\gamma_1)(M-\gamma_2)  
 }
\right)
+ \frac{a_1a_2}{M} \right) \hat{\check{\rho}}\,, 
\\
b_0 \hat{\mathfrak{h}}_0 &=  
{\gamma_0 \gamma_1\gamma_2}
\left(\frac{
(M-a_1)(M-a_2) 
}{
 (M-\gamma_0)(M-\gamma_1)(M-\gamma_2)  
 }
\right)\hat{\check{\rho}}\, .
\end{align*}
We then use the following partial fraction expansion to rewrite the functions $\mathfrak{h}_i$:
\begin{equation}\label{eqRM}
\begin{aligned}
R(M)&=\frac{(M-a_1)(M-a_2)}{ (M-\gamma_0)(M-\gamma_1)(M-\gamma_2) } = \sum_{i=0}^2 \frac{A_i}{M-\gamma_i}, \qquad A_{i} = (R(M)(M-\gamma_i))|_{M=\gamma_i}
\\
M R(M)&=\frac{M(M-a_1)(M-a_2)}{ (M-\gamma_0)(M-\gamma_1)(M-\gamma_2) } = 1+\sum_{i=0}^2 \frac{\gamma_i A_i}{M-\gamma_i}, \qquad \sum_{i=0}^2A_{i} = 1
\\
\frac{R(M)}{M}&=\frac{(M-a_1)(M-a_2)}{ M(M-\gamma_0)(M-\gamma_1)(M-\gamma_2) } = -\frac{a_1 a_2}{\gamma_0\gamma_1\gamma_2 M}+\sum_{i=0}^2 \frac{A_i}{\gamma_i}\frac{1}{M-\gamma_i}, 
\end{aligned}
\end{equation}
Now, in terms of these expansions, we obtain
\begin{align*}
b_2 \hat{\mathfrak{h}}_2 &=  
\left(
M R(M)-1 
\right)\hat{\check{\rho}} 
=
\sum_{i=0}^2 \gamma_i A_i \frac{\hat{\check{\rho}}}{M-\gamma_i} 
\\
b_1 \hat{\mathfrak{h}}_1 
&=  
\left(-(\gamma_1\gamma_2 + \gamma_0\gamma_1+\gamma_1\gamma_2) R(M)+\gamma_0 \gamma_1\gamma_2\frac{R(M)}{M}
+ \frac{a_1a_2}{M}  \right)\hat{\check{\rho}} 
\\
&=   -(\gamma_1\gamma_2 + \gamma_0\gamma_1+\gamma_1\gamma_2)\left(\sum_{i=0}^2 A_i \frac{\hat{\check{\rho}}}{M-\gamma_i}\right) +       \gamma_0 \gamma_1\gamma_2\left(\sum_{i=0}^2  \frac{A_i}{\gamma_i} \frac{\hat{\check{\rho}}}{M-\gamma_i}  \right)
\\
b_0 \hat{\mathfrak{h}}_0 &=  
{\gamma_0 \gamma_1\gamma_2} R(M)
\hat{\check{\rho}}
=
{\gamma_0 \gamma_1\gamma_2}\sum_{i=0}^2 A_i \frac{\hat{\check{\rho}}}{M-\gamma_i}.
\end{align*}
We can now impose the constraints 
\[
\int_{4m^2}^\infty   \hat{\mathfrak{h}}_j  \frac{dM}{\sqrt{M-4m^2}} =1
\]
to get a condition that is satisfied by $\gamma_i$. These constraints, together with the partial fraction expansion obtained above and the definition of the function $J$, implies that 
\begin{align*}
b_2 &=  
\sum_{i=0}^2 \gamma_i A_i J(\gamma_i) 
\\
b_1  
&=   -(\gamma_1\gamma_2 + \gamma_0\gamma_1+\gamma_1\gamma_2)\left(\sum_{i=0}^2 A_i J(\gamma_i)\right) +       \gamma_0 \gamma_1\gamma_2\left(\sum_{i=0}^2  \frac{A_i}{\gamma_i} J(\gamma_i)  \right)
\\
b_0  &=  
{\gamma_0 \gamma_1\gamma_2}\sum_{i=0}^2 A_i 
J(\gamma_i)
\end{align*}
where as before $A_{i} = (R(M)(M-\gamma_i))|_{M=\gamma_i}$
and for every $i$, $A_i$ depends nonlinearly on $\{\gamma_j\}$ and on $a_j$.  
Assuming for the moment that for every $i$, $\gamma_i \neq 0$, $\gamma_i\neq a_1$ and $\gamma_i\neq a_2$, we rewrite this system of equations as 
\begin{align*}
\sum_{i=0}^2 \gamma_i A_i J(\gamma_i)  &=  b_2
\\
\sum_{i=0}^2  \frac{A_i}{\gamma_i} J(\gamma_i)  &= \frac{b_1}{\gamma_0 \gamma_1\gamma_2} + 
\frac{b_0}{(\gamma_0 \gamma_1\gamma_2)^2}(\gamma_1\gamma_2 + \gamma_0\gamma_1+\gamma_1\gamma_2)
\\
\sum_{i=0}^2 A_i 
J(\gamma_i)
  &=  
\frac{b_0}{\gamma_0 \gamma_1\gamma_2}.
\end{align*}
In order to solve for $J(\gamma_i)$, we rewrite the system as 
\begin{align*}
\begin{pmatrix}
\gamma_2 & \gamma_1 & \gamma_0
\\
\frac{1}{\gamma_2} & \frac{1}{\gamma_1} & \frac{1}{\gamma_0}
\\
1 & 1 & 1
\end{pmatrix}
\begin{pmatrix}
A_2 J(\gamma_2)
\\
A_1 J(\gamma_1)\\
A_0 J(\gamma_0)
\end{pmatrix}
=
\begin{pmatrix}
1 & 0 & 0
\\
0 & \frac{1}{\gamma_0\gamma_1\gamma_2}& \frac{1}{(\gamma_0\gamma_1\gamma_2) }(\frac{1}{\gamma_0}+\frac{1}{\gamma_1}+\frac{1}{\gamma_2})
\\
0 & 0 & \frac{1}{\gamma_0\gamma_1\gamma_2}
\end{pmatrix}
\begin{pmatrix}
b_2
\\
b_1
\\
b_0
\end{pmatrix}.
\end{align*}
Using the form of $A_i$ and inverting the appropriate matrices, we get that the system decouples and for every $i$ we obtain
\[
J(\gamma_i)=\frac{b_0 +b_1\gamma_i+b_2\gamma_i^2}{\gamma_i(a_1-\gamma_i)(a_2-\gamma_i)}\,,
\]
which can also be written as \eqref{eq:equation-beta} given above which is meaningful also when the 
working hypothesis $\gamma_i \neq 0$, $\gamma_i\neq a_1$ and $\gamma_i\neq a_2$ are not satisfied.
This concludes the proof.
\end{proof}

It remains to prove that ${\check{\rho}}+b_2 {\mathfrak{h}}_2$ is sufficiently regular so that 
\[
\Phi = \iota_\mathcal{M} \tilde{\mathsf{G}}_{\mathrm{ret}}(\phi\otimes (\check{\rho}+b_2 {\mathfrak{h}}_2))
\]
can be inverted by means of Proposition \ref{prop:inverse-log} for past-compact $\phi$. This is an essential step to extract the local equation with a source from the nonlocal one.

We have the following proposition, which gives a restriction on the parameters $a_i$.
\begin{proposition}\label{prop:hypo-inveresion}
Let $a_1,a_2 \leq 4m^2$ and $b_i$ be chosen as in Lemma \ref{le:solution-mathfrakh}. 
Suppose that we have found three independent $\gamma_i\in D$ solving Eq.~\eqref{eq:equation-beta} of Proposition \ref{prop:gamma} with $D$ given in \eqref{eq:domainD}. Then, the function $\varsigma$ defined by
\[
\check{\varsigma}(z)={\check{\rho}}(z)+b_2 {\mathfrak{h}}_2(z) , \qquad z\in \mathbb{R}, 
\]
with the Fourier-transformed functions $\check{\varsigma}$ and $\check{\rho}$ defined as in Eq.~\eqref{eq:rhocheck}, with $\rho$ defined in Eq.~\eqref{eq:nonlocalK0} and with $\mathfrak{h}_2$ as in the thesis of Lemma \ref{le:solution-mathfrakh}, satisfies the hypothesis of Proposition~\ref{prop:inverse-log}.
\end{proposition}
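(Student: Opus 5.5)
Using Lemma~\ref{le:solution-mathfrakh}, I will first compute the density $\varsigma$ explicitly from its Fourier representation. Since $\check{\rho}$ is the Fourier--Plancherel transform of $\rho(p_z^2+4m^2)|p_z|$, and $b_2\hat{\mathfrak{h}}_2=(MR(M)-1)\hat{\check\rho}$ with $M=p_z^2+4m^2$, we get
\[
\hat{\check\varsigma}(p_z)=\hat{\check\rho}(p_z)+b_2\hat{\mathfrak h}_2(p_z)=MR(M)\,\hat{\check\rho}(p_z),
\]
so that
\[
\varsigma(M)=\rho(M)\,\frac{M(M-a_1)(M-a_2)}{(M-\gamma_0)(M-\gamma_1)(M-\gamma_2)},\qquad M\geq 4m^2 .
\]
This is the key identity. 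All the hypotheses of Proposition~\ref{prop:inverse-log} (positivity, square integrability, H\"older continuity on $[4m^2,\infty)$, and the logarithmic growth of $F$ needed for part (ii)) then reduce to checking properties of the rational factor $MR(M)$ multiplied by the explicit $\rho$.

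\textbf{Regularity.} By Proposition~\ref{prop:gamma}, $P(M)=\prod_i(M-\gamma_i)$ has no zeros on $[4m^2,\infty)$. Since the $\gamma_i$ are either real, and then $<4m^2$, or complex-conjugate pairs, $|P(M)|$ is bounded below by a positive constant on $[4m^2,\infty)$ and behaves like $M^3$ at infinity. Hence $MR(M)$ is smooth and bounded on $[4m^2,\infty)$ and tends to $1$ as $M\to\infty$. The density $\rho(M)=\frac{1}{16\pi^2}\sqrt{1-4m^2/M}\,M^{-1}$ is H\"older continuous with exponent $1/2$ (the square root at the threshold is the only non-Lipschitz point) and lies in $L^2$ because it decays like $M^{-1}$. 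The product of a bounded Lipschitz function with a bounded $\tfrac12$-H\"older function is $\tfrac12$-H\"older, so $\varsigma$ is square integrable and H\"older continuous. The same comparison $\varsigma\sim\rho$ at infinity gives $F(w^2)\sim\int\rho(M)\frac{w^2+c}{w^2+M}\,dM\sim\log w^2$, exactly as in the case $\varsigma=\rho$.

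\textbf{Positivity.} This is where the hypothesis $a_1,a_2\le 4m^2$ enters. For $M\ge 4m^2$ it gives $(M-a_1)(M-a_2)\ge 0$, and $M>0$. For the denominator, a complex-conjugate pair contributes $|M-\gamma|^2>0$, and a real $\gamma_i<4m^2$ contributes $M-\gamma_i>0$. So $P(M)>0$ on $[4m^2,\infty)$, and therefore $\varsigma\ge 0$, vanishing at most at the endpoint $M=4m^2$ (where $\rho$ already vanishes) or at $M=a_i=4m^2$. Strict positivity of $F$ survives such isolated zeros, since $F$ is an integral of a nonnegative function that is not identically zero.

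\textbf{Main obstacle.} The delicate step is justifying the identification of the density itself: one has to ensure that the relation $\check\varsigma=\check\rho+b_2\mathfrak h_2$, holding between functions of $z$, corresponds to the pointwise multiplication above under the change of variables $M=p_z^2+4m^2$, which includes the factor $|p_z|$ in \eqref{eq:rhocheck}. I would verify this by working directly in Fourier space with the even extensions in $p_z$, using the explicit formula of Lemma~\ref{le:solution-mathfrakh}. The only remaining care point is a boundary case: a double root or $a_i=4m^2$ coinciding with threshold behaviour, which I would handle by noting that the H\"older estimate near $4m^2$ is unaffected.
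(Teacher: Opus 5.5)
Your proposal is correct and follows essentially the paper's own argument. You identify $\varsigma(M)=MR(M)\rho(M)$ via Lemma~\ref{le:solution-mathfrakh}, deduce nonnegativity and boundedness of $MR(M)$ on $[4m^2,\infty)$ from $a_i\le 4m^2$ and $\gamma_i\notin[4m^2,\infty)$, and get the logarithmic growth of $F$ by comparison with $\rho$, since $(MR(M)-1)\rho(M)$ is integrable; your explicit check of $\tfrac12$-H\"older continuity is a detail the paper leaves implicit.
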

\begin{proof}
To prove the claim, we notice that the Fourier-Plancherel transform of the function 
$\check{\rho}+b_2 \mathfrak{h}_2$ is
\[ 
\frac{\hat{\check{\rho}}(p_z)+b_2 \hat{\mathfrak{h}}_2(p_z)}{|p_z|} = M R(M) \rho(M) \,,
\]
where $M=p_z^2+4m^2$ and $R(M)$ is given in \eqref{eqRM}.

By hypothesis, we have $a_i\leq 4m^2$. Moreover, by hypothesis, $\gamma_1\neq \gamma_2\neq \gamma_3$ are zeros of \eqref{eq:equation-beta}. Hence, by Proposition \ref{prop:gamma}, $\gamma_i\not\in[4m^2,\infty)$. By direct inspection, we note that the function $MR(M)\geq 0$ for every $M \in [4m^2,\infty)$ and $M R(M)$ is bounded. This implies that $M R(M) {\rho(M)} $ is positive and in $L^2([4m^2,\infty),dM)$. Thus, $\varsigma$ satisfies the hypothesis of Proposition \ref{prop:inverse-log}. In particular, the logarithmic growth of the corresponding $F(w^2)$ (the Stieltjes transform of $MR(M)\rho(M)$ multiplied by $c+w^2$) follows from the fact that $(MR(M)-1)\rho(M)$ is absolutely integrable and square-integrable and the fact that it can be discarded in the analysis of the asymptotic behaviour of $F(w^2)$, which is dominated by the contribution to the Stieltjes transform due to $\rho(M)$.
\end{proof}

\subsection{Local equation plus subleading nonlocal contributions}\label{sec:Subleading}

According to Proposition~\ref{prop:gamma}
and Lemma~\ref{le:solution-mathfrakh}, if we can find three independent solutions of 
\Eq~\eqref{eq:equation-beta}, then we can rewrite 
\[
\iota_{\mathcal{M}}\tilde{\mathsf{G}}_{\mathrm{ret}}( \square (\square-a_1)(\square-a_2)\phi\otimes \check\rho) +\sum_{j=0}^{2}b_j \square^j \phi=S
\] 
as
\[
\iota_{\mathcal{M}}\tilde{\mathsf{G}}_{\mathrm{ret}} ((\square^3 \phi+\beta_2\square^{2} \phi + \beta_1 \square \phi +\beta_0\phi) \otimes ( \check \rho+b_2\mathfrak{h}_2 )) =S.
\] 
In other words, all contributions can be incorporated into the argument of the nonlocal operator $\iota_{\mathcal{M}}\tilde{\mathsf{G}}_{\mathrm{ret}}$. We prove here that this is always the case for the term containing the highest order derivatives without restricting the hypothesis on the coefficients $b_i$. Hence, after inverting the nonlocal operator, we can write the semiclassical equation in normal form up to a term that is nonlocal but contains only lower-order derivatives of $\phi$. We have the following proposition.

\begin{proposition}\label{prop:normal-form}
Consider $a_1,a_2 \in (-\infty,4m^2)$. 
For every $\tilde{b}_0,\tilde{b}_1,\tilde{b}_2 \in \mathbb{R}$, there exist ${b}_0$ and ${b}_1$ in $\mathbb{R}$ and $b_2=\tilde{b}_2$, such that \Eq~\eqref{eq:equation-beta}  admits three real independent solutions $\gamma_i\in (-\infty,4m^2)$, and thus, 
according to Proposition \ref{prop:gamma}, we conclude that
\[
\iota_{\mathcal{M}}\tilde{\mathsf{G}}_{\mathrm{ret}}( \square (\square-a_1)(\square-a_2)\phi\otimes \check\rho) +\sum_{j=0}^{2}\tilde{b}_j \square^j \phi=S
\] 
can be rewritten into the form
\[
\iota_{\mathcal{M}}\tilde{\mathsf{G}}_{\mathrm{ret}} ((\square^3 \phi+\beta_2\square^{2} \phi + \beta_1 \square \phi +\beta_0\phi) \otimes ( \check \rho+b_2\mathfrak{h}_2 ))
+(\tilde{b}_1-b_1) \square \phi + 
(\tilde{b}_0-b_0) \phi
=S\,,
\]
where $\beta_i$ are as in 
Proposition~\ref{prop:gamma}.
Denoting $\check\varsigma=\check\rho + b_2\mathfrak{h}_2 $ and considering the corresponding $\varsigma$ obtained from $\check\varsigma$,
inverting the transformation given in \eqref{eq:rhocheck}, we conclude that this equation can be written as 
\[
\square^3 \phi+\beta_2\square^{2} \phi + \beta_1 \square \phi +\beta_0\phi
+
\mathcal{G}_{\varsigma}^{-1}( 
(\tilde{b}_1-b_1) \square \phi + 
(\tilde{b}_0-b_0) \phi)
=
\mathcal{G}_{\varsigma}^{-1}(S)\,,
\]
where $\mathcal{G}_{\varsigma}^{-1}$ is given in \eqref{eq:inverse-L-S} of Proposition~\ref{prop:inverse-log} and $\beta_i$ are obtained from $\gamma_i$ as in \eqref{eq:beta-gamma}.
\end{proposition}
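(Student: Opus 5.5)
The plan is to choose the three roots $\gamma_0,\gamma_1,\gamma_2$ first, as distinct real numbers in $(-\infty,4m^2)$, and then read off $b_0$ and $b_1$ from Eq.~\eqref{eq:equation-beta}, keeping $b_2=\tilde b_2$ fixed.

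On the real half-line $z<4m^2$ the function $J$ of Eq.~\eqref{eq:J} is real-valued and continuous, since it is the Stieltjes transform of a positive density supported on $[4m^2,\infty)$. Hence the function
\[
q(\gamma):=\gamma(a_1-\gamma)(a_2-\gamma)J(\gamma)-\tilde b_2\gamma^2
\]
is real and continuous on $(-\infty,4m^2)$. Equation~\eqref{eq:equation-beta} with $b_2=\tilde b_2$ then reads $q(\gamma)=b_0+b_1\gamma$. Its real solutions are exactly the abscissae where the graph of $q$ meets the straight line $\ell(\gamma)=b_0+b_1\gamma$.

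The key step is to show that $q$ is not affine on $(-\infty,4m^2)$. Granting this, I pick three points $\gamma_0<\gamma_1<\gamma_2<4m^2$ at which $q$ is strictly convex or strictly concave, so that the chord through $(\gamma_0,q(\gamma_0))$ and $(\gamma_2,q(\gamma_2))$ does not pass through $(\gamma_1,q(\gamma_1))$. Such points exist, for instance, near $4m^2$, where $J$ has the square-root-type behaviour visible in the $\arccsc$ term of Eq.~\eqref{eq:J}, or as $\gamma\to-\infty$, where $J(\gamma)\sim \log|\gamma|/\gamma$ makes $q$ grow like $\gamma^2\log|\gamma|$, which is not affine. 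I then tilt and shift the line: continuity and the intermediate value theorem applied to $q-\ell$ give parameters $(b_0,b_1)$ such that $q-\ell$ has at least three sign changes, and hence at least three real zeros in $(-\infty,4m^2)$.

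The hypothesis of Proposition~\ref{prop:gamma} requires exactly three solutions in $D$, which is where I expect the main obstacle. I would handle it by perturbing $(b_0,b_1)$ within the open set of lines producing three transversal crossings, and then excluding further real crossings and complex solutions. Complex solutions I would rule out by an argument-principle count on $D$: the large-$|z|$ behaviour $z^2\log z$ of the left-hand side, together with the jump of $J$ across the cut, bounds the total number of zeros. Real crossings I would control by choosing the three crossings in a region where $q-\ell$ is monotone outside them, for example using the convexity of $q$ for large negative $\gamma$. If this global counting proves too delicate, a fallback is to note that the proof of Proposition~\ref{prop:gamma} only uses three distinct zeros of $P$ lying in $D$, so any three chosen roots suffice to build the $\beta_i$ and the $\mathfrak{h}_j$ via Lemma~\ref{le:solution-mathfrakh}. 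Also, I will keep $b_0,b_1\neq0$ by a small perturbation, as Lemma~\ref{le:solution-mathfrakh} requires.

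With the $\gamma_i$ in hand, Proposition~\ref{prop:gamma} gives real $\beta_i$ and the constraint \eqref{eq:constraint}. The reduction of Eq.~\eqref{eq:equationsforh} with coefficients $b_j$ then rewrites
\[
\iota_{\mathcal M}\tilde{\mathsf G}_{\mathrm{ret}}\big(\square(\square-a_1)(\square-a_2)\phi\otimes\check\rho\big)+\sum_j b_j\square^j\phi
\]
as the $\check\varsigma$-term of the statement. The leftover pieces $(\tilde b_j-b_j)\square^j\phi$ for $j=0,1$ remain outside the nonlocal operator. Proposition~\ref{prop:hypo-inveresion}, applicable since $a_i<4m^2$ and $\gamma_i\in D$, shows that $\varsigma$ is positive, square-integrable and Hölder continuous, with $F$ growing like $\log$. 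Proposition~\ref{prop:inverse-log}(ii) then provides $\mathcal G_\varsigma^{-1}$ on past-compact functions. Applying this inverse to both sides, and using that the leftover terms are past-compact, yields the stated normal form.
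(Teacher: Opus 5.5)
\textbf{Gap: non-affinity does not give three real roots.} The missing piece is the step that should produce three real roots of Eq.~\eqref{eq:equation-beta}. Knowing that $q$ is not affine only gives you a line crossing its graph twice. Shift your chord through $(\gamma_0,q(\gamma_0))$ and $(\gamma_2,q(\gamma_2))$ slightly: you get the sign pattern $\mp,\pm,\mp$ at $\gamma_0,\gamma_1,\gamma_2$, i.e.\ two sign changes. ``Tilting and shifting'' with the intermediate value theorem cannot create a third one without information on $q$ outside $[\gamma_0,\gamma_2]$. For instance, $q(\gamma)=\gamma^2$ is not affine, yet no line meets its graph three times.

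The asymptotics you quote also has the wrong sign. For $\gamma<4m^2$ one has $J(\gamma)=\int_{4m^2}^\infty\rho(M)(M-\gamma)^{-1}\,dM>0$, and $J(\gamma)\sim\log|\gamma|/(16\pi^2|\gamma|)$ as $\gamma\to-\infty$. Hence $q(\gamma)\sim-\gamma^2\log|\gamma|/(16\pi^2)\to-\infty$, not $+\infty$. This sign is exactly what can supply the missing crossing; with your sign, $q$ would be convex at both ends of the interval.

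\textbf{How to close it.} What is missing is a change of convexity, or a pole. Within your own setup, write $q=\gamma A-\tilde b_2\gamma^2$ with $A(\gamma)=(a_1-\gamma)(a_2-\gamma)J(\gamma)$, so that $q''=2A'+\gamma A''-2\tilde b_2$. Since $a_1,a_2<4m^2$ and $\rho(M)\sim\sqrt{M-4m^2}$ at threshold, both $A'$ and $A''$ diverge to $+\infty$ as $\gamma\to 4m^2$. So $q$ is strictly convex on some interval $(c_*,4m^2)$. Take the tangent line $\ell$ to $q$ at $c\in(c_*,4m^2)$ and raise it by a small $\epsilon>0$. Then $q-\ell-\epsilon$ is negative at $c$, positive at two points $c_-<c<c_+$ of $(c_*,4m^2)$, and tends to $-\infty$ as $\gamma\to-\infty$. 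That gives three sign changes. They are stable under small perturbations of $(b_0,b_1)$, so you can also keep $b_0,b_1\neq 0$.

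The paper argues differently. It divides Eq.~\eqref{eq:equation-beta} by $\gamma$ and uses that $A$ is convex with $A'(I)=\mathbb{R}$ (Lemma~\ref{le:convexity}) to take the tangent to $A$ of the prescribed slope $b_2$. Raising it by $\epsilon_2$ gives two crossings near the tangency point, and the pole $b_0/\gamma$ with small $b_0=\epsilon_1$ gives the third crossing near $\gamma=0$.

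The rest of your plan is sound. Your fallback for the ``three and only three'' hypothesis is the right resolution: the proof of Proposition~\ref{prop:gamma} only uses three distinct real zeros in $D$, and the paper's argument does not verify uniqueness either. The final reduction via Propositions~\ref{prop:hypo-inveresion} and~\ref{prop:inverse-log} is as in the paper.
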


To prove that there exists a choice of $b_i$ with $b_2=\tilde{b}_2$ such that \Eq 
\eqref{eq:equation-beta}
has three real independent solutions, we first rewrite this equation as \begin{equation}\label{eq:left-right}
(\gamma-a_1)(\gamma-a_2) J(\gamma)=\frac{b_0}{\gamma} +b_1+b_2\gamma, \qquad \gamma\in I=(-\infty,4m^2)\,,
\end{equation}
where $J$ is given in \eqref{eq:J} and it is restricted here to $I=(-\infty,4m^2)$. Then, we have the following technical result: 

\begin{lemma}\label{le:convexity}
Consider the function $A:I\to\mathbb{R}$ with $I=(-\infty,4m^2)$ defined in the left-hand side of \eqref{eq:left-right} as
\[
A(\gamma):=(a_1-\gamma)(a_2-\gamma) J(\gamma), \qquad \gamma\in I.
\]
The function $A$ 
is a convex function.
Furthermore, $A'(\gamma)$ is monotonically increasing function with image $A'(I)=\mathbb{R}$.
\end{lemma}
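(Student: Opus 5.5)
The plan is to work directly with the Stieltjes representation $J(\gamma)=\int_{4m^2}^\infty \frac{\rho(M)}{M-\gamma}\,dM$ from \eqref{eq:J}. For $\gamma\in I=(-\infty,4m^2)$ we have $M-\gamma>0$, so every derivative in $\gamma$ can be taken under the integral sign. We then expand
\[
(a_1-\gamma)(a_2-\gamma)=(M-\gamma)^2-(2M-a_1-a_2)(M-\gamma)+(M-a_1)(M-a_2),
\]
and divide by $M-\gamma$. This gives
\[
A(\gamma)=\int_{4m^2}^\infty \rho(M)\Big[(M-\gamma)-(2M-a_1-a_2)+\frac{(M-a_1)(M-a_2)}{M-\gamma}\Big]dM .
\]
The first two terms are not separately integrable, since $\rho(M)\sim 1/(16\pi^2M)$. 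We therefore keep the linear-in-$\gamma$ part formally and differentiate twice, which removes the divergent pieces. Their derivatives are either constants or zero. Concretely, we compute $A''$ from the product form $A''=2J-2(2\gamma-a_1-a_2)J'+(a_1-\gamma)(a_2-\gamma)J''$ with convergent integrals, and combine them into
\[
A''(\gamma)=2\int_{4m^2}^\infty \rho(M)\frac{(M-a_1)(M-a_2)}{(M-\gamma)^3}\,dM .
\]
This is just the integral form of the identity above.

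Convexity then follows from the hypothesis $a_1,a_2\le 4m^2$, assumed in the surrounding Proposition~\ref{prop:normal-form}. For $M\ge 4m^2$ we have $(M-a_1)(M-a_2)\ge0$, and it is strictly positive for $M>4m^2$. Since also $\rho>0$ and $(M-\gamma)^3>0$, we get $A''>0$ on $I$. So $A$ is strictly convex and $A'$ is strictly increasing.

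It remains to show $A'(I)=\mathbb{R}$, and this is the main obstacle. By monotonicity it suffices to check $A'(\gamma)\to-\infty$ as $\gamma\to-\infty$ and $A'(\gamma)\to+\infty$ as $\gamma\to4m^2$.

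For $\gamma\to-\infty$, the explicit form \eqref{eq:J}, or the bound $\rho(M)\approx 1/(16\pi^2 M)$ at large $M$, gives $J(\gamma)\sim \frac{1}{16\pi^2}\frac{\log|\gamma|}{|\gamma|}$. Hence $A(\gamma)\sim \frac{1}{16\pi^2}|\gamma|\log|\gamma|$ and $A'(\gamma)\sim -\frac{1}{16\pi^2}\log|\gamma|\to-\infty$. To make this rigorous I would differentiate the integral representation and split it at $M\sim|\gamma|$, or read it off the closed form of $J$.

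For $\gamma\to 4m^2$, near threshold $\rho(M)\approx\frac{1}{16\pi^2}\frac{\sqrt{M-4m^2}}{2m\cdot 4m^2}$. The integral for $A''$ then behaves like $\int \frac{(M-a_1)(M-a_2)\sqrt{M-4m^2}}{(M-\gamma)^3}\,dM\sim (4m^2-\gamma)^{-3/2}$, provided $a_i<4m^2$, so that the numerator is nonzero at $M=4m^2$. Integrating this, $A'$ blows up like $(4m^2-\gamma)^{-1/2}\to+\infty$. The same conclusion can be read off from $J'(\gamma)\sim(4m^2-\gamma)^{-1/2}$, since $J'$ appears multiplied by the positive factor $(a_1-\gamma)(a_2-\gamma)$ while the other terms stay bounded. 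In the borderline case $a_i=4m^2$ the divergence is weaker, so I would restrict to $a_i<4m^2$, as in Proposition~\ref{prop:normal-form}.
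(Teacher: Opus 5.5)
Your proposal is correct and follows essentially the same route as the paper. The same positive-integrand formula for $A''$ gives convexity, under $a_1,a_2<4m^2$, which the paper also uses implicitly. The limits of $A'$ come from the divergence of $J'$ at threshold and from the $1/M$ tail of $\rho$ as $\gamma\to-\infty$, where your net asymptotics $A'(\gamma)\sim-\frac{1}{16\pi^2}\log|\gamma|$ make precise the paper's remark that the $J$-term dominates the $J'$-term.

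The only slip is a sign in your product-rule line. Since $\frac{d}{d\gamma}\big[(a_1-\gamma)(a_2-\gamma)\big]=2\gamma-a_1-a_2$, it should read $A''=2J+2(2\gamma-a_1-a_2)J'+(a_1-\gamma)(a_2-\gamma)J''$. With this sign the terms combine into your integral formula for $A''$, which is correct.
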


\begin{proof}
To establish convexity of $A$, we study $A''(\gamma)$. From 
the definition of $J$ given in \eqref{eq:J}, we have that 
\[
A''(\gamma) = 
\int_{4m^2}^\infty \rho(M)\frac{2 (M-a_1)(M-a_2)}{(M-\gamma)^3}\,dM\,.
\]
We observe that $a_1,a_2$ and $\gamma$ are contained in $(-\infty,4m^2)$. Hence, they are outside of the domain of integration of $M$. Furthermore, $\rho(M)$ is positive and thus the integration function is positive and integrable. This implies that the function $A$ is convex.

We study now the derivative $A'(\gamma)$.
Since $A$ is convex, $A'$ is monotonically increasing. To determine the image of $I$ under $A'$, we first note that 
\[
\lim_{\gamma\to -\infty} A'(\gamma)=-\infty,
\qquad 
\lim_{\gamma\to {4m^2}^-} A'(\gamma)=\infty\,.
\]
These two limits follow by direct inspection, since $\rho(M)=\frac{1}{16\pi^2} \frac{1}{M}\sqrt{1-\frac{4m^2}{M}}$ and thus, it follows that the second integral
in 
\begin{align*}
A'(\gamma) 
&= 
(\gamma-a_1)(\gamma-a_2) 
\int_{4m^2}^\infty \rho(M) \frac{1}{(M-\gamma)^2} dM
+
(2\gamma-a_1-a_2) 
\int_{4m^2}^\infty \rho(M) \frac{1}{(M-\gamma)} dM
\end{align*}
is finite for $\gamma$ near $4m^2$, while the first integral diverges for $\gamma$ near $4m^2$. 
For $\gamma\to -\infty$, the contribution from the second integral dominates that of the first, and it diverges to $-\infty$.
\end{proof}

\begin{proof}[Proof of Proposition~\ref{prop:normal-form}.]
By the previous lemma, it holds that $A'(I) = \mathbb{R}$. Hence, for any choice of $b_2 \in \mathbb{R}$, one can always find a point $\tilde\gamma \in I$ and a subordinate unique choice of $q\in \mathbb{R}$ (uniqueness is a consequence of monotonicity), such that $b_2 = A'(\tilde\gamma)$ and such that the graph of
\[
\gamma \mapsto q+b_2 \gamma
\]
is tangent to the graph of $A(\gamma)$ with intersection at $\tilde\gamma$. 

Consider now the function at the right-hand side, i.e.
\[
B(\gamma) := \frac{b_0}{\gamma} + b_1 + b_2 \gamma, \qquad \gamma \in I.
\]
This function has a vertical asymptote in $\gamma=0$ and an oblique asymptote at infinity with slope $b_2$. Therefore, keeping $b_2$ fixed, if we choose $b_0=\epsilon_1$ and $b_1 = q+\epsilon_2$ for sufficiently small and positive $\epsilon_1$ and $\epsilon_2$, the graph of the function $B(\gamma)$ is very close to the union of the vertical and the oblique asymptote of $B(\gamma)$. Hence, the equation
\[
A(\gamma)=B(\gamma), \qquad \gamma \in I
\]
has three independent real solutions in $I$: one near $0$, where $B(\gamma)$ has a vertical asymptote and $A(\gamma)$ is finite, and the other two near $\tilde\gamma$, that exist by convexity of $A$ (see Lemma~\ref{le:convexity}). More precisely, these two solutions are close to the two points in which $q+\gamma b_2$ crosses $A(\gamma)$ that exist by convexity of $A$.

The thesis follows from an application of Proposition \ref{prop:gamma} and Proposition \ref{prop:inverse-log} to the equation with coefficients $b_0,b_1,b_2$, which now satisfies the hypotheses thanks to the result of Proposition \ref{prop:hypo-inveresion}.
\end{proof}

For the remainder of this section, we study the Cauchy problem of the 6th order hyperbolic equation with subleading nonlocal interaction obtained in Proposition~\ref{prop:normal-form}, i.e.
\begin{equation}\label{eq:ret-mixed}
\square^3 \phi+\beta_2\square^{2} \phi + \beta_1 \square \phi +\beta_0\phi
+
\mathcal{G}_{\varsigma}^{-1}( 
(\tilde{b}_1-b_1) \square \phi + 
(\tilde{b}_0-b_0) \phi)
=
\mathcal{G}_{\varsigma}^{-1}(S)\,,
\end{equation}
which by Proposition \ref{prop:normal-form} is equivalent to the prototypical equation~\eqref{eq:semiclassical-prototype}. Since the nonlocal term contains only lower-order derivatives and is retarded in nature, its solutions can be studied with methods similar in spirit than the one used in \cite{FMS}, i.e.~by employing perturbation theory, albeit in a more specialised and slightly different context.

To start with, let us consider the leading-order local part of Eq.~\eqref{eq:ret-mixed}. The corresponding differential operator can be factorised as
\begin{equation*}
    L=(\square^3 +\beta_2\square^{2}  + \beta_1 \square  +\beta_0) = (\square -\gamma_0)(\square -\gamma_1)(\square -\gamma_2),
\end{equation*}
where $\gamma_i\in (-\infty,4m^2)$.
The operator $L$ admits a retarded fundamental solution $\mathsf{G}_{L,\mathrm{ret}}$, which is obtained as a suitable linear combination of the retarded fundamental solutions of $\square -\gamma_i$. 
Notice that, even if some of the $\gamma_i$ could be negative, the retarded fundamental solution is still well-defined. In particular, we have that
\[
\mathsf{G}_{L,\mathrm{ret}} (\phi) = 
\sum_{i=0}^2
c_i \mathsf{G}^{\gamma_i}_{\mathrm{ret}} (\phi)\,,
\]
where $c_i$ are the real coefficients of the partial fraction expansion
\[
f(M):=\frac{1}{(M-\gamma_0)(M-\gamma_1)(M-\gamma_2)} = \sum_{i=0}^2 c_{i} \frac{1}{M-\gamma_i}, \qquad c_{i} = \lim_{M\to\gamma_i} (M-\gamma_i)f(M)\,.
\]
Having defined the retarded fundamental solution of the local operator $L$, we can rewrite \Eq~\eqref{eq:ret-mixed} in the form
\begin{align}\label{eq.PertEq}
    \phi+W_{\mathrm{ret}}(\phi)=\mathsf{S}\,,
\end{align}
where the nonlocal operator $W_{\mathrm{ret}}$ and the past-compact source $\mathsf{S}$ are defined as
\begin{align*}
    W_{\mathrm{ret}}(\phi) :=
\mathsf{G}_{L,\mathrm{ret}} (\mathcal{G}_{\varsigma}^{-1}( 
((\tilde{b}_1-b_1) \square  + 
(\tilde{b}_0-b_0) )\phi))\qquad\text{and}\qquad \mathsf{S} := \mathsf{G}_{L,\mathrm{ret}} (\mathcal{G}_{\varsigma}^{-1}(S))\,,
\end{align*}
respectively. The idea is now to use perturbation theory to obtain the unique past-compact solution to this equation. More precisely, we define a sequence $(\phi_{n})_{n\in\mathbb{N}}\in C^{\infty}_{\mathrm{pc}}(\mathcal{M})^{\mathbb{N}}$ inductively by
\begin{align*}
    \phi_0 := \mathsf{S},\qquad \phi_n:= W_{\mathrm{ret}}(\phi_{n-1})\quad \text{for} \quad n>0\,.
\end{align*}
Then, the idea is to show that the series defined by
\begin{equation}\label{eq:dyson-series}
\phi = \sum_{n\geq 0} (-1)^{n}\phi_{n}=\sum_{n\geq 1}(-1)^{n}W_{\mathrm{ret}}^{n}(\mathsf{S})
\end{equation}
converges absolutely in\footnote{The space $C^{\infty}_{\mathrm{pc}}(\mathcal{M})$ is naturally equipped with its LF-space topology induced by the family of seminorms $\Vert\varphi\Vert_{\alpha,K}:=\sup_{(t,\textbf{x})\in K}\vert\partial^{\alpha}\varphi(t,\textbf{x})\vert$ labelled by $\alpha\in\mathbb{N}^{4}$ and past-compact $K\subset\mathcal{M}$. In other words, it is the \emph{strict direct limit} of the Fréchet spaces $C_{K}^{\infty}(\mathcal{M}):=\{\phi\in C^{\infty}(\mathcal{M})\mid\mathrm{supp}(\phi)\subset K\}$ equipped with the subspace topologies inherited from $C^{\infty}(\mathcal{M})$. As a such, it is a \emph{complete locally convex topological vector space} that is, however, not metrisable. Nevertheless, due to completeness, absolute convergence of a series implies convergence.} $C^{\infty}_{\mathrm{pc}}(\mathcal{M})$. By construction, it is then clear that $\phi$ is indeed a past-compact and smooth solution to the \Eq~\eqref{eq.PertEq}. 

To prove that the series in Eq.~\eqref{eq:dyson-series} indeed converges, we use the fact that $W_{\mathrm{ret}}$ is a retarded operator and its continuity properties together with Gr\"onwall lemma in integrated form. To this end, we have the following theorem.

\begin{theorem}\label{thm:convergence}
For $a_i \in (-\infty,4m^2)$ and for any choice of past-compact smooth source $\mathsf{S}$ supported in the future of $\Sigma_{-\epsilon}$, the equation
$$
\phi+W_{\mathrm{ret}}(\phi)=\mathsf{S}\,,
$$
which is equivalent to the prototypical semiclassical equation~\eqref{eq.PertEq},
admits a unique smooth and past-compact solution $\phi$, supported in the future of $\Sigma_{-\varepsilon}$.
\end{theorem}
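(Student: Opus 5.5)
The plan is to prove convergence of the Neumann/Dyson series \eqref{eq:dyson-series} by a Volterra-type argument. The operator $W_{\mathrm{ret}}$ is retarded, and it gains derivatives relative to the leading part. Uniqueness will follow from the same estimate applied to a homogeneous solution.

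\emph{Step 1: regularity order of $W_{\mathrm{ret}}$.} Using the representation \eqref{eq:inverse-L-S}, $\mathcal{G}_\varsigma^{-1}$ is $\varphi_{\mathrm{disc}}$ times the identity, plus $(\square-c)$ times a constant, plus an integral over retarded Klein--Gordon propagators applied to $(\square-c)^2$. Because $\varphi_{\mathrm{con}}(M)/(M-c)$ is integrable (it decays like $M^{-1}\log^{-2}M$), $\mathcal{G}_\varsigma^{-1}$ behaves as a retarded operator of order at most four. Composing with the argument $(\tilde b_1-b_1)\square+(\tilde b_0-b_0)$ gives order six. The sixth-order retarded inverse $\mathsf{G}_{L,\mathrm{ret}}=\sum_i c_i\mathsf{G}^{\gamma_i}_{\mathrm{ret}}$ has partial fraction coefficients satisfying $\sum c_i=\sum c_i\gamma_i=0$, so it gains six derivatives. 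In Fourier--Laplace variables this is transparent: the symbol of $W_{\mathrm{ret}}$ is
$$\frac{-\big((\tilde b_1-b_1)w^2-(\tilde b_0-b_0)\big)\,(c+w^2)}{F(w^2)\,(-w^2-\gamma_0)(-w^2-\gamma_1)(-w^2-\gamma_2)},$$
which decays like $|w|^{-2}$. So $W_{\mathrm{ret}}$ is a retarded operator of order $-2$ (up to a logarithm), and it commutes with spatial translations and with $\partial_t$.

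\emph{Step 2: Gr\"onwall/Volterra estimate.} Fix $T>-\epsilon$ and work on the slab $[-\epsilon,T]\times\mathbb{R}^3$. Restrict first to spatially compact data; by finite propagation speed this covers past-compact data, as in the proof of Proposition~\ref{prop:extension}. Introduce energy-type norms
$$E_k(t)=\sum_{|\alpha|\le k}\|\partial^\alpha\phi(t,\cdot)\|_{L^2}.$$
From Step 1, retardedness gives $E_k(W_{\mathrm{ret}}\phi)(t)\le C_{k,T}\int_{-\epsilon}^t E_k(\phi)(t')\,dt'$. Iterating this bound yields
$$E_k(W^n_{\mathrm{ret}}\mathsf S)(t)\le \frac{C_{k,T}^n (t+\epsilon)^n}{n!}\sup_{[-\epsilon,T]}E_k(\mathsf S).$$
Summing over $n$ shows the series converges absolutely in every $H^k$ on every finite slab. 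Sobolev embedding then gives convergence in the seminorms $\|\cdot\|_{\alpha,K}$, hence in $C^\infty_{\mathrm{pc}}(\mathcal M)$. Support in $J^+(\Sigma_{-\epsilon})$ is preserved at each order by retardedness, and the limit solves \eqref{eq.PertEq} by continuity of $W_{\mathrm{ret}}$.

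\emph{Main obstacle.} The delicate point is justifying the bound $E_k(W_{\mathrm{ret}}\phi)(t)\le C\int^t E_k(\phi)$ in physical time rather than only at the level of Laplace symbols. The kernel of $\mathcal{G}_\varsigma^{-1}$ includes the $M$-integral of retarded propagators. My first attempt will bound each $\mathsf G^{M}_{\mathrm{ret}}$ by the standard energy estimate for Klein--Gordon, with constants uniform for $M\ge 4m^2$ (heavier masses only improve the bound). I will then integrate against the $L^1$ weight $|\varphi_{\mathrm{con}}(M)|/(M-c)$, using the last line of \eqref{eq:inverse-L-S}. For the $\mathsf G^{\gamma_i}_{\mathrm{ret}}$ with possibly negative $\gamma_i$ (tachyonic), the energy estimate still holds on finite slabs, with an $e^{\sqrt{|\gamma_i|}T}$ constant.

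\emph{Uniqueness.} If $\phi+W_{\mathrm{ret}}\phi=0$ with $\phi$ past-compact, the same inequality gives $E_k(\phi)(t)\le C\int_{-\epsilon}^tE_k(\phi)$. Gr\"onwall then forces $\phi=0$.
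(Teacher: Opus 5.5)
There is a genuine gap in the one step that carries the proof: the Volterra bound $E_k(W_{\mathrm{ret}}\phi)(t)\le C\int_{-\epsilon}^t E_k(\phi)$. The bound itself is true, and once you have it your iteration and your Gr\"onwall uniqueness argument go through. However, neither Step~1 nor your plan for the ``main obstacle'' proves it.

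Step~1 counts derivatives through the symbol as a function of $w^2$. For retarded operators this is not the gain in energy norms. The propagator $\mathsf{G}^M_{\mathrm{ret}}$ has time kernel $\sin(\omega_M t)/\omega_M$ and gains exactly one derivative; on $\Re s=\sigma$, $|s^2+\mathbf{p}^2|$ is only of size $\sigma|\mathbf{p}|$ near the characteristic set. The cancellations $\sum_i c_i=\sum_i c_i\gamma_i=0$ raise the gain of $\mathsf{G}_{L,\mathrm{ret}}$ to three, not six: its kernel is of size $t^2/|\mathbf{p}|^3$ for large $|\mathbf{p}|$.

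Now suppose you bound each $\mathsf{G}^M_{\mathrm{ret}}$ in the last line of \eqref{eq:inverse-L-S} by the Klein--Gordon energy estimate and integrate against $|\varphi_{\mathrm{con}}|/(M-c)$. Then $\mathcal{G}_\varsigma^{-1}$ costs three derivatives, since $(\square-c)^2$ sits inside the $M$-integral and $(\square-c)$ outside. The factor $(\tilde b_1-b_1)\square+(\tilde b_0-b_0)$ costs two more, and $\mathsf{G}_{L,\mathrm{ret}}$ gives back only three. You end up with $E_k(W_{\mathrm{ret}}\phi)\lesssim\int E_{k+2}(\phi)$, a loss of two derivatives per application of $W_{\mathrm{ret}}$. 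The Neumann series then cannot be summed in any fixed $H^k$, the factorial bound does not follow, and uniqueness inherits the same problem.

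The missing idea is that all factors are functions of $\square$, so they must be regrouped \emph{before} any estimate. No differential factor should ever be paid for separately against a propagator. Within your own setup this works. Substituting the last line of \eqref{eq:inverse-L-S} into $W_{\mathrm{ret}}$ gives three terms:
\begin{itemize}
\item The term $\mathsf{G}_{L,\mathrm{ret}}(\square-c)^2\big[(\tilde b_1-b_1)\square+(\tilde b_0-b_0)\big]$ has degree zero in $\square$. By partial fractions it equals $(\tilde b_1-b_1)\,\mathrm{id}+\sum_i e_i\mathsf{G}^{\gamma_i}_{\mathrm{ret}}$.
\item The other two terms have negative degree and reduce to combinations of the $\mathsf{G}^{\gamma_i}_{\mathrm{ret}}$ alone.
\item The only remaining factor is $\int\frac{\varphi_{\mathrm{con}}(M)}{M-c}\mathsf{G}^M_{\mathrm{ret}}\,dM$. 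Its time kernel is bounded by $t\,\|\varphi_{\mathrm{con}}/(M-c)\|_{L^1}$ uniformly in $\mathbf{p}$.
\end{itemize}
This gives, mode by mode, $|\widehat{W_{\mathrm{ret}}\phi}(t,\mathbf{p})|\le C_T\int_{-\epsilon}^t|\hat\phi(t',\mathbf{p})|\,dt'$. That is your Volterra bound with no derivative loss.

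The paper regroups differently. It absorbs the factor $(w^2+c)$ of $\mathcal{G}_\varsigma^{-1}$ into the rational multiplier and writes $W_{\mathrm{ret}}=\sum_i d_i\mathsf{G}^{\gamma_i}_{\mathrm{ret}}\circ\mathsf{K}$ with $\mathscr{L}(\mathsf{K})=-F^{-1}$. It then must show that $\mathsf{K}(t,\mathbf{p})$ is locally integrable in $t$ uniformly in $\mathbf{p}$. Because $\varphi_{\mathrm{con}}\sim-16\pi^2/\log^2 M$ is not integrable, this requires the splitting $\varphi_{\mathrm{con}}=\varphi_{\mathrm{lead}}+\mathcal{R}$ and the integration by parts of Lemma~\ref{lem:estimateLead}.
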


\begin{proof}
It remains to prove that the series defining $\phi$ in \eqref{eq:dyson-series} converges in the LF-space $C^{\infty}_{\mathrm{pc}}(\mathcal{M})$.
We start by analysing the convergence in the uniform topology. Consider a generic point $(T,\mathbf{x})\in\mathcal{M}$. The operator $W_{\mathrm{ret}}$ is retarded, hence, due to its support properties we can without lose of generality assume that $\mathsf{S}$ is also spatially compact. If this not the case, we can multiply $\mathsf{S}$ with a smooth characteristic function $\chi$ of $J^{-}(T,\mathbf{x})$, making it spatially compact without altering the result. Therefore, without loss of generality, we may replace $\mathsf{S}$ with $\chi \mathsf{S}$ in this proof.

We proceed by analysing the regularity of the operator $W_{\mathrm{ret}}$.
After using the Fourier-Laplace transform, we have that
\begin{align*}
\mathscr{L}\mathscr{F}(W_{\mathrm{ret}}\psi)(s,\mathbf{p}) 
&= -\frac{(w^2+c)((\tilde{b}_1-b_1)w^2 + \tilde{b}_0-b_0)}{(w^2+\gamma_0)(w^2+\gamma_1)(w^2+\gamma_2)}
\mathscr{L}(\mathsf{K})(s,\mathbf{p})
 \hat{\psi}(s,\mathbf{p}), \qquad w^2  = s^2 + \mathbf{p}^2,
\end{align*}
where we used the operator 
$\mathscr{L}(\mathsf{K})(s,\mathbf{p})=F(s^2+\mathbf{p}^2)^{-1}$ 
given in Eq.~\eqref{eq:Fw2} in terms of $\varsigma=\rho+b_2\mathfrak{h}_2$.

The operator with integral kernel $\mathsf{K}$ is bounded and we shall control later that it is locally integrable in time uniformly in $\mathbf{p}$. 
The remaining contribution to $W_{\mathrm{ret}}$ admits the partial fraction expansion 
\[
L(w^2)=-\frac{(w^2+c)((\tilde{b}_1-b_1)w^2 + \tilde{b}_0-b_0)}{(w^2+\gamma_0)(w^2+\gamma_1)(w^2+\gamma_2)} = -\sum_{i=0}^2 \frac{d_i}{w^2+\gamma_i}, \qquad d_i = \lim_{w^2\to -\gamma_i} -L(w^2)(w^2+\gamma_i).
\]
We thus have that 
\begin{equation}\label{eq:decomposeWret}
W_{\mathrm{ret}}(\phi) = \sum_{i=0}^2 d_i\mathsf{G}^{\gamma_i}_{\mathrm{ret}}\circ \mathsf{K} \phi .
\end{equation}

To prove that the integral kernel of $\mathsf{K}(t,\mathbf{p})$ is locally integrable in $t$  (and the result is uniformly bounded in $\mathbf{p}\in \mathbb{R}^3$ ), we use a result similar to the one obtained in Lemma 5.5 and Lemma 5.6 in \cite{MedaPinSiemcosmo}, after isolating the leading contribution in the integral appearing in $\mathsf{K}(t,\mathbf{p})$. 

Recalling the 
expression for $\mathsf{K}$ obtained in \eqref{eq:inverse-L-S}, we have that
\[ 
\mathsf{K}(t,\mathbf{p})=\varphi_{\mathrm{disc}} \frac{\sin{(\omega_ct)}}{\omega_c}
-
\int_{4m^2}^\infty  \varphi_{\mathrm{con}}(M)
\frac{\sin(\omega_M t)}{\omega_M} 
 \,dM,
\]
where $\omega_c=\sqrt{\mathbf{p}^2+c}$, where $\omega_M=\sqrt{\mathbf{p}^2+M}$, where $\varphi_{\mathrm{disc}}$ is a suitable constant and where
\begin{align}\label{eq:varphile}
\varphi_{\mathrm{con}}(M) 
=
\frac{(c-M)\varsigma(M)}{(\Re F(-M))^{2} +\pi^2 (c-M)^2 \varsigma(M)^2}
=\varphi_{\mathrm{lead}}(M)+\mathcal{R}(M)\,,
\end{align}
with the leading-order contribution $\varphi_{\mathrm{lead}}$ given by
\[
\varphi_{\mathrm{lead}} = -\frac{16\pi^2}{\log\left(\frac{M}{4m^2} \delta\right)^2+\pi^2}
\]
and with $\mathcal{R}(M)$ being a remainder term. 
The leading-order contribution in $M$ is obtained from the asymptotic form of $\Re F (-M)$, which is proportional to the Hilbert transform of $\varsigma$ evaluated in $-M$ and multiplied by $(c-M)$. The asymptotic form of the Hilbert transform of $\theta(M-4m^2)/M$ is proportional to $M^{-1}\log(\frac{M}{4m^2})$ and can be obtained by direct computation. Instead, the subleading contribution is given by the asymptotic form of the Hilbert transform of an $L^1$-function supported in $[4m^2,\infty)$, captured by 
\[
\frac{\log \delta}{M}  = \frac{1}{M}\int_{-\infty}^\infty\left(16\pi^2 \varsigma(y)-\frac{1}{y} \right) \theta(y - 4m^2)\, dy
\]
for some positive $\delta$. Furthermore, for large values of $M$, the asymptotic form of the numerator is
\[
(c-M)\varsigma(M) \sim- \frac{1}{16\pi^2}.
\]
This implies that the reminder $\mathcal{R}(M)=\varphi_{\mathrm{con}}(M)-\varphi_{\mathrm{lead}}(M)$ is a continuous square-integrable function on $[4m^2,\infty)$ and it is such that $\mathcal{R}(M)M^{-\frac{1}{2}}$ is also absolutely integrable on $[4m^2,\infty)$.

The leading contribution in $\mathsf{K}(t,\mathbf{p})$ is given by
\[
C(t,\mathbf{p}) = -\int_{4m^2}^\infty  \varphi_{\mathrm{lead}}(M)
\frac{\sin(\omega_M t)}{\omega_M}\, 
 dM\, .
\]
By Lemma \ref{lem:estimateLead}, which we defer the proof of until below to keep the focus here on the main estimates, we conclude that this contribution is locally integrable near $t=0$ and uniformly bounded away from $0$.
The remainder 
\[
\mathsf{K}(t,\mathbf{p})-C(t,\mathbf{p})  
=\varphi_{\mathrm{disc}} \frac{\sin{(\omega_ct)}}{\omega_c}
-
\int_{4m^2}^\infty  \mathcal{R}(M)
\frac{\sin(\omega_M t)}{\omega_M} 
 dM
\]
is also bounded in $t$, uniformly in $\mathbf{p}$, due to the properties of $\mathcal{R}(M)$.  

This observation implies that 
there exists a suitable constant $B$, such that 
\[
|\mathsf{K}\hat{\phi}(t,\mathbf{p})| \leq t B
|\hat{\phi}(t,\mathbf{p})|.
\]
With this estimate at disposal, recalling the expression of $W_{\mathrm{ret}}$ given in 
\eqref{eq:decomposeWret} and 
using the properties of the retarded operators, we obtain the following estimate for the series constructing $\phi$ out of $\mathsf{S}$ given in \eqref{eq:dyson-series}: 
\begin{equation}\label{eq: stimaGron}
    \begin{aligned}
    |\mathscr{F}\phi(T,\mathbf{p})| &= |\sum_{n\geq 1} (-1)^{n}\mathscr{F}W_{\mathrm{ret}}^n (\chi \mathsf{S})(T,\mathbf{p})| \\
    &\leq \left(\exp\left(\sum_i |d_i||TB|\int_0^T \frac{|\sin((T-u) \sqrt{\mathbf{p}^2+\gamma_i})|}{|\sqrt{\mathbf{p}^2+\gamma_i}|} du\right)
-1\right)\int_0^T |\mathscr{F}(\chi\mathsf{S})(t,\mathbf{p})| dt
\end{aligned}
\end{equation}
The function $\chi\mathsf{S}(t,\mathbf{p})$ is square-integrable in $\mathbf{p}$ for all $t\leq T$, because it is the Fourier transform of a smooth and compactly supported function in space. The previous bound proves that the series defining $\phi$ converges absolutely at all times. 
In particular, notice that, if $\gamma_i$ is positive, we may estimate 
\[
\frac{|\sin((T-u) \sqrt{\mathbf{p}^2+\gamma_i})|}{|\sqrt{\mathbf{p}^2+\gamma_i}|}\leq \frac{1}{\sqrt{\gamma_i}},
\]
while, if it is negative, we use the estimate
\[
\frac{|\sin((T-u) \sqrt{\mathbf{p}^2+\gamma_i})|}{|\sqrt{\mathbf{p}^2+\gamma_i}|}\leq \frac{e^{T |\gamma_i|}}{\sqrt{|\gamma_i|}}
\]
and thus
\begin{equation*}
    |\mathscr{F}\phi(T,\mathbf{p})|
\leq f(T)\int_0^T |\mathscr{F}(\chi\mathsf{S})(t,\mathbf{p})| dt\,,
\end{equation*}
where 
\[
f(T)=\exp\left( T^2|B|\sum_i |d_i|\frac{
e^{T\sqrt{|\gamma_i|}}
}{\sqrt{|\gamma_i|}}\right)-1
\]
is a continuous function on $[0,\infty)$. The very same bound then holds also in the uniform topology, i.e.~the series \eqref{eq:dyson-series} converges absolutely in the supremum norm $\| \cdot \|_\infty$ for any fixed time $T$. 
We begin by noticing that the uniform norm of the $n$-th term is bounded by the $L^1$ norm of its Fourier transform, namely
$$
\|\phi_n(T, \cdot)\|_\infty = \sup_{x \in \mathbb{R}^3} \left| \frac{1}{(2\pi)^3} \int_{\mathbb{R}^3} e^{i p \cdot x} \mathcal{F}\phi_n(T, p) \, d^3p \right| \le \frac{1}{(2\pi)^3} \int_{\mathbb{R}^3} |\mathcal{F}\phi_n(T, p)| \, d^3p
$$
To show absolute uniform convergence, we sum these norms over all $n$ and we get
\begin{align*}
    \sum_{n \ge 1} \|\phi_n(T, \cdot)\|_\infty & \le \frac{1}{(2\pi)^3}  \sum_{n \ge 1}\int_{\mathbb{R}^3}   |\mathcal{F}\phi_n(T, p)| \, d^3p  \le \frac{1}{(2\pi)^3}  \int_{\mathbb{R}^3}  \sum_{n \ge 1} |\mathcal{F}\phi_n(T, p)| \, d^3p \\
    & \le \frac{1}{(2\pi)^3} \int_{\mathbb{R}^3} f(T) \int_0^T |\mathcal{F}(\chi S)(t, p)| dt\, d^3p  \le \frac{f(T)}{(2\pi)^3} \int_0^T \left( \int_{\mathbb{R}^3} |\mathcal{F}(\chi S)(t, p)| \, d^3p \right) dt\,,
\end{align*}
where we used Fubini-Tonelli's theorem and the bounds similar the one obtained in Eq.~\eqref{eq: stimaGron}.
Since $\chi S(t,x)$ is a smooth and compactly supported function, $\mathcal{F}(\chi S)(t, p)$ decays faster than any polynomial in $p$, hence 
$$
\int_{\mathbb{R}^3} |\mathcal{F}(\chi S)(t, p)| \, d^3p \leq \infty
$$
and continuous for all $t \in [0, T]$. Using that also $f(T)$ is continuous, the series converges absolutely in the uniform topology. 

To conclude the proof, we observe that this result can be used to prove convergence in all other seminorms of the LF-space $C^{\infty}_{\mathrm{pc}}(\mathcal{M})$. To this end, notice that the spacetime derivatives of all order commute with  $\mathsf{G}^{\gamma_i}_{\mathrm{ret}}$, for every $i$, as well as with $\mathsf{K}$. Therefore, they commute with $W_\mathrm{ret}$. Hence, $W_\mathrm{ret}$ and $\sum_n (-1)^{n}W_\mathrm{ret}^n$ are continuous for all the seminorms of the LF-space $C^\infty_{\mathrm{pc}}(\mathcal{M})$ with the same controlling constants. This allows us to prove convergence in the uniform norm that implies that in all the other seminorms.

Uniqueness is proved as follows. Consider two possibly distinguished solutions of \Eq \eqref{eq.PertEq}: $\phi_1, \phi_2 \in C^{\infty}_{\mathrm{pc}}(\mathcal{M})$. Then, their difference $\phi = \phi_1 - \phi_2 \in C^{\infty}_{\mathrm{pc}}(\mathcal{M})$, is a solution of the homogeneous equation, namely
\begin{equation*}
    \phi + W_{\mathrm{ret}}(\phi) = 0.
\end{equation*}
Let us introduce the temporal action of $W_{\mathrm{ret}}$ as that map $W_{\mathrm{ret}}(t,\tau): C^{\infty}_{\mathrm{c}}(\Sigma_{\tau}) \to C^{\infty}_{\mathrm{c}}(\Sigma_{t})$ between test functions on the respective Cauchy hypersurfaces at time $\tau$ respectively $t$ (cf.~\cite[Sec.~3.2]{FMS}), so that for all $\phi \in C^{\infty}_{\mathrm{pc}}(\mathcal{M}) \cap C^{\infty}_{\mathrm{sc}}(\mathcal{M})$, denoting by $\phi_{\tau}$ its restriction to the Cauchy surface $\Sigma_{\tau}$,
\begin{equation*}
    (W_{\mathrm{ret}} \phi)(t,\mathbf{x}) = \int_{-\infty}^t (W_{\mathrm{ret}}(t,\tau)\phi_{\tau} )(\mathbf{x}) d\tau  .
\end{equation*}
In particular, its existence is obtained following the argument in \cite[Section $3$]{FMS} and using the properties of $W_{\mathrm{ret}}$ discussed above. Furthermore, let us denote by $ \| \cdot \|_t$ the $L^2$-norm on compactly supported smooth functions over the Cauchy surface at fixed time $t$. Then, combining H\"older inequality with Fubini theorem first and an argument similar to the one used to obtain the estimate in \Eq \eqref{eq: stimaGron}
\begin{align*}
    \Vert\phi_t\Vert_{t}\leq\int_{-\infty}^{t}\Vert W_{\mathrm{ret}}(t,\tau)\phi_{\tau}\Vert_{t}\,d\tau\leq C(e^{c t}+1)\int_{-\infty}^{t}\Vert \phi_{\tau}\Vert_{\tau}\,d\tau.
\end{align*}

Now, consider $F(t):= \int_{-\infty}^{t}\Vert\phi_{\tau}\Vert_{\tau}d\tau$. Then, the above equation translates to $\dot{F}(t)\leq C(e^{t c}+1)F(t)$ or in other words
\begin{align*}
    \frac{d}{dt}(e^{-C f(t) }F(t))\leq 0, \qquad f(t) = \int^t_{\inf t(\textrm{supp} \phi)} (e^{c\tau}+1)d\tau.
 \end{align*}
The function $G(t):=e^{- Cf(t)}F(t)$ is non-negative, i.e. $G(t)\geq 0$, and, by construction, $G(t)$ is monotonically decreasing since $\dot{G}(t)\leq 0$. Moreover, by past-compactness, $G(t)=0$ for all $t<t_0$ for some fixed $t_0$. Hence, we conclude $G(t)=0$ and so $\phi=0$. 
\end{proof}

We conclude this section with a technical Lemma that was needed in the proof of Theorem~\ref{thm:convergence}.

\begin{lemma}\label{lem:estimateLead}
Consider the function
\[
C(t,\mathbf{p}) = -\int_{4m^2}^\infty  \varphi_{\mathrm{lead}}(M)
\frac{\sin(\omega_M t)}{\omega_M} 
 dM\,,
\]
where $\varphi_{\mathrm{lead}}$ is given in \eqref{eq:varphile}. Then, for $t\neq 0$, it holds that
\[
|C(t,\mathbf{p})| \leq \frac{C}{t}\,,
\]
where the bound is uniform in $\mathbf{p}$. Furthermore, $C(t,\mathbf{p})$ is locally integrable near $t=0$ and the integral near $0$ is uniformly bounded in $p$.
\end{lemma}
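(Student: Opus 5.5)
Substitute $\omega_M=\sqrt{\mathbf{p}^2+M}$, i.e.\ $M=\omega^2-\mathbf{p}^2$ and $dM=2\omega\,d\omega$. This turns the integral into an oscillatory integral with a slowly varying amplitude:
\[
C(t,\mathbf{p}) = -2\int_{\omega_0}^\infty \varphi_{\mathrm{lead}}(\omega^2-\mathbf{p}^2)\,\sin(\omega t)\,d\omega,\qquad \omega_0=\sqrt{\mathbf{p}^2+4m^2}.
\]
Write $a(\omega):=\varphi_{\mathrm{lead}}(\omega^2-\mathbf{p}^2)$. Two properties of $a$ drive everything. First, $a$ is bounded uniformly in $\mathbf{p}$: $|a|\le 16\pi^2/\pi^2=16$, since $\varphi_{\mathrm{lead}}$ is a negative function bounded below by $-16$. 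Second, $a$ is monotone in $\omega$: $\varphi_{\mathrm{lead}}(M)$ is monotone in $M$ once $M$ exceeds the point where $\log(M\delta/4m^2)=0$, and it tends to $0$ like $(\log M)^{-2}$ as $M\to\infty$.

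For the bound $|C|\le C/t$, integrate by parts once (in the distributional/Abel sense, since $a\to0$ only logarithmically):
\[
\int_{\omega_0}^\infty a(\omega)\sin(\omega t)\,d\omega=\frac{a(\omega_0)\cos(\omega_0t)}{t}+\frac1t\int_{\omega_0}^\infty a'(\omega)\cos(\omega t)\,d\omega .
\]
The boundary term is bounded by $16/t$. For the remaining term it suffices to show $\int|a'|\,d\omega\le C$ uniformly in $\mathbf{p}$. This is the total variation of $\varphi_{\mathrm{lead}}$ on $[4m^2,\infty)$, because $\omega\mapsto\omega^2-\mathbf{p}^2$ is monotone and maps $[\omega_0,\infty)$ onto $[4m^2,\infty)$. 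The function $\varphi_{\mathrm{lead}}$ has at most one turning point (the maximum of $|\varphi_{\mathrm{lead}}|$ where the logarithm vanishes) and is otherwise monotone, so its total variation is at most $3\sup|\varphi_{\mathrm{lead}}|\le48$, independently of $\mathbf{p}$. Alternatively, the second mean value theorem on each monotone piece gives the same result. This yields $|C(t,\mathbf{p})|\le C'/t$ uniformly in $\mathbf{p}$.

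For local integrability at $t=0$, the $1/t$ bound alone is not sufficient, so I will need an extra gain near $t=0$. My plan is to split the $\omega$-integral at $\omega=1/t$ (for small $t$). On $[\omega_0,1/t]$, use $|\sin(\omega t)|\le\omega t$ together with $|a(\omega)|\lesssim(\log(\omega^2/\Lambda))^{-2}$ for large $\omega$. On $[1/t,\infty)$, integrate by parts as above, so that the boundary term is $|a(1/t)|/t$ and the variation term is $\frac1t\int_{1/t}^\infty|a'|\lesssim|a(1/t)|/t$ by monotonicity. The tail therefore contributes $\lesssim t^{-1}(\log(1/t))^{-2}$, up to an $O(1)$ shift from $\mathbf{p}^2$ and $\delta$. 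The head contributes $t\int^{1/t}\omega(\log\omega)^{-2}d\omega\lesssim t^{-1}(\log(1/t))^{-2}$ as well. Since $\int_0^{\epsilon} dt/(t\log^2 t)<\infty$, $C$ is integrable near $0$.

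The main obstacle is uniformity in $\mathbf{p}$ in this last step: the decay of $a(\omega)=\varphi_{\mathrm{lead}}(\omega^2-\mathbf{p}^2)$ starts only when $\omega^2\gg\mathbf{p}^2$, so for $|\mathbf{p}|\gtrsim1/t$ the logarithmic gain is lost. To handle this I will use $\omega_0\ge|\mathbf{p}|$. If $|\mathbf{p}|\ge1/t$, the whole integral lies in the tail regime, and the integration by parts bound reads $|C|\lesssim|a(\omega_0)|/t$. Since $a(\omega_0)=\varphi_{\mathrm{lead}}(4m^2)$ is a fixed constant, this gives only $C/t$, which is not integrable. I therefore expect to need a second integration by parts exploiting that $a'$ is small at $\omega_0$, i.e.\ $|a'(\omega)|\lesssim\omega/((\omega^2-\mathbf{p}^2)\log^3)$, which is large near $\omega_0$ when $\mathbf{p}$ is large. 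If this fails, I would settle for bounding $\int_0^\epsilon|C|\,dt$ directly: write $\int_0^\epsilon\sin(\omega t)\,dt=(1-\cos\omega\epsilon)/\omega$ and use Fubini with Tonelli on the positive function $|a|$ restricted to dyadic pieces. This should give a bound of $\log$-type divided by $\log^2$, uniformly in $\mathbf{p}$, and it is this weaker integrated statement that Theorem~\ref{thm:convergence} actually uses.
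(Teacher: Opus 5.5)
Your uniform $1/t$ bound is correct; it is the paper's argument written in the variable $\omega$ (integration by parts, plus the fact that the total variation of $\varphi_{\mathrm{lead}}$ on $[4m^2,\infty)$ does not depend on $\mathbf{p}$).

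\textbf{The gap.} You do not establish that $\int_0^\epsilon |C(t,\mathbf{p})|\,dt$ is bounded uniformly in $\mathbf{p}$. The obstacle you run into is self-inflicted: you separate the boundary term $a(\omega_0)\cos(\omega_0 t)/t$ from the remainder $\frac1t\int a'\cos(\omega t)\,d\omega$. The boundary term has size $|\varphi_{\mathrm{lead}}(4m^2)|/t$ for every $\mathbf{p}$, and it is the remainder that cancels it.

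Neither fallback repairs this.
\begin{itemize}
\item A second integration by parts produces the boundary term $a'(\omega_0)\sin(\omega_0 t)/t^2$, where $a'(\omega_0)=2\omega_0\varphi'_{\mathrm{lead}}(4m^2)$. This generically grows like $|\mathbf{p}|$; your own estimate of $a'$ shows it is large, not small, at $\omega_0$.
\item The Fubini computation with $\int_0^\epsilon\sin(\omega t)\,dt=(1-\cos\omega\epsilon)/\omega$ controls only the signed integral $\int_0^\epsilon C\,dt$. Moving the absolute value inside (Tonelli) discards the oscillation and diverges, since $a\sim(\log\omega)^{-2}$ is not integrable in $\omega$.
\end{itemize}
The signed statement is also not what is needed. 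Local integrability means $C(\cdot,\mathbf{p})\in L^1_{\mathrm{loc}}$, and bounding $\mathsf{K}\hat\phi$ by a constant times $\sup|\hat\phi|$, as in the proof of Theorem~\ref{thm:convergence}, requires $\int_0^t|\mathsf{K}(t',\mathbf{p})|\,dt'$. As it stands, your argument gives integrability for each fixed $\mathbf{p}$ only, with a bound that deteriorates as $|\mathbf{p}|\to\infty$.

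\textbf{The missing idea.} This is how the paper proceeds: keep the boundary term and the remainder together, and return to the variable $M$. Since $\int_{\omega_0}^\infty a'\,d\omega=-a(\omega_0)$, your identity reads
\[
C(t,\mathbf{p})=-\frac{2}{t}\int_{4m^2}^\infty\varphi'_{\mathrm{lead}}(M)\bigl(\cos(\omega_M t)-\cos(\omega_{4m^2}t)\bigr)\,dM .
\]
Writing the difference of cosines as a product of sines and bounding each sine by its argument gives
$|\cos(\omega_M t)-\cos(\omega_{4m^2}t)|\le\frac{t^2}{2}(\omega_M^2-\omega_{4m^2}^2)=\frac{t^2}{2}(M-4m^2)$. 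In this bound $\mathbf{p}$ has disappeared.

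Now split at $M=1/t$.
\begin{itemize}
\item On $[4m^2,1/t]$ the contribution is at most $t\int_{4m^2}^{1/t}|\varphi'_{\mathrm{lead}}|(M-4m^2)\,dM\le\|\varphi'_{\mathrm{lead}}\|_{L^1}$.
\item On $[1/t,\infty)$, using $|\cos-\cos|\le 2$, the contribution is at most $\frac4t\int_{1/t}^\infty|\varphi'_{\mathrm{lead}}|\,dM\lesssim\bigl(t(\log^2(\delta/(4m^2t))+\pi^2)\bigr)^{-1}$, which is integrable at $t=0$.
\end{itemize}
Both bounds are uniform in $\mathbf{p}$. The amplitude $\varphi_{\mathrm{lead}}(M)$ is $\mathbf{p}$-independent, and the phase enters only through $\omega_M^2-\omega_{4m^2}^2=M-4m^2$. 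Your substitution $M=\omega^2-\mathbf{p}^2$ moves the $\mathbf{p}$-dependence into the amplitude $a$, and that is precisely what destroys uniformity.
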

\begin{proof}
We proceed as in Lemma 5.6 of \cite{MedaPinSiemcosmo}.
Integrating by parts, we have
\[
C(t,\mathbf{p}) = -2\int_{4m^2}^\infty   \varphi_{\mathrm{lead}}'(M)
\frac{\cos(\omega_M t)-\cos(\omega_{4m^2} t)}{t} 
 dM\,,
\]
where the second term comes from the boundary term. Now, the derivative of $\varphi_{\mathrm{lead}}(M)$ is
\begin{equation}\label{eq:philedderivative}
\varphi_{\mathrm{lead}}'(M) = \frac{16\pi^2}{M} \frac{\log (\frac{M}{4m^2}\delta)}{(\log(\frac{M}{4m^2}\delta)^2 +\pi^2)^2}\,,
\end{equation} 
which is an integrable function. We thus get the first uniform bound of $C(t,\mathbf{p})$, namely $C(t,\mathbf{p})$ is bounded for large values of $t$ uniformly in $\mathbf{p}$ and decays as $1/t$.

To prove that $C(t,\mathbf{p})$ is locally integrable near $t=0$, let $t$ be a small positive constant such that $1/t > 4m^2$ and split the region of integration as
\[
\int_{4m^2}^{1/t} dM + \int_{1/t}^\infty dM 
\]
The first contribution is such that 
\begin{align*}
    T_1 &= -2\int_{4m^2}^{1/t}  \varphi_{\mathrm{lead}}'(M) \frac{\cos(\omega_M t)-\cos(\omega_{4m^2} t)}{t} dM \\
        &= \frac{4}{t} \int_{4m^2}^{1/t}  \varphi_{\mathrm{lead}}'(M) \sin\left( \frac{(\omega_M-\omega_{4m^2}) t}{2} \right) \sin\left( \frac{(\omega_M + \omega_{4m^2}) t}{2} \right)\,,
\end{align*}
which, when approximating the sine function with its argument, gives
\begin{align*}
    |T_1| &\leq t \int_{4m^2}^{1/t}  \left|\varphi_{\mathrm{lead}}'(M) \right| ( M -4m^2) dM \\
    &\leq (1 - 4m^2 t) \int_{4m^2}^{+\infty}  \left|\varphi_{\mathrm{lead}}'(M) \right| dM \leq C (1 - 4m^2 t)\,.
\end{align*}
The latter estimate implies that $T_1$ is integrable near $t$ when $t$ is very close to zero.

The second integral can be written as
\[
T_2=-2\int_{1/t}^{\infty}  \varphi_{\mathrm{lead}}'(M)
\frac{\cos(\omega_M t)-\cos(\omega_{4m^2} t)}{t}
 dM\, .
\]
Using the explicit expression of $\varphi'_{\mathrm{lead}}(M)$ as given in \eqref{eq:philedderivative}, we conclude that the integral $T_{2}$ is bounded. In particular, it holds that 
\begin{align*}
|T_2| &\leq \frac{16\pi^2}{t}\int_{\delta/(4m^2t)}^\infty
\frac{1}{M}\frac{\log (M)}{(\log(M)^2 +\pi^2)^2}
 dM
  \\
 &\leq \frac{8\pi^2}{t}
 \frac{1}{\log(\frac{4m^2 t}{\delta})^2 +\pi^2}\,,
 \end{align*}
 where the latter bound is integrable for $t$ near $0$.
\end{proof}

\subsection{Asymptotic analysis for large time}
In Theorem \ref{thm:convergence}, we have established that solutions of~\eqref{eq:ret-mixed} always exist under mild assumptions on the parameters appearing in \Eq~\eqref{eq:ret-mixed} and for a generic past-compact smooth source. In Proposition \ref{prop:normal-form}, we have seen that if $a_1,a_2<4m^2$, the semiclassical equation given in \eqref{eq:semiclassical-prototype} can be put in the form \eqref{eq:ret-mixed}. Thus, combining the outcome of that Proposition with that of Theorem \ref{thm:convergence}, we can find solutions of the semiclassical Einstein equation. 
According to \eqref{eq:dyson-series},
the solution at the point $(t,x)$ only depends on the form of the source $S$ in the causal past of $(t,x)$. 
Hence, the sum given in \eqref{eq:dyson-series} and its convergence established in Theorem \ref{thm:convergence}, implicitly defines a retarded operator that maps smooth past-compact sources to smooth past-compact solutions. In this section, we study the large time behaviour of this retarded operator and of the solutions corresponding to decaying or compactly supported sources. 

To this end, we start by rewriting the retarded operator of \eqref{eq:semiclassical-prototype}
using the Fourier-Laplace transform.
In particular, we see that 
\[
Q(w^2)\mathscr{L}\mathscr{F}\phi(s,\mathbf{p}) = 
\mathscr{L}\mathscr{F}S(s,\mathbf{p}), \qquad w^2  = s^2 + \mathbf{p}^2\,,
\]
where
\begin{equation}\label{eq:Q}
Q(w^2) = 
w^2(w^2+a_1)(w^2+a_2)J(-w^2)+b_0-b_1w^2+b_2w^4
\end{equation}
with
\[
J(-w^2) = g_\rho(-w^2) =
\int_{4m^2}^{\infty} \rho(M)\frac{1}{w^2+M}  dM \, .
\]
By construction, $Q(z)$ is an analytic function on $\mathbb{C} \setminus (-\infty,-4m^2]$.

We hence conclude that the solution of the semiclassical equation can be written as
\begin{align*}
\mathscr{L}\mathscr{F}(\phi)(s,\mathbf{p}) 
&= \frac{1}{Q(w^2)}
 \hat{S}(s,\mathbf{p}), \qquad w^2  = s^2 + \mathbf{p}^2.
\end{align*}
To analyse the large time behaviour of that solution, we compute the form the operator associated to this inverse formula.
With the help of 
Stieltjes-Perron inversion formula,  used in a way similar to the one presented in the proof of Proposition \ref{prop:inverse-log},
we obtain
\begin{equation}\label{eq:retarded-propagator-full}
\phi = \sum_{\gamma \in \mathcal{Z}}
\frac{1}{Q'(-\gamma)} \mathsf{G}_{\mathrm{ret}}^{\gamma}(S) 
+ 
\int_{4m^2}^\infty
\vartheta(M)\mathsf{G}_{\mathrm{ret}}^{M}(S)\, dM\, ,
\end{equation}
where 
$\mathcal{Z}$ is the set of zeros of $Q$ in $\mathbb{C}\setminus (-\infty,-4m^2)$.
Furthermore, 
\[
\vartheta(M) = \lim_{\epsilon \to 0^+}\frac{1}{2\pi \mathrm{i}}\left(\frac{1}{Q(-M+\mathrm{i}\epsilon)}
-\frac{1}{Q(-M-\mathrm{i}\epsilon)} \right)= \frac{1}{2\pi} \frac{2 \Im Q }{(\Re Q)^2 +|\Im Q|^2 }.
\]
Notice that the Fourier-Laplace transform of the retarded propagator of mass square equal to $M$, with our sign conventions, is proportional to $1/(-w^2-M)$.

In \Eq~\eqref{eq:retarded-propagator-full}, the first contribution, due to the zeros of $Q$ in $\mathbb{C} \setminus (-\infty, -4m^2)$, corresponds to the pole part of the retarded operator. The second contribution comes from the branch cut, which lies on the interval $(-\infty, -4m^2) \subset \mathbb{C}$ as shown in the domain of $Q$ on the first Riemann sheet. Notice that the sum over the pole part corresponds to a sum over all zeros of $Q$. This sum is usually over three zeros and if the zeros are contained on the real line, it gives origin to a well-defined retarded operator. 

There are, however, cases for the choices of the involved parameters and of the mass and the coupling constant to the scalar curvature, where some of the zeros of $Q$ fall into the cut and so only two zeros survive. 

With this formula at disposal, 
the decay of $\phi$ for large times can be analysed as in \cite{MedaPinStab}. In particular, we obtain the following result:
\begin{theorem}\label{thm:stability}
    Consider $\mathcal{Z}$, the set of zeros of $Q(w^2)$ given in \eqref{eq:Q}. If $\mathcal{Z} \subset(-4m^2,0)$ and if $S\in C^\infty_{\mathrm{c}}(\mathcal{M})$, the solution \eqref{eq:retarded-propagator-full} of \eqref{eq:semiclassical-prototype} are bounded in time. Furthermore, they decay at least as $t^{-3/2}$  for large time. 
    If $\mathcal{Z} \subset(-4m^2,\infty)$, with some, $\gamma\in\mathcal{Z}$, $\gamma>0$, the solution is exponentially growing and the fastest growth of the components of $\phi$ is governed by the largest element of $L\in \mathcal{Z}$.  
\end{theorem}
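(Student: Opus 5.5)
The plan is to start from the representation \eqref{eq:retarded-propagator-full} and treat its two pieces separately: the pole part $\sum_{\gamma\in\mathcal{Z}} Q'(-\gamma)^{-1}\mathsf{G}^{\gamma}_{\mathrm{ret}}(S)$ and the branch-cut part $\int_{4m^2}^\infty \vartheta(M)\,\mathsf{G}^M_{\mathrm{ret}}(S)\,dM$.

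First I would justify the representation itself. The strategy is to show that $1/Q(-z)$, with $Q$ analytic on $\mathbb{C}\setminus(-\infty,-4m^2]$, decays at infinity like $(z^2\log z)^{-1}$; this follows from the logarithmic growth of $J$ as in Proposition~\ref{prop:inverse-log}. Given that decay, $1/Q(-z)$ is the Stieltjes transform of the atomic measure at the zeros plus the density $\vartheta$ on the cut, and inverting the Laplace transform term by term gives \eqref{eq:retarded-propagator-full}. Since the analysis concerns times after $\mathrm{supp}\,S$, the retarded operators may be replaced by the causal propagators of mass square $\gamma$ or $M$ applied to the compactly supported $S$.

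Case $\mathcal{Z}\subset(-4m^2,0)$. Each pole contributes $\mathsf{G}^{\gamma}(S)$ with $\gamma<0$, i.e.\ a free Klein--Gordon solution of positive mass square $-\gamma$, with Cauchy data compactly supported and smooth. For these, the standard dispersive estimate gives $\|\mathsf{G}^\gamma(S)(t)\|_\infty\lesssim t^{-3/2}$, via stationary phase on $e^{\pm i t\sqrt{\mathbf{p}^2-\gamma}}$.

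The cut contribution is $\int\vartheta(M)\,\mathsf{G}^M(S)\,dM$ with masses $\sqrt{M}\ge 2m$. Uniform dispersive bounds for each $M$, combined with integrability of $\vartheta(M)$ times the $M$-dependence of the constants, should also give $t^{-3/2}$. I would check that integrability from the behaviour of $\vartheta$: it is bounded near $4m^2$, where $\Im Q\sim\sqrt{M-4m^2}$, and decays like $M^{-2}\log^{-2}M$ at infinity, mirroring the analysis of $\varphi_{\mathrm{con}}$ in Proposition~\ref{prop:inverse-log} and Lemma~\ref{lem:estimateLead}. Boundedness for all times then follows from the finite-time bounds of Theorem~\ref{thm:convergence} together with this decay.

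Case of some zero $\gamma>0$. The corresponding term $\mathsf{G}^{\gamma}(S)$ is a tachyonic propagator. In spatial Fourier space it contains $\sinh(t\sqrt{\gamma-\mathbf{p}^2})/\sqrt{\gamma-\mathbf{p}^2}$ for $|\mathbf{p}|<\sqrt{\gamma}$, which grows like $e^{\sqrt{\gamma}\,t}$ for generic $S$, namely whenever $\hat S$ does not vanish near $\mathbf{p}=0$ after time integration. All remaining terms grow at most like $e^{\sqrt{\gamma'}\,t}$ with $\gamma'<\gamma$, or decay. So the growth rate is set by the largest $L\in\mathcal{Z}$.

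The main obstacle I expect is the cut integral near the threshold $M=4m^2$ and the uniformity of the dispersive constants in $M$. Near threshold, $\vartheta$ may fail to be small, and zeros of $Q$ close to $-4m^2$ would spoil the bounds. I would handle this by first checking that $\Re Q(-M)\neq0$ near $4m^2$ (no threshold resonance), using the hypothesis that $\mathcal{Z}$ lies strictly inside $(-4m^2,\dots)$, and then bounding the $M$-integral by dominated convergence.
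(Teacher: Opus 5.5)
Your proposal is correct and follows essentially the same route as the paper. Both split \eqref{eq:retarded-propagator-full} into pole and branch-cut parts, use the dispersive (stationary-phase) $t^{-3/2}$ decay of massive Klein--Gordon solutions for times after $\mathrm{supp}\,S$ together with integrability of $\vartheta$ and rapid decay of the on-shell $\hat S$, and read off exponential growth at rate $\sqrt{L}$, coming from the low-momentum region of the tachyonic pole terms. Your extra checks make explicit what the paper leaves implicit: the $(z^2\log z)^{-1}$ decay of $1/Q$ that justifies the Stieltjes representation, the behaviour of $\vartheta$ at threshold and at infinity, and the genericity condition on $S$ needed for growth. One notational point: with the paper's convention that $\mathsf{G}^{M}_{\mathrm{ret}}$ has mass square $M$, the pole term attached to a zero $\gamma$ of $Q$ should be written $\mathsf{G}^{-\gamma}$, which is what your description in fact assumes.
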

\begin{proof}
The proof is a consequence of the decay properties of the retarded operator for the Klein-Gordon equation of arbitrary square mass on Minkowski spacetime. 

We recall that, from \eqref{eq:retarded-propagator-full}, if $S$ has compact support, for $t$ larger than any time of the points in the support of $S$, then
\begin{align*}
\phi(t,\mathbf{x}) = &\sum_{\gamma \in \mathcal{Z}}
\frac{1}{Q'(-\gamma)} 
 \int 
 \left(
 e^{\mathrm{i} t w_\gamma} e^{\mathrm{i} \mathbf{x} \mathbf{p}}
\frac{\hat{S}(w_\gamma,p)}{2 w_\gamma \mathrm{i}}
-
e^{-\mathrm{i} t  w_\gamma} e^{\mathrm{i} \mathbf{x} \mathbf{p}}
\frac{\hat{S}(-w_\gamma,p)}{2 w_\gamma \mathrm{i}}
\right)\, d^3\mathbf{p}
\\
&+ 
\int_{4m^2}^\infty
\vartheta(M)
 \int \left(e^{\mathrm{i} t w_\gamma} e^{\mathrm{i} \mathbf{x} \mathbf{p}}
\frac{\hat{S}(w_M,p)}{2 w_M \mathrm{i}}
-
e^{-\mathrm{i} t  w_M} e^{\mathrm{i} \mathbf{x} \mathbf{p}}
\frac{\hat{S}(-w_M,p)}{2 w_M \mathrm{i}}\right)\,
d^3\mathbf{p} \, dM
,\quad w_M=\sqrt{\mathbf{p}^2+M}\, ,
\end{align*}
where $\hat{S}$ denotes the spatial and temporal Fourier transform of $S$. Hence, if $\mathcal{Z} \subset(-4m^2,\infty)$, considering the integral over the momenta and using standard stationary phase methods to estimate the decay in time (see e.g. Lemma A.1 in \cite{MedaPinStab} for a proof of that decay), we see that for large values of $t$,
\[
|\phi(t,\mathbf{x})| \leq \frac{C}{t^{\frac{3}{2}}}\,,
\]
where the constant $C$ depends on the particular form of $Q$ via
\[
C = c \left(
\sum_{\gamma\in\mathcal{Z}} \frac{\sqrt{|\gamma|}}{|Q'(-\gamma)|}
\left(
 |\hat{S}(\sqrt{\gamma},0)|
 +
  |\hat{S}(-\sqrt{\gamma},0)|
\right)
+
\int_{4m^2}^\infty 
|\vartheta(M)| \sqrt{M} 
\left(
 |\hat{S}(\sqrt{M},0)|
 +
  |\hat{S}(-\sqrt{M},0)|
\right)
dM 
\right)
\]
for some $c>0$ independent on $Q$. The integral over $M$ can be taken because $\hat{S}(w,0)$ decays rapidly on $w$ and $\vartheta(M)$ is an integrable function.

If now $\mathcal{Z} \subset(-4m^2,\infty)$ 
and if there is $\gamma\in\mathcal{Z}$ with $\gamma>0$, the continuous part contribution to $\phi(t,\mathbf{x})$ still decays as $1/t^{3/2}$ for large $t$. However, the pole part contains contributions that grow exponentially in time. The largest exponential growth (bounded by $c e^{\sqrt{L}t}$) corresponds to the zero momenta contribution and to the largest zero $L>0$ in $\mathcal{Z}$. 
\end{proof}

\section{Linear instability of the semiclassical Einstein equations}\label{sec:5}

In this section, we discuss the existence of solutions of \eqref{se:prototype-and-solution}, applying the analysis of the 
prototypical equation \eqref{eq:semiclassical-prototype} and the study of the stability of its solutions 
presented in the previous section. In particular, we shall analyse the 
equation governing the S and TT modes given in \Eq~\eqref{eq:semiclassical-sources}.

As discussed in Theorem \ref{thm:convergence} and more precisely in 
Proposition \ref{prop:hypo-inveresion}
and \ref{prop:normal-form},
the  particular form of the solutions depends on the parameters $a_i$ and $b_j$ in \Eq~\eqref{eq:semiclassical-prototype}.
Furthermore, according to Theorem \ref{thm:stability}, 
the stability of the solutions is related to the set of zeros of $Q$ given in  \eqref{eq:Q}.
These zeros are also affected by the parameters $a_i$ and $b_j$ in \eqref{eq:semiclassical-prototype}.
In the case of semiclassical Einstein equations, the parameters $b_j$ are also modified by the choice of some renormalisation constants and some of them are fixed according to the discussion presented in Section \ref{se:renormalisation-constants}.

We shall thus analyse the form of the constants $a_i$ and $b_j$ for the case of the $S$ and $TT$ modes comparing the set of equations in \eqref{eq:semiclassical-sources} with the prototype equation studied in the previous section in \Eq~\eqref{eq:semiclassical-prototype}.
We shall furthermore analyse the asymptotic form of the solutions on the basis of physical parameters appearing in the semiclassical equation.

\subsection{The S modes}

In view of the form of the operators 
$\mathcal{S}$ given in \eqref{eq:S-T-operator},
we have that $a_1=a_2 = a =\frac{2m^2}{6\xi-1}$. 
In particular, let us recall here the form of the equation in \eqref{eq:semiclassical-sources} for the scalar modes:
\begin{equation}\label{eq: 5DperS}
    \frac{6(\frac{1}{6} - \xi)^2}{4}\iota_{\mathcal{M}} \tilde{\mathsf{G}}_{\mathrm{ret}}\left( \square \left( \square - \frac{2m^2}{6\xi - 1} \right)^2 \overline{h}^{\mathrm{S}}_{ab}  \otimes \check\rho \right)  - {\tilde{\alpha}}^{\mathrm{S}}_1 m^4 \overline{h}^{\mathrm{S}}_{ab} - \frac{1}{2}\left( \frac{1}{\kappa} - {\tilde{\alpha}}^{\mathrm{S}}_2 m^2 \right) \square \overline{h}^{\mathrm{S}}_{ab} - {\tilde{\alpha}}^{\mathrm{S}}_3 \square \square \overline{h}^{\mathrm{S}}_{ab}  = S_{ab}^{\mathrm{S}}\,,
\end{equation}
where we used the result in Corollary \ref{cor: formaK0} to write the term $\mathcal{K}_0(\overline{h}^{\mathrm{S}}_{ab})$ in $\mathcal{S}$. As a first remark, we notice that, in order to fulfil the hypotheses of Proposition \ref{prop:inverse-log}, we need to choose $\xi$ in such a way that 
\[
\frac{2}{6\xi-1} < 4\,.
\]
Finally, assuming that ${\tilde{\alpha}}_2^{\mathrm{S}}=0$ so that no renormalisation of the Newton constant is taken into account and considering the result of Proposition \ref{prop:background} and 
Theorem \ref{thm:fixing-alpha1} to fix 
${\tilde{\alpha}}^{\mathrm{S}}_1 = (64 \pi)^{-1}$,
we have by comparing term by term \eqref{eq: 5DperS} with \eqref{eq:semiclassical-prototype} that the parameters in the $S$ case become:
\begin{equation}
    \label{eq:S-coeff}
        a_1=a_2=a = \frac{2m^2}{6\xi-1}, \quad b_0=- {\tilde{\alpha}}^{\mathrm{S}}_1 \frac{ 4m^4 }{6\left(\frac{1}{6}-\xi\right)^2}, \quad  b_1=-\frac{2}{\kappa} \frac{1}{6\left(\frac{1}{6}-\xi\right)^2}
\end{equation}
and $b_2$ is a free parameter which depends on the renormalisation constant ${\tilde{\alpha}}^{\mathrm{S}}_3$. Recall that $\kappa = 8 \pi G$.
The study of existence of solutions and their  asymptotic behaviour at large times 
depends on the form of zeros of 
\begin{equation}\label{eq:S-equation}
F^{\mathrm{S}}(\gamma)=\gamma(a-\gamma)^2 J(\gamma)-\left({b_0 +b_1\gamma+b_2\gamma^2}\right).
\end{equation}
Notice that this function equals $Q(-w)$ given in \eqref{eq:Q} for $\xi=-w^2$ and for the set of parameters studied in this section. 
Depending on the mass $m$ and on the coupling to the scalar curvature $\xi$,
there are two negative numbers $q_2<q_1<0$ such that, 
if $b_2<0$ the function $F^{\mathrm{S}}$ in \eqref{eq:S-equation} admits three or two zeros in $D =\mathbb{C}\setminus [4m^2,\infty)$.
If $b_2 <q_1$, these zeros are all real valued. 
If $b_2 \in (q_2,q_1)$ there are three zeros, they are real valued and 
in this case the solutions of the semiclassical equation for the S modes
satisfy a local equation of the form given in  \eqref{eq:semi-local}. The corresponding parameters $\beta_i$ can be found 
following Proposition \ref{prop:gamma}, in particular, the square masses of the component of the solutions of that equation coincide with the zeros $\gamma_i\in D$ of $F^{\mathrm{S}}$.

If $b_2\leq q_2$, \Eq \eqref{eq:S-equation} admits only two solutions in $D$. In this case, the existence of the solution is obtained applying Theorem \ref{thm:convergence}. 

Furthermore, we observe that in all these cases, there is a zero $\gamma_0$ of $F^{\mathrm{S}}$ that it is real valued and that is slightly smaller than $0$.
Its magnitude is governed by the value of $b_0$ and thus by the form of  
${\tilde{\alpha}}^{\mathrm{S}}_1$, which is fixed with Theorem \ref{thm:fixing-alpha1}.
For sufficiently large $-b_2$, this is the only negative zero that can be found in $D$.

According to Theorem \ref{thm:stability}, the presence of a negative zero implies that the solution of the equation for the S modes are unstable and the exponential growth of these unstable solutions is governed by the magnitude of the largest negative zero in the sense that 
\[
e^{-\sqrt{|\gamma_0|}t}h_{ab}^{\mathrm{S}}(t)
\]
is bounded for large times. 

To make the previous discussion clearer, we refer to Figure \ref{fig:S-modes}, which displays the qualitative form of the zero set of the real and imaginary parts of $F^{\mathrm{S}}$ for various choices of $b_2$.
We see, in particular, that there is a real valued zero $\gamma_0$ that is slightly smaller than $0$ and we see that for smaller values of $b_2$, there are two complex-valued and complex conjugate solutions with positive real part. As soon as $b_2$ increases, these two zeros are closer and closer to the real line and for larger values of $b_2$, two positive real-valued solutions appear. 
For even larger $b_2$, a solution falls into the cut, namely outside $D$.

\begin{figure} 
\includegraphics[width=0.45\textwidth]{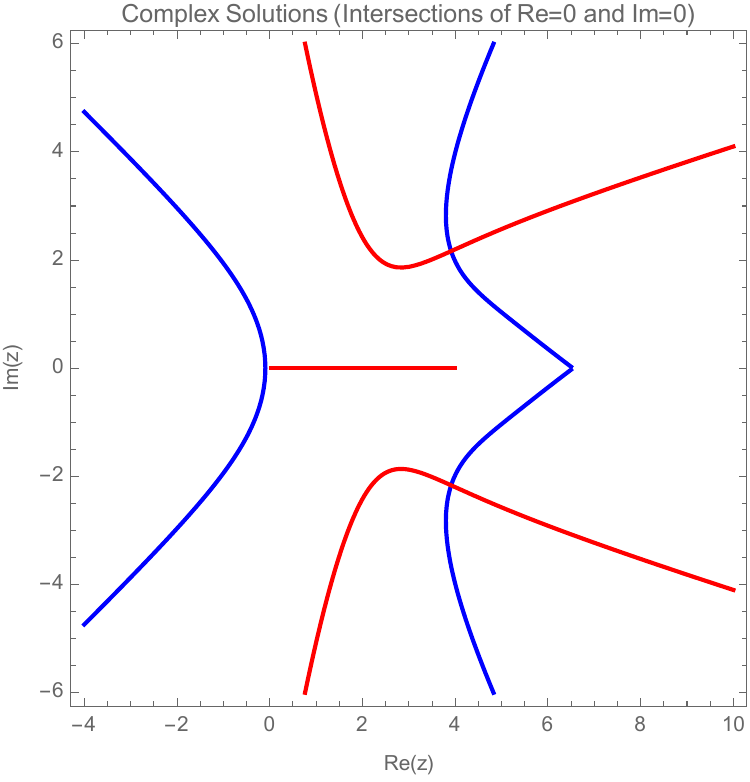} 
\includegraphics[width=0.45\textwidth]{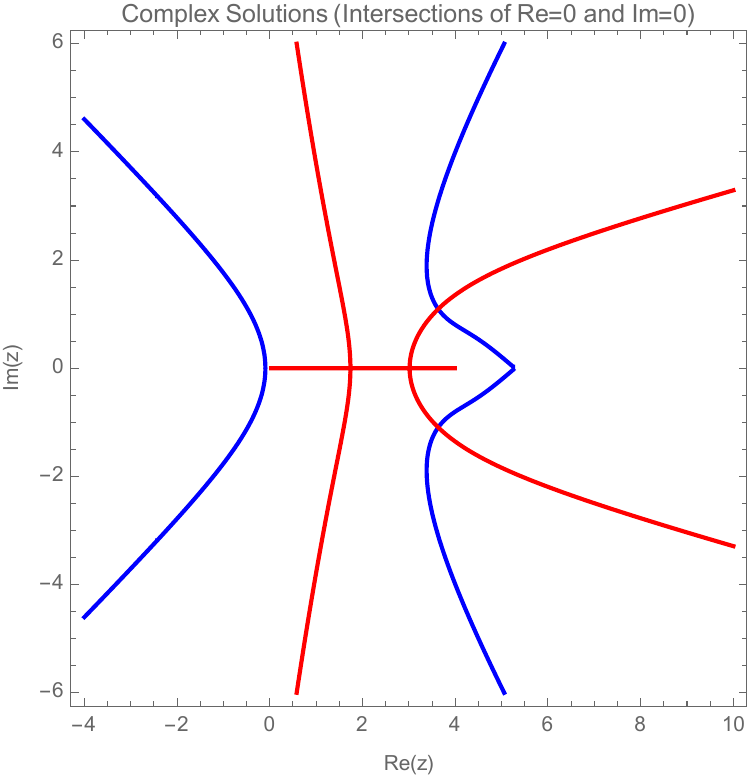}

\includegraphics[width=0.45\textwidth]{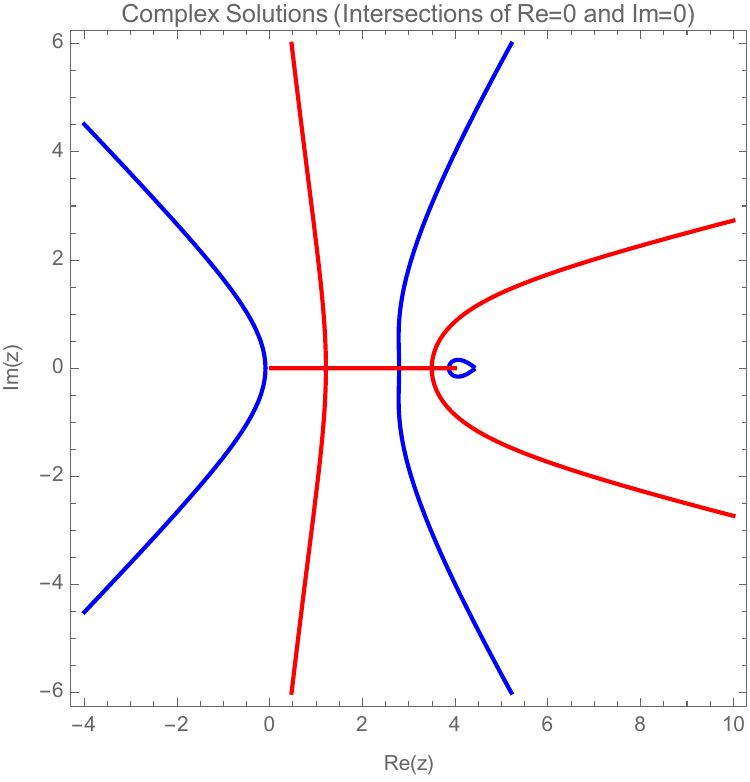}
\includegraphics[width=0.45\textwidth]{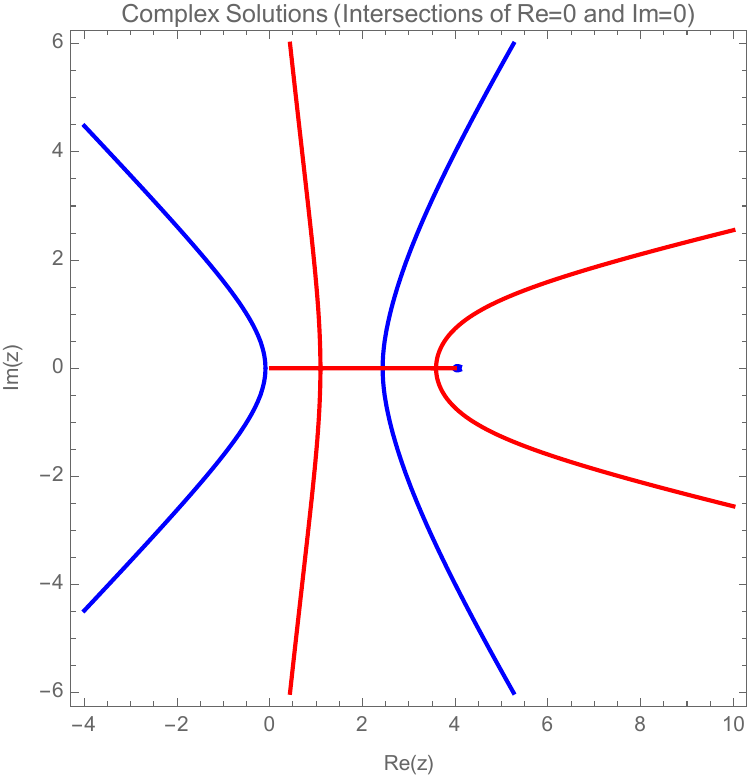}
\captionof{figure}{The blue line denotes the zero set of the real part of $F(z)$, while the red line denotes the zero set the imaginary part of $F$ for $a=-1$, $b_0=-1$, $b_2=-10$ and $b_2$ respectively in $\{3,4,5,5.4\}$}
\label{fig:S-modes}
\end{figure}

\subsection{The TT modes}
We now repeat the analysis for the TT modes. Starting from the form of the operator $\mathcal{T}$ given in \eqref{eq:S-T-operator}, we also have in this case that $a_1=a_2 = a =4m^2$. Recalling \Eq \eqref{eq:semiclassical-sources}, we write it in a form that is directly comparable with \eqref{eq:semiclassical-prototype}:
\begin{equation*}
    - \frac{1}{120} \iota_{\mathcal{M}} \tilde{\mathsf{G}}_{\mathrm{ret}}\left( \square \left( \square - 4m^2 \right)^2 \overline{h}^{\mathrm{TT}}_{ab}  \otimes \check\rho \right)  - {\tilde{\alpha}}^{\mathrm{TT}}_1 m^4 \overline{h}^{\mathrm{TT}}_{ab} -\frac{1}{2} \left( \frac{1}{\kappa} - {\tilde{\alpha}}^{\mathrm{TT}}_2 m^2 \right)\square \overline{h}^{\mathrm{TT}}_{ab} -{\tilde{\alpha}}^{\mathrm{TT}}_4 \square \square \overline{h}^{\mathrm{TT}}_{ab} = S^{\mathrm{TT}}_{ab}\,,
\end{equation*}
where we used the result in Corollary \ref{cor: formaK0} to write the term $\mathcal{K}_0(\overline{h}^{\mathrm{TT}}_{ab})$ in $\mathcal{T}$.
Assuming again that ${\tilde{\alpha}}_2^{\mathrm{TT}}=0$ so that there is no renormalisation of the Newton constant, and setting ${\tilde{\alpha}}^{\mathrm{TT}}_1=0$ following Theorem \ref{thm:fixing-alpha1},
we have that 
\begin{equation*}
    a=4m^2 , \qquad b_0=0, \qquad b_1=\frac{60}{\kappa}\,,
\end{equation*}
while $b_2$ is a free parameter proportional to ${\tilde{\alpha}}^{\mathrm{TT}}_4$. In this specific case, since $b_0=0$, there is no need to find $\mathfrak{h}_0$ in obtaining a local form of the semiclassical Einstein equation. Therefore, the problem of determining
the form of the solutions of the semiclassical equation for the TT modes and to analyse their stability gets slightly simplified.
Similarly to the analogous function for the S modes given in \eqref{eq:S-equation},
we thus study the zeros of the function
\begin{equation}\label{eq:TT-equation}
F^{\mathrm{TT}}(\gamma)=
\gamma (a-\gamma)^2 J(\gamma)-\gamma({b_1+b_2\gamma}).
\end{equation}
We recall that according to Proposition \ref{prop:gamma}, if we find 
three independent zeros. The TT modes solve a local equation whose solutions is formed by a superposition of solutions of the ordinary Klein-Gordon equations whose square masses coincides with the zeros. 
Zero is always one of these zeros of $F^{\mathrm{TT}}$ in \eqref{eq:TT-equation} and the corresponding TT modes correspond to the classical gravitational waves. 

We see from a qualitative analysis that if $m$ is sufficiently large, and for some choices of large and negative $b_2$, independent positive semidefinite real-valued zeros exists. 
This is e.g. the case for the choice of the parameters $m=1,a=4,b_0=0,b_1=2.5,b_2=-0,7$, where one zero is $0$ and there are other two real valued positive zeros.
However, in this regime of the parameters the mass of the field is chosen to be of the order of Planck mass. 

For more physically reasonable choices of the mass, and if $b_2$ is chosen to be positive and large, the zero $\gamma=0$ corresponding to classical gravitational waves is always there. 
Furthermore, one of the zeros of $F$ falls into the cat, hence in this case the existence of solutions is provided by Theorem \ref{thm:convergence}. 
One of the zeros, say $\gamma_0$, is, however, always negative. According to Theorem \ref{thm:stability}, this corresponds to an unstable solution. However, we observe that the magnitude of $\gamma_0$ can be made as small as we want by choosing larger and larger positive $b_2$.

\subsubsection{Linear instability}

For physical choices of the masses, there are instabilities both in the TT and S modes, which can be seen as exponential growths for large times.
All these instabilities can be removed by a conformal transformation with a conformal factor
$\Omega = e^{-Ht}$ with a parameter $H$, which is related to the largest negative zeros of \eqref{eq:S-equation} and of \eqref{eq:TT-equation}.

We see that for physically reasonable choices of the mass $m$, if ${\tilde{\alpha}}^{\mathrm{S}}_3$ is sufficiently large and negative and if 
${\tilde{\alpha}}^{\mathrm{TT}}_4$
is sufficiently large and positive, the parameter $H$ is universal. It actually depends on ${\tilde{\alpha}}^{\mathrm{S}}_1$, namely on the renormalisation of the cosmological constant fixed in Theorem \ref{thm:fixing-alpha1}.

These universal exponential growing (depending on the renormalisation constant and not on the particularly chosen solution)
suggests that the instabilities seen on a Minkowski background might be reabsorbed in a redefinition of the background solution.
The suggested background is the cosmological patch of the de Sitter spacetime with Hubble parameter $H$.

This behaviour differs in a significant way from the analysis presented in Anderson et al. in \cite{Anderson}. This different behaviour is originated by the choice of the renormalisation parameters done in Theorem \ref{thm:fixing-alpha1} and fixed considering the background to be a solution of the semiclassical Einstein equations and applying the principle of general local covariance \cite{BFV}. 

\subsection{Comparison with cosmological observations}

    To estimate the value of $H$ associated with large $b_2$, let us consider the equation of the masses of the perturbation of the S modes given in \Eq \eqref{eq:S-equation}. We are interested in the solution corresponding to a negative mass square that is typically close to $0$, and we want to quantitatively estimate this closeness.

    To obtain that particular solution, we analyse the function $F^{\mathrm{S}}$ given in \eqref{eq:S-equation}. We observe that to estimate the negative zeros of that function, the contribution due to the complex valued function $J(z)$ in $F^{\mathrm{S}}$ is negligible.
    Hence, to find the negative zeros we study the equation
\[
    b_0 + b_1 \gamma + b_2 \gamma^2 = 0,
\]
where the coefficients $b_i$ were defined in \Eq \eqref{eq:S-coeff}. In Planck units, $\kappa^{-1}$ is the squared value of the reduced Planck mass $M_P$, and thus $b_1$ is much larger than the remaining renormalisation constants. Hence, the mass parameter associated to the unstable solution may be approximated by
\[
    \tilde{\gamma} \simeq - \frac{b_0}{b_1} = - 16 \pi G {\tilde{\alpha}}^{\mathrm{S}}_1 m^4
\]
and it does not depend on the choice of ${\tilde{\alpha}}^{\mathrm{S}}_3$ and $\xi$. Thus, the effective Hubble parameter of the predicted de Sitter spacetime reads $H = \sqrt{-\tilde \gamma}$, and hence we obtain the following value of the cosmological constant in Planck units according to the (semiclassical) Friedmann equations
\[
    \Lambda = 3 \Omega_\Lambda H^2 = 6 \Omega_\Lambda{\tilde{\alpha}}^{\mathrm{S}}_1 \at \frac{m}{M_P} \ct^4 M_P^2,
\]
where $\Omega_\Lambda$ is the density parameter associated with the cosmological constant in the $\Lambda$-CDM model. Comparing this theoretical result with the experimental values of $\Omega_\Lambda$ and $\Lambda$ inferred from the Planck measurements of the Cosmic Microwave Background \cite{Planck}, i.e., $\Omega_\Lambda \simeq 0.685$ and $\Lambda \simeq 7.15 \times 10^{-121} M_P^2$, we obtain an approximated value of the mass of the field, that is $m \simeq 7.8 \times 10^{-3}$ eV.

Indeed, this numerical value may align, e.g., with the current results about the masses of light neutrinos, and the predicted mass of the axion derived from cosmological observations within the so-called ALPs Dark Matter models, where an axion-like particle is considered a primary candidate to explain the Dark Matter component of our universe (see, e.g., \cite{AxionDM,AxionCosmo,PDG} and references therein). In this viewpoint, and according to our analysis, the hypothetical Dark Matter particle gives rise to the component of Dark Energy appearing in the semiclassical Einstein equations in the guise of a cosmological constant; thus, it drives the expansion of the universe at cosmological scales through its backreaction upon the background geometry.

\appendix

\section{Stress-energy tensor renormalisation}\label{app:reno-freedom}
Consider a generic globally hyperbolic spacetime $(\cM_n,g_{ab})$, where $n > 2$ is the spacetime dimension. Consider on it a Klein-Gordon field $\phi$ of mass $m$, coupling to the scalar curvature $\xi$ and endowed with a quantum Hadamard state $\omega$. According to the description given in \cite{HW05}, the stress-energy tensor is 
\[
    \Wick{T_{ab}} = 
    \frac{1}{2} \nabla_a\nabla_b \Wick{\Phi} -\Wick{\bar{\Phi}_{ab}} - \frac{1}{2}g_{ab} \left(\frac{1}{2}\square_g \Wick{\Phi}+m^2\Wick{\Phi} \right) + \xi\left(G_{ab}-\nabla_a\nabla_b+g_{ab}\square_g\right)\Wick{\Phi}
\] 
where $\square_g:= g^{ab} \nabla_a \nabla_b$ denotes the \emph{d'Alembertian} on $(\cM_n,g_{ab})$ here, while $\Wick{\Phi}$ and $\Wick{\bar{\Phi}_{ab}}$ respectively are the Wick ordering of $\Phi=\phi^2$, $\Phi_{ab} = \phi\nabla_a\nabla_b\phi$, and $\bar{\Phi}_{ab}$ is the trace reversal of $\Phi$. Their expectation values are such that 
\[
    \omega(\Wick{\Phi}) = [\mathsf{w}]+C, \qquad \omega(\Wick{\bar{\Phi}_{ab}})= [\nabla_a\nabla_b \mathsf{w}] - \frac{1}{2} g_{ab} [\square_g \mathsf{w}]+\bar{C}_{ab}
\]
with $\mathsf{w} = \omega_2-\mathcal{H}$, for $\omega_2$ the two-point function of the state, $\mathcal{H}$ the Hadamard parametrix \eqref{eq:Hadamard-parametrix} and $[\cdot]$ denoting the coinciding point limit. The scalar $C$ and the tensor $\bar{C}_{ab}$ take into account the renormalisation freedoms and the anomalous contribution. They are obtained by local tensors constructed out of the metric and the mass, and must have the correct behaviour under rescaling of the spacetime metric and of the mass of the field. Finally, they are chosen in a way that the resulting stress-energy tensor is covariantly conserved. According to \cite{HW05}, these requirements do not fix completely $C$ and $\bar{C}_{ab}$ but leave a residual freedom. Their precise form is
\begin{align*}
    C &= d_1 m^2 + d_2 R, \\
    \bar{C}_{ab} &=  \frac{d_2}{2} R G_{ab} + d_2 \left( \frac{\square_g R}{4}  +\frac{1}{6}(1-2\xi)R^2 - \frac{m^2 R}{2}\right) g_{ab} -\frac{n}{2(n+2)} A g_{ab} \\
    &\quad + c_1 m^4 g_{ab} + c_2 m^2 G_{ab} + c_3 I_{ab} + c_4 J_{ab},
\end{align*}
where $d_1,d_2 \in \mathbb{R}$ and $c_1,c_2,c_3,c_4 \in \mathbb{R}$ are renormalisation constants, whereas $\frac{n}{2(n+2)}Ag_{ab}$ corresponds to the anomalous contribution which was given in \Eq \eqref{eq:Anomaly} for $n=4$. In particular, the form of $\bar{C}_{ab}$ guarantees that the resulting stress energy tensor is covariantly conserved.\\

Also, as it is of interest for the content of this paper, we recall in this appendix the explicit form of the tensor $I$ and $J$ appearing in \Eq \eqref{eq:T_ab-omega}. The validity of \Eq \eqref{eq:line-hbar} implies that their linear order contributions, in terms of the trace reversal metric perturbation $\bar{h}$ over the Minkowski background are
\begin{equation} 
\label{eq:def-I-J}
    \begin{aligned}
        J_{ab} &= 2\nabla_a \nabla_b R -2 g_{ab} \square_g R + \frac{1}{2} g_{ab} R^2 -2 R R_{ab} \\
        &= 2\nabla_a \nabla_b R -2 g_{ab} \square_g R + O(2) \\
        &= \nabla_a \nabla_b \square{\bar{h}_c}^c - \eta_{ab} \square\square {\bar{h}_c}^c + O(2) \\
        &= - \tau_{ab} \square\square {\bar{h}_c}^c + O(2) \\
        I_{ab} &= -2\square_g R_{ab} + \frac{2}{3} \nabla_a \nabla_b R + \frac{1}{3} \square_g R g_{ab} - \frac{1}{3} R^2 g_{ab} + \frac{4}{3} R R_{ab} + g_{ab}R_{cd}R^{cd} -4 R_{abcd}R^{cd} \\
        &= -2\square_g G_{ab}   - \square_g R g_{ab} + \frac{2}{3} \nabla_a \nabla_b R + \frac{1}{3} \square_g R g_{ab} +O(2) \\
        &= \square\square \bar{h}_{ab}  - \frac{1}{3} \tau_{ab} \square \square {\bar{h}_c}^c + O(2).
    \end{aligned}
\end{equation}
Moreover, regarding the linearisation of the anomalous contribution to $\langle\Wick{T_{ab}}\rangle_\omega$ given in \Eq \eqref{eq:T_ab-omega} encoded in $\frac{A}{3}g_{ab}$, we observe that most of the terms appearing in $A$ are at least quadratic in the perturbation, and thus do not contribute at the linear order. On the other hand, the remaining linear contribution, which reads 
\[
    \frac{A}{3}g_{ab} = \left(\frac{(6\xi - 1)m^2}{192 \pi^2}  \square {\bar{h}_c}^c + \frac{(5\xi -1)\square \square {\bar{h}_c}^c}{960\pi^2} \right) \eta_{ab} + O(2)
\]
has been already taken into account in $N_{ab}$, see \Eq \eqref{eq:linearisedT1}. Indeed, the linearised expectation value of the stress-energy tensor \eqref{eq:linearisedT1} is covariantly conserved, and hence it includes the linearised anomalous contribution by construction.

\section{Past-compact perturbations, cutoffs, and sources}
\label{se:source-perturbation}

\subsection{First order two-point function and its description with a source} 
\label{se:source-omega1}

Consider $\omega_2^{(1)}(x,y)$ a smooth symmetric bi-solution of the free equation of motion on the Minkowski background, which describes the modification at first order of the two-point function of the considered state. We turn it into a past-compact perturbation of the state, by multiplying it with a suitable cutoff function $\chi: M\to \RR$. The latter is a positive, smooth function and depends on the spacetime points through the time function $t$. It is vanishing where $t<-\epsilon < 0$, namely in the past of $\Sigma_{-\epsilon}$, and it is equal to $1$ in the future of $\Sigma_0$. The corresponding past compact bi-distribution on $\mathcal{M} \times \mathcal{M}$ is
\[
\omega^{(1)}_{\chi,2}= \omega^{(1)}_2 \circ (\chi\otimes 1+1 \otimes \chi). 
\]
Notice that $\omega^{(1)}_{\chi,2}=\omega^{(1)}_2$ on $J^{+}(\Sigma_0)\times J^{+}(\Sigma_0)$ and 
$\omega^{(1)}_{\chi,2}$ vanishes on $J^{-}(\Sigma_{-\epsilon})\otimes J^{-}(\Sigma_{-\epsilon})$.
For this reason $\omega^{(1)}_{\chi,2}(T^{(0)}_{ab})=[D_{ab}\omega^{(1)}_{\chi,2}]$, the coinciding point limit of $D_{ab}\omega_2^{(1)}$ where $D_{ab}$ is given as in \eqref{eq:T_ab-omega}. Observe that 
\[
(\square -m^2)_y \omega^{(1)}_{\chi,2}(x,y) = -\omega^{(1)}_2\circ 1 \otimes \chi''
- 2\partial_{0,y}\omega^{(1)}_2\circ 1 \otimes \chi'
\]
and similarly for $(\square -m^2)_x \omega^{(1)}_{\chi,2}(x,y)$.
Consider now 
\[
S'_{ab}:=[D_{ab}\omega^{(1)}_{\chi,2}]  
\]
and observe that 
\[
\nabla^aS'_{ab}= [\partial_{b,x} (\square -m^2)_y\omega^{(1)}_{\chi,2}]  
=-[\partial_{b,x} \omega^{(1)}_2]  \chi''
- 2[\partial_{b,x}\partial_{0,y}\omega^{(1)}_2]\chi'
\]
where we used the equation of motion satisfied by $\omega^{(1)}_2$ in the second equality.
We thus have that $\nabla^aS'_{ab}$ is supported in the region of variation of $\chi$, corresponding with $J^{+}(\Sigma_{-\epsilon})\cap J^{-}(\Sigma_{0})$, and in general $S'_{ab}$ is not conserved. Hence, $S'_{ab}$ cannot be used as source for the semiclassical Einstein equation. However, by Proposition \ref{prop:decomposition} and Proposition \ref{prop:gauge-trasformation}, we can find a vector field $X_a$ which is past-compact and such that 
\[
S_{ab} := S'_{ab}+\partial_a X_b+\partial_b X_a
\]
is covariantly conserved. Therefore, it is this term which incorporates the first order variation $\omega^{(1)}_{\chi,2}$ and is covariantly conserved. Thus, following the analysis presented in the introduction, we shall use this as source in the semiclassical Einstein equations given in \eqref{eq:FPmetric}. We observe that, according to Proposition \ref{prop:decomposition}, the vector part of $S_{ab}$ vanishes. Hence, the transformation 
used to obtain $S_{ab}$ from $S'_{ab}$, corresponds to the vanishing of the vector part of $S'_{ab}$. Moreover, as $S_{ab}$ is computed without any information about the metric perturbation, can be regarded as a genuine source for the semiclassical Einstein equations. Finally, considering $\chi_\nu(t) \coloneqq \chi(t/\nu)$ in place of $\chi(t)$, in the limit $\nu\to \infty$, $X_a$ converges uniformly to $0$.

\subsection{Stress-energy tensor and independence from the time cutoff}
\label{se:source-cutoff}

To evaluate the relation in the expectation value of the stress-energy tensor on the background theory with respect to the foreground, we use a deformation argument. 
We are interested in the semiclassical Einstein equations in the region $J^+(\Sigma_0)$, the future of a fixed Cauchy surface $\Sigma_0 \subset \mathcal{M}$ constructed as the hypersurface at time $t = 0$ with respect to a chosen time function. We introduce a time cutoff $\chi: \mathcal{M} \to \RR$ which is positive, smooth and depends on the spacetime points through the time function $t$. Moreover, we demand it to be vanishing for $t<-\epsilon < 0$ and $1$ in the future of $\Sigma_0$. With this cutoff function at disposal, we identify the perturbed metric $g$ with the Minkowski background metric as follows:
\[
g_\chi=\eta+h \chi.
\]  
So, in the region where $\chi = 0$ it holds $g_{\chi}=\eta$ and, on $J^{+}(\Sigma_0)$ there is no dependence on $\chi$ as $g_\chi=g$. The goal of this appendix is to provide more details on the way in which the Bogoliubov map is used to obtain the linearised stress energy tensor for a past compact metric perturbation, and later discuss its independence from the choice of the cutoff allowing to restriction to past compact solutions of the semiclassical equations.\\

We treat changes in $\langle \Wick{T_{ab}}\rangle_\omega$ by means of the perturbative agreement and by the Bogoliubov map \cite{HW05,BDF,DHP,BDFR,Rejzner}
\[
\langle \Wick{T_{ab}}\rangle_\omega = (\omega^{(0)}+\omega^{(1)}) (\mathsf{R}_{V_\chi}(T_{ab}^{0}+T_{ab}^{1})) 
\]
where, as explained in Section \ref{sec3}
\[
\mathsf{R}_{V_\chi}(F)= -\mathrm{i} \left.\frac{d}{d\lambda} \mathsf{S}_V(\lambda F)\right|_{\lambda=0} = 
-\mathrm{i}\left.\frac{d}{d\lambda}\mathsf{S}( V)^{-1}\mathsf{S}(V+\lambda F)\right|_{\lambda=0},
\]
$\mathsf{S}(F)$ is the time ordered exponential of $\mathrm{i}F$
 and
\[
V_\chi = \mathcal{L}[g] - \mathcal{L}[\eta]  = \int \frac{1}{2}\phi(\mathcal{P}_{g_\chi})\phi d\mu_{g_\chi} -\int \frac{1}{2}\phi(\mathcal{P}_\eta)\phi  d\mu_\eta
\]
with $P_g = \square_g -m^2 -\xi R_g$. Notice that the cutoff is in the causal past of the region where the semiclassical Einstein equations are studied. The, expanding the Bogoliubov map perturbatively at first order, for every local field observable $F$, we have 
\[
\mathsf{R}_{V_\chi}(F) = F-\mathrm{i} \mathscr{T}\left(V_{\chi} F\right) + \mathrm{i} V_{\chi}  F +O(2)
\]
and so
\[
\omega (\mathsf{R}_{V_\chi}(F)) =\omega_0(F)- \mathrm{i}\omega^{(0)}(
\mathscr{T}(V_\chi   F) - V_\chi   F )  + \omega^{(1)}(F) + O(2).
\]

Let us now discuss how this expression depends on the choice of $\chi$. By the causal factorisation property, satisfied by $\mathsf{S}$ matrices and by the Bogoliubov map, changes in the cutoff $\chi$ which occur in the past of the region where we test our observables correspond to the adjoint action of a unitary operator. In particular, for every local observable $F$ with $\text{supp} F \subset J^+(\Sigma_0)$, if we consider another cutoff function $\chi'$ such that $\text{supp} (\chi-\chi')\subset J^{-}(\Sigma_{0})$,  
\begin{align*}
\mathsf{R}_{V_\chi}(F) 
&= \mathsf{S}_{V_{\chi'}}(V_{\chi-\chi'}) \mathsf{R}_{V_{\chi'}}(F) \mathsf{S}_{V_{\chi'}}(V_{\chi-\chi'})^{-1} 
\\
&= U_{\chi,\chi'} \mathsf{R}_{V_{\chi'}}(F) U_{\chi,\chi'}^{*}
\end{align*}
where we introduced the formally unitary operator
\[
U_{\chi,\chi'} = \mathsf{S}_{V_{\chi'}}(V_{\chi-\chi'}). 
\]
Notice that the action of this unitary operator can be reabsorbed in a redefinition of the state. Actually we might write
\[
\omega (\mathsf{R}_{V_\chi}(F)) = \omega (U_{\chi,\chi'} \mathsf{R}_{V_{\chi'}}(F) U_{\chi,\chi'}^{*}) = \tilde{\omega}_{\chi,\chi'}(\mathsf{R}_{V_{\chi'}}(F)).
\]
However, if we we now expand up to linear order in the perturbation the formal unitary operator
\[
U_{\chi,\chi'}=\mathsf{S}_{V_{\chi'}}(V_{\chi-\chi'}) =\mathsf{S}({V_{\chi'}})^{-1}\mathsf{S}(V_{\chi}) = 1+\mathrm{i} V_{\chi}-\mathrm{i}V_{\chi'} + O(2)
= 1+\mathrm{i} V_{\chi-\chi'} + O(2) 
\]
and therefore the change in the choice of the time cutoff for the state is expressed at linear order as
\[
\tilde{\omega}_{\chi,\chi'}(F) = {\omega}(U_{\chi,\chi'}F U_{\chi,\chi'}^*)  
= {\omega}(F) + \mathrm{i} \omega( [V_{\chi-\chi'}, F] ).
\]
This implies in particular that the state $\tilde{\omega}$ coincides with the state $\omega$ at zero order and their difference appears only at first order
\[
\tilde{\omega}_{\chi,\chi'}^{(0)}(F) =\omega_0(F) ,\qquad   \tilde{\omega}_{\chi,\chi'}^{(1)}(F) = \omega^{(1)}(F) + \mathrm{i}\omega^{(0)}([V_{\chi-\chi'}, F]).
\]
Finally, as $V_{\chi-\chi'}$ is supported in the past of $\Sigma_0$ and 
vanishes in the region where $\chi-\chi' =0$, this implies that $\tilde{\omega}_{\chi,\chi'}^{(1)} - \omega^{(1)}$ is also past-compact. As such and combining it with the argument in Appendix \ref{se:source-omega1}, considering a generic past compact source gives independence from the choice of the cutoff.

 
\renewcommand{\thefootnote}{\arabic{footnote}}
\setcounter{footnote}{0}

\end{document}